\titlespacing{\section}{0pt}{1.5ex}{0ex}
\titlespacing{\subsection}{0pt}{1.5ex}{0ex}
\titlespacing{\subsubsection}{0pt}{1ex}{0ex}
\titlespacing{\paragraph}{0pt}{1.5ex}{1ex}
\newtheorem{property}{Property}
\newtheorem{assumption}{Assumption}
\newcommand{\csp}{\ensuremath{\mathsf{CSP}}}%
\newcommand{\D}{\mathsf{D}}
\def\cA{{\cal A}}
\def\cF{{\cal F}}
\def\cG{{\cal G}}
\def\cI{{\cal I}}
\def\cN{{\cal N}}
\def\cP{{\cal P}}
\def\B{{\mathsf{B}}}
\newcommand{\cspfa}{\csp_{f}(\alpha)}
\newcommand{\trunc}{\mathrm{trnc}}
\begin{document}
\author{Antares Chen\thanks{\texttt{antaresc@uchicago.edu}}}
\author{Neng Huang\thanks{\texttt{nenghuang@uchicago.edu}}}
\author{Kunal Marwaha\thanks{\texttt{kmarw@uchicago.edu}}}
\affil{University of Chicago}
\title{Local algorithms and the failure of log-depth
quantum advantage on sparse random CSPs}
\date{\today}
\maketitle
\begin{abstract}
We construct and analyze a message-passing algorithm for random constraint satisfaction problems (CSPs) at large clause density, generalizing work of El Alaoui, Montanari, and Sellke for Maximum Cut~\cite{ams21} through a connection between random CSPs and mean-field Ising spin glasses~\cite{ams20,jmss22}.
For CSPs with even predicates, the algorithm 
asymptotically solves a stochastic optimal control problem dual to an extended Parisi variational principle.
This gives an optimal fraction of satisfied constraints among algorithms obstructed by the branching overlap gap property of Huang and Sellke~\cite{huang2022tight},  notably including the Quantum Approximate Optimization Algorithm and all quantum circuits on a bounded-degree architecture of up to $\epsilon \cdot \log n$ depth~\cite{chou2022limitations}.
\end{abstract}

\pagenumbering{arabic}

\thispagestyle{empty}
\newpage
\setcounter{tocdepth}{2}
{
    \hypersetup{linkcolor=blue}
    \tableofcontents
}
\newpage

\section{Introduction}
Today's quantum computers are starting to show signs of advantage over classical machines.
Many people are seeking to use this technology in a practical application as soon as possible.
One such area is combinatorial optimization: even though the problems are $\NP$-hard, optimization has incredible value to the contemporary world. 
This industry is also eager to experiment with quantum computers, perhaps because rigorous worst-case analysis is not always relevant.

The most popular quantum protocol for classical optimization is the \emph{Quantum Approximate Optimization Algorithm} (QAOA), proposed in~\cite{farhi2014quantum}.
Given an optimization problem, the protocol associates a qubit to each variable, and evolves a quantum state to a near-ground state of the problem, depending on choice of hyperparameters.
The QAOA has also sparked the interest of theorists, characterizing the protocol's performance on canonical problems, and debating whether it is any more powerful than existing classical algorithms (e.g. ~\cite{barak2015beating,hastings2019classical,basso2021quantum}).

One key property of the QAOA is its \emph{locality}: at each layer of the algorithm, a qubit interacts only with qubits associated with adjacent variables in the optimization problem. Locality is known to limit the QAOA up to logarithmic circuit depth when the optimization problem is \emph{sparse}~\cite{farhi2020quantumtypical,farhi2020quantumworst}. 
This limitation is due to a phenomenon known as the \emph{overlap gap property} (OGP), a type of clustering behavior of near-optimal solutions~\cite{gamarnik2014limits}.
Notably, this is an \emph{average-case} property of many random optimization problems. Recent formulations of the OGP have led to quantitative obstructions for broad families of algorithms, both quantum and classical~\cite{cgpr19,chou2022limitations,huang2022tight}.

In this work, we show that \emph{with high probability} over a large family of sparse optimization problems, there is an \emph{optimal} algorithm among those obstructed by the OGP. Furthermore, this algorithm is \emph{classical}, so there can be no quantum advantage by the QAOA up to logarithmic depth.
In fact, this no-go result extends to logarithmic-depth quantum circuits on fixed, bounded-degree architectures, such as the \emph{brickwork} circuits~\cite{Broadbent_2009} used in random circuit sampling~\cite{Bouland_2018}.
This is the strongest \emph{average-case} result known to the authors regarding quantum advantage in classical optimization.

Our algorithm is related to the theory of \emph{message-passing} used in random inference problems~\cite{bayati11}, and inspired by rich connections to statistical physics models such as the mean-field spin glass~\cite{sherrington1975solvable}. 
The parameters of the algorithm are chosen by solving a stochastic control problem dual to a problem determining the quantitative barrier of the OGP, as pioneered in~\cite{ams20,huang2022tight}.
In particular, our algorithm extends work of \cite{ams21}, and uses a sparse-to-dense connection observed in e.g. \cite{dms15,jmss22}. The performance of the algorithm, like that of \cite{montanari2019optimization,ams20,ams21}, is dependent on a technical condition (\Cref{assn:alg_minimizer_exists}) which we explain in the text.
\subsection{Background}

We study Constraint Satisfaction Problems (CSPs). A CSP instance consists of $n$ variables and a set of constraints associated with a \emph{predicate} $f: \{\pm 1\}^r \to \{0,1\}$.
\begin{definition}[Instance of a random CSP]
\label{defn:randomcsp}
Consider a function $f : \{\pm 1 \}^r \to \{0,1\}$. Fix a constant $\alpha > 0$.
A \emph{random CSP instance} $\cI$ over $n$ variables $\{x_i\}_{i \in [n]}$ and $m = \alpha \cdot n$ clauses is generated by the following procedure: For each $i = 1, \ldots, m$, draw $i_1, \ldots, i_r$ uniformly i.i.d from $[n]$, draw $r$ random signs $\epsilon_{i_1}, \dots, \epsilon_{i_r}$ uniformly i.i.d from $\{ \pm 1\}$, and add the constraint $e$ to $E(\cI)$, where $e$ describes the clause $f_e(x_e) \defeq f(\epsilon_{i_1} x_{i_1},\ldots, \epsilon_{i_r} x_{i_r})$. A clause is satisfied if it evaluates to $1$.
\end{definition}
We denote this random model as $\cspfa$. We are interested in the best \emph{satisfying fraction} (i.e. largest fraction of clauses satisfied by some assignment in $\{\pm 1\}^n$) an algorithm can achieve as $n \to \infty$ when $\alpha$ is a very large constant, i.e. so that instances of $\cspfa$ are typically unsatisfiable.
This value is related to the rich literature of \emph{mean-field Ising spin glasses}. We summarize some relevant results, especially drawing on~\cite{ams20,ams21,jmss22}.
Let $\mathscr{U}$ denote the following set of functions:
\begin{equation*}
\mathscr{U}
\defeq \bigg\{ \mu: [0,1) \rightarrow \R_{\geq 0} : \, \mu \textup{ is non-decreasing, right-continuous, and } \int_{0}^{1} \mu(t) \, dt < \infty \bigg\}\,.
\end{equation*}
Given a univariate polynomial $\xi: \R \to \R$ with non-negative coefficients, define the functional $\sfP_\xi: \mathscr{U} \rightarrow \R$ given by 
\begin{equation*}
\label{eq.parisi-functional}
\sfP_\xi(\mu) \defeq \Phi^\mu(0, 0) - \frac{1}{2} \cdot \int_{0}^{1} \xi''(t) \cdot t \cdot \mu(t) \, dt\,,
\end{equation*}
where $\Phi^\mu: [0, 1) \times \R \rightarrow \R$ is a solution of the following partial differential equation:\footnote{For the multivariable function $\Phi^\mu(t,x)$, we use subscripts to denote partial derivatives.} 
\begin{equation}
\label{eq.parisi-equations}
\begin{aligned}
\Phi^\mu_t(t, x) &= - \frac{\xi''(t)}{2} \cdot \Big( \Phi^\mu_{xx}(t, x) + \mu(t) \cdot \big( \Phi^\mu_x(t, x) \big)^2 \Big) & & \forall t \in [0,1), \forall x \in \R\,, \\
\Phi^\mu(1, x) &= \lvert x \rvert & & \forall x \in \R\,.  
\end{aligned}
\end{equation}
$\sfP_{\xi}$ is known as the Parisi functional, while \Cref{eq.parisi-equations} are known as the Parisi equations~\cite{parisi79}. Let the infimum of $\sfP_\xi$ be defined as $\OPT_\xi \defeq  \inf_{\mu \in \mathscr{U}} \sfP_\xi(\mu)$.
\begin{theorem}[{\cite{auffingerchenrep,Jagannath_2015}}]
\label{thm:auffingerchen}
    Consider any polynomial $\xi$ with non-negative coefficients. Then $\OPT_\xi$ is achieved by some unique $\mu \in \mathscr{U}$.
\end{theorem}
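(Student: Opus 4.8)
The plan is to prove the two assertions separately: \emph{existence} of a minimizer by the direct method of the calculus of variations, and \emph{uniqueness} by showing $\sfP_\xi$ is \emph{strictly} convex on $\mathscr{U}$. Both parts rely on one analytic tool, a stochastic optimal control representation of $\Phi^\mu$: \Cref{eq.parisi-equations} is the Hamilton--Jacobi--Bellman equation of the control problem with state $dX_s = \xi''(s)\,\mu(s)\,v_s\,ds + \sqrt{\xi''(s)}\,dB_s$ and $X_0 = 0$, so that
\begin{equation*}
\Phi^\mu(0,0) \;=\; \sup_{v}\; \mathbb{E}\!\left[\, \Big| \int_0^1 \xi''(s)\,\mu(s)\,v_s\,ds + \int_0^1 \sqrt{\xi''(s)}\,dB_s \Big| \;-\; \frac12 \int_0^1 \xi''(s)\,\mu(s)\,v_s^2\,ds \,\right],
\end{equation*}
the supremum running over bounded progressively measurable controls $v$. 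I would prove this first for piecewise-constant $\mu$, where $\Phi^\mu$ is built by iterating the Cole--Hopf formula and the identity is the Bou\'e--Dupuis variational formula applied one layer at a time, then pass to general $\mu\in\mathscr{U}$ by $L^1([0,1])$-approximation; the same approximation yields that $\mu\mapsto\Phi^\mu(0,0)$ is $L^1$-Lipschitz on bounded sets, and since $\xi$ has non-negative coefficients $\xi''\geq 0$ on $[0,1]$, so the penalty above is a genuine nonnegative quadratic.

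\emph{Existence.} The correction $-\tfrac12\int_0^1\xi''(t)\,t\,\mu(t)\,dt$ is affine and $L^1$-continuous, so $\sfP_\xi$ is $L^1$-continuous, hence lower semicontinuous. Next I would show a minimizing sequence can be taken uniformly bounded on $[0,1]$: taking $v\equiv 1$ and $\mathbb{E}|X_1|\geq|\mathbb{E}X_1|$ in the representation gives $\Phi^\mu(0,0)\geq\tfrac12\int_0^1\xi''(s)\,\mu(s)\,ds$, hence $\sfP_\xi(\mu)\geq\tfrac12\int_0^1\xi''(t)\,(1-t)\,\mu(t)\,dt$, so mass of $\mu$ sitting away from $t=1$ strictly increases $\sfP_\xi$ and cannot persist along a minimizing sequence, while a truncation handles the remaining mass near $t=1$. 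A uniformly bounded sequence of non-decreasing functions has, by Helly's selection theorem, a subsequence converging pointwise (hence in $L^1$) to some $\mu_\star\in\mathscr{U}$; by continuity of $\sfP_\xi$, $\mu_\star$ is a minimizer, so $\OPT_\xi$ is attained.

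\emph{Uniqueness.} It is enough to show $\sfP_\xi$ is strictly convex, since two distinct minimizers would then be strictly beaten by their midpoint, which lies in the convex set $\mathscr{U}$. As the correction term is affine, this reduces to strict convexity of $\mu\mapsto\Phi^\mu(0,0)$. For each \emph{fixed} control $v$ the bracket in the representation is convex in $\mu$ — the absolute value is a convex function of a quantity affine in $\mu$, and the penalty is affine in $\mu$ — and $\sup_v$ preserves convexity. For strictness, fix $\mu_0\neq\mu_1$ in $\mathscr{U}$, put $\bar\mu=\tfrac12(\mu_0+\mu_1)$, let $v^\star$ be an optimal control for $\bar\mu$, and feed $v^\star$ into the variational lower bounds for $\Phi^{\mu_0}(0,0)$ and $\Phi^{\mu_1}(0,0)$. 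Averaging, and using that the time-$1$ state is affine in $\mu$ and the penalty linear in $\mu$, convexity of $|\cdot|$ gives $\tfrac12\big(\Phi^{\mu_0}(0,0)+\Phi^{\mu_1}(0,0)\big)\geq\Phi^{\bar\mu}(0,0)$, with equality forcing the two states $X_1^{v^\star,\mu_0}$ and $X_1^{v^\star,\mu_1}$ to have the same sign almost surely. These differ by $\int_0^1\xi''(s)\,(\mu_0(s)-\mu_1(s))\,v_s^\star\,ds$ but share the common nondegenerate Gaussian term $\int_0^1\sqrt{\xi''(s)}\,dB_s$, and tracking how that noise spreads them across the origin rules the coincidence out unless $\mu_0=\mu_1$ almost everywhere, a contradiction; hence $\sfP_\xi$ is strictly convex and its minimizer is unique.

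The conceptually easy part is convexity, which drops straight out of the control representation. The technical heart — the step I expect to be the main obstacle — is twofold: (i) making rigorous sense of \Cref{eq.parisi-equations} for arbitrary $\mu\in\mathscr{U}$, proving the control representation, and establishing the uniform $L^1$-stability, all via a careful approximation scheme together with maximum-principle comparisons; and (ii) the equality case of the strict-convexity estimate, i.e.\ showing that almost-sure sign-coincidence of $X_1^{v^\star,\mu_0}$ and $X_1^{v^\star,\mu_1}$ forces $\mu_0=\mu_1$. The latter is delicate because $v^\star$ is adapted to the Brownian filtration, so the drift and the noise are correlated — this is precisely where the arguments of Auffinger--Chen and Jagannath do their real work, and where I would concentrate the effort.
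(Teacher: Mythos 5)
The paper does not prove this theorem — it is cited to Auffinger--Chen and Jagannath--Tobasco — so there is no in-paper argument to compare against. Your proposal is, in outline, the strategy those works use: a stochastic-control (Bou\'e--Dupuis / HJB) representation of $\Phi^\mu(0,0)$, convexity in $\mu$ by freezing the control and taking the supremum, Helly plus a coercivity bound for existence, and the equality case of Jensen for uniqueness. You also correctly identify the same two points as the real content.

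But as written, both of those points would fail, not merely require effort. For existence: the coercivity bound $\sfP_\xi(\mu)\ge\frac12\int_0^1 \xi''(t)(1-t)\mu(t)\,dt$ degenerates at $t=1$, so it controls $\mu$ only on compacts of $[0,1)$; Helly then yields convergence in $L^1_{\mathrm{loc}}([0,1))$, not in $L^1([0,1))$, and the pointwise limit need not satisfy $\int_0^1\mu<\infty$ — mass can escape to $t=1$ along a minimizing sequence. ``A truncation handles the remaining mass'' is not automatic here, because $\Phi^\mu(0,0)$ and the affine correction $\frac12\int_0^1\xi''(t)t\mu(t)\,dt$ both grow with the tail of $\mu$ and you must show their difference is nonincreasing (or uniformly controlled) under truncation; that comparison is part of what the zero-temperature references establish. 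For uniqueness: the endpoint-only Jensen argument does not close. With $v^\star$ adapted, $D=\int_0^1\xi''(s)(\mu_0(s)-\mu_1(s))v^\star_s\,ds$ is correlated with $W=\int_0^1\sqrt{\xi''(s)}\,dB_s$, and almost-sure sign-coincidence of $X_1^{\mu_0,v^\star}$ and $X_1^{\mu_0,v^\star}-D$ is compatible with $D\not\equiv 0$ for correlated pairs; you cannot ``rule the coincidence out'' from the law of $X_1$ alone. The actual Auffinger--Chen argument propagates the Jensen equality case along the entire trajectory, using the feedback form $v^\star_s=\Phi^{\bar\mu}_x(s,X_s)$ to conclude $\Phi^{\mu_0}_x(s,X_s)=\Phi^{\mu_1}_x(s,X_s)$ on the common optimal path and then invokes the PDE (together with nondegeneracy of the diffusion) to force $\mu_0=\mu_1$. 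So your plan is the right program, but the two steps you defer are exactly the proof, and the endpoint-only version of the uniqueness step is a wrong turn rather than a technicality.
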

Surprisingly, the maximum satisfying fraction of a typical instance $\cI \sim \cspfa$ is related to $\OPT_\xi$:
\begin{theorem}[{\cite[Corollary 1.2]{jmss22}, building on \cite{parisi79,talagrand2006parisi,panchenko2013parisi,panchenko2013sherrington,dms15,sen16,panchenko2017ksat}}]
Consider $\cI \sim \cspfa$. With high probability as $n \to \infty$, the maximum satisfying fraction of $\cI$ is\footnote{We use the notation $o_{x}(g(x))$ to mean the value is at most some function $\kappa(n,x)$ satisfying$ \frac{\kappa(n, x)}{g(x)} \to 0$ as $n \to \infty$, then $x \to \infty$.}
\begin{align*}
    \E[f] + \frac{\OPT_{\xi}}{\sqrt{\alpha}} + o_{\alpha}\left(\frac{1}{\sqrt{\alpha}}\right)\,, 
\end{align*}
where $\xi(s) \defeq \sum_{j=1}^r  \| f^{=j}\|^2 s^j$, and $\| f^{=j}\|^2$ is the Fourier weight of $f$ at degree $j$.
\end{theorem}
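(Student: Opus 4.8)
The plan is to rewrite the maximum satisfying fraction as the maximum of a random multilinear Hamiltonian, match its covariance to that of a mean-field mixed $p$-spin glass, and then combine the zero-temperature Parisi formula with a sparse-to-dense universality estimate. Fix an assignment $\sigma \in \{\pm1\}^n$. Fourier-expanding the predicate as $f(y) = \sum_{S\subseteq[r]}\hat f(S)\prod_{k\in S}y_k$ with $\hat f(\emptyset)=\E[f]$, the number of clauses of $\cI\sim\cspfa$ satisfied by $\sigma$ equals $m\,\E[f] + H_n(\sigma)$, where
\[
  H_n(\sigma)\;\defeq\;\sum_{a=1}^{m}\ \sum_{\emptyset\neq S\subseteq[r]}\hat f(S)\prod_{k\in S}\epsilon^{(a)}_{k}\,\sigma_{i^{(a)}_{k}},
\]
and $\big(i^{(a)}_1,\dots,i^{(a)}_r\big)$, $\big(\epsilon^{(a)}_1,\dots,\epsilon^{(a)}_r\big)$ are the indices and signs drawn for clause $a$. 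Hence the maximum satisfying fraction is $\E[f] + \tfrac1m\max_\sigma H_n(\sigma)$, and since $m=\alpha n$ it is enough to prove $\tfrac{1}{\sqrt{mn}}\max_\sigma H_n(\sigma)=\OPT_\xi + o_\alpha(1)$ with high probability. A short second-moment computation — the $r$ signs of a clause are independent and centered, so only the $S=S'$ cross-terms survive, and each index is uniform on $[n]$ — gives $\E[H_n(\sigma)H_n(\sigma')]=m\,\xi\!\big(\langle\sigma,\sigma'\rangle/n\big)$ with $\xi(s)=\sum_{\emptyset\neq S}\hat f(S)^2 s^{|S|}=\sum_{j=1}^r\|f^{=j}\|^2 s^j$. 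Therefore $\sqrt{n/m}\,H_n$ has exactly the covariance kernel $n\,\xi(\langle\sigma,\sigma'\rangle/n)$ of the Gaussian mixed $p$-spin Hamiltonian $H^{\mathrm{SG}}_n$ with mixture $\xi$; since $\xi$ has non-negative coefficients, \Cref{thm:auffingerchen} makes $\OPT_\xi$ a bona fide variational quantity with attained minimizer.

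Two inputs then finish the argument. First, the zero-temperature Parisi formula (the ground-state specialization of the Talagrand--Panchenko Parisi formula \cite{talagrand2006parisi,panchenko2013parisi}, with the ground-state version in \cite{auffingerchenrep,Jagannath_2015}): $\tfrac1n\,\E\max_\sigma H^{\mathrm{SG}}_n(\sigma)\to\OPT_\xi$, together with Gaussian concentration of $\tfrac1n\max_\sigma H^{\mathrm{SG}}_n$ around its mean. Second, a sparse-to-dense universality estimate: decompose $\sqrt{n/m}\,H_n$ into its $m$ i.i.d.\ clause contributions, each of covariance $\xi(\langle\sigma,\sigma'\rangle/n)/\alpha$, and decompose $H^{\mathrm{SG}}_n$ into $m$ i.i.d.\ Gaussian increments with the same covariance; interpolate by swapping one clause for its Gaussian counterpart at a time, bounding the change in the log-partition function $\tfrac1n\log\sum_\sigma e^{\beta H(\sigma)}$ by a Lindeberg--Chatterjee third-moment term. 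Each clause term has standard deviation $\Theta(\alpha^{-1/2})$ and there are $m=\alpha n$ of them, so the accumulated error is of order $\beta^3 n\,\alpha^{-1/2}$, i.e.\ $o_\alpha(1)$ per unit $n$ once $\beta$ is taken to infinity slowly relative to $n$; this simultaneously upgrades the positive-temperature comparison to a ground-state comparison. This is exactly the route of Dembo--Montanari--Sen \cite{dms15} for Max-Cut (the case $\xi(s)=\|f^{=2}\|^2 s^2$) and its higher-uniformity extensions \cite{sen16,panchenko2017ksat}, assembled in \cite{jmss22}. One concludes $\tfrac{1}{\sqrt{mn}}\max_\sigma H_n(\sigma)=\OPT_\xi+o_\alpha(1)$ in probability.

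Finally I would collect the errors. The satisfying fraction is a function of $m$ independent clauses, each of bounded influence $1/m$, so McDiarmid's inequality gives concentration around its mean with fluctuations $O(m^{-1/2})=o_\alpha\!\big(\alpha^{-1/2}\big)$; combining this with the inputs above shows the maximum satisfying fraction is $\E[f]+\alpha^{-1/2}\big(\OPT_\xi+o_\alpha(1)\big)=\E[f]+\tfrac{\OPT_\xi}{\sqrt\alpha}+o_\alpha\!\big(\tfrac{1}{\sqrt\alpha}\big)$ with high probability, which is the claim. The main obstacle is the universality step: the \emph{relative} error must vanish as $\alpha\to\infty$ (the additive error in the satisfying fraction must be $o_\alpha(1)\cdot\alpha^{-1/2}$, not merely $O(\alpha^{-1/2})$), which forces careful control of the central-limit rate for the sparse clause field and of the exchange of the $n\to\infty$ and $\beta\to\infty$ limits; the zero-temperature Parisi formula itself is a deep theorem invoked here as a black box.
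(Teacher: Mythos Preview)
The paper does not supply its own proof of this theorem: it appears in the background subsection and is attributed wholesale to \cite[Corollary 1.2]{jmss22} together with the chain of earlier works listed in the header. There is therefore no in-paper proof to compare against.

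Your sketch is a faithful outline of the argument assembled in those references: the Fourier decomposition of the clause value, the identification of the covariance kernel $m\,\xi(\langle\sigma,\sigma'\rangle/n)$, the zero-temperature Parisi formula for the dense Gaussian mixed $p$-spin model, the Lindeberg-type interpolation of \cite{dms15,sen16,panchenko2017ksat} to pass from the sparse clause field to its Gaussian counterpart, and bounded-differences concentration are exactly the ingredients that \cite{jmss22} packages. You also correctly flag the genuinely delicate point, namely that the universality comparison must deliver a \emph{relative} error $o_\alpha(1)$ (so that the additive error in the satisfying fraction is $o_\alpha(\alpha^{-1/2})$ rather than merely $O(\alpha^{-1/2})$), and that the exchange of the $n\to\infty$ and $\beta\to\infty$ limits needs care. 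As a proof sketch at the level of this paper's exposition, there is nothing to correct.
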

We also know that some algorithms are quantitatively obstructed due to the geometry of the solution space.
\begin{definition}[{Overlap-concentrated algorithm, \cite[Definition 2.1]{huang2022tight}}]
\label{defn:overlap-concentrated}
    A deterministic algorithm $\cA$ is \emph{overlap-concentrated} with respect to a random family of problems $\cF$ if the following is true for every $t \in [0,1]$ and constant $\delta > 0$: Draw two instances $\cI_1, \cI_2 \sim \cF$ that are $t$-correlated.\footnote{Instances $\cI_1, \cI_2$ are $t$-correlated if a $t$ fraction of constraints are identical, and the rest are independent; see also \cite[Definition 6.1]{jmss22}.}
    Let $\cA(\cI)$ be the assignment in $\{\pm 1\}^n$ that $\cA$ generates on instance $\cI$. Then the event
    \begin{align}
    \label{eqn:overlap-concentrated}
        \frac{1}{n}\left| \langle \cA(\cI_1), \cA(\cI_2) \rangle - \E_{\cI_1, \cI_2}[ \langle \cA(\cI_1), \cA(\cI_2) \rangle]  \right| \le \delta
    \end{align}
    occurs with high probability over choice of~$\cI_1, \cI_2$.
\end{definition}

For example, \emph{local} algorithms in the factors of i.i.d. model and certain quantum circuits up to logarithmic depth~\cite{jmss22} are overlap-concentrated w.h.p. over instances of $\cspfa$. A randomized algorithm is overlap-concentrated if \Cref{eqn:overlap-concentrated} holds additionally over its internal randomness, both w.h.p. and in the expectation.

We briefly explain the intuition of why \Cref{defn:overlap-concentrated} prevents algorithms from succeeding on certain optimization problems; see \cite{gamarniksurvey} for more detail. Suppose the optimal assignments are \emph{clustered}; i.e. they are ``needles in the haystack''. Then random instances will have ``needles'' in random locations in the solution space. But for $t \in [0,1]$, consider instances $\cI_{2}(t)$ that are $t$-correlated with the initial instance $\cI_1$. If the algorithm $\cA$ is successful, the correlation $\E_{\cI_{2}(t)}[\langle \cA(\cI_1), \cA( \cI_{2}(t)) \rangle]$ measures the correlation of these ``needles'' (i.e. good assignments), and so it varies smoothly from $0$ to $1$ as $t$ goes from $0$ to $1$. The \emph{overlap-concentrated} property is a bound on the variance of this quantity, meaning that with high probability over $\{\cI_{2}(t)\}_{t \in [0,1]}$, $\langle \cA(\cI_1), \cA( \cI_{2}(t)) \rangle$ varies smoothly with $t$ from $0$ to $1$. 
But this is not possible on problems that exhibit an \emph{overlap gap property} (OGP)\footnote{An \emph{overlap gap property} exists when an inner product, for example $\langle \cA(\cI_1), \cA( \cI_2) \rangle$, is forbidden to take values within a certain range $[a,b]$.} across pairs of instances; so $\cA$ cannot be successful at locating the ``needles'' or good assignments.

There is an emerging literature generalizing the \emph{overlap gap property} with increasingly strong quantitative obstructions~\cite{gamarnik2014limits,cgpr19,gjw20,huang2022tight}. We use the \emph{branching OGP} of~\cite{huang2022tight}, which gives the strongest quantitative bounds of algorithms known on spin glasses and random CSPs~\cite{jmss22}.

Given a polynomial $\xi$ with non-negative coefficients, let $\mathscr{L}_\xi$ denote the following set of functions:
\begin{equation*}
\mathscr{L}_\xi
\defeq \bigg\{ \mu: [0,1) \rightarrow \R_{\geq 0} : \, \mu \textup{ is right-continuous, } \| \xi'' \mu \|_{\text{TV}[0,t]} < \infty\ \forall t \in [0,1)\textup{, and }\int_{0}^{1} \xi''(t) \mu(t) \, dt < \infty \bigg\}\,.
\end{equation*}
Here, $\text{TV}[a,b]$ over an interval $[a,b]$ is the total variance over partitions:
\begin{align*}
\| f \|_{\text{TV}[a,b]} \defeq \sup_n  \sup_{a \le t_0<\dots<t_n \le b} \sum_{i\in[n]} |f(t_i) - f(t_{i-1})| \,.
\end{align*}
Furthermore, the definition of $\sfP_\xi$ over $\mathscr{U}$ can be naturally extended to $\mathscr{L}_\xi$, with infimum
\begin{align}
\label{eqn:alg_defn}
    \ALG_\xi \defeq \inf_{\mu \in \mathscr{L}_\xi} \sfP_\xi(\mu)\,.
\end{align}
Since $\mathscr{L}_\xi \supseteq \mathscr{U}$, we have $\ALG_\xi \le \OPT_\xi$.
It turns out that all \emph{overlap-concentrated} algorithms over $\cspfa$ are with high probability obstructed to some $\ALG_\xi$:

\begin{theorem}[{\cite[Corollary 6.10]{jmss22}}]
Choose any \emph{even} predicate $f$. 
For all $\epsilon > 0$ there is $\alpha_0 > 0$ such that the following is true:  For all $\alpha \ge \alpha_0$, for every \emph{overlap-concentrated} algorithm $\mathcal{A}$ over $\cspfa$, and with high probability over instances $\cI \sim \cspfa$ as $n \to \infty$, $\mathcal{A}$ returns an assignment to $\cI$ with satisfying fraction \emph{at most}
\begin{align*}
    \E[f] + \frac{\ALG_{\xi} + \epsilon}{\sqrt{\alpha}}\,,
\end{align*}
where $\xi(s) \defeq \sum_{j=1}^r \| f^{=j}\|^2 s^j$, and $\| f^{=j}\|^2$ is the Fourier weight of $f$ at degree $j$.
\end{theorem}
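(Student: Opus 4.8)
The plan is to prove this impossibility result by reducing a random CSP at large clause density to a mixed even $p$-spin glass with mixture $\xi$, and then invoking the branching overlap gap property of \cite{huang2022tight} in that dense regime. First I would set up the sparse-to-dense comparison. Expanding $f = \sum_{S \subseteq [r]} \hat f(S)\, \chi_S$ and using that $f$ is even (so $\hat f(S) = 0$ for odd $|S|$, making $\xi(s) = \sum_j \|f^{=j}\|^2 s^j$ an even polynomial and the objective $O_\cI(\sigma) \defeq \frac{1}{m}\sum_{e \in E(\cI)} f_e(\sigma)$ invariant under $\sigma \mapsto -\sigma$), one writes $O_\cI(\sigma) = \E[f] + (\text{mean-zero fluctuation})$. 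The key claim is that $\sqrt\alpha$ times the fluctuation is uniformly close over $\sigma \in \{\pm1\}^n$ to $\tfrac1n H_n^\xi(\sigma)$, where $H_n^\xi$ is the mixed $p$-spin Hamiltonian with mixture $\xi$, with error $o_\alpha(1/\sqrt\alpha)$ with high probability; this is precisely the connection exploited in \cite{dms15,jmss22}, in which the sum over random clauses of the Fourier expansion of $f$ has, after the $1/\sqrt\alpha$ normalization, the same limiting covariance as the Gaussian disorder of $H_n^\xi$ when $\alpha \to \infty$. I would run this comparison not only for one instance but jointly over the $t$-correlated pairs, and for the branching argument over the $t$-correlated ultrametric family, appearing in \Cref{defn:overlap-concentrated}: a $t$-correlated pair of CSP instances couples to a pair of Hamiltonians with $t$-correlated disorders, so the overlap structure of the algorithm's outputs is preserved by the reduction.

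Next I would transfer the obstruction. Suppose $\cA$ is overlap-concentrated over $\cspfa$ and with high probability returns an assignment of satisfying fraction exceeding $\E[f] + (\ALG_\xi + \epsilon)/\sqrt\alpha$. Taking $\alpha$ large enough that the comparison error is below $\epsilon/(2\sqrt\alpha)$, this forces $\tfrac1n H_n^\xi(\cA(\cI)) > \ALG_\xi + \epsilon/2$ with high probability, while the coupling of correlated instances turns $\cA$ (read on the jointly-coupled disorder) into a rule producing near-optimal, overlap-concentrated solutions for the mixed even $p$-spin model with mixture $\xi$. But \cite{huang2022tight} shows exactly that no such rule beats value $\ALG_\xi$ on that model: above that threshold the branching OGP holds — along an ultrametric family of correlated instances there is no collection of solutions of value $> \ALG_\xi$ whose pairwise overlaps realize the monotone, continuous profile that overlap-concentration would force them to realize — so we reach a contradiction. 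The reason the barrier is $\ALG_\xi = \inf_{\mu \in \mathscr{L}_\xi} \sfP_\xi(\mu)$ over the enlarged class $\mathscr{L}_\xi \supseteq \mathscr{U}$, and not the true optimum $\OPT_\xi$ of \Cref{thm:auffingerchen}, is that the admissible overlap profiles along the branching forest are parametrized by $\mathscr{L}_\xi$ rather than by $\mathscr{U}$. To finish, I would pick $\alpha_0$ so that simultaneously the sparse-to-dense error is at most $\epsilon/(4\sqrt\alpha)$, the finite-$\alpha$ analog of the Parisi barrier lies within $\epsilon/4$ of $\ALG_\xi$ by continuity of $\sfP_\xi$, and the branching-OGP parameters of \cite{huang2022tight} transfer with $o(1)$ slack; the randomized case follows by additionally averaging over internal randomness, as in the remark after \Cref{defn:overlap-concentrated}.

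The hardest part, I expect, is the \emph{uniform} sparse-to-dense comparison: controlling $\sup_{\sigma} \big| \sqrt\alpha\,(O_\cI(\sigma) - \E[f]) - \tfrac1n H_n^\xi(\sigma) \big|$ at the claimed rate, and doing so jointly over the branching family so that overlap-concentration genuinely survives the reduction. A pointwise central limit theorem is routine, but uniformity over $\{\pm1\}^n$ demands either a Guerra--Talagrand-style interpolation between the two models (comparing free energies, then transferring ground-state values and overlaps) or a net-plus-concentration bound for low-degree random polynomials, carried out compatibly with the $t$-correlated coupling. A secondary technical point is quantitative stability of $\sfP_\xi$ and $\ALG_\xi$ under the truncations and discretizations inherent in the branching-OGP machinery, so that the barrier degrades by only $o(1)$ as $\alpha \to \infty$.
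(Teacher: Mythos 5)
The paper does not prove this statement: it is quoted verbatim as an external result, \cite[Corollary 6.10]{jmss22}, so there is no ``paper's own proof'' to line your sketch up against. Evaluating your proposal on its own merits as a reconstruction of the cited proof, the high-level architecture is right --- compare the random CSP to an even mixed $p$-spin model with mixture $\xi$, then transfer the branching overlap gap property of \cite{huang2022tight} and conclude that overlap-concentrated algorithms cannot exceed $\ALG_\xi$ in the $\sqrt\alpha$-normalized value. Treating the even symmetry, the $t$-correlated coupling, and the role of $\mathscr{L}_\xi$ versus $\mathscr{U}$ are also correct in spirit.

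The concrete gap is the step you yourself flag as hardest: the claimed \emph{uniform} pointwise comparison
$\sup_{\sigma \in \{\pm 1\}^n}\bigl|\sqrt{\alpha}\,(O_{\cI}(\sigma)-\E[f]) - \tfrac{1}{n}H_n^\xi(\sigma)\bigr| = o_\alpha(1/\sqrt\alpha)$
does not hold and is not what the sparse-to-dense literature establishes. For any fixed $\sigma$ the fluctuation is of order $n^{-1/2}$; over all $2^n$ assignments the sup of the discrepancy is of constant order, not $o(1)$, so a pointwise coupling of Hamiltonians cannot furnish the reduction. What \cite{dms15,sen16,panchenko2017ksat} and \cite{jmss22} actually transfer is not the Hamiltonian but a Guerra--Talagrand-type interpolation bound on (possibly overlap-constrained) free energies; in particular, \cite{jmss22} proves the branching OGP \emph{directly} in the sparse model by interpolating the constrained free energy functional appearing in \cite{huang2022tight}, never comparing $O_\cI$ and $H_n^\xi$ as functions on the cube. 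You do mention this interpolation route as an alternative, but it must be the primary route, and the quantitative contradiction should be stated in terms of a free-energy upper bound on the forbidden overlap structure rather than a pointwise energy transfer. With that replacement, and carrying the interpolation jointly over the $t$-correlated family so that overlap-concentration of $\cA$ plugs into the branching-OGP contradiction, the argument closes.
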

Note that if $\ALG_\xi \ne \OPT_\xi$, then these \emph{overlap-concentrated} algorithms are obstructed from reaching the optimal solution for typical instances of $\cspfa$.

\subsection{Statement of results}
The main result of our work is that there is an \emph{optimal} algorithm among \emph{overlap-concentrated} algorithms over $\cspfa$, subject to a technical condition. This matches analogous results for Ising spin glasses~\cite{ams20}, and very recently for multi-species spherical spin glasses~\cite{subag2019following,hs23_spherical_alg}.
\begin{assumption}
\label{assn:alg_minimizer_exists}
    For any polynomial $\xi$ with non-negative coefficients, $\ALG_\xi$ is achieved by some $\mu_\ast \in \mathscr{L}_\xi$. That is, the minimizer of \Cref{eqn:alg_defn} exists.
\end{assumption}
We include \Cref{assn:alg_minimizer_exists} because our algorithm uses the minimizer $\mu_\ast$.
As noted in \cite[journal version, Remark 2.4]{ams20},
\Cref{assn:alg_minimizer_exists} should not have a major impact on the algorithm's performance, since by an informal continuity argument, using a near-minimizer $\widetilde{\mu}_\ast \in \mathscr{L}_\xi$ instead of $\mu_\ast$ incurs negligible error. We further discuss this in \Cref{sec:nonlinearities}.
\begin{theorem}
\label{thm:main_theorem}
Suppose \Cref{assn:alg_minimizer_exists} holds.
Choose any predicate $f$ without linear-weight terms; i.e. $\| f^{=1}\|^2 = 0$. 
For all $\epsilon > 0$ there is $\alpha_0 > 0$ such that the following is true:  For all $\alpha \ge \alpha_0$, there exists a \emph{randomized} algorithm $\mathcal{A}$ such that $\mathcal{A}$ is \emph{overlap-concentrated} over $\cspfa$, and  with high probability over \emph{both} instances $\cI \sim \cspfa$ and internal randomness of $\mathcal{A}$ as $n \to \infty$, $\mathcal{A}$ returns an assignment to $\cI$ with satisfying fraction \emph{at least}
\begin{align*}
    \E[f] + \frac{\ALG_{\xi} - \epsilon}{\sqrt{\alpha}}\,,
\end{align*}
where $\xi(s) \defeq \sum_{j=1}^r  \| f^{=j}\|^2 s^j$, and $\| f^{=j}\|^2$ is the Fourier weight of $f$ at degree $j$.
\end{theorem}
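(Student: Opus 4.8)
The plan is to transport the stochastic optimal control (Hamilton--Jacobi) algorithm of~\cite{ams21} from the dense mixed $p$-spin model to the sparse model $\cspfa$, using the sparse-to-dense correspondence of~\cite{dms15,jmss22}. The hypothesis $\|f^{=1}\|^2 = 0$ guarantees that $\xi$ has no degree-one term, so the mean-field spin glass attached to $\cspfa$ carries no external field and the Parisi PDE~\Cref{eq.parisi-equations} with terminal data $|x|$ is exactly the object controlling both $\ALG_\xi$ and the algorithm's trajectory. Using \Cref{assn:alg_minimizer_exists} we fix a minimizer $\mu_\ast \in \mathscr{L}_\xi$ of~\Cref{eqn:alg_defn}; by the continuity argument alluded to after \Cref{assn:alg_minimizer_exists} (elaborated in \Cref{sec:nonlinearities}) we may replace $\mu_\ast$ by a bounded, piecewise-constant near-minimizer $\widetilde\mu \in \mathscr{L}_\xi$ and fix a time-discretization $0 = t_0 < t_1 < \cdots < t_T = 1$ whose mesh tends to $0$; all the errors introduced this way are $o_T(1)$.

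First I would specify the algorithm. It is a constant-round (i.e. $T$-round) message-passing / approximate message-passing recursion on the constraint hypergraph of $\cI$: starting from $\mathbf x^{(0)} = 0$, at round $k$ one forms the centered and rescaled local gradient of the CSP Hamiltonian at $\mathbf x^{(k)}$, subtracts the appropriate Onsager correction, and moves in the direction determined by $\partial_x \Phi^{\widetilde\mu}(t_k,\cdot)$ weighted by $\xi''(t_k)\widetilde\mu(t_k)$ --- an Euler discretization of the controlled diffusion dual to the extended Parisi variational principle. The algorithm's internal randomness supplies independent Gaussian fields injected at each round, chosen so that the sparse updates reproduce the covariance structure of the Gaussian disorder of the dense model, together with a random initialization and a (randomized) coordinatewise rounding of $\mathbf x^{(T)}$ to $\{\pm1\}^n$. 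Since each round is a fixed Lipschitz function of a bounded-radius neighborhood in the hypergraph and of i.i.d.\ random seeds, $\cA$ is a local/AMP-type algorithm; consequently, by the same mechanism that renders local algorithms and shallow quantum circuits overlap-concentrated in~\cite{jmss22}, $\cA$ is overlap-concentrated over $\cspfa$ (both w.h.p.\ and in expectation over its internal randomness) --- on $t$-correlated instances the injected noise and the local gradients are themselves $t$-correlated with concentrating overlaps, and a McDiarmid/Efron--Stein estimate over the choice of clauses and seeds transfers concentration to $\langle \cA(\cI_1), \cA(\cI_2)\rangle / n$.

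Next I would analyze the value. The core is a state-evolution statement: as $n \to \infty$ and then $\alpha \to \infty$, the empirical law of the coordinates $(\mathbf x^{(0)}_i,\dots,\mathbf x^{(T)}_i)$ converges to a Gaussian vector discretizing the solution $(X_{t_k})_{k\le T}$ of the Parisi control SDE, and moreover $\tfrac1n\|\mathbf x^{(T)}\|_2^2 \to 1$ with $\mathbf x^{(T)}$ coordinatewise $\ell^2$-close to $\{\pm1\}$, so the rounding costs only $o_T(1)$. Applying It\^o's formula to $\Phi^{\widetilde\mu}(t,X_t)$ along this diffusion, the PDE~\Cref{eq.parisi-equations} causes the drift and second-order terms to telescope, and the terminal condition $\Phi^{\widetilde\mu}(1,x)=|x|$ identifies the limiting (normalized) dense-model energy of the rounded output with $\Phi^{\widetilde\mu}(0,0) - \tfrac12\int_0^1 \xi''(t)\,t\,\widetilde\mu(t)\,dt = \sfP_\xi(\widetilde\mu)$, up to discretization and fluctuation errors; since $\widetilde\mu$ is a near-minimizer this is $\ge \ALG_\xi - \epsilon/2$ for $T$ and $n$ large, with concentration inherited from the state evolution. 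Finally, the sparse-to-dense estimate of~\cite{jmss22} (after the $1/\sqrt\alpha$ rescaling) equates the CSP satisfying fraction of this output with $\E[f] + \tfrac1{\sqrt\alpha}\cdot(\text{dense-model energy}) + o_\alpha(\tfrac1{\sqrt\alpha})$; taking $T$ large and then $\alpha_0$ large makes every error term at most $\epsilon/(2\sqrt\alpha)$, which gives the claimed bound.

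I expect the main obstacle to be the sparse-to-dense transfer of the \emph{entire} trajectory of the recursion, rather than of a single fixed low-degree statistic: one needs a quantitative universality/central-limit argument showing that running the AMP iteration on the random arity-$r$ hypergraph of $\cspfa$ reproduces, uniformly over the $T$ rounds, the Gaussian state evolution of the dense mixed $p$-spin model, with an error that is still $o_\alpha(1/\sqrt\alpha)$ after the $n \to \infty$ limit. Closely related difficulties are that the true minimizer $\mu_\ast$ may be unbounded and discontinuous near $t=1$, so the near-minimizer $\widetilde\mu$, the regularity and gradient bounds for $\Phi^{\widetilde\mu}$, and the resulting Lipschitz constants of the recursion all require care (this is precisely where \Cref{assn:alg_minimizer_exists} and the discussion in \Cref{sec:nonlinearities} are used), and that one must genuinely certify $\tfrac1n\|\mathbf x^{(T)}\|_2^2 \to 1$ so that the final rounding to $\{\pm1\}^n$ is cheap.
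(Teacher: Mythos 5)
Your proposal captures the paper's high-level strategy --- a local AMP-type recursion on the constraint hypergraph, a state-evolution statement tying the messages to the Parisi controlled diffusion, choosing the control via a near-minimizer $\widetilde\mu$ of \Cref{eqn:alg_defn}, and a cheap $\{\pm1\}$-rounding step --- but several details of the algorithm and of the route to its value differ from the paper in ways that are substantive, not cosmetic.

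\paragraph{Onsager term and injected noise.} The paper's sparse message-passing rule contains \emph{no} Onsager correction: the node-to-factor messages $w_{i\to a}^\ell$ sum only over $b\in\partial i\setminus a$, and this backtracking exclusion does the job that the Onsager term does in the dense model. The paper states explicitly that no such correction is needed here. Likewise, the only injected Gaussian randomness is the initialization $z_i^0\sim\cN(0,\delta)$; there are no fresh Gaussian fields at each round. Gaussianity of the messages $u_{i\to a}^\ell$ emerges \emph{in the limit $d\to\infty$} by a central-limit argument over the $d-1$ hyperedges at a vertex, and this is precisely what \Cref{prop:state_evolution_expectation} proves. Your proposal, by injecting Gaussian fields "to reproduce the covariance of the dense disorder" and subtracting Onsager terms, is describing a simulation of the dense mixed $p$-spin AMP rather than the sparse message-passing the paper actually runs.

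\paragraph{No sparse-to-dense trajectory transfer.} You flag "sparse-to-dense transfer of the entire trajectory" as the main obstacle, but the paper sidesteps it entirely: it first modifies the $\cspfa$ instance to a locally-treelike, $d$-index-regular hypergraph (\Cref{sec:indexregular_vs_regular_vs_avgdegree}, negligible change), then analyzes the algorithm directly on that hypergraph using $\sigma$-algebra independence and a self-contained state-evolution proof. The satisfying fraction is computed from $\frac{1}{\alpha n}\sum_{a\in E}\E[f(\vz_{\partial a}^L)]$ directly, not by first producing a dense-model energy and then invoking the [jmss22] correspondence on the algorithm's output (which is only established for the \emph{optimum}, not for arbitrary outputs). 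The genuinely new difficulty, which the paper addresses in \Cref{sec:appendix_moments,sec:appendix_stateevo}, is that for $r>2$ the individual summands in $w_{i\to a}^{\ell+1}$ are products of Gaussians, not Gaussians, so Gaussianity of $u$ at finite $d$ fails and one must prove convergence of all moments conditionally on $\cG_i^\ell$.

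\paragraph{Two sequences, and $\Phi_{xx}$ vs.\ $\Phi_x$.} The paper maintains two sequences per vertex: the "nonlinearity controller" $x_i^\ell$, updated with drift $\xi''(\delta\ell)\mu_\ast(\delta\ell)\Phi_x^{\mu_\ast}(\delta\ell,x_i^\ell)\,\delta + \sqrt r\,u_i^{\ell+1}$ (an Euler step for the Parisi SDE), and the spin value $z_i^\ell$, updated by $z_i^{\ell+1}=z_i^\ell+A_i^\ell u_i^{\ell+1}$ with $A_i^\ell\propto\Phi_{xx}^{\mu_\ast}(\delta\ell,x_i^\ell)$. Your description moves the output in the direction $\Phi_x$, conflating $x$ with $z$; the factor that enters the achieved energy is $\E[\Phi_{xx}^{\mu_\ast}(t,X_t)]$, coming from the $z$-martingale (\Cref{eqn:zmartingale}), and this is what \Cref{lemma:goal} pushes to $\ge\ALG_\xi-\epsilon$.

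\paragraph{Overlap concentration.} Your McDiarmid/Efron--Stein sketch is the one place where a different route might work, but the paper simply observes that the algorithm fits the generic $p$-local model of~\cite{chou2022limitations} and invokes their result; your sketch would need to be replaced by that black box or fleshed out into a genuine argument.

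In summary, the skeleton is right, but to turn this into a correct proof you would need to (a) drop the Onsager correction and the per-round Gaussian injection, (b) replace the sparse-to-dense coupling by direct state evolution on the preprocessed hypergraph, and (c) separate the controller $x$ (driven by $\Phi_x$) from the output $z$ (driven by $\Phi_{xx}$).
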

This is analogous to work of~\cite{ams20}, which proves a similar result for mean-field spin glasses. This also extends the algorithm of~\cite{ams21}, which studies Maximum Cut on random regular graphs. In fact, all of these algorithms are related to the Approximate Message Passing (AMP) framework, popularized by the results of~\cite{bayati11}; see \cite{ams_survey} for a recent survey. It was suspected that such an algorithm could be written down for $\cspfa$~\cite{jmss22,gamarnik2023barriers}.

One immediate impact of this result is in quantum computing.  In the gate-based circuit model, many quantum circuits are \emph{overlap-concentrated} for instances of $\cspfa$ of depth up to $c \cdot \log n$ for some $c$~\cite{chou2022limitations}.
This work implies that such quantum circuits, even up to $c \cdot \log n$ depth, cannot outperform classical algorithms on typical instances of a random CSP.
\begin{corollary}[Failure of quantum advantage]
\label{cor:no_quantum_advantage}
Suppose \Cref{assn:alg_minimizer_exists} holds.
Choose any \emph{even} predicate $f$. 
For all $\epsilon > 0$ there exist $\alpha_0 > 0$ and $c: \mathbb{R}^{> 0} \to \mathbb{R}^{> 0}$ such that the following is true:  For all $\alpha \ge \alpha_0$, and
with high probability over instances $\cI \sim \cspfa$ as $n \to \infty$,
\emph{no quantum circuit} with depth up to $c(\alpha) \cdot \log n$ 
that applies 2-local gates on a fixed, bounded-degree architecture
can produce an assignment with satisfying fraction that outperforms classical algorithms by more than $\frac{\epsilon}{\sqrt{\alpha}}$.
\end{corollary}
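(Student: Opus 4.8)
The plan is to sandwich the quantum value between the upper bound that \cite{jmss22} proves for all overlap-concentrated algorithms and the lower bound that our \Cref{thm:main_theorem} achieves classically. Since an even predicate $f$ has Fourier support only on even-size sets, $\|f^{=1}\|^2=0$, so \Cref{thm:main_theorem} applies to $\xi(s)=\sum_j\|f^{=j}\|^2 s^j$. Write $\mathrm{val}(\cA,\cI)$ for the fraction of clauses of $\cI$ satisfied by the assignment produced by algorithm $\cA$. Fix $\epsilon>0$. Applying \Cref{thm:main_theorem} with error $\epsilon/2$ gives $\alpha_1>0$ and a randomized classical, overlap-concentrated algorithm $\cA^\star$ such that for all $\alpha\ge\alpha_1$, with high probability over $\cI\sim\cspfa$ and the internal randomness of $\cA^\star$, $\mathrm{val}(\cA^\star,\cI)\ge\E[f]+(\ALG_\xi-\epsilon/2)/\sqrt\alpha$; this is where \Cref{assn:alg_minimizer_exists} enters.

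For the quantum side, I would import the causal-cone (lightcone) analysis of \cite{chou2022limitations} (see also \cite{farhi2020quantumtypical,jmss22}): a depth-$D$ circuit of $2$-local gates on a fixed bounded-degree architecture computes each output qubit from only the constraints within a cone of radius $O(D)$ in the constraint hypergraph, so when $\cI\sim\cspfa$ is sparse and $D\le c(\alpha)\cdot\log n$ for a suitable threshold $c(\alpha)>0$ these cones are with high probability of bounded size and locally tree-like. In this regime \cite{chou2022limitations} shows that the randomized algorithm obtained by measuring the output state in the computational basis is overlap-concentrated over $\cspfa$ in the sense of \Cref{defn:overlap-concentrated}, and moreover that the resulting control of the pairwise overlaps holds with high probability over $\cI$ simultaneously for the whole family of such circuits. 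Fix $c(\alpha)$ to be this threshold. The upper bound of \cite{jmss22} quoted above (valid since $f$ is even), applied with error $\epsilon/2$, then yields $\alpha_2>0$ such that for all $\alpha\ge\alpha_2$, with high probability over $\cI\sim\cspfa$, every such circuit $\cQ$ of depth at most $c(\alpha)\log n$ has $\mathrm{val}(\cQ,\cI)\le\E[f]+(\ALG_\xi+\epsilon/2)/\sqrt\alpha$.

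Finally, put $\alpha_0=\max\{\alpha_1,\alpha_2\}$. For $\alpha\ge\alpha_0$, intersect the two high-probability events; by a union bound this still occurs with high probability over $\cI$. On it, for every eligible circuit $\cQ$,
\[
\mathrm{val}(\cQ,\cI)\;\le\;\E[f]+\frac{\ALG_\xi+\epsilon/2}{\sqrt\alpha}\;=\;\Big(\E[f]+\frac{\ALG_\xi-\epsilon/2}{\sqrt\alpha}\Big)+\frac{\epsilon}{\sqrt\alpha}\;\le\;\mathrm{val}(\cA^\star,\cI)+\frac{\epsilon}{\sqrt\alpha},
\]
so no such circuit beats $\cA^\star$ — hence no such circuit beats the best classical algorithm — by more than $\epsilon/\sqrt\alpha$. (One bookkeeping point: $\cA^\star$ is randomized, so its guarantee is ``with high probability over instance and seed''; by Fubini, with high probability over $\cI$ alone a single random run already meets the bound, and one may also derandomize by taking the best of polynomially many independent runs.)

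The main obstacle is the quantum step: establishing that depth-$(c(\alpha)\log n)$, $2$-local circuits on a bounded-degree architecture are overlap-concentrated over $\cspfa$ \emph{uniformly} over the continuously parametrized family, and identifying the admissible depth $c(\alpha)$. This is precisely the content we borrow from \cite{chou2022limitations}: the threshold must shrink as $\alpha$ — hence the effective degree of the random constraint hypergraph — grows, so that the causal cones of the instance and of a $t$-correlated copy remain with high probability simple enough to force the output overlap to concentrate. Everything else is bookkeeping: matching ``even predicate'' to the hypothesis $\|f^{=1}\|^2=0$ of \Cref{thm:main_theorem}, splitting the $\epsilon$ budget, and the union bound.
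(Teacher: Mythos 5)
The paper does not give an explicit proof of this corollary; the argument is only sketched in the surrounding text, and your proposal reconstructs exactly the intended sandwich: even $f$ implies $\|f^{=1}\|^2=0$ so \Cref{thm:main_theorem} applies and gives a classical, overlap-concentrated algorithm achieving $\E[f]+(\ALG_\xi-\epsilon/2)/\sqrt{\alpha}$; the quantum circuit of the stated depth on a bounded-degree architecture is generic $p$-local and hence overlap-concentrated by \cite{chou2022limitations}; and the upper bound of \cite{jmss22} caps any overlap-concentrated algorithm at $\E[f]+(\ALG_\xi+\epsilon/2)/\sqrt{\alpha}$. Your split of the $\epsilon$ budget, the union bound, and the Fubini remark for the randomized classical baseline are all fine.

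One quantifier point worth making explicit, since the corollary reads ``with high probability over $\cI$, \emph{no} quantum circuit \dots'': the upper bound from \cite{jmss22} is stated per-algorithm (``for every overlap-concentrated $\cA$, w.h.p.\ over $\cI$ \dots''), so getting the stronger order (w.h.p.\ simultaneously over the whole continuum of circuits) genuinely requires the uniform-over-circuits lightcone/concentration statement you attribute to \cite{chou2022limitations}, e.g.\ via a net over the compact family of $p$-local circuits on a fixed architecture plus the Lipschitz dependence of outputs on gate parameters. You correctly flag this as the nontrivial imported ingredient; it is the same thing the paper implicitly relies on. The remaining content of your proof is bookkeeping and agrees with the paper's sketch.
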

Even until now, the promise of quantum advantage for optimization problems has been unclear. Multiple times, improved analyses of the Quantum Approximate Optimization Algorithm (QAOA)~\cite{farhi2014quantum} suggesting a such an advantage~\cite{farhi2015quantum,wang2018,ryananderson2018quantum,basso2021quantum} led to better classical algorithms~\cite{barak2015beating,hastings2019classical,Marwaha_2021,barak21,ams21}.
This work shows that for the QAOA and many other quantum circuits up to logarithmic depth, there can be \emph{no quantum advantage} on random CSPs with even predicates of sufficiently large clause density. Note that unlike related work on the QAOA~\cite{farhi2020quantumtypical,basso_spinglass_qaoa,Anshu2023concentrationbounds}, this rules out quantum advantage \emph{regardless} of the presence of overlap gap phenomena.

We remark on the technical condition of \emph{even predicates}. This can likely be removed with two steps. First, the obstruction of \emph{overlap-concentrated} algorithms only applies to \emph{even} predicates. New techniques to prove OGPs for spherical spin glasses~\cite{huang2023algorithmic} can possibly be ported to all Ising spin glasses; these results would immediately transfer to random instances of $\cspfa$ via~\cite{jmss22}. Second, the algorithm we analyze applies when predicates have no linear-weight terms; adjustments taken in~\cite{sellke2021optimizing} will likely generalize this algorithm. These steps would extend \Cref{cor:no_quantum_advantage} to all predicates $f$.

Previous works noting limitations of logarithmic-depth quantum algorithms on sparse random optimization problems~\cite{farhi2020quantumtypical,farhi2020quantumworst,chou2022limitations} also rely on the concept of \emph{overlap gap property}. 
As in~\cite{ams20,hs23_spherical_alg}, this work shows the existence of a family of random optimization problems with an \emph{optimal} algorithm up to the OGP barrier. 
All of the optimal algorithms are classical, which hints that quantum algorithms that cannot pass the OGP barrier offer no additional power for classical optimization in general.

\subsection{Proof overview}
Our proof builds on a number of other works, especially~\cite{montanari2019optimization,ams20,ams21}.
We outline the proof here.

In \Cref{sec:localalgdefns}, we define our model of CSP and describe the algorithm. Instead of $\cspfa$, we run the algorithm on a CSP whose associated constraint hypergraph is locally a regular \emph{hypertree}.\footnote{In the hypergraph, the variables are vertices and the constraints are hyperedges. We show this change has negligible effect in~\Cref{sec:indexregular_vs_regular_vs_avgdegree}.} 
This algorithm runs in \emph{iterations}, where on each iteration, every vertex recomputes its value using information from adjacent vertices; we parameterize this algorithm by the functions (two per \emph{iteration number}) that recompute each vertex's value.

We can already study the performance of the parameterized algorithm, analyzing how information propagates locally through the hypertree.
In \Cref{sec:independence}, we generalize a set of $\sigma$-algebras from~\cite{ams21} to describe this behavior. 
This allows us to analyze the correlation of a variable from the algorithm with any other variable from a particular $\sigma$-algebra. With these tools in hand, we simplify the expected performance of the parameterized algorithm, deferring some detailed calculations to \Cref{sec:appendix_moments}.

We then show how the algorithm, with the right choice of parameters, solves a stochastic optimal control problem that is dual to the minimization problem in \Cref{eqn:alg_defn}.
\begin{enumerate}
    \item First, we show that with high probability as $\alpha \to \infty$, the intermediate values in the algorithm are asymptotically \emph{Gaussian}. We prove this in~\Cref{sec:stateevo} with an argument known as \emph{state evolution}. Notably, this argument works for sparse models (as in~\cite{ams21}), and for message-passing on hypergraphs, where some values are products of Gaussians. This allows us to study quantities in the algorithm output as \emph{discrete} stochastic processes. We defer the calculation of moments to \Cref{sec:appendix_moments}, and the full proof of state evolution to \Cref{sec:appendix_stateevo}.
    \item Then, we take the number of iterations to a very large constant while proportionally decreasing the step size. Under this process, each discrete stochastic process asymptotically approximates a \emph{continuous} stochastic process. In \Cref{sec:nonlinearities}, we show how to choose the optimal parameters. We also show that rounding the algorithm's output to a value in $\{\pm 1\}^n$ does not meaningfully affect its performance.
\end{enumerate}

This work extends \cite{ams21} in a spiritually similar way that \cite{ams20} extends \cite{montanari2019optimization}; roughly, we extend from graphs to \emph{hypergraphs}.
This requires more involved calculations and in some cases new arguments.
For example, we require a more complicated proof of state evolution, since the messages in our algorithm are not Gaussian at fixed degree. Furthermore, our proof of rounding is new, since we cannot rely on spectral statistics of random \emph{graphs}.
Finally, our analysis uses an additional symmetry we call \emph{index-regularity} (defined in \Cref{sec:localalgdefns}) which, to the best of our knowledge, does not come up in prior work. We show that enforcing this symmetry has negligible effect in \Cref{sec:indexregular_vs_regular_vs_avgdegree}, and use this symmetry to simplify some of our analysis.

There are a number of quantities to keep track of.
We annotate the important symbols in \Cref{table:notation}.
Following notation in~\cite{ams21}, we typically use lowercase for a quantity in the algorithm (e.g. $u^\ell_{i \to a}$), uppercase with step size $\delta$ for a \emph{discrete} stochastic process (e.g. $U_\ell^\delta$),
 and uppercase with time $t$ for a \emph{continuous} stochastic process (e.g. $U_t$).
\begin{table}[t]
\centering
\begin{tabular}{|l|l|l|l|l|}
\hline
\textbf{Term}                                 & \textbf{\cite{montanari2019optimization}} & \textbf{\cite{ams20}} & \textbf{\cite{ams21}} & \textbf{(our work)}    \\ \hline
Brownian motion (asymptotically independent)                               & $u$                                       & $\Delta$              & $u$                   & $u$                    \\ \hline
sum of Brownian motions &        --                                   & $z$                   &              --         &  $w$          \\ \hline
nonlinearity controller                & $x$                                       & $x$                   & $x$                   & $x$                    \\ \hline
value of spin                                 & $z$                                       & $m = f(\ldots)$        & $z$                   & $z$                    \\ \hline
\end{tabular}  
\caption{
\label{table:notation}
\footnotesize List of important terms used in the analysis of message-passing algorithms for mean-field spin glasses and random CSPs.
Terms in the first row asymptotically approach
independent Gaussians after applying state evolution, and so constitute the \emph{Brownian motion} (up to normalization) in the stochastic picture.
}
\end{table}

\subsection{Related work}

\paragraph{Approximate Message Passing}

Approximate Message Passing (AMP) is an algorithmic framework used to design locally computable, iterative methods for a variety of high-dimensional statistical inference and recovery tasks~\cite{lesieur2017constrained, rush2017capacity, mondelli2021approximate, maskey2022generalization, baranwal2023optimality}.
Many of the ideas motivating AMP originate from the study of \emph{belief propagation}, an iterative method rediscovered multiple times in different scientific contexts; see~\cite{mezard2009information, koller2009probabilistic} for a broader history.
AMP was first introduced in the context of compressed sensing and signal recovery~\cite{donoho2009message, kabashima2003cdma}, partly inspired by the use of structured priors in decoding low-density parity check codes~\cite{richardson2008modern, montanari2012graphical}.
In spin glass theory, AMP is synonymous with computing the fixed points of a certain Thouless--Anderson--Palmer (TAP) free energy functional~\cite{thouless1977solution, krzakala2012probabilistic, zdeborova2016statistical, chen2018tap, chen2023generalized}.
In fact, our algorithm is inspired by similar connections between the TAP equations and belief propagation~\cite{mezard1987spin, mezard2009information}.

A beneficial feature of AMP is that one can rigorously analyze the asymptotic behavior of statistics in the framework. This is typically done through the use of \emph{state evolution} statements~\cite{bolthausen2014iterative, bayati11}: laws of large numbers which dictate the behavior of such statistics when the dimension in the problem approaches infinity.
This has been subsequently used to compute approximate ground states of both Ising and spherical mean-field spin glasses~\cite{montanari2019optimization,ams20,huang2023algorithmic}, as well as Max-Cut on random regular graphs of large degree~\cite{ams21}. 
See the surveys of~\cite{feng2021unifying, ams_survey} for further discussion on AMP and its variants.

\paragraph{Ising spin glasses and CSPs}
Although mean-field spin glasses with Ising spins were proposed as mathematical models of materials~\cite{sherrington1975solvable}, their connection to combinatorial optimization has been observed for some time~\cite{Fu_1986,mezard1987spin}. 
More recently, a duality has emerged between Ising spin glasses and sparse random CSPs at large enough clause density; this started with relating the supremum of the two models \cite{dms15,sen16,panchenko2017ksat}. This supremum was calculated using a novel heuristic argument known as ``replica symmetry breaking'' \cite{parisi79}, and proven to be correct in \cite{talagrand2006parisi,panchenko2013parisi} (see also \cite{panchenko2013sherrington}). Moreover, the duality preserves a phenomenon known as \emph{overlap gap}~\cite{gamarnik2014limits}, a generalization of solution clustering that obstructs certain algorithms; see \cite{cgpr19,gjw20,gjw21,chou2022limitations,huang2022tight,jmss22}, and the expositions of \cite{gamarniksurvey,huang2022computational}. Notably, a strong version of overlap gap phenomena~\cite{huang2022tight,huang2023algorithmic,hs23_spherical_alg} is connected to the message-passing algorithms of~\cite{montanari2019optimization,ams20,ams21}, and this work.
Other topics at the intersection of spin glasses and CSPs include approximating partition functions (e.g. \cite{sly2012computational}), and computing thresholds for random CSPs (e.g. \cite{ding2013maximum,ding2021proof}).

\paragraph{Quantum algorithms for classical optimization}
Recent demonstrations of quantum advantage (e.g. \cite{Arute_2019,ustc_advantage}) have led to increased excitement around using contemporary (i.e. NISQ~\cite{Preskill_2018}) quantum computers for a practically relevant problem.
Although classical optimization is such a problem, quantum algorithms here are not proven to have any benefit over classical algorithms. The most popular proposal is the Quantum Approximate Optimization Algorithm (QAOA)~\cite{farhi2014quantum}. 
Earlier suggestions of quantum advantage with the QAOA~\cite{farhi2015quantum,wang2018,ryananderson2018quantum,Wurtz_2021} led to better classical algorithms~\cite{barak2015beating,hastings2019classical,Marwaha_2021,barak21}; note that all of these comparisons were for very small circuit depth.
Recently, \cite{farhi2019quantum,basso2021quantum, basso_spinglass_qaoa} find an expression for the performance of the QAOA on spin glasses and sparse CSPs at any constant depth.

The QAOA is a \emph{local} algorithm. It was noted that this feature can limit its performance~\cite{farhi2020quantumtypical,farhi2020quantumworst}. The authors in \cite{chou2022limitations} unify definitions of quantum and classical local algorithms to show that the QAOA is obstructed by overlap gap phenomena on sparse problems. Because of this, \cite{ams21} rules out quantum advantage through $\epsilon \cdot \log n$-depth QAOA on random regular Max-Cut instances of large degree (up to \Cref{assn:alg_minimizer_exists}); our work generalizes this fact to CSPs with even predicates.
Regardless of theory, there are many recent demonstrations of the QAOA on state-of-the-art quantum computing systems; see for example \cite{ebadi2022quantum, shaydulin2023qaoa, maciejewski2023design}.
It is also known that the QAOA can solve satisfiability faster than brute force~\cite{boulebnane2021}; see also \cite{dalzell2022mind} for a related protocol with a more-than-quadratic improvement.
Some quantum algorithms for machine learning are known to provide no quantum advantage~\cite{tang_dequantize_comment}.

\section{Local algorithms for sparse CSPs}
\label{sec:localalgdefns}

A CSP instance has an implicit \emph{directed} hypergraph, with variables as vertices and constraints as hyperedges. 
\begin{definition}[Directed hypergraph of a CSP]
    Consider a $r$-uniform CSP instance $\cI$ with $n$ variables and constraints $E(\cI)$, where each constraint $e \in E(\cI)$ involves $r$ (ordered) variables in $[n]$. Then $\cI$ is associated with the directed hypergraph with vertices $[n]$ and hyperedges $E(\cI)$; i.e. $G([n], E(\cI))$.
\end{definition}
We consider a hyperedge \emph{directed} when the involved vertices are ordered.
For example, if constraint $e$ involves vertices $\{v_1, v_2, v_3\}$
in order $[2,3,1]$, then the associated directed hyperedge is $[v_2, v_3, v_1]$. We denote this ordered list as $\partial e$. We often refer to the CSP instance by its hypergraph. The instance is fully specified by its hypergraph, the CSP function $f$, and the (ordered) randomness $[\epsilon_{e_1}, \dots, \epsilon_{e_r}]$ on each constraint $e \in E(\cI)$.
For each variable $i$, we use $\partial i$ to denote the set of constraints  $i$ participates in.

The \emph{factor graph} of a hypergraph $G(V,E)$ is a bipartite graph of order $|V| + |E|$, with edges between each hyperedge $e \in E$ and each vertex in $\partial e$.
For more information on the factor graph formalism, see~\cite{kschischang2001factor}.
Let $\B_i(\ell)$ be the induced subgraph of $G$ given by the $\ell$-local neighborhood around vertex $i \in V$.\footnote{Note that we define distance as the number of \emph{hypergraph vertices} in the shortest path; vertices that share a hyperedge are distance $1$ apart.}
\begin{definition}
A finite hypergraph $G(V,E)$ is \emph{$(\ell, \epsilon)$-locally treelike} if the factor graph of $\B_i(\ell+1)$ is isomorphic to a tree for at least $(1-\epsilon) \cdot n$ vertices $i \in V$.
\end{definition}
This definition is a generalization of locally treelike graphs used in~\cite{ams21}; see \Cref{fig:hypertree_factorgraph} for an illustration.

\begin{figure}[t]
\centering
\includegraphics[width=0.4\textwidth]{./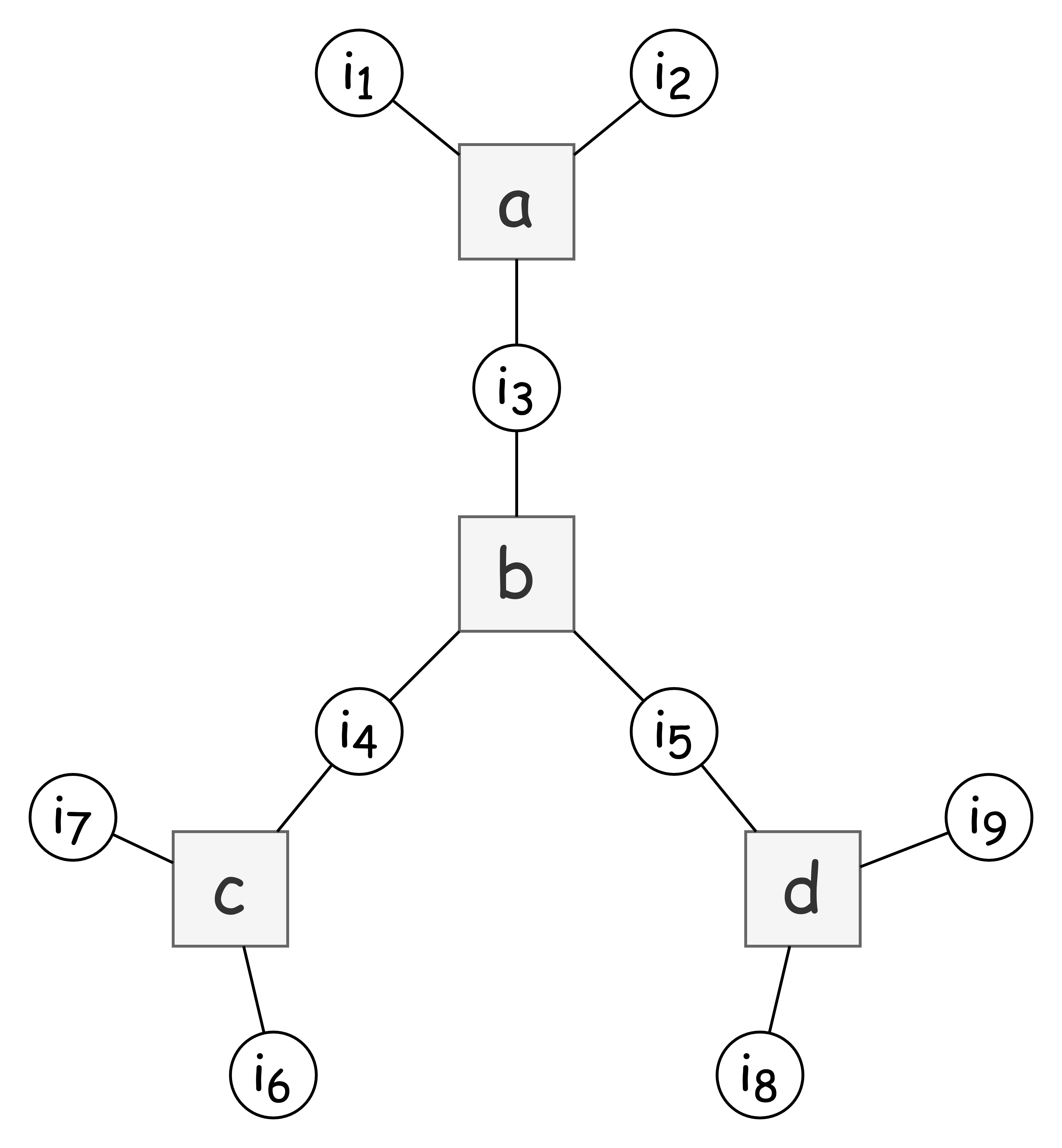}
\caption{\footnotesize Factor graph of the hypergraph with vertices $\{i_1, \dots, i_9\}$ and hyperedges $\{a,b,c,d\}$. Note that this hypergraph is locally treelike.}
\label{fig:hypertree_factorgraph}
\end{figure}

Every \emph{message-passing} algorithm is contained in the model of local algorithms presented in~\cite{chou2022limitations}. In this model, the value associated with each variable is uniquely determined by its local neighborhood. 
\begin{definition}[Generic $p$-local algorithms, {\cite[Definition 3.1]{chou2022limitations}}]
Consider a randomized algorithm $\cA$ that inputs a $r$-uniform hypergraph $G(V,E)$ and outputs a label in an alphabet $\Sigma$ to each vertex in the graph, i.e. $\cA(G) \in \Sigma^V$.
Then $\cA$ is \emph{generic $p$-local} if for any set $S \subseteq V$, the following conditions hold:
\begin{enumerate}
    \item \emph{(Local distribution determination)} The joint marginal distribution $(\cA(G)_v)_{v \in S}$ is identical to $(\cA(G')_v)_{v \in S}$, where $G' = \bigcup_{v \in S} \B_v(p)$.
    \item \emph{(Local independence)} The distribution $\cA(G)_v$ is statistically independent of the joint distribution of $\cA(G)_{v'}$ over all vertices $v' \notin \B_v(2p)$.
\end{enumerate}
\end{definition}
This model also includes the Quantum Approximate Optimization Algorithm~\cite{farhi2014quantum} on typical instances of $\cspfa$. In fact, any depth-$p$ quantum circuit with a fixed architecture of bounded degree can be seen as generic $p$-local by including the architecture in the input hypergraph.
\cite{chou2022limitations} proves that for all $p \le c(\alpha) \cdot \log n$, generic $p$-local algorithms are overlap-concentrated over instances of $\cspfa$ with high probability as $n \to \infty$.

In our analysis, we make some convenient modifications to the input.
We show that these changes have negligible impact on the algorithm's performance in \Cref{sec:indexregular_vs_regular_vs_avgdegree}; see also~\cite{yoon2011belief,dms15,sen16,cgpr19,ams21}.
First, since our algorithm is local and the instances are locally treelike on all but a negligible fraction of vertices\footnote{In particular, for any finite $\ell$, with high probability our instances are $(\ell, \epsilon)$-locally treelike for some $\epsilon$, where $\epsilon \to 0$ as $n \to \infty$.}, we analyze the algorithm on an instance which is locally a \emph{hypertree}, i.e. a hypergraph whose factor graph is a tree. 
We also make the hypergraph \emph{regular}, by embedding our instance of $\cspfa$ in a locally treelike $d$-regular instance with $d = r \cdot \alpha + o_{\alpha}(1)$.

On hypergraphs, we need one additional property, which we call \emph{index-regular}. 
Here, we mean that for each $\iota \in [r]$, each variable is used in  \emph{index} $\iota$ of a constraint the same number of times.
\begin{definition}[$d$-index-regular CSP]
Consider a ($r$-uniform) CSP instance $\cI$ with $n$ variables and constraints $E(\cI)$, such that each constraint $e \in E(\cI)$ has associated variables $\partial e = [v_{e,1}, \dots, v_{e,r}]$.
Then $\cI$ is $d$-\emph{index-regular} if every variable $v \in [n]$ is involved in $d$ constraints, and moreover, for each $\iota \in [r]$,  $v$ is the $\iota^{\text{th}}$-indexed variable in exactly $\frac{d}{r}$ constraints.
\end{definition}

\begin{figure}[t]
\setlength\fboxrule{2pt}
\setlength\fboxsep{3mm}
{\centering
\fbox{
\parbox{44.5em}{
\textbf{Input:} The timestep $\delta > 0$, and a $d$-regular $r$-uniform  hypergraph with girth at least $2 \lfloor \frac{1}{\delta} \rfloor + 2$.

\textbf{Parameters:} The CSP function $f: \{\pm 1\}^r \to \{0,1\}$, constant $K > 0$, and nonlinearities
\begin{align*}
\big\{ F_{\to,\ell}: \R^{\ell} \rightarrow [-K,K] \big\}_{\ell = 0, \ldots, L-1}
\qquad\qquad
\big\{ F_{\ell}: \R^{\ell} \rightarrow [-K,K] \big\}_{\ell = 0, \ldots, L-1}
\end{align*}

\textbf{Algorithm:}
\vspace{-2mm}
\begin{enumerate}
    \item Initialize the following:
\begin{quote}
\vspace{-2mm}
\begin{enumerate}[a.]
\item $z_i^0 \sim \cN(0, \delta)$ for all $i \in V$ and $z_{i \rightarrow a}^0 = z_i^0$ for all $i \in \partial a$ and $a \in E$.
\item $w_i^0 = 0$ for all $i \in V$ and $w_{i \rightarrow a}^0 = 0$ for all $i \in \partial a$ and $a \in E$.
\end{enumerate}
\end{quote}
    \item For each iteration $\ell = 0, \ldots, L-1$ where $L = \lfloor \frac{1}{\delta} \rfloor$:
\begin{quote}
\vspace{-2mm}
\begin{enumerate}[a.]
\item Let $A_{i \rightarrow a}^{\ell} = F_{\to,\ell}\big( u_{i \rightarrow a}^{1}, \ldots, u_{i \rightarrow a}^{\ell} \big)$, and let $A_{i}^{\ell} = F_{\ell}\big( u_{i}^{1}, \ldots, u_{i}^{\ell} \big)$.
\item For each factor $a \in E$ and involved variable $i \in \partial a$, compute:
\begin{align*}
w_{i \rightarrow a}^{\ell + 1}
&= \frac{1}{\sqrt{d-1}} \cdot \sum_{b \in \partial i \setminus a}
\D_{i;b} f(\vz_{\partial b \to b}^\ell) \\
u_{i \rightarrow a}^{\ell + 1}
&= w_{i \rightarrow a}^{\ell + 1} - w_{i \rightarrow a}^{\ell} \\
z_{i \rightarrow a}^{\ell + 1}
&= z_{i \rightarrow a}^{0} + \sum_{s=1}^{\ell+1} A_{i \rightarrow a}^{s-1} \cdot u_{i \rightarrow a}^{s}
\end{align*}

\item For each variable $i \in V$, compute:
\begin{align*}
w_{i}^{\ell + 1}
&= \frac{1}{\sqrt{d}} \cdot \sum_{b \in \partial i}
\D_{i;b} f(\vz_{\partial b \to b}^\ell)  \\
u_{i}^{\ell + 1}
&= w_{i}^{\ell + 1} - w_{i}^{\ell} \\
z_{i}^{\ell + 1}
&= z_{i}^{0} + \sum_{s=1}^{\ell+1} A_{i}^{s-1} \cdot u_{i}^{s}
\end{align*}
\end{enumerate}
\end{quote}
\end{enumerate}

\textbf{Output:} $\sgn(\vz^L)$, where $\sgn$ is applied entrywise.
}}
\par}
\caption{\footnotesize Description of the message-passing algorithm.}
\label{fig:algorithm}
\end{figure}

We describe our choice of algorithm in~\Cref{fig:algorithm}. This involves some notation we describe here. 
A function that depends on zero inputs is a constant; i.e. $A^0_i = F_0$ and $A^0_{i \to a} = F_{\rightarrow,0}$.
For any scalar quantity that depends on vertex, a bolded version represents a vector quantity.
For example, given $\partial a = [i_1, \dots, i_r]$, $\vz^\ell_{\partial a\to a}$ is the \emph{ordered} list $[z^\ell_{i_1 \to a}, \dots, z^\ell_{i_r \to a}]$, and $\vz^\ell_{\partial a}$ is the \emph{ordered} list $[z^\ell_{i_1}, \dots, z^\ell_{i_r}]$.
Similarly, $\vz^L$ is the vector in $\mathbb{R}^V$ such that $\big( \vz^L \big)_i = z_i^L$.

Our analysis makes extensive use of the partial derivative operator.
For any factor $a$ and variable $j \in \partial a$, let $\D_{j;a} f$ be the partial derivative of $f$ at the \emph{coordinate} $\iota$ such that $j$ is the $\iota$th coordinate of $\partial a$.
For example, if $\partial a = [k,i,j]$ and $f(x_1, x_2, x_3) = x_1 x_2 + x_2 x_3$, then $\D_{k;a} f(x_1, x_2, x_3) = x_2$ and $\D_{i;a} f(x_1, x_2, x_3) = x_1 + x_3$. Note the following:
\begin{remark}
\label{remark:f_derivative_doesnt_use_i}
    Since the domain of $f$ is $\{\pm 1\}^r$, $f$ has a representation as a multilinear polynomial. As a result, for all factors $b$ and variables $i \in \partial b$, $\D_{i;b} f$ is independent of the entry with the same coordinate as $i$ in $\partial b$.
\end{remark}

We additionally require the nonlinearities to satisfy the following two properties:

\begin{property}[Similar node and node-to-factor evolution]
\label{hyp:A_deviation_small}
    For every $\ell \ge 0$, variable $i$, factor $a \in \partial i$, and $m \in \mathbb{N}$, we have $\E[|A_i^\ell - A_{i \to a}^{\ell}|^m] = O_d(\frac{1}{d^{m/2}})$.
\end{property}

\begin{property}[Normalization of second moment]
\label{hyp:second_moment}
    For every $\ell \geq 0$, variable $i$, and factor $a \in \partial i$, we have $\E[(A_i^{\ell})^2] = \E[(A_{i \to a}^{\ell})^2] = \frac{\delta}{\nu_{\ell + 1}}$, where $\nu_\ell \defeq \frac{\xi'(\ell \delta) - \xi'((\ell - 1)\delta )}{r}$, and $\xi$ is defined as in \Cref{thm:main_theorem}.
\end{property}

In~\Cref{sec:nonlinearities}, we show how to choose the nonlinearities based on~\Cref{eq.parisi-equations}, and show that our choice indeed satisfies \Cref{hyp:A_deviation_small} and \Cref{hyp:second_moment}.

\begin{remark}
    Our analysis of the algorithm above holds even if the CSP predicate depends on factor $a$, so long as the absolute value of each Fourier component is fixed (this implies that all $f_a$ are associated with the same $\xi$). For example, this implies that the algorithm achieves the same value for \emph{every} choice of signs $\{ \epsilon_{i,\iota} \}_{i \in [n], \iota \in [r]}$ in \Cref{defn:randomcsp}.
\end{remark}

\section{Independence claims}
\label{sec:independence}

Given random variables $\{X_1,\dots, X_k\}$, let $\sigma(X_1,\dots,X_k) = \sigma(\{X_1,\dots,X_k\})$ be the \emph{$\sigma$-algebra} generated by $\{X_1, \ldots, X_k\}$. 
For an introduction to $\sigma$-algebra and related concepts, see for example~\cite{durrett}.

Consider the algorithm in~\Cref{fig:algorithm}.
For every $0 \le \ell \le L$,\footnote{Recall that because the instance has girth at least $2L+2$, all local neighborhoods of radius up to $L$ are treelike.} variable $v$, and factor $a \in \partial v$, we define
\begin{align*}
    \mathcal{G}_v^{\ell} &\defeq \sigma\left( \left \{ z_i^0 \ \mid \ i \in \B_v(\ell) \right \}\right)\,,
    &
   \mathcal{G}_{v \to a}^{\ell} &\defeq \sigma\left( \left \{ z_i^0 \ \mid \ i \in \B_{v \to a}(\ell) \right \}\right)\,.
\end{align*}

Intuitively, these $\mathcal{G}_v^\ell$ algebras are the random variables determined by the initial Gaussian variables in the radius-$\ell$ neighborhood around $v$; here, $\ell$ is a \emph{distance}. 
We'll reuse the notation $\B_v(\ell)$ as the set of \emph{nodes} of distance\footnote{Once again, we consider \emph{distance} of $i$ and $j$ as the number of \emph{nodes} required to pass from $i$ to $j$; the factors are not counted.} at most $\ell$ from $v$ in the corresponding hypergraph. We define the set $\B_{v \to a}(\ell)$ as the set of nodes of distance at most $\ell$ from $v$, \emph{excluding} nodes that reach $v$ through the factor $a$.
We formalize a series of information-theoretic claims using these $\sigma$-algebras, similar to~\cite[Lemma 2.2]{ams21}. See also an illustration in \Cref{fig:independence_factorgraph}.

\begin{figure}[t]
\centering
\includegraphics[width=0.6\linewidth]{./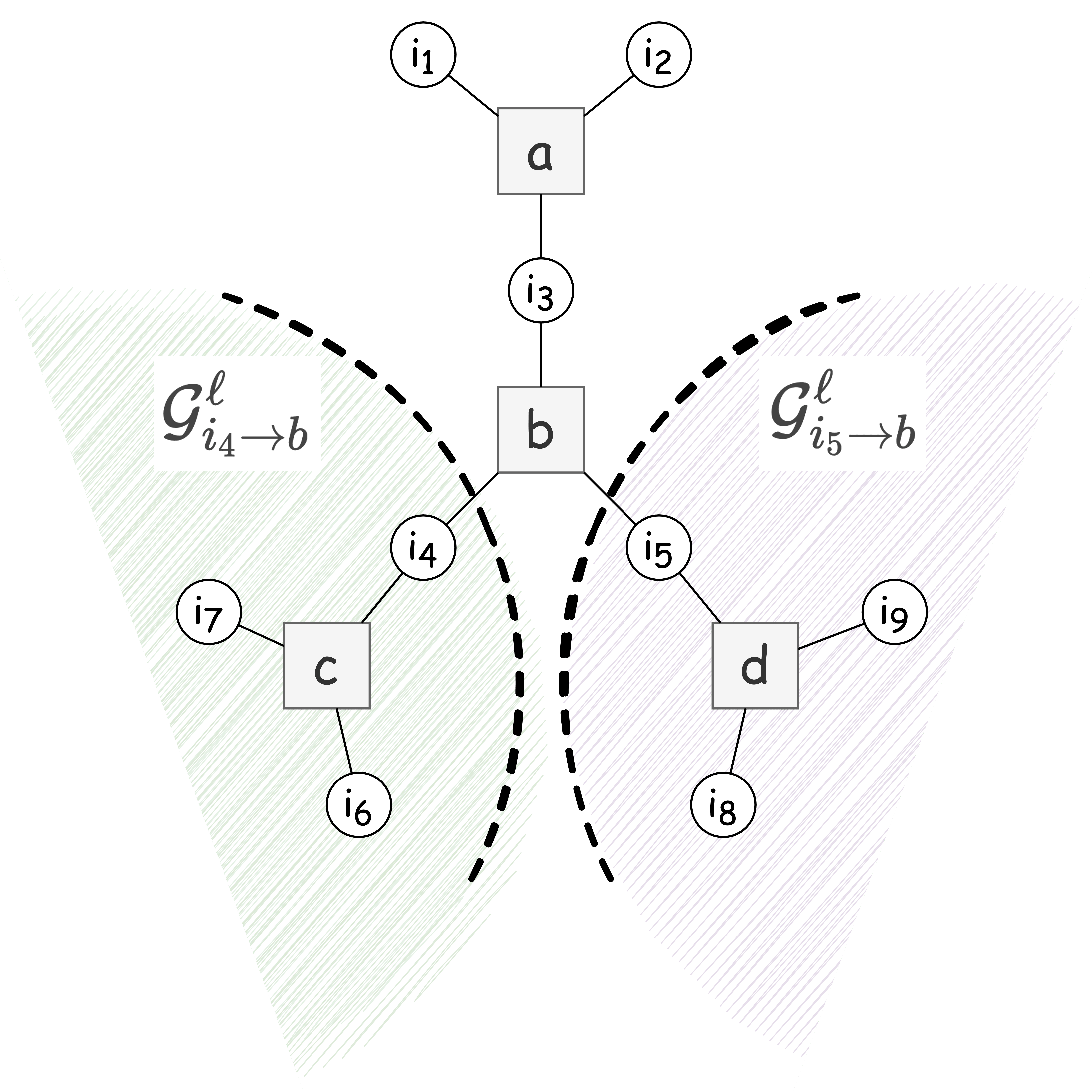}
\caption{\footnotesize An illustration of $\sigma$-algebras on the hypertree. Since the path from $i_4$ to $i_5$ goes through factor $b$, $\cG_{i_4 \to b}^{\ell}$ and $\cG_{i_5 \to b}^{\ell}$ are independent by \Cref{lemma:SigmaAlgebraIndependence}. In addition, $\cG_{i_3 \to a}^{\ell + 1} = \sigma\left(\cG_{i_3}^{0} \bigcup \cG_{i_4 \to b}^{\ell} \bigcup \cG_{i_5 \to b}^{\ell}\right)$ by \Cref{lemma:SigmaAlgebraDecomposition}.}
\label{fig:independence_factorgraph}
\end{figure}

The first fact about independence informally says that node-to-factor messages never use information on ``the other side'' of the factor; once information goes across $a$, it never comes back.
\begin{lemma}\label{lemma:SigmaAlgebraIndependence}
For every $0 \le \ell_1, \ell_2 \le L$, distinct variables $i,j$, and factors $a \in \partial i, b \in \partial j$, if every path of length at most $2L$ from $i$ to $j$ goes through $a$ and $b$, then $\mathcal{G}^{\ell_1}_{i \to a}$ and $\mathcal{G}^{\ell_2}_{j \to b}$ are independent.
\end{lemma}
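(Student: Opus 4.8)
The key point is that the only randomness underlying all the $\sigma$-algebras $\mathcal{G}^\ell_{\bullet}$ is the collection of initial values $\{z^0_i\}_{i\in V}$, which are \emph{mutually independent} (each an independent $\mathcal{N}(0,\delta)$ draw). Hence for any vertex subsets $S,T\subseteq V$ with $S\cap T=\emptyset$, the $\sigma$-algebras $\sigma(\{z^0_i:i\in S\})$ and $\sigma(\{z^0_i:i\in T\})$ are independent. Since $\mathcal{G}^{\ell_1}_{i\to a}=\sigma(\{z^0_k:k\in\B_{i\to a}(\ell_1)\})$ and $\mathcal{G}^{\ell_2}_{j\to b}=\sigma(\{z^0_k:k\in\B_{j\to b}(\ell_2)\})$, the lemma reduces to the purely combinatorial claim
\[
\B_{i\to a}(\ell_1)\cap\B_{j\to b}(\ell_2)=\emptyset .
\]

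To prove this, suppose for contradiction that some vertex $k$ lies in the intersection. Then $\mathrm{dist}(i,k)\le\ell_1\le L$ and $\mathrm{dist}(j,k)\le\ell_2\le L$, so $i,j,k$ all lie in the radius-$L$ neighborhood of $k$, which is a hypertree (its factor graph is a tree) by the girth hypothesis $\mathrm{girth}\ge 2L+2$. Working inside this hypertree, let $c$ be the median of $i,j,k$ — the unique vertex lying on all three pairwise geodesics — and let $P$ be the concatenation of the geodesic from $i$ to $c$ with the geodesic from $c$ to $j$. By the tree-median property the three legs pairwise meet only at $c$, so $P$ is a simple path from $i$ to $j$; and since $c$ lies on the $i$--$k$ and $k$--$j$ geodesics, its length is $\mathrm{dist}(i,c)+\mathrm{dist}(c,j)\le\mathrm{dist}(i,k)+\mathrm{dist}(k,j)\le 2L$. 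I claim $P$ violates the hypothesis. If $c\ne i$, then the $i$--$c$ leg is a nonempty prefix of the $i$--$k$ geodesic, so $P$ leaves $i$ through the same factor as that geodesic, which by the definition of $\B_{i\to a}(\ell_1)$ is not $a$. If instead $c=i$ (so necessarily $c\ne j$), then $P$ is exactly the portion of the $k$--$j$ geodesic running from $i$ to $j$, so it enters $j$ through the same factor as that geodesic, which by the definition of $\B_{j\to b}(\ell_2)$ is not $b$.

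Finally, one observes that a simple path of length at most $2L$ starting (resp. ending) at a vertex $v$ cannot use any factor incident to $v$ other than its first (resp. last) factor: any other occurrence of such a factor would, together with that factor's incidence to $v$, close up a cycle of length at most $2L<2L+2$, contradicting the girth hypothesis. Applied to $P$, this turns ``leaves $i$ by a factor $\ne a$'' into ``does not go through $a$'' and ``enters $j$ by a factor $\ne b$'' into ``does not go through $b$''; in either case $P$ is a path of length at most $2L$ from $i$ to $j$ that fails to pass through both $a$ and $b$, contradicting the hypothesis. Hence $\B_{i\to a}(\ell_1)\cap\B_{j\to b}(\ell_2)=\emptyset$, and the two $\sigma$-algebras are independent. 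I expect this last step — carefully relating ``the geodesic leaves $v$ via a factor other than $a$'' to ``the path does not go through $a$'', with all the bookkeeping about what ``goes through'' an incident factor means in a hypergraph factor graph — to be the main obstacle; the reduction to vertex-set disjointness and the median argument are routine once the radius-$L$ balls are known to be treelike.
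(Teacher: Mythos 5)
Your proof is correct and follows the same reduction as the paper: show $\B_{i\to a}(\ell_1)\cap\B_{j\to b}(\ell_2)=\emptyset$ and then conclude independence because the two $\sigma$-algebras are generated by disjoint collections of the independent initial Gaussians $\{z^0_k\}$. The paper asserts the disjointness as an immediate consequence of the treelike structure without spelling it out; your tree-median construction of a length-$\le 2L$ simple path avoiding $a$ (or $b$), together with the girth-based observation that such a path cannot revisit a factor incident to an endpoint, is a correct and complete justification of that step.
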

\begin{proof}
     Recall that all neighborhoods of radius up to $L$ are locally treelike. Since every path of length at most $2L$ from $i$ to $j$ goes through $a$ and $b$, it implies that $\B_{i \to a}(\ell_1)$ and $\B_{j \to b}(\ell_2)$ are disjoint. It follows that $\mathcal{G}^{\ell_1}_{i \to a}$ and $\mathcal{G}^{\ell_2}_{j \to b}$ are independent since they are generated by disjoint sets of independent random variables. 
\end{proof}
\Cref{lemma:SigmaAlgebraIndependence} holds even if the factors are the same, i.e. $a = b$. We refer to this case as \Cref{lemma:SigmaAlgebraIndependence} \emph{around} factor $a$.

The next fact concerns the geometry of local neighborhoods. It allows us to decompose each $\sigma$-algebra into $\sigma$-algebras with distance reduced by one, which is handy for induction.
\begin{lemma}\label{lemma:SigmaAlgebraDecomposition}
    For every $1 \le \ell \le L$, variable $i$, and factor $a \in \partial i$, we have 
    \begin{align*}
        \mathcal{G}_i^{\ell} &= \sigma \left(\mathcal{G}_{i \to a}^{\ell} \cup \left( \bigcup_{j \in \partial a \backslash i} \mathcal{G}_{j \to a}^{\ell-1} \right)\right)\,,
        &
        \mathcal{G}_{i \to a}^{\ell} &= \sigma\left(\mathcal{G}_{i}^{0} \cup  \left(\bigcup_{b \in \partial i \backslash a} \bigcup_{v \in \partial b \backslash i} \mathcal{G}_{v \to b}^{\ell - 1}\right) \right)\,.
    \end{align*}
\end{lemma}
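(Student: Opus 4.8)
The plan is to prove the two identities by establishing the corresponding set-theoretic decompositions of the node-sets $\B_i(\ell)$ and $\B_{i\to a}(\ell)$ on the locally treelike hypergraph, since each $\sigma$-algebra is generated by the $\{z_j^0\}$ indexed by the relevant node-set, and the $\sigma$-algebra generated by a union equals the $\sigma$-algebra generated by the union of generating sets. Throughout I would use that for $\ell \le L$ the factor graph of $\B_i(\ell)$ is a tree (guaranteed by the girth hypothesis, as noted after the lemma), so each node $j \ne i$ in $\B_i(\ell)$ has a unique shortest path to $i$, and this path enters $i$ through a well-defined first factor.

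For the first identity, I claim $\B_i(\ell) = \B_{i\to a}(\ell) \cup \bigcup_{j \in \partial a \setminus i} \B_{j \to a}(\ell-1)$. The inclusion $\supseteq$ is immediate: $\B_{i\to a}(\ell) \subseteq \B_i(\ell)$ by definition, and any node within distance $\ell-1$ of some $j \in \partial a\setminus i$ is within distance $\ell$ of $i$ since $j$ is distance $1$ from $i$ (they share factor $a$). For $\subseteq$: take $v \in \B_i(\ell)$; if the shortest path from $v$ to $i$ does not pass through $a$, then $v \in \B_{i\to a}(\ell)$; otherwise the path enters $i$ via $a$, so it passes through some $j \in \partial a \setminus i$, and the portion from $v$ to $j$ has length at most $\ell-1$ and does not re-enter through $a$ (tree structure), hence $v \in \B_{j\to a}(\ell-1)$. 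Applying $\sigma$ to both sides and using that $\sigma(\bigcup_k S_k)$ generates the same algebra as $\sigma(\bigcup_k \sigma(S_k))$ gives the claimed equality of $\sigma$-algebras.

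For the second identity, I would similarly show $\B_{i\to a}(\ell) = \{i\} \cup \bigcup_{b \in \partial i \setminus a} \bigcup_{v \in \partial b \setminus i} \B_{v \to b}(\ell-1)$. Here $\B_{i\to a}(\ell)$ consists of $i$ together with all nodes reaching $i$ through a factor $b \in \partial i \setminus a$; any such node passes through some $v \in \partial b \setminus i$, with the sub-path from the node to $v$ of length at most $\ell-1$ and not returning through $b$ (tree structure), giving the $\subseteq$ inclusion, and the $\supseteq$ inclusion is again straightforward from the distance bookkeeping. Note $\sigma(\mathcal{G}_i^0) = \mathcal{G}_i^0 = \sigma(\{z_i^0\})$ accounts for the singleton $\{i\}$. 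Translating to $\sigma$-algebras completes the proof.

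The main obstacle is purely bookkeeping: one must carefully verify in the tree that every path realizing a distance bound decomposes at the relevant factor into a sub-path of the correct reduced length that does \emph{not} backtrack through that factor --- i.e., that the ``excluding nodes that reach $v$ through $a$'' clause in the definition of $\B_{v\to a}$ matches up exactly with the unique-path structure. Once the girth hypothesis is invoked to guarantee treelikeness of all radius-$\le L$ neighborhoods, there are no genuine difficulties; the only care needed is in the edge cases $\ell=1$ (where $\B_{i\to a}(1)=\{i\}$, consistent with the union over $v \in \partial b \setminus i$ of $\B_{v\to b}(0) = \{v\}$... wait, that would add the $v$'s; in fact $\B_{i\to a}(1) = \{i\} \cup \bigcup_{b\in\partial i\setminus a}(\partial b\setminus i)$, which is correct) and in checking that the shared-factor nodes are not double-counted in a way that breaks the $\sigma$-algebra identity --- but since we only need equality of generated $\sigma$-algebras, overlaps are harmless.
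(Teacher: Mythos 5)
Your proposal is correct and takes exactly the same route as the paper: reduce the $\sigma$-algebra identities to the corresponding set-theoretic decompositions of $\B_i(\ell)$ and $\B_{i\to a}(\ell)$, which follow from the unique-path (tree) structure guaranteed by the girth hypothesis. The paper simply states these two set decompositions and declares the lemma to follow; you have filled in the inclusion-by-inclusion verification and the $\ell=1$ sanity check, which is useful but not a different argument.
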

\begin{proof}
    The lemma follows from the equivalent statements about decomposing local neighborhoods; i.e. 
    \begin{align*}
        \B_i(\ell) &= \B_{i \to a}(\ell) \cup \left( \bigcup_{j \in \partial a \backslash i} \B_{j \to a}(\ell - 1) \right)\,,
        &
        \B_{i \to a}(\ell) &= \{i \} \cup \left( \bigcup_{b \in \partial i \backslash a} \bigcup_{v \in \partial b \backslash i} \B_{v \to b}(\ell - 1) \right) \,.
            \tag*{\qedhere}
    \end{align*}
\end{proof}
The next statement sorts quantities appearing in the algorithm (\Cref{fig:algorithm}) into appropriate $\sigma$-algebras. It proves the intuition that messages at iteration number $\ell$ are determined by the local neighborhood of distance at most $\ell$:
\begin{lemma}
\label{lemma:u_is_in_G}
For every $1 \le \ell \le L$, variable $i$, and factor $a$, we have $u_{i \to a}^\ell, w_{i \to a}^\ell, z_{i \to a}^\ell \in \mathcal{G}_{i \to a}^{\ell}$ and $u_{i}^\ell, w_{i}^\ell, z_{i}^\ell \in \mathcal{G}_{i}^{\ell}$.
\end{lemma}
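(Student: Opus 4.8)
The plan is to induct on the iteration number $\ell$, reading each message off the update rules of \Cref{fig:algorithm} and placing it in the correct $\sigma$-algebra by repeated use of \Cref{lemma:SigmaAlgebraDecomposition}. Before starting the induction I would record two easy facts. First, monotonicity: since $\B_{v\to a}(\ell-1)\subseteq\B_{v\to a}(\ell)$, $\B_v(\ell-1)\subseteq\B_v(\ell)$, $\B_v(0)=\{v\}\subseteq\B_{v\to a}(\ell)$, and $\B_{i\to b}(\ell-1)\subseteq\B_i(\ell-1)$, we get the corresponding inclusions of $\sigma$-algebras, e.g. $\mathcal{G}_{v\to a}^{\ell-1}\subseteq\mathcal{G}_{v\to a}^{\ell}$, $\mathcal{G}_v^0\subseteq\mathcal{G}_{v\to a}^{\ell}$, and $\mathcal{G}_{i\to b}^{\ell-1}\subseteq\mathcal{G}_i^{\ell}$. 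Second, combining the two identities of \Cref{lemma:SigmaAlgebraDecomposition} yields the symmetric decomposition $\mathcal{G}_i^{\ell}=\sigma\big(\mathcal{G}_i^0\cup\bigcup_{b\in\partial i}\bigcup_{v\in\partial b\setminus i}\mathcal{G}_{v\to b}^{\ell-1}\big)$. I would also note that $A_{i\to a}^{\ell}=F_{\to,\ell}(u_{i\to a}^1,\dots,u_{i\to a}^{\ell})$ (and likewise $A_i^\ell$) is a deterministic function of previously produced messages, hence inherits measurability from them.

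The base case is the $\ell=0$ layer: $z_i^0\in\mathcal{G}_i^0$ and $z_{i\to a}^0=z_i^0\in\mathcal{G}_{i\to a}^0$ by definition (as $\B_{i\to a}(0)=\{i\}$), while $w_i^0=w_{i\to a}^0=0$ are constants. For the inductive step I would fix $1\le\ell\le L$, assume the statement for all smaller iteration numbers, and handle the node-to-factor messages first. From $w_{i\to a}^{\ell}=\frac{1}{\sqrt{d-1}}\sum_{b\in\partial i\setminus a}\D_{i;b}f(\vz_{\partial b\to b}^{\ell-1})$, I would invoke \Cref{remark:f_derivative_doesnt_use_i} to say that each summand depends only on $\{z_{v\to b}^{\ell-1}:v\in\partial b\setminus i\}$; by the inductive hypothesis these are $\mathcal{G}_{v\to b}^{\ell-1}$-measurable, and the second identity of \Cref{lemma:SigmaAlgebraDecomposition} places each $\mathcal{G}_{v\to b}^{\ell-1}$ inside $\mathcal{G}_{i\to a}^{\ell}$, giving $w_{i\to a}^{\ell}\in\mathcal{G}_{i\to a}^{\ell}$. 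Then $u_{i\to a}^{\ell}=w_{i\to a}^{\ell}-w_{i\to a}^{\ell-1}\in\mathcal{G}_{i\to a}^{\ell}$ (with $w_{i\to a}^0=0$ when $\ell=1$, and monotonicity plus the inductive hypothesis when $\ell\ge2$), and $z_{i\to a}^{\ell}=z_i^0+\sum_{s=1}^{\ell}A_{i\to a}^{s-1}u_{i\to a}^{s}$ is a sum of terms each already shown to lie in $\mathcal{G}_{i\to a}^{\ell}$ (the $z_i^0$ term via $\mathcal{G}_i^0\subseteq\mathcal{G}_{i\to a}^\ell$, and the factors $A_{i\to a}^{s-1}$ and messages $u_{i\to a}^{s}$ for $s\le\ell$ by the inductive hypothesis together with the line above). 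The node messages $w_i^\ell,u_i^\ell,z_i^\ell$ go through identically, using the symmetric decomposition from the second preliminary fact in place of the $\to a$ identity; here \Cref{remark:f_derivative_doesnt_use_i} is not even needed, since the possibly-occurring term $z_{i\to b}^{\ell-1}$ already lies in $\mathcal{G}_{i\to b}^{\ell-1}\subseteq\mathcal{G}_i^{\ell-1}\subseteq\mathcal{G}_i^{\ell}$.

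I expect the one genuinely load-bearing step — rather than pure bookkeeping — to be the appeal to \Cref{remark:f_derivative_doesnt_use_i} in the node-to-factor case: without multilinearity of $f$, the summand $\D_{i;b}f(\vz_{\partial b\to b}^{\ell-1})$ could depend on $z_{i\to b}^{\ell-1}\in\mathcal{G}_{i\to b}^{\ell-1}$, and $\mathcal{G}_{i\to b}^{\ell-1}$ is \emph{not} contained in $\mathcal{G}_{i\to a}^{\ell}$ in general, since $\B_{i\to b}(\ell-1)$ can reach nodes through the factor $a$ that are precisely the ones excluded from $\B_{i\to a}(\ell)$; this is exactly where the $\to a$ restriction would otherwise break. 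Everything else reduces to monotonicity of the $\sigma$-algebras and the decomposition in \Cref{lemma:SigmaAlgebraDecomposition}, so I do not anticipate further obstacles.
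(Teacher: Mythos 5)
Your proof is correct and takes essentially the same inductive route as the paper: peel off the update rule, invoke \Cref{remark:f_derivative_doesnt_use_i} for the node-to-factor messages, and push the resulting $\mathcal{G}_{v\to b}^{\ell-1}$ into $\mathcal{G}_{i\to a}^{\ell}$ via \Cref{lemma:SigmaAlgebraDecomposition}. The one substantive thing you add that the paper leaves implicit (it just says the node case is ``essentially the same'') is the observation that multilinearity is genuinely load-bearing only in the node-to-factor case, since $\mathcal{G}_{i\to b}^{\ell-1}\subseteq\mathcal{G}_i^{\ell}$ always holds but $\mathcal{G}_{i\to b}^{\ell-1}\not\subseteq\mathcal{G}_{i\to a}^{\ell}$ in general — a correct and clarifying remark, but the argument is otherwise the same.
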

\begin{proof}
    We only prove the first claim, since the second claim has essentially the same proof. We proceed by induction on $\ell$. When $\ell = 1$, we have
    \begin{align*}
    u_{i \to a}^1 = w_{i \to a}^1 = \frac{1}{\sqrt{d-1}}\cdot \sum_{b \in \partial i \backslash a}
    \D_{i;b} f(\vz_{\partial b \to b}^0)\,.
    \end{align*}
    The right-hand side is only a function of variables in $\B_{i\to a}(1)$, so $u_{i \to a}^1 = w_{i \to a}^1 \in \mathcal{G}^{1}_{i \to a}$. For $z_{i \to a}^{1}$, we have
    \begin{align*}
    z_{i \to a}^{1} = z_{i}^0 + A^0_{i \to a} u_{i \to a}^{1} \in \mathcal{G}^{1}_{i \to a}\,,
    \end{align*}
    since $A^0_{i \to a}$ is a constant. For the inductive step, we have
    \begin{align*}
    w^{\ell + 1}_{i \to a} = \frac{1}{\sqrt{d - 1}} \cdot \sum_{b \in \partial i \backslash a}
    \D_{i;b} f(\vz_{\partial b \to b}^\ell)\,.
    \end{align*}
    By \Cref{remark:f_derivative_doesnt_use_i}, each term in the right-hand side is only a function of $z_{j \to b}^{\ell}$ for $j \in \partial b \backslash i$.
    By the inductive hypothesis, $z_{j \to b}^{\ell} \in \mathcal{G}_{j \to b}^{\ell}$. Furthermore, by \Cref{lemma:SigmaAlgebraDecomposition}, we know that $\mathcal{G}_{j \to b}^{\ell} \subseteq \mathcal{G}_{i \to a}^{\ell + 1}$ for $b \in \partial i \backslash a$ and $j \in \partial b \backslash i$. This implies that $w^{\ell + 1}_{i \to a} \in \mathcal{G}_{i \to a}^{\ell + 1}$ and therefore $u^{\ell + 1}_{i \to a} = w^{\ell + 1}_{i \to a}  - w^{\ell}_{i \to a} \in \mathcal{G}_{i \to a}^{\ell + 1}$. Finally, since $A^{s}_{i \to a}$ is a function of $u^{1}_{i \to a}, \ldots, u^{s}_{i \to a}$,
    \begin{align*}
    z_{i \to a}^{\ell + 1} = z_{i}^0 + \sum_{s = 1}^{\ell + 1} A_{i \to a}^{s-1} u_{i \to a}^{s} \in \mathcal{G}^{\ell + 1}_{i \to a}.
    \tag*{\qedhere}
    \end{align*}
\end{proof}
\Cref{lemma:u_is_in_G} implies that the $\ell^\text{th}$ node message at variable $i$ is determined by the initial Gaussian variables in the radius-$\ell$ neighborhood around $i$. Note that the node-to-factor message $u_{i \to a}^\ell$ is \emph{not} influenced by Gaussians at nodes that pass through $a$ to reach $i$.


Next, we prove that messages at node $i$ and iteration number $\ell+1$ are uncorrelated with the initial Gaussians in the radius-$\ell$ neighborhood around $i$. Put another way, the ``extra randomness'' at distance $\ell+1$ removes any correlation with $\mathcal{G}_i^\ell$. To do this, we show that the messages $\{u_{j \to b}^\ell \}_{j \in \partial b \backslash i}$ are uncorrelated with each other, conditioned on $\mathcal{G}_i^\ell$.\footnote{This is subtle. Although $\{u_{j \to b}^\ell \}_{j \in \partial b \backslash i}$ are independent by \Cref{lemma:SigmaAlgebraIndependence} around factor $b$, independence does not imply \emph{conditional} independence.}
\begin{lemma}
    \label{lemma:messages_conditional_independence}
    Choose any variable $i$, factor $b \in \partial i$, and $1 \le \ell \le L$. Choose also $s_j \geq 0$ for every $j \in \partial b \backslash i$. Then 
    \begin{align*}
        \E\left[\prod_{j \in \partial b \backslash i} (u_{j \to b}^\ell)^{s_j} \; \Bigg\vert \; \mathcal{G}_i^\ell \right] = \prod_{j \in \partial b \backslash i} \E \Big[ (u_{j \to b}^\ell)^{s_j} \; \Big\vert \; \mathcal{G}_i^\ell \Big] \,.
    \end{align*}
\end{lemma}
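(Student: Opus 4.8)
The idea is to condition on a finer $\sigma$-algebra than $\mathcal{G}_i^\ell$, namely the one generated by the initial Gaussians $z_v^0$ at \emph{all} nodes within the "stub" leading to $b$ together with those in $\B_i(\ell)$ — and then argue that this extra conditioning does not change the conditional expectations, while making the $u_{j\to b}^\ell$ genuinely independent. Concretely, let $b$ have $\partial b = \{i\}\cup\{j : j\in\partial b\backslash i\}$, and for each $j\in\partial b\backslash i$ recall from \Cref{lemma:u_is_in_G} that $u_{j\to b}^\ell\in\mathcal{G}_{j\to b}^\ell$, i.e. $u_{j\to b}^\ell$ is a deterministic function of $\{z_v^0 : v\in\B_{j\to b}(\ell)\}$. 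By \Cref{lemma:SigmaAlgebraIndependence} around factor $b$, the sets $\B_{j\to b}(\ell)$ for distinct $j\in\partial b\backslash i$ are pairwise disjoint, and they are also disjoint from $\B_i(\ell-1)$ — actually one must be slightly careful here, since $\B_i(\ell)$ reaches distance $\ell$ and the $u_{j\to b}^\ell$ live in $\B_{j\to b}(\ell)$, so these may overlap; this is the point addressed in the footnote.

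First I would pin down the overlap structure. Using the local treelikeness (girth $\ge 2L+2$), decompose via \Cref{lemma:SigmaAlgebraDecomposition}: $\mathcal{G}_{i}^\ell = \sigma(\mathcal{G}_{i\to b}^\ell \cup \bigcup_{j\in\partial b\backslash i}\mathcal{G}_{j\to b}^{\ell-1})$. So $\mathcal{G}_i^\ell$ splits into the part "on the $b$ side" — which is $\bigcup_{j\in\partial b\backslash i}\mathcal{G}_{j\to b}^{\ell-1}$, a \emph{sub}-algebra of $\sigma(\bigcup_j \mathcal{G}_{j\to b}^{\ell})$ — plus the part $\mathcal{G}_{i\to b}^\ell$, which by \Cref{lemma:SigmaAlgebraIndependence} is independent of everything on the $b$ side, in particular of $(u_{j\to b}^\ell)_{j\in\partial b\backslash i}$ and of the $\mathcal{G}_{j\to b}^{\ell-1}$. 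The plan is then the standard two-step conditioning: write $\mathcal{H} \defeq \sigma\big(\mathcal{G}_{i\to b}^\ell \cup \bigcup_{j\in\partial b\backslash i}\mathcal{G}_{j\to b}^{\ell}\big)$, which refines $\mathcal{G}_i^\ell$. Conditioned on $\mathcal{H}$, each $u_{j\to b}^\ell$ is \emph{measurable}, so the conditional product factors trivially; more importantly I want to show
\begin{align*}
\E\!\left[\prod_{j} (u_{j\to b}^\ell)^{s_j} \,\Big|\, \mathcal{G}_i^\ell\right]
= \prod_{j} \E\!\left[(u_{j\to b}^\ell)^{s_j} \,\Big|\, \mathcal{G}_i^\ell\right].
\end{align*}
The clean way: the random variables $\{ \mathcal{G}_{j\to b}^\ell \}_{j\in\partial b\backslash i}$ together with $\mathcal{G}_{i\to b}^\ell$ are mutually independent (pairwise-disjoint generating Gaussians, \Cref{lemma:SigmaAlgebraIndependence}). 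The conditioning algebra $\mathcal{G}_i^\ell$ is contained in $\sigma(\mathcal{G}_{i\to b}^\ell) \vee \bigvee_j \sigma(\mathcal{G}_{j\to b}^{\ell-1})$ and decomposes as a "product" across the $j$'s plus the $i\to b$ factor. Then conditioning a product of independent factors — the $j$-th being $(u_{j\to b}^\ell)^{s_j}\in\mathcal{G}_{j\to b}^\ell$ — on a $\sigma$-algebra that is itself a product of a corresponding sub-$\sigma$-algebra $\mathcal{G}_{j\to b}^{\ell-1}$ in each coordinate (plus an independent extra factor $\mathcal{G}_{i\to b}^\ell$) factorizes coordinatewise, by the elementary fact: if $(X_j, \mathcal{F}_j)$ are independent pairs and $Y_j$ is $\mathcal{F}_j$-measurable–and–integrable, then $\E[\prod_j Y_j \mid \bigvee_j \mathcal{A}_j \vee \mathcal{C}] = \prod_j \E[Y_j\mid\mathcal{A}_j]$ for $\mathcal{A}_j\subseteq\mathcal{F}_j$ and $\mathcal{C}$ independent of everything. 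Finally $\E[(u_{j\to b}^\ell)^{s_j}\mid\mathcal{G}_{j\to b}^{\ell-1}] = \E[(u_{j\to b}^\ell)^{s_j}\mid\mathcal{G}_i^\ell]$, since $\mathcal{G}_i^\ell$ adds to $\mathcal{G}_{j\to b}^{\ell-1}$ only information independent of $u_{j\to b}^\ell$.

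The main obstacle is exactly the subtlety flagged in the paper's footnote: independence of the $u_{j\to b}^\ell$ is \emph{not} enough, because we condition on $\mathcal{G}_i^\ell$, and conditioning can create dependence among independent variables. The resolution — and the step I would write most carefully — is the observation that $\mathcal{G}_i^\ell$ intersects the "region" of $u_{j\to b}^\ell$ only in the sub-algebra $\mathcal{G}_{j\to b}^{\ell-1}$, and crucially that $\mathcal{G}_i^\ell = \sigma\big(\mathcal{G}_{i\to b}^\ell\big)\vee\bigvee_{j}\mathcal{G}_{j\to b}^{\ell-1}$ is a join of mutually independent pieces, one sitting "below" each $u_{j\to b}^\ell$ and one ($\mathcal{G}_{i\to b}^\ell$) independent of all of them. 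With that product structure in hand, the factorization is the routine elementary lemma on conditional expectations of independent products; I would state that lemma explicitly (or cite \cite{durrett}) and then just plug in. No heavy computation is needed — the content is entirely in correctly identifying the independent product decomposition of the conditioning algebra supplied by \Cref{lemma:SigmaAlgebraIndependence} and \Cref{lemma:SigmaAlgebraDecomposition}.
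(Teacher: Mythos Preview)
Your proposal is correct and takes essentially the same approach as the paper: both arguments hinge on the observation that $\mathcal{G}_i^\ell$ decomposes as a join of mutually independent pieces---one ($\mathcal{G}_{j\to b}^{\ell-1}$, or equivalently the Gaussians in $\B_{j\to b}(\ell)\cap\B_i(\ell)$) sitting beneath each $\mathcal{G}_{j\to b}^\ell$, plus an extra independent piece on the $i\to b$ side---after which factorization of the conditional expectation is routine. The only difference is presentational: the paper argues directly at the level of the underlying Gaussians (splitting each $u_{j\to b}^\ell$'s inputs into the ``fixed'' part $X_j$ and the ``free'' part $Y_j$), whereas you package the same content into the $\sigma$-algebra decomposition from \Cref{lemma:SigmaAlgebraDecomposition} together with a general product-conditioning lemma.
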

\begin{proof}
    For independent sets of random variables $X, Y$ and any function $h$ such that $\E[|h(X, Y)|]$ exists, it is known that $\E[h(X, Y) | X] = g(X)$, where $g(x) \defeq \E[h(x, Y)]$ (e.g., \cite{durrett}). We may think of $u^\ell_{j \to b}$ as a function $h_j$ of the independent random variables $\{z^0_k \mid k \in \B_{j \to b}(\ell)\}$; conditioning on $\mathcal{G}_i^\ell$ only fixes $X_j \defeq \{z^0_k \mid k \in \B_{j \to b}(\ell) \cap \B_{i}(\ell)\}$. So $u^\ell_{j \to b}$ is some function $g_j$ of the remaining (unfixed) random variables $Y_j$.
    Since $\B_{j \to b}(\ell) \backslash \B_{i}(\ell)$ is disjoint for different $j \in \partial b  \setminus  i$, the sets $\{Y_j\}_{j \in \partial b  \setminus i }$ remain mutually independent even after conditioning on $\mathcal{G}_i^\ell$. Thus, the quantities $\{g_j(X_j)\}_{j \in \partial b  \setminus  i}$ are mutually independent. The lemma directly follows.
\end{proof}

\begin{lemma}
\label{lemma:uy_is_zero}
    Choose any variable $i$, factor $a \in \partial i$, and $0 \le \ell \le L - 1$. Then $\E \big[ u_{i \to a}^{\ell + 1} \; \big\vert \; \mathcal{G}_i^\ell \big] = 0$ and $\E \big[ u_i^{\ell+1} \; \big\vert \; \mathcal{G}_i^\ell \big] = 0$.
\end{lemma}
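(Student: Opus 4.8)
The plan is to prove $\E[u_{i \to a}^{\ell+1} \mid \mathcal{G}_i^\ell] = 0$ by induction on $\ell$, and then derive the statement for $u_i^{\ell+1}$ by essentially the same argument. Recall $u_{i \to a}^{\ell+1} = w_{i \to a}^{\ell+1} - w_{i \to a}^{\ell}$ and $w_{i \to a}^{\ell+1} = \frac{1}{\sqrt{d-1}}\sum_{b \in \partial i \setminus a} \D_{i;b} f(\vz_{\partial b \to b}^\ell)$, so it suffices to control $\E[\D_{i;b} f(\vz_{\partial b \to b}^\ell) - \D_{i;b} f(\vz_{\partial b \to b}^{\ell-1}) \mid \mathcal{G}_i^\ell]$ for each $b \in \partial i \setminus a$. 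By \Cref{remark:f_derivative_doesnt_use_i}, $\D_{i;b} f(\vz_{\partial b \to b}^\ell)$ depends only on $\{z_{j \to b}^\ell : j \in \partial b \setminus i\}$, and by \Cref{lemma:u_is_in_G} each such $z_{j \to b}^\ell \in \mathcal{G}_{j \to b}^\ell$.

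The base case is $\ell = 0$: here $u_{i \to a}^1 = w_{i \to a}^1 = \frac{1}{\sqrt{d-1}}\sum_{b \in \partial i \setminus a} \D_{i;b} f(\vz_{\partial b \to b}^0)$, and $\mathcal{G}_i^0 = \sigma(z_i^0)$. Each $\D_{i;b} f(\vz_{\partial b \to b}^0)$ is a multilinear polynomial in the variables $\{z_{j \to b}^0 = z_j^0 : j \in \partial b \setminus i\}$, which are independent centered Gaussians independent of $z_i^0$; so the conditional expectation equals the unconditional expectation, which is $0$ since $f$ is even (so $\D_{i;b}f$ is odd, hence every monomial has odd degree and vanishes in expectation against i.i.d. centered Gaussians). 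Actually, to avoid invoking evenness of $f$ unnecessarily here, the cleaner route is: write $u_{i \to a}^{\ell+1}$ as a sum over $b \in \partial i \setminus a$, and for the inductive step use \Cref{lemma:messages_conditional_independence} (applied with factor $b$) together with the inductive hypothesis applied at the neighbors $j \in \partial b \setminus i$.

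For the inductive step, fix $b \in \partial i \setminus a$ and expand $\D_{i;b} f$ in its multilinear (Fourier) representation over the coordinates $j \in \partial b \setminus i$: $\D_{i;b} f(\vz_{\partial b \to b}^\ell) = \sum_{S \subseteq \partial b \setminus i} \widehat{c}_S \prod_{j \in S} z_{j \to b}^\ell$, where we treat the $z$'s as real inputs via the polynomial extension. Now $z_{j \to b}^\ell = z_j^0 + \sum_{s=1}^{\ell} A_{j \to b}^{s-1} u_{j \to b}^s$, and the increment from iteration $\ell-1$ to $\ell$ inside this telescoping sum is exactly $A_{j \to b}^{\ell-1} u_{j \to b}^\ell$; here $A_{j \to b}^{\ell-1}$ is a function of $u_{j\to b}^1,\dots,u_{j\to b}^{\ell-1} \in \mathcal{G}_{j\to b}^{\ell-1}$. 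Conditioning on $\mathcal{G}_i^\ell$: by \Cref{lemma:messages_conditional_independence} the families indexed by distinct $j$ are conditionally independent, so $\E[\prod_{j\in S} z_{j\to b}^\ell \mid \mathcal{G}_i^\ell] = \prod_{j \in S}\E[z_{j \to b}^\ell \mid \mathcal{G}_i^\ell]$, and similarly for the $\ell-1$ term. So it reduces to showing $\E[z_{j \to b}^\ell \mid \mathcal{G}_i^\ell] = \E[z_{j\to b}^{\ell-1} \mid \mathcal{G}_i^\ell]$, i.e. $\E[A_{j\to b}^{\ell-1} u_{j\to b}^\ell \mid \mathcal{G}_i^\ell] = 0$, for each $j \in \partial b \setminus i$. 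Since $\B_{j \to b}(\ell) \cap \B_i(\ell) \subseteq \B_{j \to b}(\ell-1)$ (the only way a node in $\B_{j\to b}(\ell)$ lies in $\B_i(\ell)$ is through $b$ then $i$, costing one step, so it is within distance $\ell-1$ of $j$ on the $\to b$ side — need to check this containment carefully, it's the key geometric fact), conditioning on $\mathcal{G}_i^\ell$ restricted to $\mathcal{G}_{j\to b}^\ell$ is the same as conditioning on a sub-$\sigma$-algebra of $\mathcal{G}_{j\to b}^{\ell-1}$. Then use the tower property with $\mathcal{G}_{j\to b}^{\ell-1}$: $A_{j\to b}^{\ell-1} \in \mathcal{G}_{j\to b}^{\ell-1}$ pulls out, and $\E[u_{j\to b}^\ell \mid \mathcal{G}_{j\to b}^{\ell-1}]$... wait, this is exactly the inductive hypothesis statement but with $i$ replaced by $j$ and $a$ replaced by $b$ — that is, $\E[u_{j\to b}^\ell \mid \mathcal{G}_j^\ell] = 0$. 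Hmm, but we need it conditioned on $\mathcal{G}_{j\to b}^{\ell-1}$, not $\mathcal{G}_j^\ell$; since $\mathcal{G}_{j\to b}^{\ell-1} \subseteq \mathcal{G}_j^{\ell-1} \subseteq \mathcal{G}_j^\ell$ we can tower through $\mathcal{G}_j^\ell$. So $\E[u_{j\to b}^\ell \mid \mathcal{G}_{j\to b}^{\ell-1}] = \E[\E[u_{j\to b}^\ell \mid \mathcal{G}_j^\ell] \mid \mathcal{G}_{j\to b}^{\ell-1}] = 0$ by induction.

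The main obstacle is the bookkeeping of which $\sigma$-algebra each quantity lives in, and in particular verifying the geometric containment $\B_{j\to b}(\ell)\cap\B_i(\ell) \subseteq \B_{j\to b}(\ell-1)$ so that the overlap with $\mathcal{G}_i^\ell$ only touches "old" ($\le\ell-1$) randomness on the $j\to b$ side — this is what lets the inductive hypothesis bite. The statement for $u_i^{\ell+1}$ is handled identically: $u_i^{\ell+1} = \frac{1}{\sqrt d}\sum_{b\in\partial i} \D_{i;b}f(\vz_{\partial b\to b}^\ell) - (\text{previous iterate})$, the sum now ranges over all of $\partial i$ rather than $\partial i\setminus a$, but the per-$b$ analysis is unchanged, and \Cref{lemma:messages_conditional_independence} is stated for $j \in \partial b \setminus i$ which is exactly what appears inside each $\D_{i;b}f$.
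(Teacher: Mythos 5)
Your argument matches the paper's closely: induction on $\ell$, Fourier-expansion of $\D_{i;b}f$, conditional independence across $j\in\partial b\setminus i$, and reduction to $\E[u^\ell_{j\to b}\mid\mathcal{G}_i^\ell]=0$ via the geometric containment $\B_{j\to b}(\ell)\cap\B_i(\ell)\subseteq\B_{j\to b}(\ell-1)$ (which the paper packages through \Cref{lemma:SigmaAlgebraDecomposition}) together with the tower property and the inductive hypothesis at the neighbors. Two small slippages to fix: in the base case, evenness of $f$ is neither assumed nor needed in \Cref{thm:main_theorem} --- what vanishes is $\D_{i;b}f(0,\dots,0)$, i.e.\ the linear Fourier weight at $i$'s coordinate, so the correct hypothesis is $\|f^{=1}\|^2=0$; and \Cref{lemma:messages_conditional_independence} as stated covers only products of powers of $u^\ell_{j\to b}$, so to factor $\E\bigl[\prod_{j} z^\ell_{j\to b}\mid\mathcal{G}_i^\ell\bigr]$ you should either note that its proof handles arbitrary per-branch measurable functions, or follow the paper and first pull out the $\mathcal{G}_i^\ell$-measurable $z^{\ell-1}_{j\to b}$ and $A^{\ell-1}_{j\to b}$ factors before applying the lemma to the residual $u^\ell_{j\to b}$ product.
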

\begin{proof}
    We prove $\E[u_{i \to a}^{\ell + 1} \mid \mathcal{G}_i^\ell] = 0$; the proof of $\E[u_i^{\ell+1} \mid \mathcal{G}_i^\ell] = 0$ is essentially the same. We proceed by induction on $\ell$. For the base case $\ell = 0$, we have $u^{1}_{i \to a} = w^{1}_{i \to a} = \frac{1}{\sqrt{d - 1}} \sum_{b \in \partial i \backslash a}  \D_{i;b} f(\vz_{\partial b \to b}^0)$. This quantity is independent of $\mathcal{G}_i^0 = \sigma(\{z_i^0\})$ by \Cref{remark:f_derivative_doesnt_use_i}.
    Since $f$ is multilinear, and each $z_{j \to b}^0$ is independent, $\E[\D_{i;b} f(\vz_{\partial b \to b}^0)] = \D_{i;b} f(\E[ \vz_{\partial b \to b}^0]) = \D_{i;b} f(0, \dots, 0)$, which is zero since there are no linear terms in the CSP function $f$. So $\E[ u^{1}_{i \to a} \mid \mathcal{G}_i^0] = \E[ u^{1}_{i \to a}] = 0$.
    
    For the inductive step, assume the statement is true up to $\ell$. We have
    \begin{align*}
    u^{\ell + 1}_{i \to a} = w^{\ell + 1}_{i \to a} - w^{\ell}_{i \to a} = \frac{1}{\sqrt{d - 1}} \sum_{b \in \partial i \backslash a} \left(
  \D_{i;b} f(\vz_{\partial b \to b}^\ell)
    -  \D_{i;b} f(\vz_{\partial b \to b}^{\ell-1})\right)\,.
    \end{align*}
    Notice that for any $b \in \partial i \setminus a$,
 \begin{align*}
     \D_{i;b} f(\vz_{\partial b \to b}^\ell)
    -  \D_{i;b} f(\vz_{\partial b \to b}^{\ell-1})
     &= \sum_{S \subseteq \partial b, i \in S}
     \hat{f}(S) \cdot \left(\prod_{j \in S \backslash i} z^{\ell}_{j \to b} - \prod_{j \in S \backslash i} z^{\ell - 1}_{j \to b}\right) \\
      &= \sum_{S \subseteq \partial b, i \in S}
      \hat{f}(S) \cdot \left(\prod_{j \in S \backslash i} \left(z^{\ell - 1}_{j \to b} + A^{\ell - 1}_{j \to b} \cdot u^{\ell}_{j \to b}\right) - \prod_{j \in S \backslash i} z^{\ell - 1}_{j \to b}\right) \\
      &= \sum_{S \subseteq \partial b, i \in S}
      \hat{f}(S)  \cdot 
      \sum_{T \subsetneq S\backslash i}
      \Big(\prod_{j \in T} z^{\ell - 1}_{j \to b}\Big)
      \Big( \prod_{j \in (S\backslash i) \backslash T}  A^{\ell - 1}_{j \to b} \cdot u^{\ell}_{j \to b}\Big) \,,
    \end{align*}
where $\hat{f}$ is the Fourier transform of $f$.\footnote{We slightly abuse notation; we write $\hat{f}(S)$ for the Fourier component of the set of \emph{coordinates} $\iota \in [r]$ such that the $\iota^{\text{th}}$ element of $\partial b$ is in $S$.} 
For any $j \in \partial b  \setminus  i$, $z_{j \to b}^{\ell - 1} \in \mathcal{G}_{j \to b}^{\ell - 1} \subseteq \mathcal{G}_i^{\ell}$ by \Cref{lemma:SigmaAlgebraDecomposition}. So we have
\begin{align*}
    \E\left[ \D_{i;b} f(\vz_{\partial b \to b}^\ell)
    -  \D_{i;b} f(\vz_{\partial b \to b}^{\ell-1}) \mid \mathcal{G}_i^\ell \right]
    =
    \sum_{S \subseteq \partial b, i \in S}
    \hat{f}(S)
    \cdot 
      \sum_{T \subsetneq S\backslash i}
      \Big(\prod_{j \in T} z^{\ell - 1}_{j \to b}\Big)
      \E\left[ \prod_{j \in (S\backslash i) \backslash T}  A^{\ell - 1}_{j \to b} 
      u^{\ell}_{j \to b} \mid \mathcal{G}_i^\ell \right]\,.
\end{align*}
Note that  $A^{\ell - 1}_{j \to b} \in \mathcal{G}_{j \to b}^{\ell - 1} \subseteq \mathcal{G}_i^\ell$,
so by \Cref{lemma:messages_conditional_independence}, $\E\left[ \prod_{j \in (S\backslash i) \backslash T}  A^{\ell - 1}_{j \to b} 
  u^{\ell}_{j \to b} \mid \mathcal{G}_i^\ell \right] = \prod_{j \in (S\backslash i) \backslash T}  A^{\ell - 1}_{j \to b} \E\left[ 
  u^{\ell}_{j \to b} \mid \mathcal{G}_i^\ell \right]$. By \Cref{lemma:SigmaAlgebraDecomposition}, 
  $\E\left[ 
  u^{\ell}_{j \to b} \mid \mathcal{G}_i^\ell \right] = \E\left[ 
  u^{\ell}_{j \to b} \mid \mathcal{G}_j^{\ell-1} \right]$,
  which is zero by the inductive hypothesis. Since every choice of $S$ and $T$ contains a $j \in (S \setminus i) \setminus T$, the above expression is zero, and so $\E[u^{\ell + 1}_{i \to a} \mid \mathcal{G}_i^\ell ] = 0$.
\end{proof}

\begin{corollary}\label{cor:monomial_is_zero}
Fix any variable $i$, factor $b \in \partial i$, and $0 \le \ell \le L$. Then $\E\left[
\D_{i;b} f(\vz_{\partial b \to b}^\ell)
 \mid \mathcal{G}_i^\ell \right] =  \D_{i;b} f(\vz_{\partial b \to b}^{\ell-1})$ if $\ell \ge 1$ and $0$ otherwise.
\end{corollary}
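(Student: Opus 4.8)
The plan is to observe that \Cref{cor:monomial_is_zero} is essentially a repackaging of the computations already carried out inside the proof of \Cref{lemma:uy_is_zero}, so no genuinely new argument is required; I would just reorganize those computations according to the two cases $\ell = 0$ and $\ell \ge 1$ that appear in the statement.

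For $\ell = 0$: by \Cref{remark:f_derivative_doesnt_use_i}, $\D_{i;b} f(\vz_{\partial b \to b}^0)$ is a multilinear polynomial in the initial Gaussians $\{z_j^0 : j \in \partial b \setminus i\}$, none of which lie in the $\sigma$-algebra $\mathcal{G}_i^0 = \sigma(\{z_i^0\})$; hence it is independent of $\mathcal{G}_i^0$, its conditional expectation equals its ordinary expectation, and that expectation is $\D_{i;b} f(0,\dots,0) = 0$ because the $z_j^0$ are centered and independent and $f$ has no linear terms. This reproduces the base case of \Cref{lemma:uy_is_zero} verbatim. For $\ell \ge 1$: I would write $\D_{i;b} f(\vz_{\partial b \to b}^\ell) = \D_{i;b} f(\vz_{\partial b \to b}^{\ell-1}) + \big( \D_{i;b} f(\vz_{\partial b \to b}^\ell) - \D_{i;b} f(\vz_{\partial b \to b}^{\ell-1}) \big)$ and handle the conditional expectation of each summand. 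For the first summand, \Cref{remark:f_derivative_doesnt_use_i} and \Cref{lemma:u_is_in_G} show it is a function of the quantities $z_{j \to b}^{\ell-1} \in \mathcal{G}_{j \to b}^{\ell-1}$ with $j \in \partial b \setminus i$, and \Cref{lemma:SigmaAlgebraDecomposition} applied around the factor $b$ gives $\mathcal{G}_{j \to b}^{\ell-1} \subseteq \mathcal{G}_i^\ell$; hence the first summand is $\mathcal{G}_i^\ell$-measurable and passes out of the conditional expectation unchanged. For the second summand, the inductive step in the proof of \Cref{lemma:uy_is_zero} already establishes $\E[\D_{i;b} f(\vz_{\partial b \to b}^\ell) - \D_{i;b} f(\vz_{\partial b \to b}^{\ell-1}) \mid \mathcal{G}_i^\ell] = 0$ — via Fourier-expanding the difference, substituting $z_{j\to b}^\ell = z_{j\to b}^{\ell-1} + A_{j\to b}^{\ell-1} u_{j\to b}^\ell$, and invoking \Cref{lemma:messages_conditional_independence} together with $\E[u_{j\to b}^\ell \mid \mathcal{G}_i^\ell] = 0$ — so I would simply cite it. Adding the two contributions yields the claimed identity.

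There is no real obstacle here; the only points needing care are bookkeeping. First, \Cref{lemma:uy_is_zero} is stated with an auxiliary factor $a$ and $b \in \partial i \setminus a$, whereas the corollary allows $b$ to be any factor in $\partial i$; I would note that the per-factor computation in that proof never uses $b \ne a$, so it applies to every $b \in \partial i$ and one need not assume $|\partial i| \ge 2$. Second, when claiming $\mathcal{G}_{j\to b}^{\ell-1} \subseteq \mathcal{G}_i^\ell$, I should invoke the "around factor $b$" instance of \Cref{lemma:SigmaAlgebraDecomposition}, i.e.\ with the auxiliary factor equal to $b$ itself, and for $\ell = 1$ check the mild edge case that $z_{j\to b}^0 = z_j^0 \in \mathcal{G}_{j\to b}^0$ directly from the initialization.
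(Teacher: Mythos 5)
Your proof is correct and matches the paper's intended derivation: the corollary is placed immediately after \Cref{lemma:uy_is_zero} with no proof precisely because both cases ($\ell = 0$ independence argument, and $\ell \ge 1$ decomposition into the $\mathcal{G}_i^\ell$-measurable piece plus the centered difference) are already contained in the per-factor computation inside the proof of that lemma, and you correctly observe that nothing there depends on $b \ne a$. Your bookkeeping caveats (the ``around factor $b$'' instance of \Cref{lemma:SigmaAlgebraDecomposition}, the $\ell = 1$ edge case $z_{j \to b}^0 = z_j^0 \in \mathcal{G}_{j \to b}^0$) are the right ones to flag.
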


\subsection{Simplifying the main sum}
What is the average value achieved by the algorithm? By linearity, we can look at the value of any factor $a$; i.e. $\E[ f(\vz_{\partial a}^{L})]$. Note that $\E[ f(\vz_{\partial a}^{0})] = f(0, \dots, 0) = \E[f]$, and
\begin{align*}
\E[ f(\vz_{\partial a}^{L})]
=
\E[ f(\vz_{\partial a}^{0})] + \sum_{\ell = 0}^{L-1} 
\left(\E[ f(\vz_{\partial a}^{\ell + 1})] - \E[ f(\vz_{\partial a}^{\ell})]\right)
=\E[f] + \sum_{\ell = 0}^{L-1} \left( \E[ f(\vz_{\partial a}^{\ell + 1})] - \E[ f(\vz_{\partial a}^{\ell})]\right)
\,.
\end{align*}
It suffices to compute the difference $\E[ f(\vz_{\partial a}^{\ell + 1})] - \E[ f(\vz_{\partial a}^{\ell})]$ for every $0 \leq \ell \leq L - 1$. We use the Fourier expansion of $f$ to express this as a sum of differences of monomials. 

First, we simplify terms of the form $u_i^{s_i} Y_i^{s_i-1} \prod_{v\in \partial a \setminus i} Y_v^{s_v}$, which we encounter along the way.
This relies on the following consequence of the monotone class theorem:
\begin{lemma}[{e.g. \cite[Theorem 5.2.2]{durrett}, \cite[Lemma 2.3]{ams21}}]
\label{lemma:monotoneclassthm}
Let $\{\mathcal{F}_1,\dots,\mathcal{F}_m,\mathcal{A}\}$ be a set of $\sigma$-algebras so that $\mathcal{F} \defeq \sigma(\mathcal{F}_1,\dots,\mathcal{F}_m)  \subseteq \mathcal{A}$. Choose a random variable $X \in \mathcal{A}$,\footnote{Precisely, $X \in \mathcal{A}$ means that the outcomes of $X$ are determined by a measure $\mu: \mathcal{A} \to \R$; i.e $\sigma(X) \subseteq \mathcal{A}$.} and suppose for any $Y_1 \in \mathcal{F}_1,\dots,Y_m \in \mathcal{F}_m$, $\E[X Y_1 \dots Y_m] = 0$ when the expectation exists. Then $\E[X Y] = 0$ (when the expectation exists) for every $Y \in \mathcal{F}$.
\end{lemma}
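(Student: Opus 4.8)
The plan is to invoke the functional monotone class theorem, equivalently Dynkin's $\pi$--$\lambda$ theorem, exactly as the cited references indicate. Since every random variable that plays the role of $X$ in our applications is a polynomial in finitely many Gaussians multiplied by bounded factors, hence lies in $L^p$ for all $p$, I would carry out the argument assuming $X$ is integrable; the ``when the expectation exists'' phrasing in full generality then follows by a routine truncation at the end.

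First I would reduce the claim to indicator functions. Set $\mathcal{P} \defeq \{A_1 \cap \dots \cap A_m : A_i \in \mathcal{F}_i\}$. This is a $\pi$-system, since $(A_1 \cap \dots \cap A_m) \cap (B_1 \cap \dots \cap B_m) = (A_1 \cap B_1) \cap \dots \cap (A_m \cap B_m)$ with each $A_i \cap B_i \in \mathcal{F}_i$; and $\sigma(\mathcal{P}) = \sigma(\mathcal{F}_1, \dots, \mathcal{F}_m) = \mathcal{F}$, because $\mathcal{P}$ contains each $\mathcal{F}_i$ (set the remaining $A_j = \Omega \in \mathcal{F}_j$) and is contained in $\mathcal{F}$. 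Applying the hypothesis with $Y_i = \mathbf{1}_{A_i}$ and using $\mathbf{1}_{A_1}\cdots\mathbf{1}_{A_m} = \mathbf{1}_{A_1 \cap \dots \cap A_m}$ gives $\E[X\mathbf{1}_A] = 0$ for every $A \in \mathcal{P}$; in particular, taking all $A_i = \Omega$, we get $\E[X] = 0$. Next I would verify that $\mathcal{L} \defeq \{A \in \mathcal{F} : \E[X\mathbf{1}_A] = 0\}$ is a $\lambda$-system: it contains $\Omega$; it is closed under proper differences by linearity, since $\E[X\mathbf{1}_{B\setminus A}] = \E[X\mathbf{1}_B] - \E[X\mathbf{1}_A]$ when $A \subseteq B$; and it is closed under increasing unions, since $A_n \uparrow A$ forces $X\mathbf{1}_{A_n} \to X\mathbf{1}_A$ with $|X\mathbf{1}_{A_n}| \le |X| \in L^1$, so dominated convergence yields $\E[X\mathbf{1}_A] = \lim_n \E[X\mathbf{1}_{A_n}] = 0$. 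Since $\mathcal{P} \subseteq \mathcal{L}$, Dynkin's theorem gives $\mathcal{F} = \sigma(\mathcal{P}) \subseteq \mathcal{L}$, i.e. $\E[X\mathbf{1}_A] = 0$ for all $A \in \mathcal{F}$. Passing from indicators to a general $Y \in \mathcal{F}$ is then standard: the identity extends to $\mathcal{F}$-measurable simple functions by linearity, to bounded $\mathcal{F}$-measurable functions by approximating with simple functions and applying dominated convergence, and to arbitrary $Y$ with $\E[XY]$ defined by truncating and splitting into positive and negative parts.

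There is no serious obstacle here — this is a textbook consequence of the monotone class theorem, which is why the statement is quoted with ``e.g.'' references. The only thing requiring any care is the integrability bookkeeping: the increasing-union step for $\mathcal{L}$ and the final unbounded-$Y$ approximation both use $X \in L^1$. This is vacuous at every point where we invoke the lemma, because the relevant $X$ always has finite moments of every order, so I would simply note this rather than belabor it.
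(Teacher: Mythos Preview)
Your proposal is correct and follows the standard monotone class / Dynkin $\pi$--$\lambda$ argument. The paper does not give its own proof of this lemma at all---it simply cites it as a known fact from \cite{durrett} and \cite{ams21}---so your write-up is exactly the textbook justification those references point to.
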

As a result, to show an expectation is zero, we may replace a variable in $\sigma$-algebra $\mathcal{F}$ with a product of variables from $\sigma$-algebras that cover $\mathcal{F}$.

\begin{lemma}[Generalization of {\cite[Lemma 2.5]{ams21}}]
\label{lemma:uyyyyy_is_simple}
Choose factor $a$, integers $s_v \ge 0$ for $v \in \partial a$, and
let $Y_v^{s_v} \in \mathcal{G}_v^{s_v}$. Let $i \in \partial a$ be such that $s_i + 1 \ge s_v$ for all $v \in \partial a$. Then
    \begin{align*}
        \E \left[ u_i^{s_i + 1} \prod_{v\in \partial a} Y_v^{s_v} \right] 
        =
        \frac{1}{\sqrt{d}}  \E \left[ 
         \left( 
         \D_{i;a} f(\vz_{\partial a \to a}^{s_i})
    -  \mathbbm{1}_{[s_i \ge 1]}\D_{i;a} f(\vz_{\partial a \to a}^{s_i - 1})
         \right)
         \prod_{v \in \partial a} Y_v^{s_v} 
         \right]
         \,,
    \end{align*}
    where $\mathbbm{1}_{[x \ge y]} = 1$ if $x \ge y$ and $0$ otherwise. Intuitively, only factor $a$ contributes to the correlation.
\end{lemma}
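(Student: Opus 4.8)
The plan is to unroll the node-update in \Cref{fig:algorithm} over the factors incident to $i$ and show that only the contribution of factor $a$ survives. Since $w_i^0 = 0$,
\[
u_i^{s_i+1} \;=\; w_i^{s_i+1} - w_i^{s_i} \;=\; \frac{1}{\sqrt d}\sum_{b\in\partial i} g_b\,,
\qquad
g_b \;\defeq\; \D_{i;b}f(\vz_{\partial b\to b}^{s_i}) \,-\, \mathbbm{1}_{[s_i\ge 1]}\,\D_{i;b}f(\vz_{\partial b\to b}^{s_i-1})\,.
\]
The $b=a$ summand contributes exactly $\tfrac1{\sqrt d}\,\E\big[g_a\prod_{v\in\partial a}Y_v^{s_v}\big]$, which is the asserted right-hand side, so it suffices to prove $\E\big[g_b\prod_{v\in\partial a}Y_v^{s_v}\big]=0$ for every $b\in\partial i\setminus a$. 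By \Cref{lemma:monotoneclassthm} applied with $\mathcal{F}_v=\mathcal{G}_v^{s_v}$, one may further assume each $Y_v^{s_v}$ is a monomial in the initial Gaussians $\{z_k^0: k\in\B_v(s_v)\}$, which removes any integrability concern and lets me push quantities freely into conditional expectations.

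Now fix $b\in\partial i\setminus a$. By \Cref{cor:monomial_is_zero} we have $\E[g_b\mid\mathcal{G}_i^{s_i}]=0$. Next I would pin down which initial Gaussians $g_b$ actually depends on: by \Cref{remark:f_derivative_doesnt_use_i} and \Cref{lemma:u_is_in_G}, $g_b$ is a function of $\{z_k^0: k\in K_b\}$ where $K_b\defeq\bigcup_{j\in\partial b\setminus i}\B_{j\to b}(s_i)$ is the part of the $b$-branch seen at iteration $s_i$. Since the factor graph is locally a tree (the girth is at least $2L+2$), every $w\in\B_{j\to b}(s_i)$ has $d(i,w)=1+d(j,w)\le s_i+1$; split $K_b=K_b^{\mathrm{in}}\sqcup K_b^{\mathrm{out}}$ according to whether $d(i,w)\le s_i$ or $d(i,w)=s_i+1$, so $K_b^{\mathrm{in}}=K_b\cap\B_i(s_i)$. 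Because $g_b$ is independent of every initial Gaussian outside $K_b$, the identity $\E[g_b\mid\mathcal{G}_i^{s_i}]=0$ refines to $\E\big[g_b\mid\sigma(\{z_k^0:k\in K_b^{\mathrm{in}}\})\big]=0$; and since the $z_k^0$ are independent and $K_b^{\mathrm{out}}$ is disjoint from $K_b^{\mathrm{in}}$, this yields $\E\big[g_b\mid\mathcal{H}_b\big]=0$ for the larger $\sigma$-algebra $\mathcal{H}_b\defeq\sigma(\{z_k^0:k\notin K_b^{\mathrm{out}}\})$ — intuitively, the outermost layer $K_b^{\mathrm{out}}$ holds the ``fresh'' randomness that kills the conditional mean.

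It then remains to check that $\prod_{v\in\partial a}Y_v^{s_v}$ is $\mathcal{H}_b$-measurable, i.e.\ that $\B_v(s_v)\cap K_b^{\mathrm{out}}=\emptyset$ for every $v\in\partial a$. For $v=i$ this is immediate since $\B_i(s_i)$ reaches distance only $s_i<s_i+1$. For $v\in\partial a\setminus i$, any path from $v$ to a node $w\in K_b^{\mathrm{out}}$ must traverse $a$ and then $b$, so by the girth bound $d(v,w)=1+d(i,w)=s_i+2$, whereas $\B_v(s_v)$ reaches distance only $s_v\le s_i+1$; here the hypothesis $s_i+1\ge s_v$ is exactly what is used. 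Consequently $\E\big[g_b\prod_v Y_v^{s_v}\big]=\E\big[\prod_v Y_v^{s_v}\cdot\E[g_b\mid\mathcal{H}_b]\big]=0$, and summing over $b\in\partial i$ gives the claim.

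The delicate step is the bookkeeping in the third paragraph: for $v\neq i$ the factor $Y_v^{s_v}$ genuinely ``sticks out'' one hypergraph-step past $\B_i(s_i)$ into the $b$-branch, and one must verify it stops just short of the layer $K_b^{\mathrm{out}}$ responsible for the cancellation — this is precisely where the constraint $s_v\le s_i+1$ (and no weaker bound on $s_v$) is needed, and it is the analogue of the case analysis in \cite[Lemma 2.5]{ams21}. Everything else — unrolling the update and tracking which initial Gaussians each quantity depends on — is routine given \Cref{lemma:u_is_in_G} and \Cref{cor:monomial_is_zero}.
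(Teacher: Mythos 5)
Your proof is correct, and it is a genuinely different argument from the paper's. The paper handles $s_i=0$ and $s_i\ge 1$ separately, and in the latter case runs a two-level algebraic reduction: first it applies \Cref{lemma:SigmaAlgebraDecomposition} together with \Cref{lemma:monotoneclassthm} to write each $Y_v^{s_v}$ as a product over the branches at factor $a$, factors via \Cref{lemma:SigmaAlgebraIndependence} around $a$, then repeats the decomposition around $i$ before invoking \Cref{cor:monomial_is_zero}. You instead read \Cref{cor:monomial_is_zero} as the single statement $\E[g_b\mid\mathcal G_i^{s_i}]=0$ (which covers both cases at once), identify the precise set $K_b^{\mathrm{out}}$ of ``fresh'' initial Gaussians that produce this cancellation, and then replace the nested decompositions by one geometric check: that $\bigcup_{v\in\partial a}\B_v(s_v)$ misses the outermost layer $K_b^{\mathrm{out}}$. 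This is where the hypothesis $s_v\le s_i+1$ enters, and your accounting there is sharp: for $v\in\partial a\setminus i$ any $w\in K_b^{\mathrm{out}}$ has $d(v,w)=s_i+2$, so $\B_v(s_v)$ just barely avoids it. The upgrade from $\E[g_b\mid\sigma(\{z_k^0:k\in K_b^{\mathrm{in}}\})]=0$ to $\E[g_b\mid\mathcal H_b]=0$ is also valid, since $g_b$ is independent of the extra generators you add to the conditioning. A minor remark: the reduction to monomials via \Cref{lemma:monotoneclassthm} is only needed for integrability bookkeeping; the measurability argument in your third paragraph goes through for arbitrary $Y_v^{s_v}\in\mathcal G_v^{s_v}$ unchanged. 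What your route buys is a unified treatment of the two cases and a cleaner picture of \emph{why} the hypothesis on $s_v$ is exactly the right one; what the paper's route buys is staying entirely inside the $\sigma$-algebra toolkit it has already built up (\Cref{lemma:SigmaAlgebraIndependence}, \Cref{lemma:SigmaAlgebraDecomposition}), without introducing the bespoke algebra $\mathcal H_b$.
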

\begin{proof}
When $s_i = 0$, notice that
 \begin{align*}
        \E \left[ u_i^{s_i + 1} \prod_{v\in \partial a} Y_v^{s_v} \right] 
        = \E \left[ 
         w_i^1
         \prod_{v \in \partial a} Y_v^{s_v} 
         \right]
         =
          \frac{1}{\sqrt{d}} \sum_{b \in \partial i} \E \left[ 
         \D_{i;b} f(\vz_{\partial b \to b}^{0})
         \prod_{v \in \partial a} Y_v^{s_v} 
         \right]
         \,.
\end{align*}
Consider any $b \ne a$. Then $\D_{i;b} f(\vz_{\partial b \to b}^{0})$ is independent of all other variables by \Cref{remark:f_derivative_doesnt_use_i}. Since $f$ has no linear terms, $\E[ \D_{i;b} f(\vz_{\partial b \to b}^{0})] = \D_{i;b} f(0, \dots, 0) = 0$ and so this term is zero. The statement follows.

Now we analyze when $s_i \ge 1$. By definition of $u_i^{s_i + 1}$, we have
    \begin{align*}
        \E \left[  u_i^{s_i + 1} \prod_{v\in \partial a} Y_v^{s_v} \right] &=
 \frac{1}{\sqrt{d}} \sum_{b \in \partial i} \E \left[   
        \left( 
         \D_{i;b} f(\vz_{\partial b \to b}^{s_i})
    -  \D_{i;b} f(\vz_{\partial b \to b}^{s_i - 1})
         \right)
         \prod_{v \in \partial a} Y_v^{s_v} 
         \right] \,.
    \end{align*}
    To prove the lemma, we show that the expectation is zero for $b \ne a$.
    By \Cref{lemma:SigmaAlgebraDecomposition} and \Cref{lemma:monotoneclassthm}, we can assume 
    $Y_v^{s_v} = Y_{v \to a; v}^{s_v}\prod_{v' \in \partial a \backslash v} Y^{s_v - 1}_{v' \to a; v}$ for every $v \in \partial a$, where $Y_{v' \to a; v}^{s_v-1} \in \mathcal{G}_{v' \to a}^{s_v-1}$ for all $s_v$ and variables $v'$.\footnote{Here we use a second subscript to differentiate terms associated with different $Y_v^{s_v}$.} So,
    \begin{align*}
        & \E \left[  
        \left(
          \D_{i;b} f(\vz_{\partial b \to b}^{s_i})
    -  \D_{i;b} f(\vz_{\partial b \to b}^{s_i - 1})
         \right)
         \prod_{v \in \partial a} Y_v^{s_v} 
         \right]
         \\
         &= \E \left[    
         \left(
          \D_{i;b} f(\vz_{\partial b \to b}^{s_i})
    -  \D_{i;b} f(\vz_{\partial b \to b}^{s_i-1})
         \right)
         \left(
         \prod_{v \in \partial a} Y_{v \to a; v}^{s_v}\prod_{v' \in \partial a \backslash v} Y^{s_v - 1}_{v' \to a; v} 
         \right)\right] \\
         &=
         \E\left[
         \left( 
         \D_{i;b} f(\vz_{\partial b \to b}^{s_i})
    -  \D_{i;b} f(\vz_{\partial b \to b}^{s_i-1})
         \right)
         Y_{i \to a;i}^{s_i}
         \prod_{v \in \partial a  \setminus i} Y_{i \to a;v}^{s_v - 1} 
         \right]
         \cdot 
         \E\left[ \prod_{v \in \partial a \setminus i} Y_{v \to a; v}^{s_v}\prod_{v' \in \partial a \backslash v} Y^{s_v - 1}_{v' \to a; v} \right]\,,
    \end{align*}
    where the last line follows by \Cref{lemma:SigmaAlgebraIndependence} around factor $a$.
    We show that the first term is zero.
    Again using \Cref{lemma:SigmaAlgebraDecomposition} and \Cref{lemma:monotoneclassthm}, assume
    $Y_{i \to a;v}^{k_v} = Y_{i;v}^0\prod_{c \in \partial i \backslash a} \prod_{j \in \partial c \backslash i} Y_{j \to c;v}^{k_v-1}$, where $Y_{i;v}^0 \in \mathcal{G}_i^{0}$ and $Y_{j \to c;v}^{k_v - 1} \in \mathcal{G}^{k_v - 1}_{j \to c}$, for $k_i = s_i$ and $k_v = s_v - 1$ for $v \in \partial a \setminus i$.
    Then, by \Cref{remark:f_derivative_doesnt_use_i} and \Cref{lemma:SigmaAlgebraIndependence}, the first term factors into
    \begin{align*}
         &\E\left[
         \left( 
         \D_{i;b} f(\vz_{\partial b \to b}^{s_i})
    -  \D_{i;b} f(\vz_{\partial b \to b}^{s_i-1})
         \right)
         \left(
         \prod_{j \in \partial b  \setminus  i} Y_{j \to b; i}^{s_i - 1}
         \prod_{v \in \partial a \setminus i} Y_{j \to b; v}^{s_v - 2}
         \right)
         \right]
         \\
          \cdot
         &\E\left[
    \left(
    Y_{i;i}^0 
\prod_{v \in \partial a \setminus i}
         Y_{i;v}^0
         \right)
         \left( 
         \prod_{c \in \partial i \setminus \{a,b\}} 
         \prod_{j \in \partial c \setminus i} 
         Y_{j \to c;i}^{s_i - 1}
         \prod_{v \in \partial a \setminus i} 
         Y_{j \to c;v}^{s_v - 2}
         \right)
         \right]\,.
    \end{align*}
    By the assumed inequality, $\left( Y_{j \to b; i}^{s_i - 1}
         \prod_{v \in \partial a \setminus i} Y_{j \to b; v}^{s_v - 2} \right) \in \mathcal{G}_{j \to b}^{s_i - 1}$ for each $j \in \partial b \setminus i$. The result follows by \Cref{cor:monomial_is_zero}.
\end{proof}

\begin{proposition}\label{prop:product_difference}
Fix $0 \le \ell \le L - 1$, factor $a$, and $S \subseteq \partial a$. Then
    \begin{align*}
    \E\left[\prod_{j \in S} z_j^{\ell + 1}\right] - \E\left[\prod_{j \in S} z_j^{\ell}\right] =  \sum_{i \in S} \E\left[A_i^\ell u_i^{\ell + 1} \cdot \prod_{j \in S \setminus i} z_j^{\ell}\right] + O_d\left(\frac{1}{d}\right).
    \end{align*}
\end{proposition}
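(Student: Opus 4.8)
The plan is to expand $\prod_{j\in S}z_j^{\ell+1}$ using the update rule $z_j^{\ell+1}=z_j^\ell+A_j^\ell u_j^{\ell+1}$ from \Cref{fig:algorithm}, peel off the terms that produce the claimed main sum, and argue that every remaining term is $O_d(1/d)$. Expanding the product gives $\prod_{j\in S}z_j^{\ell+1}=\sum_{T\subseteq S}\big(\prod_{j\in T}A_j^\ell u_j^{\ell+1}\big)\big(\prod_{j\in S\setminus T}z_j^\ell\big)$; after taking expectations the $T=\emptyset$ term equals $\E[\prod_{j\in S}z_j^\ell]$ and cancels on the left, while the $|T|=1$ terms are exactly $\sum_{i\in S}\E[A_i^\ell u_i^{\ell+1}\prod_{j\in S\setminus i}z_j^\ell]$. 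Since $|S|\le r=O_d(1)$, it then suffices to show that for each $T\subseteq S$ with $|T|\ge 2$ one has $\E\big[\big(\prod_{j\in T}A_j^\ell u_j^{\ell+1}\big)\big(\prod_{j\in S\setminus T}z_j^\ell\big)\big]=O_d(1/d)$.

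Fix such a $T$ and two distinct indices $j_1,j_2\in T$. The idea is to extract a factor $1/d$ by applying \Cref{lemma:uyyyyy_is_simple} twice, first to $u_{j_1}^{\ell+1}$ and then to $u_{j_2}^{\ell+1}$. Every quantity in sight is measurable with respect to a $\sigma$-algebra $\mathcal{G}_v^{s_v}$ indexed by some $v\in\partial a$ with level $s_v\le\ell+1$ (using \Cref{lemma:u_is_in_G}, and that $A_v^\ell\in\mathcal{G}_v^\ell$), so the hypothesis $s_i+1\ge s_v$ of \Cref{lemma:uyyyyy_is_simple} holds with $i=j_1$, $s_{j_1}=\ell$, taking $A_{j_1}^\ell$ and everything at the other vertices (including the bundled $A_{j_2}^\ell u_{j_2}^{\ell+1}$ and $A_j^\ell u_j^{\ell+1}$ for $j\in T\setminus\{j_1,j_2\}$, which sit at level $\ell+1=s_{j_1}+1$) as the factors $Y_v^{s_v}$. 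The first application replaces $u_{j_1}^{\ell+1}$ by $\tfrac{1}{\sqrt d}\big(\D_{j_1;a}f(\vz^\ell_{\partial a\to a})-\mathbbm{1}_{[\ell\ge1]}\D_{j_1;a}f(\vz^{\ell-1}_{\partial a\to a})\big)$, which by \Cref{remark:f_derivative_doesnt_use_i} is a polynomial of degree $\le r$ in $z_{v\to a}^\ell,z_{v\to a}^{\ell-1}$ over $v\in\partial a\setminus j_1$, each measurable with respect to $\mathcal{G}_v^\ell$. Expanding this polynomial into monomials and regrouping the resulting factors by vertex puts each summand back in the form required by \Cref{lemma:uyyyyy_is_simple}, now with $i=j_2$, $s_{j_2}=\ell$, and the second application extracts another factor $\tfrac{1}{\sqrt d}$. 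What remains is $\tfrac1d$ times an expectation of a product of finitely many factors — the $A$'s (bounded by $K$), the $z$'s, $z_{v\to a}$'s and the leftover $u_j^{\ell+1}$'s, and bounded-degree polynomials of these — each of which has $O_d(1)$ moments of every fixed order by a standard induction on the iteration (see \Cref{sec:appendix_moments}). Hölder's inequality then bounds this expectation by $O_d(1)$, so the term is $O_d(1/d)$; summing over the constantly many $T$ and monomials finishes the proof.

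The main delicate point is the bookkeeping for the second application of \Cref{lemma:uyyyyy_is_simple}: after expanding the derivative polynomial from the first step, one must check that every remaining factor can be attached to a vertex $v\in\partial a$ with level at most $\ell+1$, so that the "maximal level" hypothesis still holds with $i=j_2$ — in particular that the leftover messages $u_j^{\ell+1}$ for $j\in T\setminus\{j_1,j_2\}$ (at level exactly $s_{j_2}+1$, the boundary case) and the new monomial factors (at level $\le\ell$) are placed correctly, and that $\D_{j_1;a}f$ contributes nothing at vertex $j_1$. The only other thing requiring (routine) care is verifying the $O_d(1)$ moment bounds uniformly in $d$, i.e. that no factor of $d$ sneaks in through the $\tfrac{1}{\sqrt d}$-normalized sums defining the $w_j^\ell$; this holds because those sums are over $d$ mean-zero summands supported on disjoint families of initial Gaussians.
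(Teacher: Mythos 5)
Your proposal is correct and follows essentially the same route as the paper: expand $\prod_j(z_j^\ell + A_j^\ell u_j^{\ell+1})$, identify the empty and singleton terms as the cancelling/main parts, and bound the higher-order terms by invoking \Cref{lemma:uyyyyy_is_simple} together with the uniform moment bounds of \Cref{lemma:finite_moments_are_bounded}. The one stylistic difference is that the paper writes a single displayed equality that simultaneously replaces every $u_j^{\ell+1}$ for $j \in S\setminus T$ by its $\tfrac{1}{\sqrt d}(\D_{j;a}f(\vz^\ell_{\partial a\to a}) - \mathbbm{1}_{[\ell\ge 1]}\D_{j;a}f(\vz^{\ell-1}_{\partial a\to a}))$ form, extracting $(1/\sqrt d)^{|S\setminus T|}$, whereas you stop after two iterated applications of \Cref{lemma:uyyyyy_is_simple} once $1/d$ has been extracted — and you carefully verify the maximality hypothesis at each step, including the boundary case where the remaining $u_j^{\ell+1}$ sit at level exactly $s_{j_2}+1$. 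Your version is more economical (two applications instead of $|S\setminus T|$) and makes the regrouping-by-vertex bookkeeping explicit that the paper leaves implicit; both are sound.
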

\begin{proof}
We have
\begin{align*}
    \E\left[\prod_{j \in S} z_j^{\ell + 1}\right] - \E\left[\prod_{j \in S} z_j^{\ell}\right] & = \E\left[\prod_{j \in S} \left(z_j^{\ell} + A_j^{\ell}u_j^{\ell + 1}\right) - \prod_{j \in S} z_j^{\ell}\right] = \sum_{T \subsetneq S}
    \E\left[
      \Big(\prod_{j \in T} z^{\ell}_{j}\Big)
      \Big( \prod_{j \in S \backslash T}  A^{\ell}_{j} u^{\ell+1}_{j}\Big)
      \right]\,.
\end{align*}
We expand each $u^{\ell + 1}_j$ using \Cref{lemma:uyyyyy_is_simple}. For any $T \subsetneq S$, notice that 
\begin{align*}
    \E\left[
      \Big(\prod_{j \in T} z^{\ell}_{j}\Big)
      \Big( \prod_{j \in S \backslash T}  A^{\ell}_{j} u^{\ell+1}_{j}\Big)
      \right]
      = \left(\frac{1}{\sqrt{d}} \right)^{|S\backslash T|} \E\left[ \Big(\prod_{j \in T} z^{\ell}_{j}\Big)  
       \Big( \prod_{j \in S \backslash T}  A^{\ell}_{j} \cdot 
       \big( \D_{j;a} f(\vz_{\partial a \to a}^{\ell})
    -  \mathbbm{1}_{[\ell \ge 1]} \D_{j;a} f(\vz_{\partial a \to a}^{\ell-1})
    \big) \Big)\right]\,.
\end{align*}
Recall that $A_j^\ell$ is bounded (see~\Cref{fig:algorithm}), and by \Cref{lemma:finite_moments_are_bounded}, all finite moments of $z_{j \to a}^\ell$ and $z_j$ are also bounded. So this term is $O_d(\frac{1}{d^{|S\backslash T|/2}})$. Since $T \subsetneq S$, all terms are $O_d(\frac{1}{\sqrt{d}})$. The proposition follows.
\end{proof}

\begin{corollary}\label{prop:clause_difference_value}
    Choose $0 \le \ell \le L - 1$ and factor $a$. Then
    \begin{align*}
    \E\left[f(\vz_{\partial a}^{\ell + 1})\right] - \E\left[f(\vz_{\partial a}^{\ell})\right]
    = \sum_{i \in \partial a} \E\left[
    A_i^{\ell}u_{i}^{\ell + 1} \cdot \D_{i;a} f(\vz_{\partial a}^{\ell}) \right] + O_d\left(\frac{1}{d}\right).
    \end{align*}
\end{corollary}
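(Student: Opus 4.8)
The plan is to deduce this corollary directly from \Cref{prop:product_difference} using the multilinear (Fourier) representation of $f$. Since $f:\{\pm1\}^r\to\{0,1\}$ has a unique multilinear extension, for a factor $a$ with $\partial a=[i_1,\dots,i_r]$ I would write $f(\vz_{\partial a}^{\ell})=\sum_{S\subseteq\partial a}\hat f(S)\prod_{j\in S}z_j^{\ell}$, where (as in the earlier proofs) $\hat f(S)$ denotes the Fourier coefficient of the set of coordinates occupied by $S$. Then by linearity of expectation,
\begin{align*}
\E\left[f(\vz_{\partial a}^{\ell+1})\right]-\E\left[f(\vz_{\partial a}^{\ell})\right]
=\sum_{S\subseteq\partial a}\hat f(S)\left(\E\Big[\prod_{j\in S}z_j^{\ell+1}\Big]-\E\Big[\prod_{j\in S}z_j^{\ell}\Big]\right),
\end{align*}
and I would substitute \Cref{prop:product_difference} for each inner difference. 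Since $|\partial a|=r$ is a fixed constant there are only $2^r$ summands, so the $O_d(1/d)$ errors from the individual applications aggregate to a single $O_d(1/d)$.

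After this substitution one is left with $\sum_{S\subseteq\partial a}\sum_{i\in S}\hat f(S)\,\E\big[A_i^{\ell}u_i^{\ell+1}\prod_{j\in S\setminus i}z_j^{\ell}\big]$ up to $O_d(1/d)$. Next I would exchange the two sums, grouping by $i\in\partial a$ first and $S\ni i$ second, and pull the deterministic scalars $\hat f(S)$ back inside the expectation, obtaining $\sum_{i\in\partial a}\E\big[A_i^{\ell}u_i^{\ell+1}\cdot\sum_{S\ni i}\hat f(S)\prod_{j\in S\setminus i}z_j^{\ell}\big]$. The closing observation is that $\sum_{S\subseteq\partial a,\,i\in S}\hat f(S)\prod_{j\in S\setminus i}z_j^{\ell}$ is exactly $\D_{i;a}f(\vz_{\partial a}^{\ell})$: differentiating the multilinear polynomial $f$ in the coordinate occupied by $i$ retains precisely the monomials containing $i$ and strips the $z_i$ factor from each, which is consistent with \Cref{remark:f_derivative_doesnt_use_i} (the partial derivative does not involve that coordinate). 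This produces the claimed identity.

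The hard part is essentially just bookkeeping: one must keep straight the correspondence between subsets $S\subseteq\partial a$ and subsets of coordinates in $[r]$ (already fixed by the convention on $\hat f$), and note that $\D_{i;a}f$ here is evaluated at the \emph{node} values $\vz_{\partial a}^{\ell}$, not the node-to-factor values $\vz_{\partial a\to a}^{\ell}$ — which is precisely the form \Cref{prop:product_difference} delivers. Existence of every expectation is not an issue, since $A_i^{\ell}$ is bounded and all finite moments of the $z_j^{\ell}$ are bounded, as was already used in the proof of \Cref{prop:product_difference}.
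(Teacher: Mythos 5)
Your proposal is correct and follows essentially the same route as the paper: Fourier-expand $f$, apply \Cref{prop:product_difference} term-by-term (noting $|\partial a|=r$ is bounded so the errors aggregate to $O_d(1/d)$), swap the order of summation, and recollect the inner sum as $\D_{i;a}f(\vz_{\partial a}^{\ell})$. No differences worth noting.
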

\begin{proof}
    Taking the Fourier expansion, we have
    \begin{align*}
        \E\left[f(\vz_{\partial a}^{\ell + 1})\right] - \E\left[f(\vz_{\partial a}^{\ell})\right]
        &= 
        \sum_{S \subseteq \partial a} \hat{f}(S) \cdot \left( \E\left[ \prod_{j \in S} z_j^{\ell + 1}\right] - \E\left[\prod_{j \in S} z_j^\ell\right]\right) \\
        &= \sum_{S \subseteq \partial a} \hat{f}(S) \cdot \sum_{i \in S} \left( \E\left[A_i^\ell u_i^{\ell + 1} \cdot \prod_{j \in S \setminus i} z_j^{\ell}\right] + O_d\left(\frac{1}{d}\right)\right) \\
        &= \sum_{i \in \partial a} \sum_{S \subseteq \partial a, i \in S} \hat{f}(S) \cdot  \E\left[A_i^\ell u_i^{\ell + 1} \cdot \prod_{j \in S \setminus i} z_j^{\ell}\right] + O_d\left(\frac{1}{d}\right) \\
        &= \sum_{i \in \partial a} \E\left[
       A_i^{\ell}u_{i}^{\ell + 1}
       \cdot
       \D_{i;a} f(\vz_{\partial a}^{\ell})\right] + O_d\left(\frac{1}{d}\right)\,.
    \end{align*}
    Here in the second equality we use~\Cref{prop:product_difference}, and the fact that $\partial a$ has a bounded number of elements.
\end{proof}

We simplify this expression further, assuming the algorithm's node and node-to-factor nonlinearities are similar (\Cref{hyp:A_deviation_small}). We choose nonlinearities exhibiting this property in~\Cref{sec:nonlinearities}.

\begin{proposition}
    Assume \Cref{hyp:A_deviation_small}. Let $0 \le \ell \le L - 1$, $a$ be a factor, and $i \in \partial a$. Then
     \begin{align*}
    \frac{1}{\alpha n} \sum_{a \in E} \left( \E\left[f(\vz_{\partial a}^{\ell+1})\right] - \E\left[f(\vz_{\partial a}^{\ell})\right] \right) = \frac{r}{\sqrt{d}} \cdot \frac{1}{n}\sum_{i \in V} \E\left[A_i^{\ell}(u_{i}^{\ell + 1})^2\right] + O_d\left(\frac{1}{d}\right)\,.
    \end{align*}
\end{proposition}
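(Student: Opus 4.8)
The plan is to start from \Cref{prop:clause_difference_value}, sum it over all factors $a\in E$, and reorganize. Since $|E|=\alpha n$ and each application of \Cref{prop:clause_difference_value} carries an $O_d(1/d)$ error, summing over all $a$ and dividing by $\alpha n$ still leaves only an $O_d(1/d)$ error, so it suffices to control
\[
\frac{1}{\alpha n}\sum_{a\in E}\sum_{i\in\partial a}\E\!\left[A_i^{\ell}\,u_i^{\ell+1}\,\D_{i;a}f(\vz_{\partial a}^{\ell})\right]
=\frac{1}{\alpha n}\sum_{i\in V}\sum_{a\in\partial i}\E\!\left[A_i^{\ell}\,u_i^{\ell+1}\,\D_{i;a}f(\vz_{\partial a}^{\ell})\right],
\]
using that each variable lies in exactly $d$ factors. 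It is convenient to write $g_{i,a}^{\ell}\defeq \D_{i;a}f(\vz_{\partial a\to a}^{\ell})-\mathbbm{1}_{[\ell\ge 1]}\,\D_{i;a}f(\vz_{\partial a\to a}^{\ell-1})$, so that $\sqrt{d}\,u_i^{\ell+1}=\sum_{a\in\partial i}g_{i,a}^{\ell}$.

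First I would simplify each summand. By \Cref{remark:f_derivative_doesnt_use_i}, $\D_{i;a}f(\vz_{\partial a}^{\ell})$ is a multilinear polynomial in $\{z_j^{\ell}\}_{j\in\partial a\setminus i}$, each $z_j^{\ell}\in\mathcal{G}_j^{\ell}$, and $A_i^{\ell}\in\mathcal{G}_i^{\ell}$; so, monomial by monomial, $A_i^{\ell}\D_{i;a}f(\vz_{\partial a}^{\ell})$ is of the form $\prod_{v\in\partial a}Y_v^{s_v}$ with $s_i=\ell$ and $s_j\le\ell$ for $j\ne i$. \Cref{lemma:uyyyyy_is_simple} (with $s_i=\ell$, the inequality $s_i+1\ge s_v$ trivially holding) then gives $\E[A_i^{\ell}u_i^{\ell+1}\D_{i;a}f(\vz_{\partial a}^{\ell})]=\tfrac1{\sqrt d}\E[A_i^{\ell}\,g_{i,a}^{\ell}\,\D_{i;a}f(\vz_{\partial a}^{\ell})]$. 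Next I would trade the node variables inside $\D_{i;a}f(\vz_{\partial a}^{\ell})$ for node-to-factor ones: a routine comparison of $w_j^{\ell}$ with $w_{j\to a}^{\ell}$ (the single extra summand, of size $O_d(1/\sqrt d)$, and the $\tfrac1{\sqrt{d-1}}$-versus-$\tfrac1{\sqrt d}$ normalization), combined with \Cref{hyp:A_deviation_small} and the moment bounds of \Cref{lemma:finite_moments_are_bounded}, yields $z_j^{\ell}-z_{j\to a}^{\ell}=O_d(1/\sqrt d)$ in every finite moment, hence $\E[A_i^{\ell}g_{i,a}^{\ell}\D_{i;a}f(\vz_{\partial a}^{\ell})]=\E[A_i^{\ell}g_{i,a}^{\ell}\D_{i;a}f(\vz_{\partial a\to a}^{\ell})]+O_d(1/\sqrt d)$. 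Finally, since $\E[g_{i,a}^{\ell}\mid\mathcal{G}_i^{\ell}]=0$ by \Cref{cor:monomial_is_zero} and $A_i^{\ell}\D_{i;a}f(\vz_{\partial a\to a}^{\ell-1})\in\mathcal{G}_i^{\ell}$ by \Cref{lemma:SigmaAlgebraDecomposition}, the $\mathbbm{1}_{[\ell\ge 1]}$-term contributes nothing, so $\E[A_i^{\ell}g_{i,a}^{\ell}\D_{i;a}f(\vz_{\partial a\to a}^{\ell})]=\E[A_i^{\ell}(g_{i,a}^{\ell})^2]$. Combining, $\E[A_i^{\ell}u_i^{\ell+1}\D_{i;a}f(\vz_{\partial a}^{\ell})]=\tfrac1{\sqrt d}\E[A_i^{\ell}(g_{i,a}^{\ell})^2]+O_d(1/d)$ for each pair $(i,a)$.

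Then I would assemble the vertex sum using $d\,(u_i^{\ell+1})^2=\big(\sum_{a\in\partial i}g_{i,a}^{\ell}\big)^2$ and the vanishing of the cross terms: for $a\ne a'\in\partial i$, conditioned on $\mathcal{G}_i^{\ell}$ the quantities $g_{i,a}^{\ell}$ and $g_{i,a'}^{\ell}$ depend on disjoint collections of the remaining initial Gaussians $z_k^{0}$ (those at distance exactly $\ell$ down the $a$- and the $a'$-branch of $i$), so they are conditionally independent, and each has conditional mean zero; since $A_i^{\ell}\in\mathcal{G}_i^{\ell}$ this gives $\E[A_i^{\ell}g_{i,a}^{\ell}g_{i,a'}^{\ell}]=0$ — this is \Cref{lemma:messages_conditional_independence} with the roles of nodes and factors interchanged. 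Hence $\sum_{a\in\partial i}\E[A_i^{\ell}(g_{i,a}^{\ell})^2]=d\,\E[A_i^{\ell}(u_i^{\ell+1})^2]$, and summing the per-pair identity over $a\in\partial i$ yields $\sum_{a\in\partial i}\E[A_i^{\ell}u_i^{\ell+1}\D_{i;a}f(\vz_{\partial a}^{\ell})]=\sqrt d\,\E[A_i^{\ell}(u_i^{\ell+1})^2]+O_d(1)$. Summing over $i\in V$, dividing by $\alpha n$, and using $d=r\alpha+o_\alpha(1)$ (so $\tfrac{\sqrt d}{\alpha}=\tfrac{r}{\sqrt d}+O_d(1/d)$) together with the boundedness $\tfrac1n\sum_i\E[A_i^{\ell}(u_i^{\ell+1})^2]=O_d(1)$ then gives the stated identity.

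The main obstacle is the error accounting: each per-pair replacement costs $O_d(1/\sqrt d)$ and there are $d$ factors at each of $n$ vertices, so the naive bound is $O_d(\sqrt d\cdot n)$; this collapses to $O_d(1/d)$ only because it is divided by $|E|=\alpha n=\Theta_d(dn)$ and because the extra $\tfrac1{\sqrt d}$ produced by \Cref{lemma:uyyyyy_is_simple} is exactly what downgrades the replacement error to $O_d(1/d)$ per pair. The most delicate single input is establishing $z_j^{\ell}-z_{j\to a}^{\ell}=O_d(1/\sqrt d)$ in all finite moments, where both \Cref{hyp:A_deviation_small} and the $\tfrac1{\sqrt{d-1}}$-normalization must be tracked inductively; the "nodes/factors swapped" version of \Cref{lemma:messages_conditional_independence} for $\{g_{i,a}^{\ell}\}_{a\in\partial i}$ also needs to be written out carefully.
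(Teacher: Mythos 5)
Your argument is correct and leads to the right bound, but it is organized differently from the paper's. Both start from \Cref{prop:clause_difference_value} and reorganize the double sum, and both ultimately rely on \Cref{lemma:uyyyyy_is_simple} and the $O_d(1/\sqrt d)$ estimates for $z_j^\ell - z_{j\to a}^\ell$ (the paper's \Cref{lemma:z_deviation_small,lemma:df_deviation_small}). The divergence is in how $\E[A_i^\ell(u_i^{\ell+1})^2]$ is produced. The paper first replaces $\D_{i;b}f(\vz_{\partial b}^\ell)$ by $\D_{i;b}f(\vz_{\partial b\to b}^\ell)$ and then sums over $b\in\partial i$ to identify $\frac{1}{d}\sum_b \D_{i;b}f(\vz_{\partial b\to b}^\ell) = \frac{1}{\sqrt d}w_i^{\ell+1}$; the identity $\E[A_i^\ell u_i^{\ell+1}w_i^{\ell+1}] = \E[A_i^\ell(u_i^{\ell+1})^2]$ then follows from the purely algebraic $u_i^{\ell+1}=w_i^{\ell+1}-w_i^\ell$ together with $\E[A_i^\ell u_i^{\ell+1}w_i^\ell]=0$ (\Cref{lemma:uy_is_zero}), while \Cref{lemma:uyyyyy_is_simple} is used only to control the replacement error. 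You instead apply \Cref{lemma:uyyyyy_is_simple} up front, per pair $(i,a)$, to reduce to $\E[A_i^\ell(g_{i,a}^\ell)^2]$, and then recover $(u_i^{\ell+1})^2$ by expanding $(\sum_a g_{i,a}^\ell)^2$ and proving the cross terms $\E[A_i^\ell g_{i,a}^\ell g_{i,a'}^\ell]$ vanish via a conditional-independence argument. Your cross-term lemma is correct (and you are right that it is \Cref{lemma:messages_conditional_independence} with the roles of $\partial b\setminus i$ and $\partial i$ swapped; the disjointness of the corresponding subtrees, given $\mathcal{G}_i^\ell$, holds by the tree structure), but it is an extra piece of machinery that the paper avoids; the paper's factorization into $w_i^{\ell+1}$ absorbs that cancellation automatically. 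Your approach has the modest virtue of making the cancellation explicit and keeping everything at the per-pair level, which is arguably more transparent. One nitpick: you write $d = r\alpha + o_\alpha(1)$ when converting $\sqrt d/\alpha$ to $r/\sqrt d$, but in the regular instance the algorithm runs on one has $\alpha = d/r$ exactly, so $\sqrt d/\alpha = r/\sqrt d$ with no extra error term needed.
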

\begin{proof}
    By \Cref{prop:clause_difference_value} we have
    \begin{align*}
    \frac{1}{\alpha n} \sum_{a \in E} \left( \E\left[f(\vz_{\partial a}^{\ell+1})\right] - \E\left[f(\vz_{\partial a}^{\ell})\right] \right)
    = \frac{1}{\alpha n} \sum_{a \in E} \sum_{v \in \partial a} \E\left[
 A_v^{\ell}u_{v}^{\ell + 1}
       \cdot
       \D_{v;a} f(\vz_{\partial a}^{\ell})
    \right] + O_d\left(\frac{1}{d}\right)\,.
    \end{align*}
Rewriting the double sum on the right-hand side, this quantity is equal to
\begin{align*}
    \frac{1}{\alpha n} \sum_{i \in V} \sum_{b \in \partial i}
     \E\left[
            A_i^{\ell}u_{i}^{\ell + 1}
       \cdot
       \D_{i;b} f(\vz_{\partial b}^{\ell})
        \right] + O_d\left(\frac{1}{d}\right)
        =
    \frac{r}{d} \cdot \frac{1}{n} \sum_{i \in V}
     \E\left[
            A_i^{\ell}u_{i}^{\ell + 1}
       \cdot
       \sum_{b \in \partial i} \D_{i;b} f(\vz_{\partial b}^{\ell})
        \right] + O_d\left(\frac{1}{d}\right)\,.
\end{align*}
Notice that the expectation on the right-hand side is very similar to
\begin{align*}
    \frac{1}{d} \sum_{b \in \partial i} \E[   A_i^{\ell} u_i^{\ell + 1} \cdot  \D_{i;b} f(\vz_{\partial b \to b}^{\ell})]
    =
    \frac{1}{\sqrt{d}} \E [A_i^{\ell} u_i^{\ell + 1} \cdot  w_i^{\ell+1}]\,.
\end{align*}
By \Cref{lemma:uy_is_zero}, $\E[A_i^\ell u_i^{\ell + 1} \cdot w_i^\ell] = \E[ A_i^\ell w_i^\ell \cdot  \E[ u_i^{\ell + 1} \mid \mathcal{G}_i^\ell ]] = 0$, and so 
    \begin{align*}
     \frac{1}{\sqrt{d}} \E [A_i^{\ell} u_i^{\ell + 1} \cdot  w_i^{\ell+1}]
     =
      \frac{1}{\sqrt{d}}  \E [A_i^{\ell} u_i^{\ell + 1} \cdot  (w_i^{\ell+1} - w_i^\ell)]
      =
     \frac{1}{\sqrt{d}}  \E [A_i^{\ell} (u_i^{\ell + 1})^2]\,.
    \end{align*}
The proof follows by bounding the difference
\begin{align*}
    \left|  
    \frac{1}{d} \sum_{b \in \partial i}
    \E[   A_i^{\ell} u_i^{\ell + 1} \cdot  \left( \D_{i;b} f(\vz_{\partial b}^{\ell})
   -
    \D_{i;b} f(\vz_{\partial b \to b}^{\ell})
    \right)
   ] 
   \right|
   \le \max_{b \in \partial i}
   \left|
   \E[   A_i^{\ell} u_i^{\ell + 1} \cdot  \left( \D_{i;b} f(\vz_{\partial b}^{\ell})
   -
    \D_{i;b} f(\vz_{\partial b \to b}^{\ell})
    \right)
   ]
      \right|\,.
      \end{align*}
      When $\ell = 0$, $\vz_{\partial b}^\ell = \vz_{\partial b \to b}^\ell$ and so this difference is zero.

      Now suppose $\ell > 0$. Let $c$ be the factor which maximizes the right-hand side. 
      By~\Cref{lemma:uyyyyy_is_simple}, this term is at most
\begin{align*}
    &\frac{1}{\sqrt{d}} \E[ A_i^\ell
    \left(\D_{i;c} f(\vz_{\partial c \to c}^\ell)  -
    \D_{i;c} f(\vz_{\partial c \to c}^{\ell-1})
    \right)
    \cdot  \left( \D_{i;c} f(\vz_{\partial c}^{\ell})
   -
    \D_{i;c} f(\vz_{\partial c \to c}^{\ell})
    \right)
    ]
    \\
    &\le \frac{1}{\sqrt{d}} \sqrt{\E[ (A_i^\ell)^2
    \left(\D_{i;c} f(\vz_{\partial c \to c}^\ell)  -
    \D_{i;c} f(\vz_{\partial c \to c}^{\ell-1})
    \right)^2]}
    \cdot  
    \sqrt{
    \E[\left( \D_{i;c} f(\vz_{\partial c}^{\ell})
   -
    \D_{i;c} f(\vz_{\partial c \to c}^{\ell})
    \right)^2
    ]}\,,
\end{align*}
where the inequality is by Cauchy-Schwarz.
The first expectation is a finite product of terms with bounded moments  by \Cref{lemma:finite_moments_are_bounded}, so it is bounded. By \Cref{lemma:df_deviation_small} (which uses \Cref{hyp:A_deviation_small}), the second expectation is $O_d(\frac{1}{d})$, so the whole expression is $O_d(\frac{1}{d})$.
\end{proof}
By these propositions, we get the following: 
\begin{theorem}\label{thm:simplified_sum}
Assume \Cref{hyp:A_deviation_small}. Fix a constant $L \ge 1$. Then 
\begin{align*}
    \frac{1}{\alpha n} \sum_{a \in E} \E\left[f(\vz_{\partial a}^{L})\right] = \E[f] + \frac{r}{\sqrt{d}} \cdot \sum_{\ell = 0}^{L-1} \left( \frac{1}{n} \sum_{i \in V} \E\left[A_i^{\ell}(u_{i}^{\ell + 1})^2\right]\right) + O_d\left(\frac{1}{d}\right)\,.
\end{align*}
\end{theorem}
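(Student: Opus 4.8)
The plan is to telescope over iterations and then feed each increment into the estimate already assembled. I would start from the identity
\begin{align*}
\E\!\left[f(\vz_{\partial a}^{L})\right] = \E\!\left[f(\vz_{\partial a}^{0})\right] + \sum_{\ell=0}^{L-1}\Big(\E\!\left[f(\vz_{\partial a}^{\ell+1})\right] - \E\!\left[f(\vz_{\partial a}^{\ell})\right]\Big)\,,
\end{align*}
and use that $\E[f(\vz_{\partial a}^{0})] = f(0,\dots,0) = \E[f]$, which holds because each $z_i^0$ is mean-zero and $f$ is multilinear, so its expectation equals its constant Fourier coefficient. Averaging over the $m = \alpha n$ factors $a \in E$ and noting $\frac{1}{\alpha n}\sum_{a\in E}\E[f] = \E[f]$ exactly, I would swap the two finite sums to get
\begin{align*}
\frac{1}{\alpha n}\sum_{a\in E}\E\!\left[f(\vz_{\partial a}^{L})\right] = \E[f] + \sum_{\ell=0}^{L-1}\frac{1}{\alpha n}\sum_{a\in E}\Big(\E\!\left[f(\vz_{\partial a}^{\ell+1})\right] - \E\!\left[f(\vz_{\partial a}^{\ell})\right]\Big)\,.
\end{align*}

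Next I would apply the immediately preceding proposition (the one invoking \Cref{hyp:A_deviation_small}, itself built on \Cref{prop:product_difference} and \Cref{prop:clause_difference_value}) to each $\ell$-summand, replacing $\frac{1}{\alpha n}\sum_{a\in E}(\E[f(\vz_{\partial a}^{\ell+1})] - \E[f(\vz_{\partial a}^{\ell})])$ by $\frac{r}{\sqrt{d}}\cdot\frac{1}{n}\sum_{i\in V}\E[A_i^{\ell}(u_i^{\ell+1})^2] + O_d(1/d)$. Summing over the $L$ values of $\ell$ produces the stated main term, and the accumulated error is $L\cdot O_d(1/d)$, which is still $O_d(1/d)$ precisely because $L$ is a fixed constant independent of $d$.

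There is no real obstacle here; the analytic content lives entirely in the earlier propositions together with the boundedness of the nonlinearities and the moment/deviation bounds invoked in their proofs. The one point I would flag explicitly is that telescoping introduces exactly $L$ copies of the $O_d(1/d)$ error, so the statement is formulated for fixed $L$; this is consistent with the order of limits used later (outside this excerpt), where $L\to\infty$ with $\delta = 1/L$ is taken only after $d\to\infty$, so that the error remains negligible along the way.
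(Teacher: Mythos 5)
Your proposal is correct and follows essentially the same route the paper takes: telescope $\E[f(\vz_{\partial a}^{L})]$ over iterations, use multilinearity plus mean-zero independent initializations to identify $\E[f(\vz_{\partial a}^{0})] = \E[f]$, average over factors, and apply the immediately preceding (unnumbered) proposition built on \Cref{prop:clause_difference_value} and \Cref{hyp:A_deviation_small} to each increment. You also correctly flag that the $L$ copies of $O_d(1/d)$ absorb into a single $O_d(1/d)$ only because $L$ is held fixed, which is implicit in the paper's statement.
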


\section{State evolution for message-passing on hypergraphs}
\label{sec:stateevo}

So far, we've shown that the value of the algorithm takes a simple form (i.e. \Cref{thm:simplified_sum}).
From here, we prove a \emph{state evolution} statement: with some weak conditions on the nonlinearities (\Cref{hyp:second_moment}), we analyze the asymptotic behavior of statistics of the algorithm. 
State evolution as applied to AMP first appeared in~\cite{bayati11};
this was inspired by a proof in a paper computing the fixed points of the TAP equations for the Sherrington--Kirkpatrick model~\cite{bolthausen2014iterative}.
Since then, state evolution has been used in AMP algorithms applied to a variety of tasks; see for example \cite{feng2021unifying}.

Here, we consider functions of the messages $\{u_{i \to a}^\ell\}_{\ell \in [L]}$ (or $\{u_{i}^\ell\}_{\ell \in [L]}$) that are \emph{pseudo-Lipschitz}:

\begin{definition}
\label{defn:pseudolipschitz}
Fix $s > 0$. We say that a given function $\psi: \R^s \rightarrow \R$ is \emph{pseudo-Lipschitz} if there exists a constant $C > 0$ such that
\begin{equation*}
\lvert \psi(\vx) - \psi(\vy) \rvert 
\leq C \cdot \big( 1 + \lVert \vx \rVert + \lVert \vy \rVert \big) \cdot \lVert \vx - \vy \rVert \, ,
\end{equation*}
for every $\vx, \vy \in \R^s$.
\end{definition}
As $d \rightarrow \infty$, the expected value of a pseudo-Lipschitz function $\psi$ 
with messages $\{u_{i \to a}^\ell\}_{\ell \in [L]}$ (or $\{u_{i}^\ell\}_{\ell \in [L]}$) as input
approaches the expected value of $\psi$ with independent Gaussians as input.

\begin{restatable}{definition}{gaussianVariance}
\label{defn:nu_defn}
For any predicate $f: \{\pm 1\}^r \to \{0,1\}$, let $\xi$ be defined as in \Cref{thm:main_theorem}. For any $\ell \geq 1$, define the quantity\footnote{Note this is the same parameter $\nu_{\ell}$ in \Cref{hyp:second_moment}.}
\begin{equation*}
\nu_{\ell} \defeq \frac{\xi'(\ell \delta) - \xi'((\ell - 1) \delta)}{r} \, .
\end{equation*}
\end{restatable}
\begin{restatable}{theorem}{stateEvolutionSimple}
\label{prop:state_evolution_expectation}
    Assume \Cref{hyp:second_moment}. Then for any pseudo-Lipschitz function $\psi: \mathbb{R}^\ell \to \mathbb{R}$, variable $i$, factor $a \in \partial i$, $1 \le \ell \le L$, we have
    \begin{align}
         \E\left[\psi(u_{i \to a}^1, \ldots, u_{i \to a}^\ell)\right] &= \E\left[\psi(U_1, \ldots, U_\ell)\right] + o_d(1)\,,
        \label{eqn:state_evolution_simple.factor}
        \\
   \E\left[\psi(u_{i}^1, \ldots, u_{i}^\ell)\right] &= \E\left[\psi(U_1, \ldots, U_\ell)\right] + o_d(1)\,,
            \label{eqn:state_evolution_simple.node}
    \end{align}
    where $U_s \sim \cN(0, \nu_s)$ independently.
\end{restatable}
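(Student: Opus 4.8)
The plan is to prove \Cref{prop:state_evolution_expectation} by induction on $\ell$, establishing a stronger \emph{multivariate} statement: not just that the single-iteration marginal $u^\ell_{i\to a}$ is asymptotically $\cN(0,\nu_\ell)$, but that the whole tuple $(u^1_{i\to a},\dots,u^\ell_{i\to a})$ converges jointly to independent Gaussians, and moreover that this holds jointly across the ``children'' messages feeding into a given node. The reason the joint statement is needed is that $w^{\ell+1}_{i\to a}=\frac{1}{\sqrt{d-1}}\sum_{b\in\partial i\setminus a}\D_{i;b}f(\vz^\ell_{\partial b\to b})$ is a normalized sum over $d-1$ asymptotically-independent terms (by \Cref{lemma:SigmaAlgebraIndependence} around each factor $b$, the summands over distinct $b$ depend on disjoint sets of initial Gaussians), so a CLT governs its limit; but to identify the limiting variance one must understand the joint law of the $\vz^\ell_{\partial b\to b}$, which are built from lower-iteration messages. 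So the induction hypothesis carries: for each fixed $b$ and the tuple of variables in $\partial b$, the messages $\{(u^1_{v\to b},\dots,u^{\ell}_{v\to b})\}_{v\in\partial b}$ are jointly asymptotically a family of independent centered Gaussians with $U^{(v)}_s\sim\cN(0,\nu_s)$, with pseudo-Lipschitz test functions.

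The key steps, in order. First, establish the base case $\ell=1$: $u^1_{i\to a}=w^1_{i\to a}=\frac{1}{\sqrt{d-1}}\sum_{b\in\partial i\setminus a}\D_{i;b}f(\vz^0_{\partial b\to b})$, where the summands are i.i.d.\ (over $b$), mean zero (no linear terms in $f$, using \Cref{remark:f_derivative_doesnt_use_i} and $f$ multilinear so $\E\D_{i;b}f(\vz^0)=\D_{i;b}f(0)=0$), with variance equal to $\sum_{S\ni i}\hat f(S)^2 \cdot \delta^{|S|-1}$; one checks this equals $\nu_1=(\xi'(\delta)-\xi'(0))/r = \xi'(\delta)/r$ to leading order in $\delta$ — actually one must be careful: the claim is exact equality $\E[(A^\ell)^2]=\delta/\nu_{\ell+1}$ under \Cref{hyp:second_moment}, and the Gaussian variance is $\nu_\ell$ exactly in the $d\to\infty$ limit; I would verify the degree-$j$ Fourier pieces of $f$ contribute $\|f^{=j}\|^2\cdot(\text{product of } j-1 \text{ message variances})$ and match this against $\xi'$. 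Second, the inductive step: write $u^{\ell+1}_{i\to a}=w^{\ell+1}_{i\to a}-w^\ell_{i\to a}$ and expand each $\D_{i;b}f(\vz^\ell_{\partial b\to b})-\D_{i;b}f(\vz^{\ell-1}_{\partial b\to b})$ via the telescoping-in-$z$ expansion already derived in the proof of \Cref{lemma:uy_is_zero}, so that $u^{\ell+1}_{i\to a}$ becomes $\frac{1}{\sqrt{d-1}}\sum_{b\in\partial i\setminus a}g_b$, where each $g_b$ is a fixed polynomial in the lower-iteration messages $\{u^s_{j\to b}\}_{s\le\ell,\,j\in\partial b\setminus i}$ and the $A^{s}_{j\to b}$ (which are bounded deterministic functions of those messages). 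Third, apply a CLT for triangular arrays of independent (over $b$), bounded-variance summands — a Lindeberg/Lyapunov argument suffices since each $g_b$ has all moments bounded (via \Cref{lemma:finite_moments_are_bounded}) — to conclude $u^{\ell+1}_{i\to a}$ is asymptotically Gaussian; and compute its limiting variance by taking the $d\to\infty$ expectation of $\E[g_b^2]$ using the induction hypothesis (replacing the messages by independent Gaussians), checking it equals $\nu_{\ell+1}$. Fourth, to get the joint statement across $s=1,\dots,\ell+1$ and across the several $v\in\partial a$, note that the cross-covariances $\E[u^{\ell+1}_{i\to a}u^{s}_{i\to a}]$ for $s\le\ell$ vanish in the limit (this is essentially \Cref{lemma:uy_is_zero}: $u^{\ell+1}$ is conditionally mean-zero given $\cG_i^\ell$, and $u^s_{i\to a}\in\cG_{i\to a}^\ell$), and independence across distinct $v$ follows because the relevant neighborhoods are disjoint; combine with a standard lifting from convergence of moments/covariances to convergence of pseudo-Lipschitz expectations (using the moment bounds to control tails, so weak convergence plus uniform integrability gives convergence of pseudo-Lipschitz test functions).

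The main obstacle is the CLT step together with the variance bookkeeping: unlike ordinary AMP, the summands $g_b$ are \emph{not} linear in the underlying Gaussians — they are polynomials, and at each iteration the ``message'' $z^\ell_{j\to b}$ is a sum of products of Gaussians (as flagged in the excerpt's discussion of \Cref{table:notation} and the proof overview). So one cannot simply invoke a Gaussian-to-Gaussian state-evolution black box; one needs a CLT for normalized sums of independent but non-identically-structured polynomial summands whose own distributions are themselves only \emph{asymptotically} Gaussian (an error of $o_d(1)$ at level $\ell$ must be propagated without blowup to level $\ell+1$). Handling this requires (i) a quantitative version of the induction hypothesis so that the $o_d(1)$ errors from lower levels, after being fed through the bounded polynomial $g_b$ and the CLT, still vanish; and (ii) care that the variance computation uses the \emph{exact} normalization in \Cref{hyp:second_moment} — in particular that $\E[(A^\ell_{j\to b})^2]=\delta/\nu_{\ell+1}$ is precisely what makes the telescoping sum of message-variances reproduce $\xi'$ evaluated on the grid, so that $\nu_{\ell+1}$ emerges as the limiting variance rather than some other constant. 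I would isolate the ``product of Gaussians is asymptotically Gaussian under CLT-normalization'' phenomenon as the technical heart, and defer its full proof to the appendix (\Cref{sec:appendix_stateevo}), as the paper indicates.
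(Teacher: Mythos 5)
Your proposal takes a genuinely different route from the paper. The paper never invokes a Lindeberg/Lyapunov CLT. Instead, it directly computes every conditional moment $\E\big[(u_{i\to a}^{\ell})^k \mid \cG_i^{\ell-1}\big]$ via explicit Fourier/Wick-style bookkeeping (\Cref{lemma:message_second_moment}, \Cref{lemma:message_higher_moments}), shows the conditional second moment $(\tau_{i\to a}^\ell)^2$ concentrates around $\nu_\ell$ by a self-contained inner recursion (\Cref{lemma:mixture_variance_bound}, which crucially uses the normalization in \Cref{hyp:second_moment}), then deduces conditional convergence in $p$-Wasserstein to $\cN(0,\nu_\ell)$ via the ``determined by moments'' route (\Cref{thm:message_gaussian_convergence}). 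The state-evolution induction is then organized as a \emph{peel-off-the-last-coordinate} argument: replace $u_{i\to a}^{\ell+1}$ by an independent Gaussian using \Cref{lemma:pseudo_lipschitz}, integrate out that coordinate to define $\tilde\psi$, and apply the induction hypothesis to $\tilde\psi$ on $(u^1,\ldots,u^\ell)_{i\to a}$. The conditioning on $\cG_i^{\ell-1}$ is doing the work that your strengthened ``joint across children'' hypothesis is designed to do; the paper's auxiliary independence lemmas (\Cref{lemma:SigmaAlgebraIndependence}, \Cref{lemma:messages_conditional_independence}) substitute for your joint induction hypothesis. What the paper's route buys is that the moment calculation is decoupled from the state-evolution statement itself, so no $o_d(1)$ error from level $\ell$ has to be fed through a CLT at level $\ell+1$; what your route buys is conceptual transparency (a single-shot multivariate CLT for a normalized sum of i.i.d.\ vectors indexed by $b\in\partial i\setminus a$).

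Two places in your sketch need repair to be correct. First, step four as written claims joint Gaussianity from marginal Gaussianity of each $u^s_{i\to a}$ plus vanishing cross-covariances; this implication is false in general. The fix is available and consistent with your setup: write the whole vector $(u^1_{i\to a},\ldots,u^{\ell+1}_{i\to a})=\frac{1}{\sqrt{d-1}}\sum_{b\in\partial i\setminus a}(g^1_b,\ldots,g^{\ell+1}_b)$ as a sum of i.i.d.\ vectors over $b$ and apply the multivariate CLT via Cram\'er--Wold; joint Gaussianity then comes for free and the (exactly vanishing, by \Cref{lemma:uy_is_zero}) cross-covariances identify the limit as a product measure. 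Second, you propose to compute $\lim_{d\to\infty}\E[g_b^2]$ by ``replacing the messages by independent Gaussians using the induction hypothesis'' --- but the induction hypothesis only controls \emph{pseudo-Lipschitz} test functions, and $g_b^2$ is a polynomial of degree up to $2(r-1)$ in the messages, which is not pseudo-Lipschitz for $r\ge 3$. You can rescue this with the uniform moment bounds of \Cref{lemma:finite_moments_are_bounded} (convergence in distribution plus uniform integrability gives convergence of polynomial moments), but this must be stated, since the limit identification $\sigma_\ell^2=\nu_\ell$ is exactly the technical heart of the argument; the paper avoids the issue entirely by computing $(\tau^\ell_{i\to a})^2$ directly from an explicit recurrence rather than via state evolution.
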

\Cref{prop:state_evolution_expectation} is inspired by the state evolution statement in~\cite{ams21}. These statements are for \emph{sparse} problems, and so 
do not require a correction term to ensure variables remain Gaussian (often named the \emph{Onsager correction}).
In \cite{ams21}, the messages   $\{u_{i \to a}^\ell\}_{\ell \in [L]}$ are Gaussian even at fixed degree!
In our algorithm, this is not the case for arbitrary predicates $f$; we must take additional care showing that these quantities are close to Gaussian.

We defer the proof of \Cref{prop:state_evolution_expectation} to the appendices.
In \Cref{sec:appendix_moments}, we establish that given \Cref{hyp:second_moment}, $u_{i \rightarrow a}^{\ell}$ is \emph{close} to the Gaussian $\cN(0, \nu_{\ell})$ when $d \rightarrow \infty$, independently of $\mathcal{G}_{i}^{\ell - 1}$.
In \Cref{sec:appendix_stateevo}, we show that pseudo-Lipschitz functions are smooth enough so that an input $u_{i \rightarrow a}^{\ell}$ may be exchanged with an independent Gaussian input, and use this argument inductively to finish the proof.

Given \Cref{prop:state_evolution_expectation}, we may convert each spin $\{z_i^{\ell}\}_{0 \le \ell \le L}$ to a certain \emph{Gaussian process}. In \Cref{sec:nonlinearities}, we do exactly this, and choose the nonlinearities $\{A_i^\ell\}_{0 \le \ell \le L}$ to maximize the expression in \Cref{thm:simplified_sum} (subject to $|z_i^L| \le 1$).

\section{Choosing the nonlinearities}
\label{sec:nonlinearities}
We now design the nonlinearities to maximize \Cref{thm:simplified_sum}.
In light of \Cref{sec:stateevo}, we consider the stochastic optimal control problem of \cite{ams20}.
Define $\xi(s) \defeq \sum_{j=1}^r  \| f^{=j}\|^2 s^j$. Consider any $\mu \in \mathscr{L}_\xi$, and $\Phi^\mu$ chosen as in~\Cref{eq.parisi-equations}. Define the stochastic differential equation\footnote{Here, $(B_t)_{t \ge 0}$ is standard Brownian motion.}
\begin{align}
    \label{eqn:sde}
\mathrm{d}X_t = \xi''(t)\mu(t)\Phi_x^\mu(t, X_t)\mathrm{d}t + \sqrt{\xi''(t)}\mathrm{d}B_t\,,
\quad X_0 = 0\,,
\end{align}
along with associated martingale
\begin{align}\label{eqn:zmartingale}
    Z_t = \int_0^t \sqrt{\xi''(s)}\Phi_{xx}^\mu(s, X_s)\mathrm{d}B_s\,.
\end{align}
We now crucially use \Cref{assn:alg_minimizer_exists}. Let $\mu_\ast \in \mathscr{L}$ be the minimizer of \Cref{eqn:alg_defn}.
\begin{lemma}[{\cite[Theorem 3]{ams20}}]\label{lemma:goal}
     Suppose \Cref{assn:alg_minimizer_exists} holds. Then for every $\epsilon > 0$, there exists $\eta > 0$ such that
    \begin{align*}
    \int_0^{1 - \eta} \xi''(t)\E[\Phi_{xx}^{\mu_\ast}(t, X_t)] \mathrm{d}t\geq \ALG_{\xi} - \epsilon\,.
    \end{align*}
\end{lemma}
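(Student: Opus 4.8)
The statement is cited as "{\cite[Theorem 3]{ams20}}", so strictly speaking the job is to reduce our setup to the hypotheses of that theorem and invoke it. I would organize the proof as follows. First, observe that $\mu_\ast\in\mathscr{L}_\xi$ is the minimizer of $\ALG_\xi=\inf_{\mu\in\mathscr{L}_\xi}\sfP_\xi(\mu)$, whose existence is exactly \Cref{assn:alg_minimizer_exists}. Then recall the variational (stochastic optimal control) representation of $\sfP_\xi(\mu)$ underlying the Parisi PDE \Cref{eq.parisi-equations}: for a fixed $\mu\in\mathscr{L}_\xi$, $\Phi^\mu(0,0)$ is the value of a control problem, and along the controlled trajectory $X_t$ solving \Cref{eqn:sde} the optimal control is $\xi''(t)\mu(t)\Phi^\mu_x(t,X_t)$. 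The key algebraic identity I would extract from \cite{ams20} is that
\begin{equation*}
\sfP_\xi(\mu) = \frac12\int_0^1 \xi''(t)\,\E\!\left[\Phi^\mu_{xx}(t,X_t)\right]\mathrm{d}t
\end{equation*}
(possibly up to terms that cancel against $-\tfrac12\int_0^1\xi''(t)\,t\,\mu(t)\,dt$), obtained by applying Itô's formula to $\Phi^\mu(t,X_t)$ and using the PDE $\Phi^\mu_t=-\tfrac{\xi''}{2}(\Phi^\mu_{xx}+\mu(\Phi^\mu_x)^2)$ together with the martingale property of the stochastic integral. This identifies the full integral $\int_0^1\xi''(t)\E[\Phi^{\mu_\ast}_{xx}(t,X_t)]\mathrm{d}t$ with $\ALG_\xi$ (or $2\,\ALG_\xi$, depending on normalization).

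**Finishing.** Once the full integral equals $\ALG_\xi$ exactly, the lemma is a statement about truncating the integration domain: I would argue that $\int_0^1 \xi''(t)\E[\Phi^{\mu_\ast}_{xx}(t,X_t)]\mathrm{d}t$ is an absolutely convergent integral (the integrand is nonnegative, since $\Phi^\mu$ is convex in $x$ so $\Phi^\mu_{xx}\ge 0$, and $\xi''\ge 0$ as $\xi$ has nonnegative coefficients), hence the tail $\int_{1-\eta}^1$ can be made smaller than any $\epsilon>0$ by choosing $\eta$ small. Therefore $\int_0^{1-\eta}\xi''(t)\E[\Phi^{\mu_\ast}_{xx}(t,X_t)]\mathrm{d}t \ge \ALG_\xi - \epsilon$. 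The nonnegativity of the integrand is the crucial structural fact making this a clean tail estimate rather than a delicate cancellation argument; convexity of $\Phi^\mu(t,\cdot)$ follows from the terminal condition $\Phi^\mu(1,x)=|x|$ being convex and the Parisi PDE propagating convexity (a standard fact, e.g.\ by the Cole–Hopf-type transformation linearizing the equation, or by a comparison/maximum-principle argument on $\Phi^\mu_{xx}$).

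**Main obstacle.** The real work is bookkeeping the normalization and verifying that the regularity of $\mu_\ast\in\mathscr{L}_\xi$ (merely right-continuous with $\|\xi''\mu\|_{\mathrm{TV}[0,t]}<\infty$ for $t<1$, and $\int_0^1\xi''\mu<\infty$) is enough for the control-problem machinery of \cite{ams20}, which was stated for the spin-glass model, to apply verbatim; in particular one must make sure the SDE \Cref{eqn:sde} has a well-defined (weak) solution up to time $1-\eta$ for every $\eta>0$ and that the Itô expansion is justified despite $\mu_\ast$ possibly blowing up as $t\to 1$. I expect this to be handled exactly as in \cite{ams20}: work on $[0,1-\eta]$ first, where everything is regular, prove the truncated identity there, and then take $\eta\to 0$ using nonnegativity and the integrability hypothesis $\int_0^1\xi''\mu_\ast<\infty$ to control error terms. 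Since $\xi$ here arises from a CSP predicate with $\|f^{=1}\|^2=0$ it has $\xi'(0)=0$, matching the spin-glass normalization in \cite{ams20}, so no genuinely new analytic input is needed beyond citing their Theorem 3.
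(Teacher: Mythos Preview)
The paper does not prove this lemma; it is stated as a black-box citation of \cite[Theorem~3]{ams20}. Your proposal to simply invoke that theorem is therefore exactly what the paper does, and your supplementary sketch of the underlying argument (the stochastic-control identity tying the full integral $\int_0^1 \xi''(t)\,\E[\Phi_{xx}^{\mu_\ast}(t,X_t)]\,\mathrm{d}t$ to $\ALG_\xi$, convexity of $\Phi^{\mu_\ast}(t,\cdot)$ giving $\Phi_{xx}^{\mu_\ast}\ge 0$, and hence a clean tail truncation on $[1-\eta,1]$) is a correct outline of how that result is established in \cite{ams20}.
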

For mean-field spin glasses, Parisi predicted that the minimizer of $\inf_{\mu \in \mathscr{U}} \P_\xi(\mu)$ led to the correct description of the ground state. While this was true, uniqueness (\Cref{thm:auffingerchen}) required a stochastic representation of the minimization problem. 
Later, \cite{montanari2019optimization,ams20} formulated an algorithm whose optimal value can be written as a stochastic optimal control problem dual to \Cref{eqn:alg_defn}.
Surprisingly, the study of overlap gap phenomena is also linked to \Cref{eqn:alg_defn}~\cite{huang2022tight,jmss22}, so this algorithm is optimal among a broad family.\footnote{This is the family of \emph{overlap-concentrated} algorithms; see \Cref{defn:overlap-concentrated}.}

In a sense, the optimal message-passing algorithm updates its local information in a way concordant with the Parisi minimizer $\mu_\ast$, \emph{assuming} that $\mu_\ast$ exists. Nonetheless, even if \Cref{assn:alg_minimizer_exists} is false, we informally argue that this algorithm succeeds. Since $\Phi^{\mu}_x$ and $\Phi^{\mu}_{xx}$ are continuous in $\mu$ (e.g., \cite[Theorem 4]{chen2017variational}),
using a near-minimizer $\widetilde{\mu}_\ast$ of \Cref{eqn:alg_defn} changes the value in \Cref{lemma:goal} by a negligible amount, which can be absorbed into the choice of $\epsilon$. Furthermore, $\widetilde{\mu}_\ast$ can be precomputed, since finding $\widetilde{\mu}_\ast$ is a task required only once per \emph{CSP predicate $f$}, not once per run of the algorithm. Mathematizing this argument, however, is a valuable open problem; see \cite[Section 6]{ams20}.

In this section, we prove that the algorithm achieves a value affine to \Cref{lemma:goal} after taking two sets of limits.

\subsection{A finite difference equation}
Consider discrete versions of~\Cref{eqn:sde,eqn:zmartingale}. Choose $\delta > 0$, and consider independent \emph{standard} Gaussians $\{B_1, B_2, \dots\}$. Then consider the finite difference
\begin{align}
     X_{\ell+1}^\delta - X_{\ell}^\delta & = \xi''(\delta \ell)\mu_\ast(\delta \ell)\Phi_x^{\mu_\ast}(\delta \ell, X_\ell^\delta) \cdot \delta + \sqrt{\xi'(\delta (\ell + 1)) - \xi'(\delta \ell)} \cdot B_{\ell+1} &\quad X_0^\delta &= 0\,,\label{eqn:finitedifference1}\\
     Z_{\ell+1}^\delta - Z_{\ell}^\delta & = \sqrt{\xi'(\delta (\ell + 1)) - \xi'(\delta \ell)}
     \cdot\Phi_{xx}^{\mu_\ast}(\delta \ell, X_\ell^\delta)\cdot B_{\ell + 1} &\quad Z_0^\delta &= 0\,.\label{eqn:zmartingale_discrete}
\end{align}
The variables are named suggestively. By \Cref{sec:stateevo}, the messages $u_{i \to a}^\ell, u_i^\ell$ behave as independent Gaussians as $d \to \infty$, which we use to imitate $B_\ell$ above. To this end, we define the sequences
\begin{align*}
    x_{i}^{\ell + 1} &= x_{i}^{\ell} + \xi''(\delta \ell)\mu_\ast(\delta \ell)\Phi^{\mu_\ast}_x(\delta \ell, x_{i}^{\ell})  \cdot \delta + \sqrt{r} \cdot u_i^{\ell+1}  & x^0_i &= 0\,,\\
    x_{i \to a}^{\ell + 1} &= x_{i \to a}^{\ell}+ \xi''(\delta \ell)\mu_\ast(\delta \ell)\Phi^{\mu_\ast}_x(\delta \ell, x_{i \to a}^{\ell}) \cdot \delta + \sqrt{r} \cdot u_{i \to a}^{\ell+1} &  x^{0}_{i \to a} &= 0\,.
\end{align*}

We show that in the limit $d \to \infty$, the messages $\{u_i^\ell, u_{i \to a}^\ell\}$ behave as stochastic input to \Cref{eqn:finitedifference1,eqn:zmartingale_discrete}.
Note that $\Phi^\mu_x(t,x)$ and $\Phi^\mu_{xx}(t,x)$ are Lipschitz in $x$, since all  $x$-derivatives are of $\Phi^\mu_x(t,x)$ are bounded:
\begin{lemma}[{\cite[Lemma 6.4]{ams20}}]\label{lem:boundedpartials}
    For every  $\mu \in \mathscr{L}$, $\epsilon > 0$, and $j \geq 2$, there exists $K = K(\mu, j, \epsilon) > 0$ such that the $j^\text{th}$ derivative in $x$ of $\Phi^{\mu}$ is within $[-K, K]$ for all $t \in [0, 1-\epsilon]$ and $x \in \mathbb{R}$. In other words, $|\frac{\partial^j}{\partial x^j}\Phi^{\mu}(t,x)| \leq K$ for all $t \in [0, 1-\epsilon]$ and $x \in \mathbb{R}$.
\end{lemma}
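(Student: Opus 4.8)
The plan is to linearize the Parisi PDE \Cref{eq.parisi-equations} with the Hopf--Cole transformation on each interval where $\mu$ is constant, and then read off bounds on the higher $x$-derivatives of $\Phi^\mu$ from the smoothing of the heat kernel. The one nontrivial a priori input is the uniform Lipschitz bound $|\Phi^\mu_x|\le 1$: differentiating \Cref{eq.parisi-equations} once in $x$ shows $v\defeq\Phi^\mu_x$ solves the backward viscous Burgers equation $v_t=-\tfrac{\xi''}{2}(v_{xx}+2\mu v v_x)$ with terminal data $v(1,\cdot)=\sgn(\cdot)\in[-1,1]$; since $\xi''\mu$ is bounded on $[0,1-\epsilon]$ (by the $\mathrm{TV}$ hypothesis defining $\mathscr{L}_\xi$), the parabolic comparison principle against the stationary solutions $v\equiv\pm 1$ gives $|\Phi^\mu_x|\le 1$ on $[0,1)\times\R$ (and, differentiating again, $\Phi^\mu_{xx}\ge 0$). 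In particular $x\mapsto\Phi^\mu(t,x)$ is $1$-Lipschitz for every $t<1$.

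Next I would reduce to the case where $\mu$ is a step function with finitely many jumps, by approximating $\mu\in\mathscr{L}_\xi$ by such $\mu_k$ with $\xi''\mu_k\to\xi''\mu$ appropriately; by continuity of the map $\mu\mapsto\Phi^\mu$ and of its $x$-derivatives (the continuity invoked near \Cref{lemma:goal}, e.g. \cite[Theorem 4]{chen2017variational}) together with local equicontinuity, any bound uniform in $k$ transfers to the limit. On a layer where $\mu\equiv m>0$, set $w\defeq e^{m\Phi^\mu}$; a direct computation turns \Cref{eq.parisi-equations} into the linear backward heat equation $w_t=-\tfrac{\xi''(t)}{2}w_{xx}$, so for $s$ in the layer
\begin{align*}
w(s,x)\;=\;\int_{\R}\frac{1}{\sqrt{2\pi q}}\,e^{-(x-y)^2/(2q)}\,w(t_{p+1},y)\,\mathrm{d}y\,,\qquad q\;\defeq\;\int_s^{t_{p+1}}\xi''(u)\,\mathrm{d}u\,,
\end{align*}
the integral converging because $w(t_{p+1},\cdot)$ grows at most exponentially with rate $m$ by the Lipschitz bound. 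Differentiating under the integral, expressing $\partial_x^j$ of the Gaussian kernel through Hermite polynomials, and using $e^{-m|x-y|}\le w(t_{p+1},y)/w(t_{p+1},x)\le e^{m|x-y|}$ to compare against $w(s,x)$ itself, one obtains $|\partial_x^j w(s,x)|\le C_j(m+q^{-1/2})^j\,w(s,x)$ uniformly in $x$; since $\partial_x^j\Phi^\mu=m^{-1}\cdot(\text{universal polynomial in }\{\partial_x^i w/w\}_{i\le j})$, this bounds $\partial_x^j\Phi^\mu(s,\cdot)$ by a polynomial in $m$ and $q^{-1/2}$. A degenerate layer $m_p=0$ gives the plain heat equation, with $|\partial_x^j\Phi^\mu(s,\cdot)|\le C_j q^{-(j-1)/2}$ coming from $\partial_x^j(G_q*\Phi^\mu(t_{p+1},\cdot))=\partial_x^{j-1}G_q*\Phi^\mu_x(t_{p+1},\cdot)$ and $|\Phi^\mu_x|\le 1$.

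Then, to conclude for all $t\le 1-\epsilon$, note that the cumulative diffusion still available is $q_\epsilon\defeq\int_{1-\epsilon}^{1}\xi''(u)\,\mathrm{d}u>0$ (here $\xi''$ is a nonzero polynomial with nonnegative coefficients, hence positive on $(0,1)$) and $M\defeq\sup_{[0,1-\epsilon]}\xi''\mu<\infty$. Iterating the one-layer estimate backward through the finitely many constancy layers of $\mu$ in $[t,1)$ --- moving the required derivatives onto the kernel (picking up powers of $q^{-1/2}$) across layers carrying a large share of the diffusion, and onto the previous layer's endpoint otherwise, always invoking $|\Phi^\mu_x|\le 1$ to keep the Gaussian integrals bounded uniformly in $x$ --- yields $|\partial_x^j\Phi^\mu(t,x)|\le K$ with $K$ depending only on $M$, $q_\epsilon$ and $j$, hence only on $(\mu,j,\epsilon)$. (An alternative, once one layer supplies bounds at a single $t_0\in(1-\epsilon,1)$, is to differentiate \Cref{eq.parisi-equations} $j$ times and run a maximum-principle argument backward on $[0,t_0]$; but the quadratic terms make a naive Gr\"onwall estimate blow up, so one again needs the smoothing structure above.)

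The main obstacle is making the constants genuinely uniform in $x$ and \emph{stable under refinement of the partition} of $\mu$ into constancy layers, so that $K$ does not degrade as the number of layers grows: one must carefully balance the smoothing gain (powers of $q^{-1/2}$, with $q$ bounded below only in aggregate by $q_\epsilon$) against the cost of composing the estimate through many thin layers, and this is exactly where the a priori bound $|\Phi^\mu_x|\le 1$ and the finiteness of $\sup\xi''\mu$ do the work. The $\epsilon$-gap is essential and visible here: at $t=1$, $\Phi^\mu(1,x)=|x|$ is not even $C^2$, so no such bound can hold up to $t=1$.
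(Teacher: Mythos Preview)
The paper does not prove this lemma; it is quoted as \cite[Lemma~6.4]{ams20} and used as a black box, so there is no in-paper argument to compare against. Your sketch follows precisely the approach of the cited reference (and the earlier Parisi-PDE literature such as Jagannath--Tobasco and Auffinger--Chen): the a~priori bound $|\Phi^\mu_x|\le 1$ via parabolic comparison, reduction to piecewise-constant $\mu$ by stability of the solution map, Hopf--Cole linearization $w=e^{m\Phi^\mu}$ on each constancy layer to turn \Cref{eq.parisi-equations} into a backward heat equation, and heat-kernel smoothing to bound the higher $x$-derivatives. Your identification of the only nontrivial point---keeping the constant independent of both $x$ and the number of layers, via the total available diffusion $\int_{1-\epsilon}^{1}\xi''>0$ and the bound $\sup_{[0,1-\epsilon]}\xi''\mu<\infty$---is exactly right, and the argument as outlined goes through once the layer-by-layer bookkeeping is written out.
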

By induction and using \Cref{lem:boundedpartials}, $x_i^\ell$ and $\Phi^{\mu_\ast}_{xx}(t,x_i^\ell)$, and $x_{i \to a}^\ell$ and  $\Phi^{\mu_\ast}_{xx}(t,x_{i \to a}^\ell)$ are Lipschitz in the messages $\{u_i^s\}_{s \le \ell}$ and $\{u_{i \to a}^s\}_{s \le \ell}$, respectively. So we can apply \Cref{prop:state_evolution_expectation}:
\begin{proposition}
\label{cor:large_degree_limit_finite_difference}
Assume \Cref{hyp:second_moment}. Choose $1 \le \ell \le L$ and any variable $i$.
Then for every integer $k \ge 1$, we have 
\begin{align*}
    \E[\Phi_{xx}^{\mu_\ast}(\delta \ell, x_i^{\ell})^k] =\E[\Phi_{xx}^{\mu_\ast}(\delta \ell, X_\ell^\delta)^k] + o_d(1)\,.
\end{align*}
\end{proposition}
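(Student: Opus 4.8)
The plan is to realize both $x_i^{\ell}$ and $X_{\ell}^{\delta}$ as the image of one and the same deterministic map applied to differently distributed random increments, and then to transfer between them using the state-evolution statement \Cref{prop:state_evolution_expectation}. Concretely, for each $\ell$ I would define $\Gamma_{\ell}: \R^{\ell} \to \R$ to be the map sending $(v_1,\dots,v_{\ell})$ to the value $x^{\ell}$ produced by the recursion $x^0 = 0$ and $x^{s+1} = x^{s} + \xi''(\delta s)\mu_\ast(\delta s)\Phi_x^{\mu_\ast}(\delta s, x^{s})\cdot \delta + \sqrt{r}\,v_{s+1}$. By construction $x_i^{\ell} = \Gamma_{\ell}(u_i^1,\dots,u_i^{\ell})$. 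Comparing with \Cref{eqn:finitedifference1} and using $\xi'((s+1)\delta) - \xi'(s\delta) = r\,\nu_{s+1}$ (\Cref{defn:nu_defn}), the same recursion driven by $v_{s+1} = \sqrt{\nu_{s+1}}\,B_{s+1}$ with $B_{s+1}$ i.i.d.\ standard Gaussian reproduces \Cref{eqn:finitedifference1} exactly, so $X_{\ell}^{\delta} = \Gamma_{\ell}(\sqrt{\nu_1}B_1,\dots,\sqrt{\nu_{\ell}}B_{\ell})$, and therefore $X_\ell^\delta$ is equal in distribution to $\Gamma_{\ell}(U_1,\dots,U_{\ell})$ for independent $U_s \sim \cN(0,\nu_s)$. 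Setting $\psi_k \defeq \big(\Phi_{xx}^{\mu_\ast}(\delta\ell,\cdot)\big)^{k}\circ \Gamma_{\ell}$, it therefore suffices to prove $\E[\psi_k(u_i^1,\dots,u_i^{\ell})] = \E[\psi_k(U_1,\dots,U_{\ell})] + o_d(1)$.

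The second step is to verify that $\psi_k$ is pseudo-Lipschitz (\Cref{defn:pseudolipschitz}), so that \Cref{prop:state_evolution_expectation} applies. By \Cref{lem:boundedpartials} (applied with an $\epsilon > 0$ such that $\delta\ell \le 1-\epsilon$, which holds throughout the range of $\ell$ relevant to the algorithm), $\Phi_x^{\mu_\ast}(\delta s,\cdot)$ is Lipschitz in $x$ with a constant uniform in $s$; an induction on $s$ then shows that each step of the recursion defining $\Gamma_{\ell}$ is Lipschitz jointly in $(x^{s},v_{s+1})$, so $\Gamma_{\ell}$ itself is globally Lipschitz. Again by \Cref{lem:boundedpartials}, $\Phi_{xx}^{\mu_\ast}(\delta\ell,\cdot)$ is bounded with bounded derivative, hence so is its $k$-th power, which is thus Lipschitz; composing with $\Gamma_{\ell}$ shows $\psi_k$ is a bounded Lipschitz function, and any Lipschitz function is pseudo-Lipschitz since the extra factor $1+\lVert \vx\rVert + \lVert\vy\rVert$ only weakens the estimate.

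Finally I would invoke \Cref{prop:state_evolution_expectation} for the node messages: applied to the pseudo-Lipschitz $\psi_k$ it gives $\E[\psi_k(u_i^1,\dots,u_i^{\ell})] = \E[\psi_k(U_1,\dots,U_{\ell})] + o_d(1)$ with $U_s \sim \cN(0,\nu_s)$ independent. Unwinding the first step, the left-hand side equals $\E[\Phi_{xx}^{\mu_\ast}(\delta\ell, x_i^{\ell})^{k}]$ and, using $X_\ell^\delta$ equal in distribution to $\Gamma_{\ell}(U_1,\dots,U_{\ell})$, the right-hand side equals $\E[\Phi_{xx}^{\mu_\ast}(\delta\ell, X_{\ell}^{\delta})^{k}]$, which is the claim.

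The only delicate points are bookkeeping rather than conceptual: (i) confirming the Lipschitz bound for $\Gamma_{\ell}$ genuinely propagates through the recursion with a constant depending only on $\ell$, $\delta$, $\xi$, $\mu_\ast$ (routine, since each update adds a drift Lipschitz in $x^s$ and a term linear in $v_{s+1}$), and (ii) staying in the regime $\delta\ell < 1$ where $\Phi^{\mu_\ast}$ and its $x$-derivatives are controlled by \Cref{lem:boundedpartials} — this is harmless because the downstream use of this proposition, via \Cref{lemma:goal}, only requires $\ell$ with $\delta\ell \le 1-\eta$, where $\delta\ell$ is uniformly bounded away from $1$.
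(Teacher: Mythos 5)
Your proof is correct and is essentially the paper's own argument, which is compressed into the sentence preceding the proposition: via \Cref{lem:boundedpartials}, $\Phi_{xx}^{\mu_\ast}(\delta\ell, x_i^{\ell})$ is a Lipschitz function of the node messages $\{u_i^s\}_{s\le\ell}$, so \Cref{prop:state_evolution_expectation} applies, and the Gaussian limit is by construction $X_\ell^\delta$. Your explicit map $\Gamma_\ell$, the identification $X_\ell^\delta \overset{d}{=} \Gamma_\ell(U_1,\dots,U_\ell)$ with $U_s\sim\cN(0,\nu_s)$, and the verification that $\Phi_{xx}^k\circ\Gamma_\ell$ is bounded Lipschitz make precise exactly what the paper leaves implicit, and your caveat about $\delta\ell$ staying bounded away from $1$ correctly flags the regime where \Cref{lem:boundedpartials} applies.
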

We now choose the nonlinearities:
\begin{align}
\label{eqn:nonlinearities1}
A_{i}^{\ell} = \Phi_{xx}^{\mu_\ast}(\delta \ell, x_i^{\ell}) &\cdot    \sqrt{\frac{\delta }{\nu_{\ell + 1} \cdot \E[ \Phi_{xx}^{\mu_\ast}(t, x_i^\ell)^2]}}\,,
\\
\label{eqn:nonlinearities2}
A_{i \to a}^{\ell} = \Phi_{xx}^{\mu_\ast}(\delta \ell, x_{i  \to a}^{\ell})  &\cdot \sqrt{\frac{\delta }{\nu_{\ell + 1} \cdot \E[ \Phi_{xx}^{\mu_\ast}(t, x_{i \to a}^\ell)^2]}}\,.
\end{align}

These nonlinearities behave as we promised earlier. First of all, by \Cref{lem:boundedpartials}, the nonlinearities are uniformly bounded for all $d$ (presumed in \Cref{fig:algorithm}).
Furthermore, by construction, the second moment of the nonlinearities satisfy \Cref{hyp:second_moment}. Finally, the nonlinearities do satisfy \Cref{hyp:A_deviation_small}:
\begin{proposition}
    \Cref{eqn:nonlinearities1,eqn:nonlinearities2} satisfy \Cref{hyp:A_deviation_small}. That is, for every $m \in \mathbb{N}$, we have $\E[|A_i^\ell - A_{i \to a}^{\ell}|^m] = O_d(\frac{1}{d^{m/2}})$.
\end{proposition}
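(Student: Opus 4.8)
The plan is to track how the discrepancy between the node quantities and the node-to-factor quantities propagates. Write $c_i \defeq \sqrt{\delta / (\nu_{\ell+1}\,\E[\Phi_{xx}^{\mu_\ast}(\delta\ell, x_i^\ell)^2])}$ and $c_{i\to a}$ analogously, so that $A_i^\ell = c_i\,\Phi_{xx}^{\mu_\ast}(\delta\ell, x_i^\ell)$ and $A_{i\to a}^\ell = c_{i\to a}\,\Phi_{xx}^{\mu_\ast}(\delta\ell, x_{i\to a}^\ell)$, and note that $c_i, c_{i\to a}$ are \emph{deterministic}. By \Cref{lem:boundedpartials}, on the range of $t$ used by the algorithm the functions $\Phi_x^{\mu_\ast}(t,\cdot)$ and $\Phi_{xx}^{\mu_\ast}(t,\cdot)$ are bounded and Lipschitz with constants uniform in $d$; in particular $\E[\Phi_{xx}^{\mu_\ast}(\delta\ell, x_i^\ell)^2]$ is bounded above, and (since $x_i^\ell$ does not concentrate at infinity) bounded below by a positive constant, so $c_i$ and $c_{i\to a}$ lie in a fixed compact subinterval of $(0,\infty)$ independent of $d$. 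It then suffices to prove: (i) a deterministic bound $|x_i^\ell - x_{i\to a}^\ell| \le C_\ell \sum_{s=1}^{\ell}|u_i^s - u_{i\to a}^s|$; and (ii) $\E[\,|u_i^s - u_{i\to a}^s|^m\,] = O_d(d^{-m/2})$ for every $s \ge 1$ and $m \in \mathbb{N}$.

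For (i): the sequences $x_i^\bullet$ and $x_{i\to a}^\bullet$ obey the \emph{same} recursion, started at $0$, with identical drift $\xi''(\delta\ell)\mu_\ast(\delta\ell)\Phi_x^{\mu_\ast}(\delta\ell,\cdot)\,\delta$ and driven only by $\sqrt r\,u_i^{\ell+1}$ versus $\sqrt r\,u_{i\to a}^{\ell+1}$. Subtracting, using that $\Phi_x^{\mu_\ast}(\delta\ell,\cdot)$ is Lipschitz (\Cref{lem:boundedpartials}) with constant uniform in $d$ and that $\xi''\mu_\ast$ is bounded on the relevant interval (as $\mu_\ast \in \mathscr{L}_\xi$), one gets $|x_i^{\ell+1} - x_{i\to a}^{\ell+1}| \le (1+c\delta)\,|x_i^{\ell} - x_{i\to a}^{\ell}| + \sqrt r\,|u_i^{\ell+1} - u_{i\to a}^{\ell+1}|$. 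Unrolling and using $\ell\delta \le L\delta \le 1$ to bound $(1+c\delta)^\ell \le e^{c\ell\delta} \le e^{c}$ yields (i) with $C_\ell = \sqrt r\, e^{c}$.

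For (ii): by \Cref{remark:f_derivative_doesnt_use_i}, both $w_i^s$ and $w_{i\to a}^s$ are assembled from the \emph{same} node-to-factor contributions $\D_{i;b}f(\vz_{\partial b\to b}^{s-1})$, and differ only in that $w_i^s$ carries the extra term $b=a$ and uses normalization $1/\sqrt d$ in place of $1/\sqrt{d-1}$; hence
\[
w_i^s - w_{i\to a}^s = \tfrac{1}{\sqrt d}\,\D_{i;a}f(\vz_{\partial a\to a}^{s-1}) + \big(\textstyle\sqrt{1-\tfrac1d}-1\big)\, w_{i\to a}^s .
\]
The first term has $m$-th moment $O_d(d^{-m/2})$ since $\D_{i;a}f$ is a multilinear polynomial in entries with uniformly bounded moments (\Cref{lemma:finite_moments_are_bounded}); the coefficient of the second term is $O_d(1/d)$ and $w_{i\to a}^s = \sum_{t\le s} u_{i\to a}^t$ has uniformly bounded moments, so that term has $m$-th moment $O_d(d^{-m})$. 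By Minkowski, $\E[|w_i^s - w_{i\to a}^s|^m]^{1/m} = O_d(d^{-1/2})$, and since $u_i^s - u_{i\to a}^s = (w_i^s - w_{i\to a}^s) - (w_i^{s-1} - w_{i\to a}^{s-1})$, claim (ii) follows. To finish, decompose $A_i^\ell - A_{i\to a}^\ell = c_i\big(\Phi_{xx}^{\mu_\ast}(\delta\ell, x_i^\ell) - \Phi_{xx}^{\mu_\ast}(\delta\ell, x_{i\to a}^\ell)\big) + (c_i - c_{i\to a})\,\Phi_{xx}^{\mu_\ast}(\delta\ell, x_{i\to a}^\ell)$. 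The first summand is at most $c_i K'\,|x_i^\ell - x_{i\to a}^\ell|$ by Lipschitzness, hence $O_d(d^{-1/2})$ in $L^m$ by (i) and (ii) with Minkowski; for the second, $\Phi_{xx}^{\mu_\ast}$ is bounded and $|c_i - c_{i\to a}| = O_d(d^{-1/2})$, since $c_i, c_{i\to a}$ are Lipschitz functions (away from $0$) of the two expectations $\E[\Phi_{xx}^{\mu_\ast}(\delta\ell, x_i^\ell)^2]$ and $\E[\Phi_{xx}^{\mu_\ast}(\delta\ell, x_{i\to a}^\ell)^2]$, which differ by at most $2KK'\,\E|x_i^\ell - x_{i\to a}^\ell| = O_d(d^{-1/2})$ by (i) and (ii) with $m=1$. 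Raising to the $m$-th power gives the claim.

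The only genuinely delicate point — the main obstacle — is keeping every constant uniform in $d$: that the Gronwall factor $e^{c\ell\delta}$ stays bounded (which is where the $\ell\delta \le 1$ normalization of \Cref{fig:algorithm} is used) and that the normalizing denominators $\E[\Phi_{xx}^{\mu_\ast}(\delta\ell,\cdot)^2]$ stay bounded away from $0$ and $\infty$ (which is where \Cref{lem:boundedpartials}, applied on the time range used by the algorithm, is used). Everything else is moment bookkeeping through \Cref{lemma:finite_moments_are_bounded} and repeated use of Minkowski's inequality.
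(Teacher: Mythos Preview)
Your proof is correct and follows the same route as the paper's: decompose $A_i^\ell - A_{i\to a}^\ell$ via the normalizing constants, control the random part through Lipschitzness of $\Phi_{xx}^{\mu_\ast}$ together with a moment bound on $|x_i^\ell - x_{i\to a}^\ell|$, and control the deterministic part through the difference of the two second moments. Your step (ii) is exactly the content of \Cref{lemma:w_deviation_small} and \Cref{lemma:u_deviation_small} (which the paper cites rather than reproves), and your Gronwall step (i) makes explicit the propagation from $u$-deviations to $x$-deviations that the paper leaves implicit when it invokes those lemmas.
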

\begin{proof}
    We first bound the difference of $\E[ \Phi_{xx}^{\mu_\ast}(t, x_i^\ell)^2]$ and $\E[ \Phi_{xx}^{\mu_\ast}(t, x_{i \to a}^\ell)^2]$:
    \begin{align*}
        \left|\E[ \Phi_{xx}^{\mu_\ast}(t, x_i^\ell)^2] - \E[ \Phi_{xx}^{\mu_\ast}(t, x_{i \to a}^\ell)^2]\right| 
        & = \left|\E[ \left(\Phi_{xx}^{\mu_\ast}(t, x_i^\ell) - \Phi_{xx}^{\mu_\ast}(t, x_{i \to a}^\ell)\right)\cdot \left(\Phi_{xx}^{\mu_\ast}(t, x_i^\ell) + \Phi_{xx}^{\mu_\ast}(t, x_{i \to a}^\ell)\right)]\right|\\
        & \leq \E\left[ \left|\left(\Phi_{xx}^{\mu_\ast}(t, x_i^\ell) - \Phi_{xx}^{\mu_\ast}(t, x_{i \to a}^\ell)\right)\cdot \left(\Phi_{xx}^{\mu_\ast}(t, x_i^\ell) + \Phi_{xx}^{\mu_\ast}(t, x_{i \to a}^\ell)\right)\right|\right]\\
        & \leq 2K \cdot\E\left[ \left|\left(\Phi_{xx}^{\mu_\ast}(t, x_i^\ell) - \Phi_{xx}^{\mu_\ast}(t, x_{i \to a}^\ell)\right)\right|\right]\,,
    \end{align*}
    where the last inequality follows by boundedness of $\Phi_{xx}^{\mu_\ast}$ (\Cref{lem:boundedpartials}). By Lipschitzness of $\Phi_{xx}^{\mu_\ast}$ and \Cref{lemma:u_deviation_small}, this term is $O_d(\frac{1}{\sqrt{d}})$. It follows that $\left( \E[ \Phi_{xx}^{\mu_\ast}(t, x_i^\ell)^2] \right)^{-1/2} - 
        \left( \E[ \Phi_{xx}^{\mu_\ast}(t, x_{i \to a}^\ell)^2] \right)^{-1/2}$, which equals
    \begin{align*}
    \left|\frac{\E[ \Phi_{xx}^{\mu_\ast}(t, x_i^\ell)^2] - \E[ \Phi_{xx}^{\mu_\ast}(t, x_{i \to a}^\ell)^2]}{\sqrt{\E[ \Phi_{xx}^{\mu_\ast}(t, x_i^\ell)^2]\E[ \Phi_{xx}^{\mu_\ast}(t, x_{i \to a}^\ell)^2]}(\sqrt{\E[ \Phi_{xx}^{\mu_\ast}(t, x_i^\ell)^2} + \sqrt{\E[ \Phi_{xx}^{\mu_\ast}(t, x_{i \to a}^\ell)^2]})}\right| \,,
    \end{align*}
    is also $O_d(\frac{1}{\sqrt{d}})$ by boundedness of $\Phi_{xx}^{\mu_\ast}$ (\Cref{lem:boundedpartials}). So, $A_i^\ell - A_{i \to a}^{\ell}$ can be written as
    \begin{align*}
         &\sqrt{\frac{\delta}{\nu_{\ell + 1}}} \cdot \left(\frac{\Phi_{xx}^{\mu_\ast}(\delta \ell, x_i^{\ell}) }{ \sqrt{ \E[ \Phi_{xx}^{\mu_\ast}(t, x_i^\ell)^2]}} - \frac{\Phi_{xx}^{\mu_\ast}(\delta \ell, x_{i \to a}^{\ell}) }{ \sqrt{ \E[ \Phi_{xx}^{\mu_\ast}(t, x_{i \to a}^\ell)^2]}}\right) \\ 
         = \,, &\sqrt{\frac{\delta}{\nu_{\ell + 1}}} \cdot \left(\frac{\Phi_{xx}^{\mu_\ast}(\delta \ell, x_i^{\ell}) - \Phi_{xx}^{\mu_\ast}(\delta \ell, x_{i \to a}^{\ell}) }{ \sqrt{ \E[ \Phi_{xx}^{\mu_\ast}(t, x_i^\ell)^2]}} + \Phi_{xx}^{\mu_\ast}(\delta \ell, x_{i \to a}^{\ell})\Big(\sqrt{\frac{1}{\E[ \Phi_{xx}^{\mu_\ast}(t, x_i^\ell)^2]}} - \sqrt{\frac{1}{\E[ \Phi_{xx}^{\mu_\ast}(t, x_{i \to a}^\ell)^2]}}\Big)\right). \\
    \end{align*}
    Within the parentheses, each term is $\frac{1}{\sqrt{d}}$ times a quantity with bounded moments by Lipschitzness of $\Phi_{xx}^{\mu_\ast}$ and \Cref{lemma:u_deviation_small}. This completes the proof.
\end{proof}

Note that with this choice of nonlinearities, the value of the spin $\vz^\ell$ differs slightly from \Cref{eqn:zmartingale_discrete} in the large degree limit. By \Cref{fig:algorithm}, the spin value evolves as
\begin{align*}
\widetilde{Z}_{\ell+1}^\delta  - \widetilde{Z}_{\ell}^\delta
= B_{\ell + 1} \cdot \frac{\sqrt{\delta}\cdot
  \Phi_{xx}^{\mu_\ast}(\delta \ell, X_\ell^\delta)}
  {\sqrt{\E[\Phi_{xx}^{\mu_\ast}(\delta \ell, X_\ell^\delta)^2]}}\,, 
\quad \widetilde{Z}_0^\delta \sim N(0, \delta)\,.
\end{align*}
However, the difference between $Z_\ell^\delta$ and $\widetilde{Z}_\ell^\delta$ is inconsequential for small enough $\delta$:
\begin{proposition}[{\cite[Equation 5.7]{ams20}}]\label{prop:A_normalization}
    Suppose \Cref{assn:alg_minimizer_exists} holds. Then there exist $\delta_0, C > 0$ such that for every $\delta < \delta_0$ and $\ell < \frac{1}{\delta}$, we have
    \begin{align*}
    \left|\frac{\xi'((\ell + 1)\delta) - \xi'(\ell \delta)}{\delta} \cdot \E[\Phi_{xx}^{\mu_\ast}(\delta \ell, X_\ell^\delta)^2] - 1\right| \leq C\sqrt{\delta}\,.
    \end{align*}
\end{proposition}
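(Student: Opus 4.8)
The plan is to reduce the statement to a continuous-time identity for the SDE \Cref{eqn:sde} at the minimizer $\mu_\ast$, namely $\xi''(t)\cdot\E\big[(\Phi^{\mu_\ast}_{xx}(t,X_t))^2\big]=1$ for all $t\in[0,1)$, and then to control the Euler--Maruyama discretization error incurred in passing from $\big(X_t,\xi''(t)\big)$ to $\big(X_\ell^\delta,\frac{\xi'((\ell+1)\delta)-\xi'(\ell\delta)}{\delta}\big)$.

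For the continuous identity, apply It\^o's formula to $\Phi^{\mu}_x(t,X_t)$: differentiating the Parisi PDE \Cref{eq.parisi-equations} once in $x$ makes all drift terms cancel, so $\Phi^{\mu}_x(t,X_t)$ is a martingale with $\mathrm{d}\Phi^{\mu}_x(t,X_t)=\sqrt{\xi''(t)}\,\Phi^{\mu}_{xx}(t,X_t)\,\mathrm{d}B_t$ --- this is exactly the martingale $Z_t$ of \Cref{eqn:zmartingale}. Since $\Phi^{\mu}(t,\cdot)$ is even, $\Phi^{\mu}_x(0,0)=0$, so It\^o isometry gives $\E[(\Phi^{\mu}_x(t,X_t))^2]=\int_0^t\xi''(s)\,\E[(\Phi^{\mu}_{xx}(s,X_s))^2]\,\mathrm{d}s$ for \emph{every} $\mu$ (in particular, at $t=1$ both sides equal $1$). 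It remains to upgrade this to the \emph{pointwise} statement $\E[(\Phi^{\mu_\ast}_x(t,X_t))^2]=t$ at the minimizer $\mu_\ast$ of \Cref{eqn:alg_defn} (which exists by \Cref{assn:alg_minimizer_exists}). This is the Euler--Lagrange characterization of $\mu_\ast$ over the convex cone $\mathscr{L}_\xi$: the G\^ateaux derivative of $\sfP_\xi$ at $\mu_\ast$ along an admissible direction $\gamma$ equals, up to a positive constant, $\int_0^1\xi''(t)\big(\E[(\Phi^{\mu_\ast}_x(t,X_t))^2]-t\big)\gamma(t)\,\mathrm{d}t$ (the derivative formula is standard in the Parisi variational theory behind \Cref{thm:auffingerchen}). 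Because $\mathscr{L}_\xi$ imposes no monotonicity and $\mu_\ast$ is strictly positive in the interior of $[0,1)$, two-sided perturbations $\gamma$ are admissible, forcing $\E[(\Phi^{\mu_\ast}_x(t,X_t))^2]=t$ for a.e.\ $t$, hence for all $t\in[0,1)$ by continuity. Differentiating in $t$ yields the displayed identity.

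For the discretization step: since $\xi$ is a polynomial, a Taylor expansion with $\xi'''$ bounded on $[0,1]$ gives $\frac{\xi'((\ell+1)\delta)-\xi'(\ell\delta)}{\delta}=\xi''(\ell\delta)+O(\delta)$ uniformly in $\ell$. Next, compare the Euler recursion \Cref{eqn:finitedifference1} with the SDE \Cref{eqn:sde}: the noise increment $\sqrt{\xi'((\ell+1)\delta)-\xi'(\ell\delta)}\,B_{\ell+1}$ has variance $\int_{\ell\delta}^{(\ell+1)\delta}\xi''$, matching the continuous noise exactly, while the drift $x\mapsto\xi''(t)\mu_\ast(t)\Phi^{\mu_\ast}_x(t,x)$ is Lipschitz in $x$ with constant controlled by $\|\Phi^{\mu_\ast}_{xx}\|_\infty$ (\Cref{lem:boundedpartials}); a Gr\"onwall argument then yields $\E[|X^\delta_\ell-X_{\ell\delta}|]=O(\sqrt{\delta})$. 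Since $\Phi^{\mu_\ast}_{xx}$ is bounded and Lipschitz in $x$ (\Cref{lem:boundedpartials}), Cauchy--Schwarz gives $\big|\E[(\Phi^{\mu_\ast}_{xx}(\ell\delta,X^\delta_\ell))^2]-\E[(\Phi^{\mu_\ast}_{xx}(\ell\delta,X_{\ell\delta}))^2]\big|=O(\sqrt{\delta})$. Multiplying the two estimates and invoking the continuous identity together with $\E[(\Phi^{\mu_\ast}_{xx})^2]\le K^2$ gives $\big|\frac{\xi'((\ell+1)\delta)-\xi'(\ell\delta)}{\delta}\E[(\Phi^{\mu_\ast}_{xx}(\ell\delta,X^\delta_\ell))^2]-1\big|=O(\sqrt{\delta})$, which is the claim.

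The main obstacle is twofold. First, rigorously establishing the Euler--Lagrange identity $\E[(\Phi^{\mu_\ast}_x(t,X_t))^2]=t$ requires the functional-derivative formula for $\sfP_\xi$ and enough regularity/positivity of $\mu_\ast$ to justify two-sided perturbations --- this is exactly where the argument leans on \Cref{assn:alg_minimizer_exists} and on the variational machinery of the Parisi problem. Second, one must keep the constant $C$ uniform over all $\ell<1/\delta$, i.e.\ as $\ell\delta\to1$: the bound $K$ in \Cref{lem:boundedpartials} and the density $\mu_\ast(t)$ may both blow up near $t=1$, so controlling the accumulated drift and discretization error near the endpoint requires using the integrability $\int_0^1\xi''(t)\mu_\ast(t)\,\mathrm{d}t<\infty$ guaranteed by $\mu_\ast\in\mathscr{L}_\xi$.
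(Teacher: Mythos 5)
The paper gives no proof for this Proposition --- it is imported verbatim from \cite{ams20} (Equation 5.7 there) and used as a black box, so there is no ``paper proof'' to compare against. Your proposal is a reconstruction of the argument from \cite{ams20}, and the high-level strategy is exactly right: (i) use It\^o's formula to show $\Phi^{\mu}_x(t,X_t)$ is the martingale $Z_t$, hence by It\^o isometry $\E[\Phi_x^\mu(t,X_t)^2]=\int_0^t\xi''(s)\E[\Phi_{xx}^\mu(s,X_s)^2]\,\mathrm{d}s$; (ii) invoke the first-order optimality condition of $\mu_\ast$ over $\mathscr{L}_\xi$ to upgrade this to the pointwise identity $\xi''(t)\E[\Phi_{xx}^{\mu_\ast}(t,X_t)^2]=1$; (iii) compare $(\ell\delta, X_{\ell\delta})$ to the discretization $(\ell\delta, X^\delta_\ell)$ by a Taylor expansion of $\xi'$ and an Euler--Maruyama coupling. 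This is faithful to what \cite{ams20} actually does, and it correctly identifies the central continuous-time identity as the engine of the proof.

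That said, two steps in your sketch are asserted rather than established, and both matter. First, the Euler--Lagrange step requires two-sided perturbations of $\mu_\ast$; you justify this by claiming $\mu_\ast$ is strictly positive on the interior of $[0,1)$, but $\mathscr{L}_\xi$ still imposes $\mu\ge 0$, so this positivity is a substantive claim, not a consequence of dropping monotonicity. Without it, KKT only yields the one-sided inequality $\E[\Phi_x^{\mu_\ast}(t,X_t)^2]\ge t$ on the zero set of $\mu_\ast$, which does not close the argument (the boundary data $g(0)=g(1)=0$ together with $g\ge 0$ does not force $g\equiv 0$). Proving positivity of the minimizer --- or bypassing it --- is real work in \cite{ams20}, and you should not treat it as standard. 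Second, the uniformity over $\ell<1/\delta$: as you note, both the bound $K$ from \Cref{lem:boundedpartials} and the drift coefficient $\xi''\mu_\ast$ can degenerate as $t\to 1$, so the Gr\"onwall/Taylor constants are not obviously uniform. You flag this honestly but leave it unresolved; it is genuinely delicate and is handled in \cite{ams20} by controlling the singularity through the integrability $\int_0^1\xi''\mu_\ast<\infty$ more carefully than a naive Gr\"onwall bound allows. Finally, a smaller point: the coupling estimate $\E[|X^\delta_\ell-X_{\ell\delta}|^2]=O(\delta)$ that you re-derive is exactly \Cref{prop:x_coupling} (itself imported from \cite{ams20}), so you could shorten step (iii) by citing it rather than re-proving it.
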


\subsection{Achieving the extended Parisi value}
Here we show that \Cref{thm:simplified_sum} achieves a value affine to $\ALG_\xi$, the infimum of a Parisi minimization problem.
First, note that the expectation $\E[A_i^\ell (u_i^{\ell + 1})^2]$ can be simplified at large degree:
\begin{proposition}\label{prop:replace_u2}
For every variable $i$ and $0 \le \ell \le L - 1$, we have
\begin{align*}
\left| \E\left[A_i^\ell(u_i^{\ell + 1})^2\right] - \nu_{\ell + 1}\cdot  \E\left[A_i^\ell\right]\right| = O_d\left(\frac{1}{\sqrt{d}}\right)\,.
\end{align*}
\end{proposition}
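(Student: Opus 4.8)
The plan is to expand $u_i^{\ell+1}$ over the factors incident to $i$ and show that, up to a controllable diagonal term, the contributions are conditionally independent of one another and asymptotically decorrelated from $A_i^\ell$. From \Cref{fig:algorithm}, write $u_i^{\ell+1} = \frac{1}{\sqrt{d}}\sum_{b \in \partial i} g_b$ with $g_b \defeq \D_{i;b} f(\vz_{\partial b \to b}^\ell) - \mathbbm{1}_{[\ell \ge 1]}\,\D_{i;b} f(\vz_{\partial b \to b}^{\ell-1})$, so that $\E[A_i^\ell (u_i^{\ell+1})^2] = \frac{1}{d}\sum_{b,b' \in \partial i} \E[A_i^\ell\, g_b\, g_{b'}]$. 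First I would kill the off-diagonal terms: $A_i^\ell \in \mathcal{G}_i^\ell$, and for $b \ne b'$ the variables $g_b$ and $g_{b'}$ are conditionally independent given $\mathcal{G}_i^\ell$ — by the same tree argument as in \Cref{lemma:messages_conditional_independence}, since (via \Cref{remark:f_derivative_doesnt_use_i} and \Cref{lemma:u_is_in_G}) $g_b$ is a function of the initial Gaussians in $\bigcup_{j \in \partial b \setminus i} \B_{j \to b}(\ell)$, and the portions of these not already fixed by $\mathcal{G}_i^\ell$ lie in disjoint subtrees hanging off $b$ and $b'$. Moreover $\E[g_b \mid \mathcal{G}_i^\ell] = 0$ by \Cref{cor:monomial_is_zero} (using that $\vz_{\partial b \to b}^{\ell-1}$ is $\mathcal{G}_i^\ell$-measurable by \Cref{lemma:SigmaAlgebraDecomposition}), so $\E[A_i^\ell g_b g_{b'}] = \E[A_i^\ell\,\E[g_b \mid \mathcal{G}_i^\ell]\,\E[g_{b'} \mid \mathcal{G}_i^\ell]] = 0$. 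Hence $\E[A_i^\ell (u_i^{\ell+1})^2] = \frac{1}{d}\sum_{b \in \partial i} \E[A_i^\ell g_b^2]$, and the same computation (using in addition $\E[g_b] = 0$, which holds because $\|f^{=1}\|^2 = 0$ and $\E[z_{j \to b}^s] = 0$ for every $s$) gives $\E[(u_i^{\ell+1})^2] = \frac{1}{d}\sum_{b \in \partial i}\E[g_b^2]$.

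Next I would decorrelate $A_i^\ell$ from $g_b$ on the diagonal. Again by \Cref{remark:f_derivative_doesnt_use_i} and \Cref{lemma:u_is_in_G}, $g_b$ is determined by the initial Gaussians in the subtree below $b$, which is disjoint from $\B_{i \to b}(\ell)$; so $g_b$ is independent of $\mathcal{G}_{i \to b}^\ell$, and in particular of $A_{i \to b}^\ell$. Writing $\E[A_i^\ell g_b^2] = \E[A_{i \to b}^\ell]\,\E[g_b^2] + \E[(A_i^\ell - A_{i \to b}^\ell) g_b^2]$, the error term is $O_d(1/\sqrt{d})$ by Cauchy-Schwarz: $\E[(A_i^\ell - A_{i\to b}^\ell)^2]^{1/2} = O_d(1/\sqrt{d})$ by \Cref{hyp:A_deviation_small}, while $\E[g_b^4]^{1/2} = O_d(1)$ because $g_b$ is a bounded-degree polynomial in the $z_{j \to b}^s$ with bounded coefficients and those spins have bounded moments (\Cref{lemma:finite_moments_are_bounded}). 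Likewise $\E[A_{i\to b}^\ell] = \E[A_i^\ell] + O_d(1/\sqrt{d})$ by \Cref{hyp:A_deviation_small} with $m=1$, and $\E[g_b^2] = O_d(1)$, so $\E[A_i^\ell g_b^2] = \E[A_i^\ell]\,\E[g_b^2] + O_d(1/\sqrt{d})$.

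Averaging over $b \in \partial i$ and using the first paragraph, $\E[A_i^\ell (u_i^{\ell+1})^2] = \E[A_i^\ell]\cdot\frac{1}{d}\sum_{b \in \partial i}\E[g_b^2] + O_d(1/\sqrt{d}) = \E[A_i^\ell]\,\E[(u_i^{\ell+1})^2] + O_d(1/\sqrt{d})$. Finally I would invoke the second-moment estimates of \Cref{sec:appendix_moments}, which under \Cref{hyp:second_moment} give $\E[(u_i^{\ell+1})^2] = \nu_{\ell+1} + O_d(1/\sqrt{d})$; since $|A_i^\ell| \le K$ this yields $\E[A_i^\ell (u_i^{\ell+1})^2] = \nu_{\ell+1}\,\E[A_i^\ell] + O_d(1/\sqrt{d})$, as claimed. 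The main obstacle is the first step: justifying carefully that $g_b$ and $g_{b'}$ are conditionally independent given $\mathcal{G}_i^\ell$, since these quantities are not themselves $u$-messages but derivatives $\D_{i;b}f$ of evolved spins, so one must track exactly which initial Gaussians each depends on through the hypertree (in the spirit of the proofs of \Cref{lemma:uy_is_zero,lemma:messages_conditional_independence}). A secondary point is ensuring that the rate in the appendix's second-moment bound is genuinely $O_d(1/\sqrt{d})$ rather than merely $o_d(1)$, since the claimed error requires it.
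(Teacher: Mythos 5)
Your proof is correct, but it takes a noticeably different route from the paper's. The paper's argument is short and lives entirely inside the $\tau$/conditional-second-moment framework already built in the appendix: it first replaces $u_i^{\ell+1}$ with $u_{i\to a}^{\ell+1}$ (using \Cref{lemma:u_deviation_small} and boundedness of $A_i^\ell$, cost $O_d(1/\sqrt d)$), then conditions on $\cG_i^\ell$ and invokes \Cref{lemma:message_second_moment} to write $\E[A_i^\ell(u_{i\to a}^{\ell+1})^2] - \nu_{\ell+1}\E[A_i^\ell] = \E[A_i^\ell((\tau_{i\to a}^{\ell+1})^2 - \nu_{\ell+1})]$, and finishes by Cauchy--Schwarz and \Cref{lemma:mixture_variance_bound}. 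You instead expand $u_i^{\ell+1}=\frac{1}{\sqrt d}\sum_{b\in\partial i}g_b$, kill the off-diagonal terms via the conditional-independence argument (effectively re-deriving the factorization that \Cref{lemma:message_second_moment} already packages), decorrelate $A_i^\ell$ from $g_b^2$ on the diagonal by swapping in $A_{i\to b}^\ell$ (here you lean on \Cref{hyp:A_deviation_small} and independence of $g_b$ from $\cG_{i\to b}^\ell$, whereas the paper swaps the $u$'s via \Cref{lemma:u_deviation_small}), and finally invoke the appendix's second-moment estimate $\E[(u_i^{\ell+1})^2]=\nu_{\ell+1}+O_d(1/\sqrt d)$, which does follow from \Cref{lemma:message_second_moment}, \Cref{lemma:mixture_variance_bound}, and \Cref{lemma:u_deviation_small} with the right rate (your flagged ``secondary point'' checks out, since $|\E[(\tau_{i\to a}^{\ell+1})^2]-\nu_{\ell+1}|\le \sqrt{\E[((\tau_{i\to a}^{\ell+1})^2-\nu_{\ell+1})^2]} = O_d(1/\sqrt d)$). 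The paper's route is terser because it conditions on $\cG_i^\ell$ and never has to peel $u_i^{\ell+1}$ apart; yours is more explicit about where the independence structure is coming from, at the cost of redoing work that \Cref{lemma:message_second_moment} already does.
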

\begin{proof}
First, note that by \Cref{lemma:u_deviation_small} and boundedness of $A_i^\ell$, we may replace $u_i^{\ell + 1}$ with $u_{i \to a}^{\ell + 1}$ in the expression for a penalty of $O_d\left(\frac{1}{\sqrt{d}}\right)$, for any $a \in \partial i$.
    Next, since $A_i^{\ell} \in \mathcal{G}_{i}^{\ell}$, we have
    \begin{align*}
    \E\left[A_i^\ell(u_{i \to a}^{\ell + 1})^2\right] - \nu_{\ell + 1} \cdot  \E\left[A_i^\ell\right] 
    = \E\left[ A_i^\ell \cdot  \E\left[\left((u_{i \to a}^{\ell + 1})^2 - \nu_{\ell + 1}\right) | ~\mathcal{G}_{i}^{\ell}\right]\right] 
    = \E\left[A_i^\ell \cdot \left((\tau_{i \to a}^{\ell + 1})^2 - \nu_{\ell + 1}\right)\right]\,,
    \end{align*}
    where $\tau_{i \to a}^{\ell+1}$ is defined in \Cref{def:message_variance}. By Cauchy-Schwarz and boundedness of $A_i^\ell$,
    \begin{align*}
        \left| \E\left[A_i^\ell\left((\tau_{i \to a}^{\ell + 1})^2 - \nu_{\ell + 1}\right)\right]\right| 
        \leq \sqrt{\E\left[\left(A_i^\ell\right)^2\right] \cdot \E\left[\left((\tau_{i \to a}^{\ell + 1})^2 - \nu_{\ell + 1}\right)^2\right]} \leq K \cdot \sqrt{\E\left[\left((\tau_{i \to a}^{\ell + 1})^2 - \nu_{\ell + 1}\right)^2\right]}\,.
    \end{align*}
    This last expectation is $O_d(\frac{1}{d})$ by \Cref{lemma:mixture_variance_bound}. The statement follows.
\end{proof}
After the large degree limit, we take a second limit sending the timestep to $0$; i.e. $\delta \to 0$ or $\frac{1}{\delta} \to \infty$. As a result, the finite difference describing $x_i^\ell, x_{i \to a}^\ell$ converges to \Cref{eqn:sde}, and the finite difference describing $z_i^\ell, z_{i \to a}^\ell$ converges to \Cref{eqn:zmartingale}. We bound the error when making this approximation, using a statement from \cite{ams20}:\footnote{The curious reader may refer to \Cref{table:notation} to convert between the notation here and in \cite{ams20}.}
\begin{proposition}[{\cite[Proposition 5.3]{ams20}, see also \cite[Proposition 4.8]{ams21}}]\label{prop:x_coupling}
Suppose \Cref{assn:alg_minimizer_exists} holds. Then there exists a coupling between $(X_t, Z_t)_{t \geq 0}$ and $(X_\ell^\delta, \widetilde{Z}_\ell^\delta)_{\ell \in \mathbb{N}}$ and a constant $C > 0$ such that for every small enough $\delta$ and $\ell \leq \frac{1}{\delta}$, we have
    $\E[|X_{\delta \ell} - X_\ell^{\delta}|^2] \leq C\delta$ and $\E[|Z_{\delta \ell} - \widetilde{Z}_\ell^{\delta}|^2] \leq C\delta$.
\end{proposition}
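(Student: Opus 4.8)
The statement is a strong (mean-square) error estimate for the Euler--Maruyama discretization of \Cref{eqn:sde,eqn:zmartingale}, and the natural route is to couple the continuous and discrete processes through a common Brownian motion and run a discrete Gr\"onwall argument. The key simplification is that the diffusion coefficient $\sqrt{\xi''(t)}$ of the $X$-equation is \emph{state-independent}: the stochastic increments of the continuous and discrete chains can therefore be made to coincide exactly, so the only discrepancy comes from time-discretizing the drift.

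First fix the coupling. Let $(B_t)_{t \ge 0}$ be the Brownian motion driving \Cref{eqn:sde}, and set $G_{\ell+1} \defeq \int_{\ell\delta}^{(\ell+1)\delta} \sqrt{\xi''(s)}\,\mathrm{d}B_s$. These are independent centered Gaussians with variance $\xi'((\ell+1)\delta) - \xi'(\ell\delta)$, so $B_{\ell+1} \defeq G_{\ell+1}/\sqrt{\xi'((\ell+1)\delta) - \xi'(\ell\delta)}$ are i.i.d.\ standard Gaussians; feed these into \Cref{eqn:finitedifference1,eqn:zmartingale_discrete}. With this choice, the noise contributions to $X_{\delta(\ell+1)} - X_{\delta\ell}$ and to $X^\delta_{\ell+1} - X^\delta_\ell$ are both exactly $G_{\ell+1}$. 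Since only the horizon $[0,1-\eta]$ matters for the target value (cf.\ \Cref{lemma:goal}), we may restrict attention there, where the drift $b(t,x) \defeq \xi''(t)\mu_\ast(t)\Phi^{\mu_\ast}_x(t,x)$ is bounded and Lipschitz in $x$ uniformly in $t$ (using boundedness of $\Phi^{\mu_\ast}_x$, the bound $\|\xi''\mu_\ast\|_{\mathrm{TV}[0,1-\eta]} < \infty$ from the definition of $\mathscr{L}_\xi$, and \Cref{lem:boundedpartials} for the $x$-Lipschitz constant); the constant $C$ is then allowed to depend on $\eta$. A standard a priori bound from \Cref{eqn:sde} gives $\E[|X_t - X_s|^2] \le C|t-s|$. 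Writing
\begin{align*}
X_{\delta(\ell+1)} - X^\delta_{\ell+1} = \big(X_{\delta\ell} - X^\delta_\ell\big) + \int_{\ell\delta}^{(\ell+1)\delta}\big(b(s,X_s) - b(\ell\delta, X^\delta_\ell)\big)\,\mathrm{d}s
\end{align*}
(the stochastic parts cancelling), split $b(s,X_s) - b(\ell\delta,X^\delta_\ell)$ into a time term $b(s,X_s) - b(\ell\delta,X_s)$ and a space term $b(\ell\delta,X_s) - b(\ell\delta,X^\delta_\ell)$; bound the space term by $L\big(|X_s - X_{\delta\ell}| + |X_{\delta\ell} - X^\delta_\ell|\big)$ and the time term via the total variation of $\xi''\mu_\ast$ over $[\ell\delta,(\ell+1)\delta]$ times $\|\Phi^{\mu_\ast}_x\|_\infty$. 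Squaring, taking expectations, Cauchy--Schwarz on the time integral, and using the a priori bound yields a recursion $e_{\ell+1} \le (1+C\delta)e_\ell + C\delta^2 + C\delta\,r_\ell$ with $\sum_\ell r_\ell \le C$, where $e_\ell \defeq \E[|X_{\delta\ell} - X^\delta_\ell|^2]$; iterating over $\ell \le 1/\delta$ gives $e_\ell \le C\delta$.

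For the $Z$-estimate, first replace $\widetilde Z^\delta_\ell$ by $Z^\delta_\ell$ from \Cref{eqn:zmartingale_discrete}: by \Cref{prop:A_normalization} each increment of $\widetilde Z^\delta$ differs from that of $Z^\delta$ only through a $(1 + O(\sqrt\delta))$ normalization factor, and $\widetilde Z^\delta_0 \sim \cN(0,\delta)$ versus $Z_0 = 0$, so $\E[|Z^\delta_\ell - \widetilde Z^\delta_\ell|^2] = O(\delta)$. With the coupled $B_{\ell+1}$, both $Z_{\delta\ell}$ and $Z^\delta_\ell$ are stochastic integrals against the same $B$, so
\begin{align*}
Z_{\delta\ell} - Z^\delta_\ell = \sum_{k<\ell}\int_{k\delta}^{(k+1)\delta}\sqrt{\xi''(s)}\,\big(\Phi^{\mu_\ast}_{xx}(s,X_s) - \Phi^{\mu_\ast}_{xx}(k\delta, X^\delta_k)\big)\,\mathrm{d}B_s\,,
\end{align*}
and by the It\^o isometry $\E[|Z_{\delta\ell} - Z^\delta_\ell|^2] = \sum_{k<\ell}\int_{k\delta}^{(k+1)\delta}\xi''(s)\,\E\big[(\Phi^{\mu_\ast}_{xx}(s,X_s) - \Phi^{\mu_\ast}_{xx}(k\delta,X^\delta_k))^2\big]\,\mathrm{d}s$. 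Lipschitzness of $\Phi^{\mu_\ast}_{xx}$ in both arguments (\Cref{lem:boundedpartials}) makes the inner expectation $O(|s-k\delta|) + O(\E[|X_s - X^\delta_k|^2]) = O(\delta)$ by the a priori bound and the $X$-estimate; summing over $k < \ell \le 1/\delta$ gives $O(\delta)$, and combining with the $\widetilde Z \leftrightarrow Z$ reduction finishes.

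\textbf{Main obstacle.} The delicate point is the loss of regularity of $\Phi^{\mu_\ast}$ as $t \to 1$ --- \Cref{lem:boundedpartials} controls the $x$-derivatives only on $[0,1-\epsilon]$ --- together with the fact that $\mu_\ast$, hence $b$, is merely of bounded variation in $t$ rather than continuous. Both are handled as in \cite{ams20}: truncate the horizon at $1-\eta$ (legitimate by \Cref{lemma:goal}), which makes the coefficients bounded and $x$-Lipschitz with $\eta$-dependent constants, and replace pointwise time-moduli by total-variation bounds, which is precisely why $\mathscr{L}_\xi$ is defined with the $\|\xi''\mu\|_{\mathrm{TV}[0,t]} < \infty$ condition. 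Everything else is a routine Gr\"onwall/It\^o-isometry computation.
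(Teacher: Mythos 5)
The paper does not prove this proposition; it imports it from \cite{ams20} (Proposition 5.3 there, with \cite{ams21} Proposition 4.8 as a variant). Your reconstruction is the standard Euler--Maruyama strong-error argument and it is, to my knowledge, essentially the route taken in \cite{ams20}: couple through a common Brownian motion (so the state-independent diffusion makes the noise increments cancel exactly), run a discrete Gr\"onwall estimate on the drift discrepancy for $X$, and apply the It\^o isometry plus the $X$-estimate for $Z$. Your extra step reducing $\widetilde{Z}^\delta_\ell$ to $Z^\delta_\ell$ via \Cref{prop:A_normalization} is correct and is in fact needed here, since this paper's statement is phrased in terms of $\widetilde{Z}^\delta$ (with its renormalized increments and Gaussian initialization) rather than the raw martingale increment $Z^\delta$ that \cite{ams20} uses.

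Two small imprecisions worth flagging. First, you restrict to the truncated horizon $[0,1-\eta]$, whereas the proposition as written allows $\ell \le 1/\delta$; the truncation is the honest reading (the Parisi PDE coefficients degenerate as $t \to 1$, and \cite{ams20} works on a truncated interval too), but you should make explicit that $C = C(\eta)$ and that the proposition should be understood with this implicit cap, since the later rounding argument applies it with $\delta L$ close to $1$. Second, the $t$-Lipschitz control on $\Phi^{\mu_\ast}_{xx}$ does not come from \Cref{lem:boundedpartials} alone (that lemma only bounds $x$-derivatives); you need to differentiate the Parisi PDE twice in $x$ and use boundedness of $\mu_\ast$ on $[0,1-\eta]$ (which in turn uses the $\mathrm{TV}$ condition defining $\mathscr{L}_\xi$, together with $\xi''$ bounded below on compact subintervals) to get $|\Phi^{\mu_\ast}_{xxt}|$ bounded. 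Your obstacle paragraph gestures at exactly this, so the fix is cosmetic rather than substantive.
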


We are now in a position to simplify \Cref{thm:simplified_sum}:
\begin{theorem}\label{thm:sde}
    Suppose \Cref{assn:alg_minimizer_exists} holds.  Then for all $\epsilon > 0$, there exist $d_0 \in \mathbb{N}$ and $\delta_0 > 0$ such that for all $d \geq d_0$ and $0 < \delta \leq \delta_0$, 
    \begin{align*}
        \frac{1}{\alpha n}\sum_{a \in E} \E\left[f(\vz_{\partial a}^{L})\right] \ge \E[f] + \frac{\ALG_\xi - \epsilon}{\sqrt{\alpha}}\,.
    \end{align*}
\end{theorem}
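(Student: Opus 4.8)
The plan is to start from the exact identity of \Cref{thm:simplified_sum}, push each summand through the two limits ($d\to\infty$, then $\delta\to0$), and recognise the result as a Riemann sum for the integral bounded below in \Cref{lemma:goal}. Fix a small $\eta>0$, to be pinned down at the very end. The Parisi solution $\Phi^{\mu_\ast}(t,\cdot)$ is convex in $x$ for every $t$, since the terminal data $|x|$ is convex and \Cref{eq.parisi-equations} preserves convexity; hence $\Phi^{\mu_\ast}_{xx}\ge0$, so $A_i^\ell\ge0$ by \Cref{eqn:nonlinearities1} and $\E\big[A_i^\ell(u_i^{\ell+1})^2\big]\ge0$ for all $i,\ell$. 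Dropping from \Cref{thm:simplified_sum} every term with $\delta\ell>1-\eta$ therefore only decreases the right-hand side, so it suffices to lower bound
\begin{align*}
\E[f] + \frac{r}{\sqrt d}\sum_{\ell\,:\,\delta\ell\le 1-\eta}\frac1n\sum_{i\in V}\E\big[A_i^\ell(u_i^{\ell+1})^2\big] + O_d\Big(\tfrac1d\Big)\,.
\end{align*}
On this truncated range \Cref{lem:boundedpartials} gives a uniform bound $K=K(\eta)$ on all $x$-derivatives of $\Phi^{\mu_\ast}$, so $\Phi^{\mu_\ast}_{xx}(\delta\ell,x_i^\ell)$ is Lipschitz in $\{u_i^s\}_{s\le\ell}$ and the state-evolution corollaries \Cref{cor:large_degree_limit_finite_difference} and \Cref{prop:replace_u2} apply.

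Next I take $d\to\infty$. By \Cref{prop:replace_u2}, $\E[A_i^\ell(u_i^{\ell+1})^2]=\nu_{\ell+1}\E[A_i^\ell]+O_d(d^{-1/2})$. Substituting \Cref{eqn:nonlinearities1} and applying \Cref{cor:large_degree_limit_finite_difference} with $k=1,2$ — the denominator $\E[\Phi^{\mu_\ast}_{xx}(\delta\ell,X_\ell^\delta)^2]$ is bounded above by $K^2$ and, by \Cref{prop:A_normalization} together with $\xi''$ being bounded, bounded below by a positive constant, so the relevant square root is a continuous function of it — I get
\begin{align*}
\nu_{\ell+1}\E[A_i^\ell]
&= \sqrt{\frac{\delta\,\nu_{\ell+1}}{\E[\Phi^{\mu_\ast}_{xx}(\delta\ell,X_\ell^\delta)^2]}}\;\E\big[\Phi^{\mu_\ast}_{xx}(\delta\ell,X_\ell^\delta)\big]+o_d(1)\\
&= \sqrt r\,\nu_{\ell+1}\,\E\big[\Phi^{\mu_\ast}_{xx}(\delta\ell,X_\ell^\delta)\big]\,\big(1+O(\sqrt\delta)\big)+o_d(1)\,,
\end{align*}
the last line by \Cref{prop:A_normalization}. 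Since there are $O(1/\delta)$ values of $\ell$ and $\sum_\ell\nu_{\ell+1}=\tfrac1r(\xi'(1)-\xi'(0))+O(\delta)$ is bounded, the $O_d(d^{-1/2})$ and $O(\sqrt\delta)$ pieces aggregate (after the prefactor $r/\sqrt d$) to $o_d(1)$ and $O(\sqrt\delta)/\sqrt\alpha$ respectively. Using $\alpha=d/r$, so that $\tfrac{r}{\sqrt d}\cdot\sqrt r=\tfrac{r}{\sqrt\alpha}$ and $r\nu_{\ell+1}=\xi'((\ell+1)\delta)-\xi'(\ell\delta)$, this gives
\begin{align*}
\frac1{\alpha n}\sum_{a\in E}\E\big[f(\vz_{\partial a}^{L})\big]
\;\ge\;&\;\E[f]+\frac1{\sqrt\alpha}\sum_{\ell\,:\,\delta\ell\le 1-\eta}\big(\xi'((\ell+1)\delta)-\xi'(\ell\delta)\big)\,\E\big[\Phi^{\mu_\ast}_{xx}(\delta\ell,X_\ell^\delta)\big]\\
&+\frac{O(\sqrt\delta)}{\sqrt\alpha}+o_d(1)\,,
\end{align*}
where all the $d$-dependent errors are in fact $o_d(1/\sqrt\alpha)$, hence negligible against the main $1/\sqrt\alpha$ term.

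Finally I take $\delta\to0$. Using the coupling of \Cref{prop:x_coupling}, the Lipschitz bound on $\Phi^{\mu_\ast}_{xx}$, and $\E|X_{\delta\ell}-X_\ell^\delta|\le\sqrt{C\delta}$ (Jensen), I replace $\E[\Phi^{\mu_\ast}_{xx}(\delta\ell,X_\ell^\delta)]$ by $\E[\Phi^{\mu_\ast}_{xx}(\delta\ell,X_{\delta\ell})]$ at total cost $O(\sqrt\delta)$ in the weighted sum; then, writing $\xi'((\ell+1)\delta)-\xi'(\ell\delta)=\int_{\ell\delta}^{(\ell+1)\delta}\xi''(s)\,ds$ and using continuity of $t\mapsto\E[\Phi^{\mu_\ast}_{xx}(t,X_t)]$ on $[0,1-\eta]$, the sum is a Riemann sum that converges, up to $o_\delta(1)$, to $\int_0^{1-\eta}\xi''(t)\,\E[\Phi^{\mu_\ast}_{xx}(t,X_t)]\,dt$. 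By \Cref{lemma:goal}, this integral is at least $\ALG_\xi-\epsilon/2$ provided $\eta$ was chosen small enough. Collecting the estimates: choose $\delta_0$ small enough that the accumulated $O(\sqrt\delta)+o_\delta(1)$ errors contribute at most $\epsilon/(4\sqrt\alpha)$, and then, for each $\delta\le\delta_0$, $d_0$ large enough that the residual $o_d(1/\sqrt\alpha)$ terms contribute at most $\epsilon/(4\sqrt\alpha)$ for all $d\ge d_0$; this yields $\frac1{\alpha n}\sum_{a\in E}\E[f(\vz_{\partial a}^{L})]\ge\E[f]+\frac{\ALG_\xi-\epsilon}{\sqrt\alpha}$, as claimed.

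The main obstacle is the terminal time $t\to1$: \Cref{lem:boundedpartials} genuinely degrades there, so the summands with $\delta\ell$ near $1$ cannot be controlled directly through state evolution. Convexity of the Parisi solution, hence the positivity $A_i^\ell\ge0$, is precisely what makes the truncation at $1-\eta$ a legitimate lower bound — this is the one structural input (beyond the cited propositions) that the argument really needs. A secondary, purely bookkeeping point is that the two limits must be taken in the order $d\to\infty$ then $\delta\to0$, since the rate of the $d$-dependent error worsens as $\delta$ shrinks; this dictates the order in which $\delta_0$ and $d_0$ are selected above.
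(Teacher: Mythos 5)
Your proof follows the paper's own route step for step: start from \Cref{thm:simplified_sum}, replace $\E[A_i^\ell(u_i^{\ell+1})^2]$ by $\nu_{\ell+1}\E[A_i^\ell]$ via \Cref{prop:replace_u2}, substitute \Cref{eqn:nonlinearities1} and pass to the discrete process with \Cref{cor:large_degree_limit_finite_difference}, use \Cref{prop:A_normalization} to renormalize, then \Cref{prop:x_coupling} to pass to the continuous diffusion, and finally recognize the Riemann sum and invoke \Cref{lemma:goal}. The one ingredient you supply that the paper leaves implicit is the explicit truncation of the sum to $\delta\ell\le 1-\eta$, justified by the positivity $\Phi_{xx}^{\mu_\ast}\ge 0$ (convexity of the Parisi solution in $x$). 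This is the right move: \Cref{lem:boundedpartials} only gives a uniform Lipschitz constant $K(\eta)$ on $[0,1-\eta]$, so the multiplicative error bounds cannot be taken uniformly up to $\ell=L-1$; and the full Riemann sum tends to $\int_0^1$, whereas \Cref{lemma:goal} only lower-bounds $\int_0^{1-\eta}$ — both gaps are bridged by exactly the positivity you invoke, which the paper's proof uses silently. Your bookkeeping is also correct: the $O_d(d^{-1/2})$ and $O(\sqrt\delta)$ errors aggregate as you say, and the order of quantifiers ($\eta$, then $\delta_0$, then $d_0$ depending on $\delta$) matches the order in which the two limits must be taken. One small point worth a line if this were written out fully: you assert continuity of $t\mapsto\E[\Phi_{xx}^{\mu_\ast}(t,X_t)]$ on $[0,1-\eta]$; this holds by continuity of $X_t$, the uniform bound from \Cref{lem:boundedpartials}, and dominated convergence, but it is not entirely free.
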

\begin{proof}
Recall that $\alpha = \frac{d}{r}$ is the clause density.
By \Cref{thm:simplified_sum} and \Cref{prop:replace_u2}, we have
    \begin{align*}   
     \frac{1}{\alpha n}\sum_{a \in E} \E\left[f(\vz_{\partial a}^{L})\right] = 
    \E[f]
    +
    \frac{r}{\sqrt{d}} \cdot \sum_{\ell = 0}^{L - 1}
    \left(  \frac{1}{n} \sum_{i \in V}  \nu_{\ell+1} \cdot \E\left[A_i^{\ell}\right] \right) + O_d\left(\frac{1}{d}\right)\,.
    \end{align*}
Let $\Delta_{\ell,i} \defeq  \nu_{\ell+1} \cdot \E[A_i^\ell]$. At large enough $d$, \Cref{cor:large_degree_limit_finite_difference} and \Cref{prop:A_normalization} imply the following for all $i$:\footnote{Note that an error bound in \Cref{cor:large_degree_limit_finite_difference} could improve the scaling beyond $o_d(1)$.}
\begin{align*}
   \Delta_{\ell,i} =
    \nu_{\ell+1} \cdot \E[\Phi_{xx}^{\mu_\ast}(\delta \ell, X_\ell^\delta)] \cdot \sqrt{r} (1 + O_{\frac{1}{\delta}}(\sqrt{\delta}))
    + o_d(1)\,.
\end{align*}
By definition of $\nu_{\ell+1}$,
\begin{align*}
   \Delta_{\ell,i}  = \frac{\xi'(\delta(\ell + 1)) - \xi'(\delta\ell)}{\sqrt{r}} \cdot  \E[\Phi_{xx}^{\mu_\ast}(\delta \ell, X_\ell^\delta)] \cdot (1 + O_{\frac{1}{\delta}}(\sqrt{\delta}))
    + o_d(1)\,.
\end{align*}
By \Cref{prop:x_coupling} and Lipschitzness of $\Phi_{xx}^{\mu_\ast}(t,x)$ in $x$, using the continuous $X_{\delta \ell}$ introduces multiplicative error:
\begin{align*}
          \Delta_{\ell,i} = \frac{\xi'(\delta(\ell + 1)) - \xi'(\delta\ell)}{\sqrt{r}} \cdot  \E[\Phi_{xx}^{\mu_\ast}(\delta \ell, X_{\delta\ell})] \cdot (1 + O_{\frac{1}{\delta}}(\sqrt{\delta}))^2
    + o_d(1)\,.
\end{align*}
Putting it all together, 
\begin{align*}
    \frac{1}{\alpha n} \sum_{a \in E} \E\left[f(\vz_{\partial a}^{L})\right] 
     &=     \E[f] +
    \frac{r}{\sqrt{d}} \cdot \sum_{\ell = 0}^{L - 1} \left( \frac{1}{n} \sum_{i \in V} 
    \Delta_{\ell,i} \right)
    + O_d\left(\frac{1}{d}\right)
    \\
    &= \E[f] + 
    \sqrt{\frac{r}{d}}\cdot  \sum_{\ell = 0}^{L - 1}
    \left( \xi'(\delta(\ell + 1)) - \xi'(\delta\ell) \right) 
    \cdot  \E[\Phi_{xx}^{\mu_\ast}(\delta \ell, X_{\delta\ell})] \cdot (1 + O_{\frac{1}{\delta}}(\sqrt{\delta}))^2
    + o_d\left(\frac{1}{\sqrt{d}}\right)
    \\
    &= \E[f] + 
    \frac{1}{\sqrt{\alpha}}\cdot 
    \sum_{\ell = 0}^{L - 1}
    \left( \xi'(\delta(\ell + 1)) - \xi'(\delta\ell) \right) 
    \cdot 
    \E[\Phi_{xx}^{\mu_\ast}(\delta \ell, X_{\delta\ell})]
    \cdot 
    (1 + o_{\frac{1}{\delta}}(1))
    + o_d\left(\frac{1}{\sqrt{d}}\right)
    \,.
\end{align*}
Applying \Cref{lemma:goal} with large enough $d$ and small enough $\delta$ proves the result.
\end{proof}

\subsection{Rounding}

We have nearly achieved our objective. However, the entries of $\vz^L$ are \emph{continuous}, not in $\{\pm 1\}$. 
We account for this by \emph{rounding} each entry of $\vz^L$ without significant impact to the value of the algorithm.

Let $\trunc: \mathbb{R} \to [-1, 1]$ be the truncation function, defined by
\begin{align*}
\trunc(x) = \left\{\begin{array}{ll}
    -1 & \text{if } x < -1, \\
    x & \text{if } -1 \leq x \leq 1, \\
    1 & \text{if } x > 1.
\end{array}\right.
\end{align*}
We use a randomized rounding procedure $R$ that acts independently on each vertex $i$: it outputs $+1$ at vertex $i$ with probability $\frac{1+\trunc(z_i^L)}{2}$, and $-1$ with probability $\frac{1-\trunc(z_i^L)}{2}$. 
The average value of the algorithm is unchanged when replacing $\trunc(\vz^L)$ with $R(\vz^L)$.\footnote{For convenience, we use the notation $\trunc(\vz^L)$ or $R(\vz^L)$ to apply $\trunc$ or $R$ entrywise on $\vz^L$.}
This is because averaging over the randomness in $R$, we have for every set $S \subseteq V$,
\begin{align*}
\E\left[ \prod_{i \in S} R(z^L_i)\right]
=
 \prod_{i \in S} \trunc(z^L_i)\,.
\end{align*}
We also compare the average value of the algorithm when replacing  $\vz^L$ with $\trunc(\vz^L)$; i.e. the average value of $\E[f(\trunc(\vz_{\partial a}^L))] - \E[f(\vz_{\partial a}^L)]$ across factors $a$. Intuitively, $\trunc$ has no effect because each entry of $\vz^L$ is almost surely bounded by $1$ after taking the limits $d \to \infty$ and $\delta \to 0$:
\begin{lemma}[{\cite[Lemma 6.2 and 6.5]{ams20}}]
\label{lemma:zt_atmost_1}
    Suppose \Cref{assn:alg_minimizer_exists} holds with minimizer $\mu_\ast \in \mathscr{L}$, and consider the martingale $Z_t$ in \Cref{eqn:zmartingale} corresponding to $\mu_\ast$. Then for all $t \in [0, 1]$, $|Z_t| \leq 1$ almost surely.
\end{lemma}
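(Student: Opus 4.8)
The plan is to reduce the statement to the pointwise identity $Z_t = \Phi_x^{\mu_\ast}(t, X_t)$ and then to a boundedness property of the Parisi solution. Writing $\Phi \defeq \Phi^{\mu_\ast}$ and $\mu \defeq \mu_\ast$, I would first fix $\epsilon > 0$ and work on $[0, 1-\epsilon]$, where \Cref{lem:boundedpartials} guarantees $\Phi$ and all its $x$-derivatives are bounded, so It\^o's formula applies to the process $t \mapsto \Phi_x(t, X_t)$. Using $\mathrm{d}\langle X\rangle_t = \xi''(t)\,\mathrm{d}t$, It\^o gives
\begin{align*}
\mathrm{d}\Phi_x(t, X_t) = \Big(\Phi_{xt} + \xi''(t)\mu(t)\,\Phi_x \Phi_{xx} + \tfrac12 \xi''(t)\,\Phi_{xxx}\Big)(t, X_t)\,\mathrm{d}t + \sqrt{\xi''(t)}\,\Phi_{xx}(t, X_t)\,\mathrm{d}B_t\,.
\end{align*}
Differentiating the Parisi equation \Cref{eq.parisi-equations} once in $x$ yields $\Phi_{xt} = -\tfrac12\xi''(t)\big(\Phi_{xxx} + 2\mu(t)\,\Phi_x\Phi_{xx}\big)$, which makes the drift above vanish identically; and since the terminal data $|x|$ is even, $\Phi(t,\cdot)$ is even for every $t$, so $\Phi_x(0,0) = 0$. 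Comparing with \Cref{eqn:zmartingale}, this identifies $Z_t = \Phi_x(t, X_t)$ for $t \in [0,1-\epsilon]$, hence, letting $\epsilon \to 0$, for all $t \in [0,1)$.

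The second step is to bound $|\Phi_x(t,x)|$ by $1$. I would argue that $\Phi(t,\cdot)$ is $1$-Lipschitz in $x$, uniformly in $t$: the stochastic-control representation underlying \Cref{thm:auffingerchen} (see \cite{auffingerchenrep,chen2017variational}) expresses $\Phi(t,x)$ as a supremum over progressively measurable controls of terms of the form $\E[\,|x + (\text{a term not involving }x)|\,]$ minus a penalty independent of $x$, and $a \mapsto |a+c|$ is $1$-Lipschitz; alternatively one differentiates the PDE and runs a maximum-principle argument off the terminal bound $|\sgn(x)| \le 1$. Either way $|\Phi_x(t,x)| \le 1$ for all $t \in [0,1)$ and $x \in \R$, so $|Z_t| = |\Phi_x(t, X_t)| \le 1$ almost surely for $t < 1$. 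To reach $t = 1$, I would note that $Z$ is an $L^2$-bounded martingale on $[0,1)$ — its bracket $\int_0^1 \xi''(s)\,\E[\Phi_{xx}(s,X_s)^2]\,\mathrm{d}s$ is finite because $\mu_\ast \in \mathscr{L}_\xi$ — so $Z_1 \defeq \lim_{t\to 1^-} Z_t$ exists a.s., inherits $|Z_1|\le 1$, and the bound extends to all of $[0,1]$.

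The hard part will be the behaviour near $t = 1$: \Cref{lem:boundedpartials} degenerates there, and as the terminal kink at $x = 0$ forms the derivatives of $\Phi$ can blow up, so both the It\^o computation and the Lipschitz estimate need the finer regularity theory of the Parisi PDE near the terminal time (in particular $\Phi_x(t,x) \to \sgn(x)$ as $t \to 1$ for $x \neq 0$, and finiteness of the energy for the minimizer). This is exactly what is imported from \cite{auffingerchenrep,chen2017variational} and what makes \cite[Lemmas 6.2 and 6.5]{ams20} go through. An alternative that sidesteps the uniform Lipschitz bound is to identify $Z_1 = \sgn(X_1)$ via the martingale convergence above, directly giving $|Z_1|\le 1$, and then write $Z_t = \E[Z_1\mid\mathcal F_t]$, so that $|Z_t| \le \E[|Z_1|\mid\mathcal F_t]\le 1$ for every $t \in [0,1]$.
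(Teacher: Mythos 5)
The paper gives no proof of this lemma: it is imported directly from \cite[Lemmas~6.2 and~6.5]{ams20}. Your argument reconstructs exactly that source's route --- identify $Z_t = \Phi_x^{\mu_\ast}(t, X_t)$ by It\^o's formula together with the $x$-differentiated Parisi PDE (which cancels the drift) and $\Phi_x^{\mu_\ast}(0,0)=0$ by evenness, then invoke the uniform Lipschitz bound $|\Phi_x^{\mu_\ast}(t,\cdot)| \le 1$ --- and, modulo the regularity of $\Phi^{\mu_\ast}$ near $t=1$ that you correctly flag as the delicate point and that is supplied by \cite{auffingerchenrep,chen2017variational}, the reconstruction is correct. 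One minor simplification: once $|Z_t|\le 1$ on $[0,1)$ you already have a \emph{bounded} martingale, so almost-sure convergence to some $Z_1$ with $|Z_1|\le 1$ is immediate and the separate $L^2$-bracket computation is superfluous.
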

For each factor $a$, the difference $\E[f(\trunc(\vz_{\partial a}^L))] - \E[f(\vz_{\partial a}^L)]$ can be split into at most $2^r$ terms of the form
\begin{align*}
    \E\left[   \left( \prod_{i \in S} \trunc(z_i^L) \right) -  \left( \prod_{i \in S} z_i^L \right) \right]\,,
\end{align*}
for some set $S \subseteq \partial a$ where $|S| \geq 2$ (since $f$ has no linear terms). We show that each of these terms is at $\delta \cdot O_d(\frac{1}{\sqrt{d}})$, where $\delta$ is the timestep of the algorithm.

Arbitrarily \emph{order} the elements of set $S$; then the above term can be written as
\begin{align*}
  \E\left[   \left( \prod_{i \in S} \trunc(z_i^L) \right) -  \left( \prod_{i \in S} z_i^L \right) \right] 
  = \sum_{\kappa = 1}^{|S|} \E\left[ 
  \left( \prod_{\iota = 1}^{\kappa-1} \trunc(z_{S[\iota]}^L)  \right) 
  \left( \trunc(z_{S[\kappa]}) - z_{S[\kappa]} \right)
  \left( \prod_{\iota = \kappa + 1}^{|S|} z_{S[\iota]}^L  \right) 
  \right]\,.
\end{align*}
We show that the average value of such terms across factors $a$ is negligible with $d$ as $n \to \infty$.
\begin{proposition}\label{prop:trci_to_trcitoa}
    For every integer $L \ge 1$, clause $a$, and disjoint subsets $T_1, T_2 \subseteq \partial a$ such that $T_1 \cup T_2 \neq \varnothing$, we have 
    \[
    \E\left[\left(
    \Big( \prod_{i \in T_1}z_i^L\Big) \prod_{i \in T_2}\trunc(z_i^L) - 
    \Big( \prod_{i \in T_1}z_{i \to a}^L\Big) \prod_{i \in T_2}\trunc(z_{i \to a}^L)
    \right)^2\right] = O_d\left(\frac{1}{d}\right).
    \]
\end{proposition}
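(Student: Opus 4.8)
The plan is a one-coordinate-at-a-time telescoping, exploiting two elementary facts about $\trunc$: it is $1$-Lipschitz, so it preserves the smallness of $z_i^L - z_{i\to a}^L$, and it is bounded by $1$, so it never inflates the remaining factors. First I would fix an arbitrary ordering of $T_1 \cup T_2$ (nonempty by hypothesis) and rewrite the quantity inside the expectation as a telescoping sum of $|T_1| + |T_2| \le r$ terms: in the $k$-th term the first $k-1$ coordinates have already been switched from their node version ($z_i^L$ or $\trunc(z_i^L)$) to their node-to-factor version ($z_{i\to a}^L$ or $\trunc(z_{i\to a}^L)$), the $k$-th coordinate $j$ carries a single \emph{difference} factor, equal to $z_j^L - z_{j\to a}^L$ if $j \in T_1$ and to $\trunc(z_j^L) - \trunc(z_{j\to a}^L)$ if $j \in T_2$, and the remaining coordinates stay in their node version. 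Since $(\sum_{k=1}^N a_k)^2 \le N \sum_{k=1}^N a_k^2$ with $N \le r$ a constant, it suffices to bound $\E[(\text{term})^2]$ by $O_d(1/d)$ for each term.

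Next, fix one such term. Every truncation factor appearing in it has absolute value at most $1$, and $|\trunc(z_j^L) - \trunc(z_{j\to a}^L)| \le |z_j^L - z_{j\to a}^L|$ by $1$-Lipschitzness, so in each term the difference factor is dominated by $|z_j^L - z_{j\to a}^L|$ for the relevant $j$. The non-difference $z$-factors — at most $r-1$ of them, each of the form $z_i^L$ or $z_{i\to a}^L$ — have all finite moments bounded uniformly in $d$ by \Cref{lemma:finite_moments_are_bounded}. Applying Cauchy--Schwarz (or generalized H\"older) then gives $\E[(\text{term})^2] \le C \cdot \sqrt{\E[(z_j^L - z_{j\to a}^L)^4]}$ for a constant $C$ independent of $d$, so the proposition reduces to showing $\E[(z_i^L - z_{i\to a}^L)^4] = O_d(1/d^2)$.

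To establish this $L^4$ bound I would show that $z_i^L - z_{i\to a}^L$ is $O_d(d^{-1/2})$ in \emph{every} finite moment. Since the initializations agree, $z_i^L - z_{i\to a}^L = \sum_{s=1}^{L}\big(A_i^{s-1} u_i^s - A_{i\to a}^{s-1} u_{i\to a}^s\big)$, and each summand splits as $(A_i^{s-1} - A_{i\to a}^{s-1})\, u_i^s + A_{i\to a}^{s-1}\,(u_i^s - u_{i\to a}^s)$. The first piece is $O_d(d^{-1/2})$ in every moment by \Cref{hyp:A_deviation_small} together with boundedness of all finite moments of $u_i^s$; the second is $O_d(d^{-1/2})$ in every moment by boundedness of the nonlinearities and \Cref{lemma:u_deviation_small}. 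As $L$ is a constant, Minkowski's inequality yields $\E[|z_i^L - z_{i\to a}^L|^m] = O_d(d^{-m/2})$ for all $m$, in particular $m = 4$.

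The bulk of the work is bookkeeping: tracking the H\"older exponents, keeping the (constant) number of telescoping terms controlled, and invoking the right moment estimates for both the node and node-to-factor variants. Conceptually there is nothing beyond the observation that truncation, being $1$-Lipschitz and bounded by $1$, leaves intact exactly the estimates one would use for the untruncated difference $\prod_i z_i^L - \prod_i z_{i\to a}^L$; the only genuinely new ingredient, the moment bound on $z_i^L - z_{i\to a}^L$, is a short consequence of the already-established bounds on $u_i^\ell - u_{i\to a}^\ell$ and $A_i^\ell - A_{i\to a}^\ell$.
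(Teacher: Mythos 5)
Your proposal is correct and takes essentially the same route as the paper: a one-coordinate-at-a-time telescoping combined with the $1$-Lipschitz bound $|\trunc(z_j^L) - \trunc(z_{j\to a}^L)| \le |z_j^L - z_{j\to a}^L|$, reducing to moment bounds on $z_j^L - z_{j\to a}^L$. The only difference is presentational — you re-derive the bound $\E[|z_i^L - z_{i\to a}^L|^m] = O_d(d^{-m/2})$ from \Cref{hyp:A_deviation_small} and \Cref{lemma:u_deviation_small}, whereas the paper simply cites it as \Cref{lemma:z_deviation_small} and refers to the telescoping argument of \Cref{lemma:df_deviation_small}.
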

\begin{proof}
Note that by definition, $|\trunc(x) - \trunc(y)| \leq |x - y|$ for every $x,y \in \mathbb{R}$.
As a result, for every $m \in \mathbb{N}$, $\E\left[\left|\trunc(z^{L}_{i \to a}) - \trunc(z^{L}_i)\right|^m\right] \le  \E\left[\left|z^{L}_{i \to a} - z^{L}_i\right|^m\right]  = O_d(\frac{1}{d^{m/2}})$ by \Cref{lemma:z_deviation_small}.
The statement then follows from the same proof as in \Cref{lemma:df_deviation_small}.    
\end{proof}
\begin{proposition}
    Fix an integer $L \ge 1$ and variable $i$.
    For each $a \in \partial i$, fix disjoint subsets $T_{1,a}, T_{2,a} \subseteq \partial a \backslash i$ with $T_{1,a} \cup T_{2,a} \neq \varnothing$. Then
    \begin{align*}
    \left|
    \E\left[(z_i^L - \trunc(z_i^L))
    \cdot 
    \frac{1}{d} \sum_{a \in \partial i}
    \prod_{j \in T_{1,a}}z_j^L\prod_{j \in T_{2,a}}\trunc(z_j^L)\right]\right| 
    =
    O_d\left(\frac{1}{\sqrt{d}}\right) \cdot \E[(z_i^L - \trunc(z_i^L))^2]^{1/2}\,.
    \end{align*} 
\end{proposition}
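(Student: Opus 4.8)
The plan is to apply Cauchy--Schwarz to pull $z_i^L-\trunc(z_i^L)$ out of the expectation: writing $P_a\defeq\prod_{j\in T_{1,a}}z_j^L\prod_{j\in T_{2,a}}\trunc(z_j^L)$ for the $a$-th summand, we have $\big|\E[(z_i^L-\trunc(z_i^L))\cdot\tfrac1d\sum_{a\in\partial i}P_a]\big|\le \E[(z_i^L-\trunc(z_i^L))^2]^{1/2}\cdot \E[(\tfrac1d\sum_{a\in\partial i}P_a)^2]^{1/2}$. Since the first factor is exactly the quantity appearing on the right-hand side of the proposition, it suffices to prove that $\E[(\tfrac1d\sum_{a\in\partial i}P_a)^2]=O_d(1/d)$, i.e.\ that this normalized sum over the $d$ factors incident to $i$ has second moment of order $1/d$.

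First I would pass to node-to-factor messages: replace each $z_j^L$ by $z_{j\to a}^L$ and each $\trunc(z_j^L)$ by $\trunc(z_{j\to a}^L)$ inside the $a$-th summand, calling the result $\widetilde P_a$. By \Cref{prop:trci_to_trcitoa} we have $\E[(P_a-\widetilde P_a)^2]=O_d(1/d)$ for every $a$, so the triangle inequality in $L^2$ gives $\big\|\tfrac1d\sum_a(P_a-\widetilde P_a)\big\|_2\le\tfrac1d\sum_a\|P_a-\widetilde P_a\|_2=O_d(1/\sqrt d)$, and it remains to control $\|\tfrac1d\sum_a\widetilde P_a\|_2$. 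The point of the node-to-factor form is the independence structure on the local hypertree: for distinct $a,a'\in\partial i$ the Gaussians feeding $\{z_{j\to a}^L\}_{j\in\partial a\setminus i}$ and $\{z_{j'\to a'}^L\}_{j'\in\partial a'\setminus i}$ live in disjoint subtrees (any route between them must pass through $i$), so $\widetilde P_a$ and $\widetilde P_{a'}$ are independent; and by \Cref{lemma:SigmaAlgebraIndependence} around factor $a$, the variables $\{z_{j\to a}^L\}_{j\in\partial a\setminus i}$ are mutually independent. Hence $\E[(\tfrac1d\sum_a\widetilde P_a)^2]=\tfrac1{d^2}\sum_a\mathrm{Var}(\widetilde P_a)+(\E\widetilde P_a)^2$, where all $\E\widetilde P_a$ coincide by the symmetry of the construction across factors. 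The variance term is $O_d(1/d)$ because $\E[\widetilde P_a^2]=\prod_{j\in T_{1,a}}\E[(z_{j\to a}^L)^2]\prod_{j\in T_{2,a}}\E[\trunc(z_{j\to a}^L)^2]=O_d(1)$ using $|\trunc|\le 1$ and \Cref{lemma:finite_moments_are_bounded}.

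The remaining task is to bound $\E\widetilde P_a=\prod_{j\in T_{1,a}}\E[z_{j\to a}^L]\cdot\prod_{j\in T_{2,a}}\E[\trunc(z_{j\to a}^L)]$. Since $z_{j\to a}^{\ell}-z_{j\to a}^{\ell-1}=A_{j\to a}^{\ell-1}u_{j\to a}^{\ell}$ with $A_{j\to a}^{\ell-1}\in\mathcal{G}_{j\to a}^{\ell-1}\subseteq\mathcal{G}_j^{\ell-1}$ and $\E[u_{j\to a}^{\ell}\mid\mathcal{G}_j^{\ell-1}]=0$ by \Cref{lemma:uy_is_zero}, induction gives $\E[z_{j\to a}^L]=0$, so $\E\widetilde P_a=0$ whenever $T_{1,a}\neq\varnothing$. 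The only delicate case is $T_{1,a}=\varnothing$ (so $T_{2,a}\neq\varnothing$), where $\E\widetilde P_a=\prod_{j\in T_{2,a}}\E[\trunc(z_{j\to a}^L)]$ and it suffices that $\E[\trunc(z_{j\to a}^L)]=O_d(1/\sqrt d)$. For an even predicate this is exactly $0$: simultaneously negating all initial Gaussians $z_\bullet^0$ negates every message $u,w,z$ while fixing every nonlinearity $A_\bullet^\ell$ --- because $\D_{j;b}f$ is an odd polynomial when $f$ is even, and $\Phi^{\mu_\ast}$ is even in $x$, hence $\Phi^{\mu_\ast}_x$ is odd and $\Phi^{\mu_\ast}_{xx}$ is even --- so $z_{j\to a}^L$ is symmetric and, $\trunc$ being odd, $\E[\trunc(z_{j\to a}^L)]=0$. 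For a general predicate with no linear term, $\trunc(z_{j\to a}^L)$ is a pseudo-Lipschitz function of $(z_j^0,u_{j\to a}^1,\dots,u_{j\to a}^L)$, so \Cref{prop:state_evolution_expectation} identifies $\E[\trunc(z_{j\to a}^L)]$, up to $o_d(1)$, with the expectation of $\trunc$ at the (symmetric) $d\to\infty$ state-evolution limit of $z_{j\to a}^L$, which vanishes; a quantitative rate for state evolution upgrades this to $O_d(1/\sqrt d)$. Combining the replacement error, the diagonal variance, and this vanishing (or $O_d(1/\sqrt d)$) mean yields $\E[(\tfrac1d\sum_a P_a)^2]=O_d(1/d)$, and Cauchy--Schwarz finishes. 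The hard part is precisely this last step --- converting the (only approximate, at finite $d$, for non-even predicates) sign-symmetry of the message law into the required $O_d(1/\sqrt d)$ control of $\E[\trunc(z_{j\to a}^L)]$; the rest is tree combinatorics together with the previously established deviation and moment bounds.
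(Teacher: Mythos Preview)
Your approach is essentially the paper's: pass from node quantities to node-to-factor quantities via \Cref{prop:trci_to_trcitoa}, then apply Cauchy--Schwarz and use that the summands $\widetilde P_a\defeq\prod_{j\in T_{1,a}}z_{j\to a}^L\prod_{j\in T_{2,a}}\trunc(z_{j\to a}^L)$ are independent across $a\in\partial i$ (disjoint subtrees). The paper replaces first and applies Cauchy--Schwarz second, you do it in the opposite order; both are fine.

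Where you are more careful than the paper is the centering of $\widetilde P_a$. The paper simply writes ``centered (by \Cref{lemma:uy_is_zero})'', but that lemma only yields $\E[z_{j\to a}^L]=0$, which handles $T_{1,a}\neq\varnothing$ and says nothing about $\E[\trunc(z_{j\to a}^L)]$ when $T_{1,a}=\varnothing$. Your sign-flip argument for even $f$ is correct and closes this gap at finite $d$: with $f$ even every $\D_{j;b}f$ is an odd polynomial, $\Phi^{\mu_*}$ is even in $x$ so $\Phi^{\mu_*}_x$ is odd and $\Phi^{\mu_*}_{xx}$ is even, and one checks by induction that negating all initial Gaussians negates every $u,w,x,z$ while fixing every $A$; hence $z_{j\to a}^L$ is symmetric and $\E[\trunc(z_{j\to a}^L)]=0$.

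For general $f$ with no linear term (which is the hypothesis of \Cref{thm:main_theorem}) the sign-flip breaks---e.g.\ for $f=x_1x_2x_3$ the derivative $\D_1 f=x_2x_3$ is even---and your reduction to ``a quantitative rate for state evolution'' is the right diagnosis but not a proof: \Cref{prop:state_evolution_expectation} only delivers $o_d(1)$, which would give $\E[(\tfrac1d\sum_a\widetilde P_a)^2]=o_d(1)$ rather than the claimed $O_d(1/d)$. The paper's printed argument shares this lacuna. A clean fix within the paper's framework is to observe that the moment computations in \Cref{lemma:message_higher_moments} actually produce $O_d(1/\sqrt d)$ errors, and to propagate that rate through \Cref{lemma:pseudo_lipschitz} for the bounded-Lipschitz test function $\trunc$; alternatively, the weaker bound with $o_d(1)$ in place of $O_d(1/\sqrt d)$ already suffices for the downstream rounding argument once combined with the $\delta$-smallness from the subsequent proposition.
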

\begin{proof}
We split the left-hand side into two parts; i.e.
    \begin{align*}
    &\E\left[(z_i^L - \trunc(z_i^L))
     \cdot 
    \frac{1}{d} \sum_{a \in \partial i}
    \prod_{j \in T_{1,a}}z_{j \to a}^L \prod_{j \in T_{2,a}}\trunc(z_{j \to a}^L)\right]
    \\
    +
    &\frac{1}{d} \cdot
    \E\left[(z_i^L - \trunc(z_i^L))  
    \cdot
    \sum_{a \in \partial i}
    \left(\prod_{j \in T_{1,a}}z_j^L \prod_{j \in T_{2,a}}\trunc(z_j^L) - \prod_{j \in T_{1,a}}z_{j \to a}^L \prod_{j \in T_{2,a}}\trunc(z_{j \to a}^L)\right)\right]\,.
    \end{align*}
    The second part, by \Cref{prop:trci_to_trcitoa} and Cauchy-Schwarz, is bounded by $O_d\left(\frac{1}{\sqrt{d}}\right) \cdot \E[(z_i^L - \trunc(z_i^L))^2]^{1/2}$. 
    
    We bound the first part using a symmetry argument. 
    For any $a \in \partial i$, define $P_{a} \defeq \prod_{j \in {T_{1,a}}}z_{j \to a}^L \prod_{j \in {T_{2,a}}}\trunc(z_{j \to a}^L)$. 
    Using Cauchy-Schwarz,
    \begin{align*}
    \frac{1}{d}\cdot\left|\E\left[(z_i - \trunc(z_i)) \cdot \left(\sum_{a \in \partial i}P_a\right)\right]\right|
        &\le 
        \frac{1}{d} \cdot
        \E[(z_i - \trunc(z_i))^2]^{1/2} \cdot \E\left[\left(\sum_{a \in \partial i}P_a\right)^2\right]^{1/2}\,.
    \end{align*}
    Note that $\{P_b ~| ~b \in \Delta_{i:a} \}$ are mutually independent by \Cref{lemma:SigmaAlgebraIndependence}. So the last expectation is the variance of a sum of $d$ centered (by \Cref{lemma:uy_is_zero}) random variables with bounded moments (by \Cref{lemma:finite_moments_are_bounded}), and so has size $O_d(d)$. The statement follows.    
\end{proof}
Although this is enough to show that the difference $\frac{1}{\alpha n} \sum_{a \in E} \left( \E[f(\trunc(\vz_{\partial a}^L))] - \E[f(\vz_{\partial a}^L)] \right) = O_d(\frac{1}{\sqrt{d}})$, we want this quantity to be negligible even among terms of size $O_d(\frac{1}{\sqrt{d}})$. We show this as the timestep $\delta$ also goes to $0$.
\begin{proposition}
    Suppose \Cref{assn:alg_minimizer_exists} holds. Then for every integer $L \ge 1$, there is a $C$ such that
        \begin{align*}
            \frac{1}{n} \sum_{i \in V}  \E[(z_i^L - \trunc(z_i^L))^2] \leq C\delta + o_d(1)\,.
        \end{align*}
\end{proposition}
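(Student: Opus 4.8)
\emph{Proof plan.} The plan is to establish the per-vertex bound $\E[(z_i^L - \trunc(z_i^L))^2] \le C\delta + o_d(1)$ uniformly over $i\in V$ and then average. The comparison is with the continuous martingale $Z_{L\delta}$ from \Cref{eqn:zmartingale} (taking $\mu = \mu_\ast$), which by \Cref{lemma:zt_atmost_1} satisfies $|Z_{L\delta}|\le 1$ almost surely, where $L\delta = \lfloor 1/\delta\rfloor\,\delta \in [0,1]$. The elementary point is that $\trunc$ is the nearest-point projection onto $[-1,1]$, so on the (a.s.) event $\{|Z_{L\delta}|\le 1\}$ we have $|z_i^L - \trunc(z_i^L)| = \mathrm{dist}(z_i^L, [-1,1]) \le |z_i^L - Z_{L\delta}|$; hence for \emph{any} coupling of $z_i^L$ with $Z_{L\delta}$,
\begin{align*}
\E[(z_i^L - \trunc(z_i^L))^2] \le \E[(z_i^L - Z_{L\delta})^2]\,.
\end{align*}
It then remains to exhibit a coupling making the right-hand side $\le C\delta + o_d(1)$, which I would build in two stages.

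Stage one couples $z_i^L$ with the discrete process $\widetilde{Z}_L^\delta$. Unrolling \Cref{fig:algorithm}: each $A_i^\ell$ is bounded (as assumed there) and, by the discussion around \Cref{lem:boundedpartials}, Lipschitz in $(u_i^1,\dots,u_i^\ell)$, so $z_i^L = z_i^0 + G_L(u_i^1,\dots,u_i^L)$ for an explicit \emph{pseudo-Lipschitz} $G_L$, with $z_i^0\sim\cN(0,\delta)$ independent of all $u_i^\ell$ (no message $u_i^\ell$ depends on $z_i^0$, by \Cref{remark:f_derivative_doesnt_use_i}). Matching the finite-difference recursions shows that $G_L$ evaluated at independent Gaussians $U_s\sim\cN(0,\nu_s)$ has the law of $\widetilde{Z}_L^\delta - \widetilde{Z}_0^\delta$, up to an $o_d(1)$ discrepancy in the normalizing constants that \Cref{cor:large_degree_limit_finite_difference} and boundedness of $\Phi^{\mu_\ast}_{xx}$ control in $L^2$. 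Since \Cref{prop:state_evolution_expectation} holds for \emph{all} pseudo-Lipschitz test functions, it is equivalent to $W_2$-convergence of the law of $(u_i^1,\dots,u_i^L)$ to that of independent $(U_1,\dots,U_L)$; taking the optimal $W_2$ coupling and pushing it forward through $\vec u\mapsto z_i^0 + G_L(\vec u)$ (with $z_i^0$ coupled identically to $\widetilde{Z}_0^\delta$) produces a coupling under which $\E[(z_i^L - \widetilde{Z}_L^\delta)^2] = o_d(1)$. The push-forward estimate uses the splitting $A^{s-1}(\vec u)u^s - A^{s-1}(\vec U)U^s = A^{s-1}(\vec u)(u^s - U^s) + (A^{s-1}(\vec u) - A^{s-1}(\vec U))U^s$ together with boundedness and Lipschitzness of $A^{s-1}$, the quadratic-growth term being tamed by a truncation argument that invokes the uniform moment bounds of \Cref{lemma:finite_moments_are_bounded}.

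Stage two invokes \Cref{prop:x_coupling} for a coupling of $(\widetilde{Z}_\ell^\delta)_\ell$ with $(Z_t)_t$ satisfying $\E[(\widetilde{Z}_L^\delta - Z_{L\delta})^2]\le C\delta$. Gluing the two couplings along the common marginal $\widetilde{Z}_L^\delta$ yields a single coupling of $(z_i^L, Z_{L\delta})$ with
\begin{align*}
\E[(z_i^L - Z_{L\delta})^2] \le 2\,\E[(z_i^L - \widetilde{Z}_L^\delta)^2] + 2\,\E[(\widetilde{Z}_L^\delta - Z_{L\delta})^2] \le 2C\delta + o_d(1)\,,
\end{align*}
and combining with the first display (using $|Z_{L\delta}|\le1$ a.s.) and averaging over $i\in V$ completes the proof.

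The hard part is stage one --- transferring state evolution from the messages to the \emph{rounded} spin. The natural test function $x\mapsto(x-\trunc(x))^2$ is pseudo-Lipschitz in $x$, but composed with the map $(u_i^1,\dots,u_i^L)\mapsto z_i^L$ its gradient grows quadratically, so it is \emph{not} pseudo-Lipschitz in the messages and \Cref{prop:state_evolution_expectation} cannot be applied to it directly. Circumventing this --- by upgrading \Cref{prop:state_evolution_expectation} to $W_2$-convergence, pushing forward through the pseudo-Lipschitz map $G_L$ while controlling its growth with uniform moment bounds, and then composing couplings --- is the crux; everything else is bookkeeping.
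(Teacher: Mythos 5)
Your proof is correct, and it takes a genuinely different route from the paper's, which is worth pointing out because your route exposes a gap in the rigor of the original argument. The paper's proof applies \Cref{prop:state_evolution_expectation} directly to the test function $\psi(x) = (x - \trunc(x))^2$ evaluated at $z_i^L$, i.e.\ to the composite map $(z_i^0, u_i^1,\dots,u_i^L) \mapsto \psi(z_i^L)$, and then applies the nearest-point bound only to the pair $(\widetilde{Z}_L^\delta, Z_{\delta L})$. You correctly observe that this composite is \emph{not} pseudo-Lipschitz in the sense of \Cref{defn:pseudolipschitz}: since $\psi$ has linear gradient growth and $\vec u \mapsto G_L(\vec u)$ also has linear gradient growth (the nonlinearities $A_i^\ell$ are bounded and Lipschitz), the composite has quadratic gradient growth, i.e.\ is pseudo-Lipschitz of order $3$ rather than order $2$. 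Moreover \Cref{prop:state_evolution_expectation} as stated is a claim about functions of the messages alone, not of $(z_i^0, \vec u)$. Your workaround — apply the nearest-point bound directly to $(z_i^L, Z_{\delta L})$ on the a.s.\ event $|Z_{\delta L}| \le 1$, upgrade \Cref{prop:state_evolution_expectation} to $W_2$-convergence of the law of $(u_i^1,\dots,u_i^L)$ (which is indeed equivalent: bounded Lipschitz functions give weak convergence, and $\|\cdot\|^2$ is pseudo-Lipschitz, so one gets convergence of second moments; cf.\ \cite[Thm.~6.9]{villani2009optimal}), push the optimal coupling through $G_L$ controlling the growth with \Cref{lemma:finite_moments_are_bounded}, and glue with the coupling of \Cref{prop:x_coupling} — is a clean rigorization. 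The paper's route is shorter, but it implicitly relies on state evolution holding for higher-order pseudo-Lipschitz test functions (which would be provable from the $W_p$-convergence of \Cref{thm:message_gaussian_convergence} for all $p$, but is not what \Cref{prop:state_evolution_expectation} states). Your stage-one coupling argument also has to handle the $o_d(1)$ mismatch in the normalizing constants between $A_i^\ell$ and the update rule for $\widetilde{Z}_\ell^\delta$, which you flag; that deserves one more sentence spelling out the induction on $\E[(x_i^\ell - X_\ell^\delta)^2]$ and $\E[(z_i^\ell - \widetilde{Z}_\ell^\delta)^2]$, but the ingredients (boundedness and Lipschitzness of $\Phi_{xx}^{\mu_\ast}$ from \Cref{lem:boundedpartials}, \Cref{cor:large_degree_limit_finite_difference}) are all in place. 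Both routes arrive at the same $C\delta + o_d(1)$ bound.
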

\begin{proof}
Consider the pseudo-Lipschitz function $\psi(x) \defeq (x - \trunc(x))^2$. By \Cref{prop:state_evolution_expectation}, 
\begin{align*}
    \E[\psi(z_i^L)] = \E[\psi(\widetilde{Z}_L^\delta)] + o_d(1)\,.
\end{align*}
We consider the behavior as $\delta$ goes to zero. Since $\psi$ is bounded by a quadratic function,  
\begin{align*}
    \E[\psi(\widetilde{Z}_L^\delta)] \le \E[|\psi(Z_{\delta L})|] + \E[|\widetilde{Z}_L^\delta - Z_{\delta L}|^2]\,.
\end{align*}
The first term is zero by \Cref{lemma:zt_atmost_1}. The statement follows by \Cref{prop:x_coupling}.
\end{proof}
We combine these statements with \Cref{thm:sde} to achieve our goal:
\begin{theorem}
    Suppose \Cref{assn:alg_minimizer_exists} holds.  Fix a factor $a$.
    Then for all $\epsilon > 0$, there exist $d_0 \in \mathbb{N}$ and $\delta_0 > 0$ such that for all $d \geq d_0$ and $0 < \delta \leq \delta_0$,
    \begin{align*}
        \frac{1}{\alpha n} \sum_{a \in E} \E\left[f(R(\vz_{\partial a}^{L}))\right] \ge \E[f] + \frac{\ALG_\xi - \epsilon}{\sqrt{\alpha}}\,.
    \end{align*}
\end{theorem}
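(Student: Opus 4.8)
The plan is to chain together the three ingredients assembled in this subsection — the exact identity $\E[\prod_{i\in S}R(z^L_i)]=\prod_{i\in S}\trunc(z^L_i)$, the per-vertex bound $\bigl|\E[(z^L_i-\trunc(z^L_i))\cdot\tfrac1d\sum_{a\in\partial i}\prod_{j\in T_{1,a}}z^L_j\prod_{j\in T_{2,a}}\trunc(z^L_j)]\bigr|=O_d(d^{-1/2})\cdot\E[(z^L_i-\trunc(z^L_i))^2]^{1/2}$, and the bound $\tfrac1n\sum_i\E[(z^L_i-\trunc(z^L_i))^2]\le C\delta+o_d(1)$ — with the lower bound of \Cref{thm:sde}, and then to choose $\delta$ small and $d$ large. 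First I would reduce randomized rounding to truncation: since $R$ acts independently across vertices, averaging over its internal randomness gives $\frac1{\alpha n}\sum_{a\in E}\E[f(R(\vz^L_{\partial a}))]=\frac1{\alpha n}\sum_{a\in E}\E[f(\trunc(\vz^L_{\partial a}))]$ exactly, so it suffices to control the truncation error $\frac1{\alpha n}\sum_{a\in E}\bigl(\E[f(\trunc(\vz^L_{\partial a}))]-\E[f(\vz^L_{\partial a})]\bigr)$.

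Next I would expand $f$ in the Fourier basis; since $f$ has no linear terms, this error is a sum over factors $a$ of at most $2^r$ terms $\hat f(S)\cdot\E[\prod_{i\in S}\trunc(z^L_i)-\prod_{i\in S}z^L_i]$ with $|S|\ge 2$. Ordering the elements of each such $S$ and telescoping, each term becomes a sum of $|S|$ contributions in which a single factor $(z^L_i-\trunc(z^L_i))$, with $i=S[\kappa]$, multiplies a product of truncated and untruncated spins over the remaining coordinates of $\partial a$. I would then swap the order of summation, summing first over the distinguished vertex $i\in V$ and then over $a\in\partial i$; writing $\frac1{\alpha n}\sum_{a\in\partial i}=\frac rn\cdot\frac1d\sum_{a\in\partial i}$ puts the inner average in exactly the shape controlled by the penultimate proposition, yielding $O_d(d^{-1/2})\cdot\E[(z^L_i-\trunc(z^L_i))^2]^{1/2}$ per vertex. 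Summing over $i$ and applying Cauchy--Schwarz (equivalently, concavity of $\sqrt{\cdot}$) to pull the square root outside the average, then invoking $\tfrac1n\sum_i\E[(z^L_i-\trunc(z^L_i))^2]\le C\delta+o_d(1)$, bounds the whole truncation error by $O_d(d^{-1/2})\cdot\sqrt{C\delta+o_d(1)}$; since $d=r\alpha$, this is $\alpha^{-1/2}\cdot O(\sqrt\delta)+o_d(\alpha^{-1/2})$.

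Finally I would combine with \Cref{thm:sde}: for $d\ge d_0$ and $\delta\le\delta_0$ the untruncated average is at least $\E[f]+\frac{\ALG_\xi-\epsilon/2}{\sqrt\alpha}$, and subtracting a truncation error of order $\frac{\sqrt\delta}{\sqrt\alpha}+o_d(\frac1{\sqrt\alpha})$ still leaves $\E[f]+\frac{\ALG_\xi-\epsilon}{\sqrt\alpha}$, provided $\delta$ is first taken small enough to absorb the $O(\sqrt\delta)$ term into $\epsilon/4$ and then $d$ large enough to kill the $o_d$ term. The identity of the first step converts this back to the desired statement about $R(\vz^L)$.

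The main obstacle is the bookkeeping in the telescoping-and-reindexing step: one must check that after expanding $f$ and telescoping, every term genuinely has the ``$(z^L_i-\trunc(z^L_i))$ times a bounded-moment product over $\partial a\setminus i$'' structure that the penultimate proposition requires (so that, in particular, the residual set $T_{1,a}\cup T_{2,a}$ is nonempty, which uses the absence of linear terms), and that all combinatorial factors — the $2^r$, the $|S|\le r$, and the $r/n$ arising from $\frac1{\alpha n}\sum_{a\in\partial i}$ — are $O_d(1)$ and do not erode the $d^{-1/2}$ and $\sqrt\delta$ gains. The genuinely analytic content, namely the per-vertex rounding bound and the $C\delta+o_d(1)$ estimate on $\tfrac1n\sum_i\E[(z^L_i-\trunc(z^L_i))^2]$, is already in hand, so nothing deep remains beyond careful accounting.
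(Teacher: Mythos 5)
Your proposal is correct and is precisely the combination of ingredients the paper intends (the paper states this theorem without an explicit proof block, implicitly leaving the reader to chain the preceding propositions as you do). Your sequence — randomized rounding $\to$ truncation via the product identity, Fourier expansion and telescoping with the no-linear-terms observation guaranteeing $T_{1,a}\cup T_{2,a}\neq\varnothing$, reindexing $\sum_{a\in E}\sum_{i\in\partial a}$ as $\sum_{i\in V}\sum_{a\in\partial i}$ to invoke the per-vertex bound, then Jensen plus the $C\delta + o_d(1)$ estimate and \Cref{thm:sde} — matches the paper's intended argument, including the bookkeeping caveats you flag at the end.
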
    
We conclude this section by showing the value of the algorithm \emph{concentrates} around its mean. This implies that the algorithm is not only good on average, but it finds a good solution with high probability as $n \to \infty$.
\begin{proposition}[{Generalization of \cite[Lemma 4.10]{ams21}}]
Fix a integer $L \ge 1$. Then there exists a constant $C$ such that for all $t \ge 0$,
    \begin{align*}
        \mathbb{P} \left(
        \left| \frac{1}{\alpha n} \sum_{a \in E} \left( f(R(\vz_{\partial a}^L)) - \E[f(R(\vz_{\partial a}^L))]
        \right)
        \right|
        \ge t
        \right) \le C e^{-nt^2/(2C)}\,.
    \end{align*}
\end{proposition}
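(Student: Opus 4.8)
The plan is to realize the algorithm's (randomized) output value as a \emph{deterministic} function of a collection of mutually independent coordinates with the bounded-difference property, and then apply McDiarmid's inequality. Fix the $d$-regular $r$-uniform girth-$(2\lfloor 1/\delta\rfloor+2)$ hypergraph. The only randomness in \Cref{fig:algorithm} is the initialization $\zeta \defeq (z_i^0)_{i \in V}$ with $z_i^0 \sim \cN(0,\delta)$ i.i.d.; the rounding $R$ adds an independent family $\theta \defeq (\theta_i)_{i \in V}$ of i.i.d. uniform $[0,1]$ variables, with $R(z_i^L) = +1$ exactly when $\theta_i \le \tfrac{1+\trunc(z_i^L)}{2}$. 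Thus $G(\zeta,\theta) \defeq \sum_{a \in E} f(R(\vz_{\partial a}^L))$ is a deterministic function of the $2n$ independent coordinates $(\zeta,\theta)$, and the quantity of interest is $\tfrac{1}{\alpha n}G(\zeta,\theta)$. The key observation is that each summand $f(R(\vz_{\partial a}^L))$ lies in $\{0,1\}$, so the Gaussians being unbounded is irrelevant: McDiarmid's bounded-difference inequality applies to arbitrary (possibly unbounded) independent coordinates provided the \emph{change} in $G$ under a change of a single coordinate is bounded.

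Next I would read off the McDiarmid coefficients from locality. By \Cref{lemma:u_is_in_G}, $z_j^L \in \mathcal{G}_j^L$, so $z_j^L$ — and hence $R(z_j^L)$ — is a function only of $\{z_k^0 : k \in \B_j(L)\}$ together with $\theta_j$. Consequently, altering one coordinate $z_i^0$ can change $f(R(\vz_{\partial a}^L))$ only for factors $a$ that contain a vertex within distance $L$ of $i$; since the hypergraph has bounded degree $d$ and uniformity $r$, the number of such factors is at most a constant $\Lambda = \Lambda(d,r,L)$, and as each summand is in $\{0,1\}$ this gives $|G(\zeta,\theta)-G(\zeta',\theta)| \le \Lambda$ whenever $\zeta,\zeta'$ differ in one coordinate. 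Similarly, altering one $\theta_i$ changes only $R(z_i^L)$, hence only the $d$ summands with $a \in \partial i$, so $|G(\zeta,\theta)-G(\zeta,\theta')| \le d$ whenever $\theta,\theta'$ differ in one coordinate.

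Finally, apply McDiarmid's inequality to $G$ over the $2n$ independent coordinates with coefficients at most $\max(\Lambda,d)=\Lambda$; the sum of squared coefficients is at most $n(\Lambda^2+d^2)$, so
\begin{align*}
\mathbb{P}\!\left( \left| \frac{1}{\alpha n}\sum_{a \in E}\left( f(R(\vz_{\partial a}^L)) - \E[f(R(\vz_{\partial a}^L))]\right)\right| \ge t \right)
= \mathbb{P}\big( |G - \E G| \ge \alpha n t\big)
\le 2\exp\!\left( -\frac{2\alpha^2 n t^2}{\Lambda^2+d^2}\right).
\end{align*}
Taking $C \defeq \max\!\big(2,\tfrac{\Lambda^2+d^2}{4\alpha^2}\big)$, a constant depending only on $d,r,L$ (recall $\alpha=d/r$), gives the claimed bound $C e^{-nt^2/(2C)}$. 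There is no serious obstacle here; the only point needing care is the locality bookkeeping in the second step (bounding the set of factors affected by one coordinate), which is immediate from \Cref{lemma:u_is_in_G} and bounded degree. Equivalently, one may run the standard Doob-martingale argument underlying McDiarmid directly, exposing the coordinates $z_i^0$ and $\theta_i$ one at a time, exactly mirroring the edge-exposure proof of \cite[Lemma 4.10]{ams21}.
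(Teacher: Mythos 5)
Your proof is correct but follows a genuinely different route from the paper's. The paper proceeds as in~\cite{ams21}: rewrite the sum per vertex, partition the vertices into $C(L)$ color classes so that any two vertices in the same class are at distance $\geq 2L+3$ (hence their contributions are independent), apply Hoeffding within each class, and union-bound over classes. You instead realize the whole quantity as a deterministic function $G(\zeta,\theta)$ of the $2n$ independent coordinates $(z_i^0)_{i}$ and the rounding seeds $(\theta_i)_i$, read off bounded differences from locality (\Cref{lemma:u_is_in_G} plus $\{0,1\}$-valued summands), and invoke McDiarmid. Both give constants $C = C(d,r,L)$ and the same functional form. Your approach is somewhat more elementary: it avoids the need to verify the existence of a suitable vertex coloring and doesn't require restating the centered sum vertex-by-vertex, and it makes explicit the (standard but easy-to-overlook) point that McDiarmid applies to unbounded coordinates as long as the differences are bounded. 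The paper's coloring approach is the direct generalization of the cited~\cite[Lemma 4.10]{ams21} and so keeps the structure parallel to prior work; it also keeps the argument phrased in terms of the algebra of local neighborhoods already developed in \Cref{sec:independence}, rather than introducing the auxiliary coupling of $R$ to uniforms. One small point worth making explicit in your writeup: your $\Lambda$ is the number of factors containing a vertex within distance $L$ of $i$, which is at most $d\cdot|\B_i(L)| = O\big(d(dr)^L\big)$; and when bounding the $\theta_i$-differences, changing $\theta_i$ affects only the $d$ factors in $\partial i$, so the $c_k$ for those coordinates is $d$, both constants for fixed $d,r,L$ as claimed.
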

\begin{proof}
Partition the nodes of the graph $G$ into $C = C(L)$ sets $\{W_1, \dots, W_{C}\}$ such that any two nodes $i,j \in W_q$ have distance at least $2L+3$. Then the value $\sum_{a \in E} \left( f(R(\vz_{\partial a}^L)) - \E[f] \right)$ can be split into $C$ sums as in
\begin{align*}
    \Delta \defeq \sum_{a \in E} \left( f(R(\vz_{\partial a}^L)) - \E[f] \right)
    = 
    \frac{1}{r} \sum_{i \in V} R(z_i^L) \sum_{a \in \partial i} \D_{i;a} f(R(\vz_{\partial a}^{L}))
    =
    \frac{1}{r} \sum_{\iota=1}^C   \Delta_\iota\,,
\end{align*}
where $\Delta_\iota \defeq \sum_{i \in W_\iota} R(z_i^L) \sum_{a \in \partial i} \D_{i;a} f(R(\vz_{\partial a}^{L}))$.
By a union bound, $\mathbb{P}( (\Delta - \E[\Delta]) \ge t \alpha n) \le \sum_{\iota = 1}^C \mathbb{P}( (\Delta_\iota - \E[\Delta_\iota]) \ge t d n / C)$. Since each $\Delta_\iota$ is a sum of up to $n$ independent variables each bounded by $d$, this probability is at most $C\exp(-\frac{(tdn/C)^2}{8nd^2}) = C\exp(-\frac{nt^2}{8C^2})$.
\end{proof}

\section{Open questions}
The first challenge is to either rigorously prove \Cref{assn:alg_minimizer_exists}, or mathematize the continuity argument that using a near-minimizer $\widetilde{\mu}_\ast \in \mathscr{L}_\xi$ in the algorithm incurs negligible error. Although such a result is expected (see \cite[Section 6]{ams20}), it has not been written down. This would address the chief concern of \Cref{cor:no_quantum_advantage}.

As mentioned before, our algorithm assumes the CSP predicate $f$ has no linear terms; e.g. $f$ cannot represent $k$-SAT. We expect that a preprocessing step fixes this issue, similar to \cite{sellke2021optimizing}'s improvement to \cite{ams20}.

Current techniques obstructing overlap-concentrated algorithms~\cite{huang2022tight,jmss22} are restricted to \emph{even} Ising spin glasses and CSP predicates $f$. However, a very recent technique~\cite{huang2023algorithmic} on spherical spin glasses seems to circumvent this issue. Applying this technique to Ising spin glasses could lead to obstructions beyond even models, which automatically transfer to $\cspfa$ by \cite{jmss22}. We suspect that message-passing is optimal among \emph{overlap-concentrated} algorithms for \emph{all} Ising spin glasses and CSP predicates $f$.

Furthermore, this work deepens a curious connection between $\cspfa$ and Ising spin glasses~\cite{dms15,jmss22}. For example, this algorithm and the one in \cite{ams20} are very similar, and achieve the same performance up to affine shift. In fact, the same is true for the QAOA for any constant depth and choice of hyperparameters~\cite{basso2021quantum,basso_spinglass_qaoa}! (This is known for monomial predicates $f$ but likely extends to all predicates.)
Perhaps there is a way to characterize the equivalence in energy landscape that explains why some algorithms ``can't see the difference'' between the two models. This equivalence extends beyond overlap gap phenomena, which are defined only at near-optimal energies.

It seems that understanding optimal algorithms for $\cspfa$ without asymptotics in $\alpha$ requires new techniques.

\section*{Acknowledgements}
Thanks to 
Ahmed El Alaoui,
Brice Huang,
Chris Jones,
Juspreet Singh Sandhu,
Mark Sellke,
and Subhabrata Sen
for answering questions about the overlap gap property, graphical models, and message-passing algorithms.
Thanks to Chris Jones for comments on a draft of this manuscript.

AC would like to thank Lorenzo Orecchia for his input on the presentation, Sekhar Tatikonda for references on message-passing, and discussion on proving state evolution, and Jonathan Shi for discussion on choosing non-linearities. AC would also like to thank Marc Mezard; content taught during his 2022 course on Statistical Physics provided crucial insight when first designing the algorithm present in this paper.

This work was done in part while KM visited the Simons Institute for the Theory of Computing. AC and KM are supported by the National Science Foundation Graduate Research Fellowship Program under Grant No. DGE-2140001. Any opinions, findings, and conclusions or recommendations expressed in this material are those of the author(s) and do not necessarily reflect the views of the National Science Foundation.

\clearpage
\newpage
\bibliography{_references}
\bibliographystyle{_alpha-betta-url}
\clearpage
\newpage
\appendix


\section{Equivalence of sparse random CSP models}
\label{sec:indexregular_vs_regular_vs_avgdegree}

We demonstrate that a CSP instance with $\alpha n$ randomly chosen clauses can be converted to an index-regular instance by changing only a $o_d \big( \frac{1}{\sqrt{d}} \big)$ fraction of edges. Our reduction follows a style that has appeared several times before, for example in~\cite{dms15,sen16}.

Given a $f: \{\pm 1\}^r \to \{0,1\}$, let $\mathcal{I}$ be an instance drawn from $\cspfa$. 
For $d \defeq r \cdot \alpha$ the average degree of a variable in $\mathcal{I}$, we define parameters $\alpha' \defeq \left\lceil \frac{d + d^{1/2}\log d}{r} \right\rceil$ and $d' \defeq r \cdot \alpha'$.
For every variable $v$ in $\mathcal{I}$ and $\iota \in [r]$, let $\deg(v,\iota)$ be the number of times $v$ appears in the $\iota^{\text{th}}$ index of a clause. Finally, fix a radius $L \in \mathbb{N}$.

The modification proceeds as follows:
\begin{enumerate}
\item First, remove clauses until $\deg(v, \iota) \le \alpha'$ for all variables $v$ and $\iota \in [r]$.

\item Then, add clauses to variables until $\deg(v, \iota) = \alpha'$ for all variables $v$ and $\iota \in [r]$. We add a clause by sampling its variables: for the $\iota^{\text{th}}$ index, we choose each variable $v$ with probability proportional to $\alpha' - \deg(v, \iota)$.
\end{enumerate} 
We show that, with high probability, this process only removes a $o_d \big( \frac{1}{\sqrt{d}} \big)$ fraction of initial clauses, and all but a vanishing fraction of radius-$L$ neighborhoods in the modified instance are locally treelike.
As a result, when we run the algorithm on the modified instance, the asymptotic performance up to $O_d \big( \frac{1}{\sqrt{d}} \big)$ does not change.\footnote{Note that the degree increase from $d$ to $d'$ also preserves the asymptotic performance, since $O_d(d') = O_d(d)$.}

We start by counting the number of removed clauses.\footnote{We do not worry about counting the number of \emph{added} clauses. This is because the algorithm cannot distinguish them from the initial clauses when the neighborhoods are locally treelike.}
In the first phase, as long as $\deg(v, \iota) > \alpha'$ for some $v$ and $\iota$, we pick a pair $(v, \iota)$ and remove an arbitrary clause in which $v$ appears in the $\iota^{\text{th}}$ index. 

\begin{proposition}
    With high probability over choice of $~\mathcal{I}$ as $n \to \infty$, the number of removed clauses is $n \cdot O_d \big( \frac{1}{d^{c \log d}} \big)$ for some $c > 0$.
\end{proposition}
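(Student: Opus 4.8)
The plan is to show that, with high probability, very few variable-index pairs $(v,\iota)$ exceed the threshold $\alpha'$, and that each such pair contributes only a bounded number of removed clauses. Fix a variable $v$ and an index $\iota \in [r]$. By the generative process of $\cspfa$, in each of the $m = \alpha n$ clauses the $\iota$th coordinate is a uniformly random variable in $[n]$, so $\deg(v,\iota)$ is a sum of $m$ i.i.d.\ Bernoulli$(1/n)$ random variables, hence $\deg(v,\iota) \sim \mathrm{Binomial}(m, 1/n)$ with mean $\alpha = d/r$. (One could instead invoke the usual Poissonization, but the binomial tail bound suffices.) The threshold is set at $\alpha' = \lceil (d + d^{1/2}\log d)/r\rceil = \alpha + \Theta(\sqrt{\alpha}\log\alpha)$, i.e.\ roughly $\sqrt{\log\alpha}$ standard deviations above the mean. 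A Chernoff/Bennett bound for the upper tail of a binomial then gives
\begin{align*}
\mathbb{P}\big(\deg(v,\iota) \ge \alpha'\big) \le \exp\!\big(-\Omega((\sqrt{\alpha}\log\alpha)^2/\alpha)\big) = \exp\!\big(-\Omega(\log^2\alpha)\big) = \alpha^{-\Omega(\log\alpha)}\,.
\end{align*}
Translating $\alpha$ to $d$ (they differ by the constant factor $r$), this is $d^{-c'\log d}$ for some constant $c' > 0$.

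Next I would control the total number of removed clauses. A union bound over the $rn$ pairs $(v,\iota)$ shows that the expected number of overloaded pairs is at most $rn \cdot d^{-c'\log d} = n \cdot O_d(d^{-c'\log d})$ (the factor $r$ is an $O_d(1)$ absorbed into the constant); by Markov, with high probability the number of overloaded pairs is $n \cdot O_d(d^{-c\log d})$ for a slightly smaller $c > 0$. Each time we process an overloaded pair $(v,\iota)$ in phase 1 we delete one clause and decrement $\deg(v,\iota)$ by one; in the worst case a pair that started at $\deg(v,\iota)$ requires $\deg(v,\iota) - \alpha'$ deletions. To bound this I would also establish a crude global cap, namely that with high probability $\max_{v,\iota}\deg(v,\iota) = O(\log n / \log\log n)$ or even just $O(\sqrt{\alpha}\log n)$ via the same binomial tail bound together with a union bound over all $rn$ pairs — this is the standard maximum-load estimate for balls in bins. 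Hence each overloaded pair contributes at most $\mathrm{poly}(d,\log n)$ removals, and since deleting a clause can decrease at most $r$ of the counts $\deg(\cdot,\cdot)$, the total number of removed clauses is at most (number of overloaded pairs) times (max excess), which is $n \cdot O_d(d^{-c\log d}) \cdot \mathrm{poly}(\log n) = n \cdot O_d(d^{-c\log d})$ after adjusting $c$ downward; the $\mathrm{poly}(\log n)$ factor is negligible since $d$ is a constant and $d^{-c\log d}$ is a fixed constant, so any $\mathrm{poly}(\log n)$ factor is absorbed in the with-high-probability statement as $n\to\infty$. Care is needed here: one wants the bound to be genuinely $n\cdot O_d(d^{-c\log d})$ as a function of $d$, uniformly in $n$, so I would phrase the phase-1 analysis so that the excess at each overloaded pair is bounded by an $O_d(1)$ quantity with high probability rather than by $\mathrm{poly}(\log n)$ — this is achievable since $\deg(v,\iota)$ conditioned on exceeding $\alpha'$ still has an exponentially decaying upper tail, so the expected total excess over all pairs is itself $n\cdot O_d(d^{-c\log d})$, and Markov finishes.

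The main obstacle is the bookkeeping in the previous paragraph: making sure the ``number of removals'' is bounded by a clean $n \cdot O_d(d^{-c\log d})$ term rather than something with an extra $n$- or $\log n$-dependence. The cleanest route is to bound $\E\big[\sum_{v,\iota}(\deg(v,\iota) - \alpha')_+\big]$ directly — this is a sum of $rn$ i.i.d.\ truncated-binomial expectations, each of which is $\sum_{k \ge \alpha'}(k-\alpha')\mathbb{P}(\deg(v,\iota)=k)$, and the geometric-like decay of the binomial upper tail beyond $\alpha'$ makes this at most $O_d(1)\cdot\mathbb{P}(\deg(v,\iota)\ge\alpha') = O_d(d^{-c'\log d})$ per pair. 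Summing and applying Markov gives the claim with high probability, with no stray $n$ or $\log n$ factors. The locally-treelike claim and the preservation of algorithmic performance are handled separately (and are not part of this proposition's statement), so I would not address them here.
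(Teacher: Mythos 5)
Your reduction to bounding $\E\!\left[\sum_{v,\iota}(\deg(v,\iota)-\alpha')_+\right]$ is the same first move the paper makes, and your per-pair moment estimate (tail bound times an $O_d(\mathrm{poly}(d))$ factor, absorbed into a smaller exponent $c$) is sound. But the last step has a genuine gap: Markov's inequality applied to $\widetilde{\Delta} \defeq \sum_{v,\iota}(\deg(v,\iota)-\alpha')_+$ cannot deliver the stated ``with high probability as $n \to \infty$'' claim. If $\E[\widetilde{\Delta}] = n\cdot h(d)$ with $h(d)=O_d(d^{-c'\log d})$, Markov gives $\Pr(\widetilde{\Delta} > n\cdot g(d)) \le h(d)/g(d)$, which is a constant in $n$ (small in $d$, perhaps, but not vanishing as $n\to\infty$). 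To get a vanishing failure probability at a threshold of the form $n\cdot O_d(d^{-c\log d})$ with no extra $n$-dependent slack, you need a concentration argument beyond the first moment.

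The paper closes exactly this gap by also bounding $\E[\Delta_{v,\iota}^2]$, then controlling $\Var(\widetilde{\Delta})$, and applying Chebyshev. The crucial observation there — which your sketch skips over and in fact contradicts by calling the $\deg(v,\iota)$ ``i.i.d.'' — is that for fixed $\iota$ the collection $\{\deg(v,\iota)\}_{v\in[n]}$ is multinomial (they sum to $m$), hence \emph{not} independent but \emph{negatively} correlated, so $\mathrm{Cov}(\Delta_{v,\iota},\Delta_{w,\iota})\le 0$. Together with the genuine independence across distinct indices $\iota$, this gives $\Var(\widetilde{\Delta}) \le rn\cdot\E[\Delta_{v,\iota}^2] = n\cdot O_d(d^{-c\log d})$, i.e. variance scaling like $n$ rather than $n^2$. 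Chebyshev at a threshold of the same order as the mean then gives failure probability $O(1/(n\cdot O_d(d^{-c\log d})))\to 0$ as $n\to\infty$ for each fixed $d$. You should replace your final Markov step with this second-moment bound, and correct the independence claim: the dependence structure is precisely what makes the variance bound work.
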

\begin{proof}
Note that identically for each variable $v$ and $\iota \in [r]$, the quantity $\deg(v, \iota)$ is initially distributed as the binomial distribution $\text{Bin}(\alpha n, \frac{1}{n})$. We study the number of clauses removed because $\deg(v,\iota) > \alpha'$. This value is at most $\Delta_{v,\iota} \defeq \max(\deg(v, \iota) - \alpha', 0)$. We bound the first and second moment of $\Delta_{v,\iota}$:
\begin{align*}
    \E[ \Delta_{v,\iota} ] 
    &= \sum_{k > \alpha'} \pr\big( \deg(v, \iota) = k \big) \cdot (k - \alpha')
    = \sum_{k > \alpha'} \pr\big( \deg(v, \iota) \geq k \big) 
    \leq \int_{\alpha'}^\infty \pr \big( \deg(v, \iota) \geq t \big) \mathrm{d}t\,,
    \\
    \E[ \Delta_{v,\iota}^2 ] 
    &= \sum_{k > \alpha'} \pr \big( \deg(v, \iota) = k \big) \cdot (k - \alpha')^2
    \leq \int_{\alpha'}^\infty 2(t-\alpha') \pr \big( \deg(v, \iota) \geq t \big) \mathrm{d}t \,.
\end{align*}
Note that both of these quantities are at most $\int_{\alpha'}^\infty 2(t-\alpha) \pr \big( \deg(v, \iota) \geq t \big) \mathrm{d}t$.
Chernoff's bound says that for $t > \alpha$, we have $\pr \big( \deg(v, i) \geq t \big) \leq 2 \cdot \exp(\frac{-(t - \alpha)^2}{3\alpha})$. It follows that
\begin{align*}
\int_{\alpha'}^\infty 2(t - \alpha) \cdot \exp\left( -\frac{(t - \alpha)^2}{3\alpha} \right) \mathrm{d}t
= 3\alpha \cdot \exp\left(-\frac{(\alpha' -\alpha)^2}{3\alpha}\right) 
= O_d\left(d \cdot \exp(-c' \frac{(d^{1/2} \log d)^2}{d})\right) \,,
\end{align*}
for some $c' > 0$. So both $\E[\Delta_{v,\iota}]$ and $\E[\Delta_{v,\iota}^2]$ are bounded by $O_d \big( \frac{1}{d^{c \log d}} \big)$ for some $c > 0$.
 
Now we consider the total number of removed clauses $\Delta$. By a union bound, $\Delta \le \widetilde{\Delta}$, where $\widetilde{\Delta} \defeq \sum_{v \in [n], \iota \in [r]} \Delta_{v, \iota}$. By linearity, $\E[\widetilde{\Delta}] = rn \cdot \E[\Delta_{v, \iota}]$. Note that $\Delta_{v,\iota}$ and $\Delta_{w,\kappa}$ are independent when $\iota \ne \kappa$, so
\begin{align*}
\Var\big( \widetilde{\Delta} \big) 
= \Var \bigg( \sum_{v \in [n], \iota \in [r]} \Delta_{v, \iota} \bigg) 
\le r n \cdot \E[\Delta_{v, \iota}^2] + 2 r \cdot \sum_{w,v \in [n]} \textup{Cov} \big( \Delta_{v, \iota}, \Delta_{w, \iota} \big) \,.
\end{align*}
Since $\{\deg(v, \iota)\}_{v \in [n]}$ is distributed as a multinomial distribution, the covariance term is negative. It follows that both the mean and variance of $\widetilde{\Delta}$ are bounded by $n \cdot O_d(\frac{1}{d^{c \log d}})$ for some $c > 0$. Thus, with high probability as $n \to \infty$, $\widetilde{\Delta}$ (and thus $\Delta$) is at most $n \cdot O_d \big( \frac{1}{d^{c \log d}} \big)$ for some $c > 0$.
\end{proof}

Now we ensure that nearly all radius-$L$ neighborhoods in the modified instance are locally treelike. We start by noting that nearly all $L$-local neighborhoods in the initial instance are treelike:
\begin{proposition}
Fix any $d$ and $L$. Then with high probability over choice of $~\mathcal{I}$ as $n \to \infty$, a $1 - o(1)$ fraction of variables' $L$-local neighborhoods are treelike.
\end{proposition}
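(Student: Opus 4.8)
The plan is to bound the expected number of ``bad'' variables---those $v$ for which the factor graph of $\B_v(L)$ contains a cycle---and then finish with Markov's inequality. By exchangeability of the variables in $\csp_f(\alpha)$ and linearity of expectation, $\E\big[\#\{v : \B_v(L)\text{ not a tree}\}\big] = n\cdot\mathbb{P}\big(\B_1(L)\text{ not a tree}\big)$, so it suffices to show $\mathbb{P}\big(\B_1(L)\text{ not a tree}\big) = O_{d,L,r}(1/n)$.

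First I would run a breadth-first exploration of $\B_1(L)$ in the factor graph, revealing clauses incident to the current frontier and the identities of their coordinate-variables one at a time, organized by hypergraph-distance from vertex $1$. I would couple this exploration with an idealized branching process $\mathcal{T}$ in which every newly-encountered coordinate of every revealed clause is declared to be a fresh variable: variable-nodes of $\mathcal{T}$ then have offspring counts dominated by the degree of a variable in $\csp_f(\alpha)$, itself dominated by $\mathrm{Bin}(\alpha n, r/n)$, which has all moments bounded uniformly in $n$, while each clause-node has exactly $r-1$ children. Since $\mathcal{T}$ has at most $L$ generations of variable-nodes, $|\mathcal{T}|$ has all moments bounded by constants depending only on $d,L,r$. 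The explored subgraph coincides with $\mathcal{T}$ unless a \emph{collision} occurs, i.e.\ unless some revealed clause has a coordinate-variable equal to a previously-revealed variable or to another fresh coordinate-variable among the revealed clauses; conversely, if $\B_1(L)$ is not a tree, such a collision must occur during the exploration. Conditioned on $\mathcal{T}$, there are $O(r^2|\mathcal{T}|^2)$ candidate collisions, each of conditional probability $O(1/n)$ since fresh coordinates are uniform on $[n]$, so $\mathbb{P}(\text{collision}\mid\mathcal{T}) = O(r^2|\mathcal{T}|^2/n)$. Taking expectations, $\mathbb{P}\big(\B_1(L)\text{ not a tree}\big) \le \E\big[O(r^2|\mathcal{T}|^2/n)\big] = O_{d,L,r}(1/n)$.

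Combining, $\E[\#\text{bad}] = O_{d,L,r}(1)$, so Markov's inequality gives $\mathbb{P}(\#\text{bad}\ge n^{1/2}) = O_{d,L,r}(n^{-1/2})\to 0$; since $n^{1/2}=o(n)$, with high probability a $1-o(1)$ fraction of variables have treelike $L$-neighborhoods.

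The hard part will be making the coupling precise while keeping all bounds \emph{uniform in $n$}---in particular, checking that the variable-degree distribution in $\csp_f(\alpha)$ has moments bounded independently of $n$ (so that $\E[|\mathcal{T}|^2]=O_{d,L,r}(1)$), and handling the sequential conditioning carefully when summing the per-collision probabilities. It is worth noting that the more naive route---a union bound over ``cycle-with-tail'' subgraph shapes rooted at $v$---fails here, since a cycle contained in $\B_v(L)$ can be as long as $|\B_v(L)|$ and such shapes are badly overcounted; the exploration/coupling argument is precisely what circumvents this. Everything else (exchangeability, the Markov step) is routine.
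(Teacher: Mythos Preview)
Your exploration-and-coupling argument is correct and is the textbook approach to local treelikeness in sparse random (hyper)graphs; it yields the sharp bound $\E[\#\text{bad}]=O_{d,L,r}(1)$ directly. The paper takes a different, two-step route: first it shows that for each fixed $k$ the expected number of length-$k$ cycles in the factor graph is $O_k(1)$, hence by Markov there are at most $\log n$ cycles of any bounded length w.h.p.; separately it shows that all vertex degrees are at most $\log n$ w.h.p., so every $\B_v(L)$ has size at most $\log^c n$; finally, since each short cycle can lie inside at most $\log^c n$ many $L$-neighborhoods, at most $\log^{c+1}n=o(n)$ vertices are bad. Your argument is cleaner and gives the better quantitative conclusion; the paper's avoids the coupling bookkeeping at the cost of more moving parts and a weaker bound.

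One correction to your closing commentary: the ``naive'' cycle-with-tail union bound does \emph{not} fail here. If the factor graph of $\B_v(L)$ is not a tree, a BFS from $v$ meets a non-tree edge at factor-graph depth at most $2L+1$, which closes a cycle of factor-graph length at most $4L+3$ lying within factor-graph distance $2L+1$ of $v$. So it suffices to union-bound over cycle-plus-tail shapes of \emph{bounded} size (depending only on $L$ and $r$), each occurring with probability $O_{d,L,r}(1/n)$; this is precisely the mechanism behind the paper's cycle-counting step. Your exploration argument is still preferable for clarity, but the alternative you dismissed is in fact viable, and is essentially what the paper does.
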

\begin{proof}
For every fixed $k \in \mathbb{N}$, we first show that the number of cycles of length $k$ is small compared to $n$. There are $O \big( k!\cdot n^k \big) = O \big( n^k \big)$ size-$k$ arrangements of variables as $[v_1, \dots, v_k]$, and $O \big( k! \cdot (\alpha n)^k \bigg) = O \big( n^k \big)$ size-$k$ arrangements of clauses $[a_1, \dots, a_k]$. The chance that each clause $a_i$ involves the variables $\{v_i, v_{i + 1~\text{mod}~k}\}$ is $O \big( (\frac{1}{n^2} )^k \big)$. By linearity, the average number of length-$k$ cycles is $O \big( \frac{n^{2k}}{n^{2k}} \big) = O(1)$, and so the average number of constant-length cycles is also $O(1)$.
By Markov's inequality, the chance that there exist more than $\log n$ constant-length cycles is at most $O \big( \frac{1}{\log n} \big)$, which vanishes as $n \to \infty$.

As $n \to \infty$, the degree of each variable is Poisson distributed with mean $d$. By a union bound, all variables have degree at most $\log n$ with high probability as $n \to \infty$. So the $L$-local neighborhood of all variables are at most $\log^c n$-sized for some $c > 0$. Using the argument above, at most $O \big( \frac{\log n}{n/\log^c n} \big) = o(1)$ of these variables can include a cycle in the $L$-local neighborhood; the others must have locally treelike neighborhoods.
\end{proof}
Removing clauses does not create any cycles. Moreover, only a small number of $L$-local neighborhoods are no longer treelike after adding clauses:
\begin{proposition}
Fix any $d$ and $L$. Then with high probability as $n \to \infty$, adding clauses preserves treelikeness in a $1 - o(1)$ fraction of $L$-local neighborhoods.
\end{proposition}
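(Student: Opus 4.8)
The plan is to condition on the high-probability events already established — every variable of $\mathcal{I}$ has degree at most $\log n$, phase $1$ removes only $o(n)$ clauses, and the expected number of constant-length cycles in $\mathcal{I}$ is $O(1)$ — and then show that phase $2$ creates only $o(n)$ new short cycles, which in turn ruins only $o(n)$ local neighborhoods. Write $G_0$ for the instance after phase $1$ and $a_1,\dots,a_M$ for the clauses added in phase $2$, in insertion order, with $G_t$ the instance after inserting $a_1,\dots,a_t$. Since phase $1$ forces every index-degree to be at most $\alpha'$ and phase $2$ only raises index-degrees to exactly $\alpha'$, every variable in $G_0,\dots,G_M$ has total degree at most $d'=r\alpha'$, so every radius-$\ell$ neighborhood in these instances has at most a constant $N_\ell=N_\ell(d',r)$ variables; also $M=n\alpha'-m\le n\alpha'=O_d(n)$, where $m$ is the number of clauses surviving phase $1$.

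The first step is to replace ``not locally treelike'' by ``near a short cycle''. Because $G_0,\dots,G_M$ have degree at most $d'$, there is a constant $C_0=C_0(d',r,L)$ such that the factor graph of $\B_v(L+1)$ fails to be a tree only if some cycle of length at most $C_0$ lies within distance $C_0$ of $v$, while each such cycle is within distance $C_0$ of at most $C_0 N_{C_0}$ variables; hence it suffices to prove
\[
\E\big[\#\{\text{cycles of length }\le C_0\text{ in }G_M\}\big] = O_{d,r,L}(\log n) = o(n)\,,
\]
after which Markov's inequality finishes the proof.

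The second step splits these short cycles into those already present in $G_0$ and those using an added clause. Removing clauses never creates cycles, so the first kind has expectation at most $\E[\#\text{short cycles in }\mathcal{I}]=O(1)$. For a short cycle $\gamma$ of the second kind, let $a_{t(\gamma)}$ be the last added clause on $\gamma$; when it is inserted, the rest of $\gamma$ is a path of length $\le C_0$ in $G_{t(\gamma)-1}$ joining two incidences of $a_{t(\gamma)}$, so $\partial a_{t(\gamma)}$ either repeats a variable or contains two variables at $G_{t(\gamma)-1}$-distance $\le C_0$; call such a step \emph{bad}. Each bad step $t$ is $t(\gamma)$ for at most $N_{C_0}^2$ short cycles, so the number of short cycles of the second kind is at most $N_{C_0}^2\cdot\#\{\text{bad steps}\}$. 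To bound the bad steps, I would use that $a_t$ is formed by sampling, independently over the $r$ indices, a variable in index $\kappa$ with probability proportional to $\alpha'-\deg_{G_{t-1}}(\cdot,\kappa)$; the normalizing weight is $M-(t-1)$ for every index, so any fixed variable is chosen at any fixed index with conditional probability at most $\alpha'/(M-t+1)$, regardless of $G_{t-1}$. A union bound over the $\binom r2$ index pairs and the $\le N_{C_0}$ variables near a given one gives $\pr[t\text{ bad}\mid G_{t-1}]\le \binom r2(N_{C_0}+1)\alpha'/(M-t+1)$, and summing using $\sum_{t=1}^M 1/(M-t+1)=O(\log M)=O_d(\log n)$ bounds $\E[\#\{\text{bad steps}\}]$ by $O_{d,r,L}(\log n)$, completing the argument.

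The hard part is the adaptive, non-uniform sampling in phase $2$: the probability that a single added clause closes a short cycle is not uniformly $O(1/n)$, since the normalizing weight $M-(t-1)$ collapses to $O(1)$ for the final insertions, which are then likely to be bad. This is exactly what the harmonic sum $\sum_t 1/(M-t+1)=O(\log n)$ absorbs, and reducing to a count of short cycles — rather than tracking step by step which neighborhoods acquire a cycle, where a cycle made visible near $a_t$ by an earlier shortcut would be awkward to charge — is what makes the estimate suffice, since each short cycle of $G_M$ is charged to exactly one (bad) insertion.
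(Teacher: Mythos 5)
Your proof is correct, and it takes a genuinely different and more careful route than the paper's. The paper argues per vertex: for a fixed $v$, it asserts that the chance two variables of $\B_v(L)$ land in the same added clause is $O(1/n)$, then applies linearity and Markov. That $O(1/n)$ estimate implicitly treats the phase-$2$ insertions as roughly uniform over variables, which — as you explicitly flag — fails for the last insertions, when the pool weight $M-(t-1)$ collapses to $O(1)$ and a single slot may be forced. Your approach sidesteps this by counting short cycles globally rather than per-vertex: you reduce ``not treelike'' to ``near a cycle of length $\le C_0$'', charge each new short cycle in $G_M$ to the \emph{last} added clause on it (so no cycle is double-counted), and then control the expected number of ``bad'' insertions. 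The key technical point, that the per-index normalizing weight equals $M - t + 1$ exactly (since $\sum_v \deg_{G_{t-1}}(v,\iota)$ counts the clauses of $G_{t-1}$ for every $\iota$), gives the uniform bound $\alpha'/(M-t+1)$ on the conditional probability of any fixed variable being chosen at any fixed index, regardless of the history $G_{t-1}$. Summing $\sum_t 1/(M-t+1) = O(\log n)$ then yields an $O_{d,r,L}(\log n)$ bound on bad steps, hence on new short cycles, hence on non-treelike neighborhoods, and Markov finishes. What the paper's shorter argument buys is brevity; what yours buys is that it actually handles the adaptive, depleting sampling, where the literal per-vertex $O(1/n)$ claim is too strong (the honest per-vertex rate is more like $O(\log n/n)$), while still landing comfortably in $o(n)$.
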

\begin{proof}
Consider the instance after removing clauses, but before adding clauses. Then every variable $v$ has degree at most $d'$, and so for fixed $d$ and $L$, the $L$-local neighborhood $\B_v(L)$ around any variable $v$ is of constant size.

We add at least $\alpha'n - \alpha n = \Theta(n)$ clauses; but only a constant number of clauses can be added to variables in $\B_v(L)$. As a result, during this process, the chance that two variables in $\B_v(L)$ are involved the same \emph{added} clause is $O\big( \frac{1}{n} \big)$. 

Thus, on average, at most a constant number of local neighborhoods are no longer treelike.
By Markov's inequality, the chance that $\log n$ local neighborhoods are no longer treelike is at most $O \big( \frac{1}{\log n} \big)$, which vanishes as $n \to \infty$. \qedhere
\end{proof}
\clearpage
\newpage

\section{Computing moments of algorithmic quantities}
\label{sec:appendix_moments}

First, we show that all quantities in the algorithm have moments bounded uniformly over $d$. We use this to show that the node and node-to-factor quantities are similar, assuming \Cref{hyp:A_deviation_small}. Second, we show that $\big\{ u_{i \to a}^{\ell} \big\}_{\ell \in [L]}$ approach Gaussian random variables as $d \to \infty$; this allows us to prove a state evolution statement in \Cref{sec:appendix_stateevo}.

\subsection{All moments are bounded}

\begin{lemma}
\label{lemma:finite_moments_are_bounded}
   Fix any $L \ge 1$. Then for every $k\in \mathbb{N}$, there is a constant $C_{k,L}$, independent of $d$, such that
   for every factor $a$ and every quantity $x \in \{w_i^\ell, w_{i \to a}^\ell, z_i^\ell, z_{i \to a}^\ell  \mid i \in \partial a, 0 \leq \ell \leq L\}$, we have $\E[|x|^k] \le C_{k,L}$.
\end{lemma}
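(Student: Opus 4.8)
The plan is to induct on the iteration number $\ell$, proving simultaneously for all $k \in \mathbb{N}$ that the $k$-th moments of $w_i^\ell, w_{i\to a}^\ell, z_i^\ell, z_{i\to a}^\ell$ are bounded by a constant $C_{k,L}$ independent of $d$. The base case $\ell = 0$ is immediate: $w_i^0 = w_{i\to a}^0 = 0$, and $z_i^0, z_{i\to a}^0 \sim \mathcal{N}(0,\delta)$, whose moments depend only on $\delta$ (hence on $L$), not on $d$. For the inductive step, I would handle $w$ first, then $z$. For $w_{i\to a}^{\ell+1} = \frac{1}{\sqrt{d-1}}\sum_{b \in \partial i \setminus a} \D_{i;b} f(\vz_{\partial b \to b}^\ell)$, expand via the Fourier representation so that each summand $\D_{i;b} f(\vz_{\partial b\to b}^\ell)$ is a bounded linear combination of monomials $\prod_{j \in S\setminus i} z_{j\to b}^\ell$ over $S \subseteq \partial b$ with $i \in S$; by Remark~\ref{remark:f_derivative_doesnt_use_i} these monomials do not involve $z_{i\to b}^\ell$ itself.

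The key point is a \emph{conditional centering plus independence} argument to kill the $\sqrt{d}$ normalization. Writing $w_{i\to a}^{\ell+1} = \frac{1}{\sqrt{d-1}} \sum_{b} X_b$ with $X_b \defeq \D_{i;b} f(\vz_{\partial b\to b}^\ell)$, I would compute the $k$-th moment by expanding $\E\big[(\sum_b X_b)^k\big]$ into a sum over $k$-tuples $(b_1,\dots,b_k)$ of terms $\E[X_{b_1}\cdots X_{b_k}]$. By Lemma~\ref{lemma:SigmaAlgebraIndependence} around factor $i$ (i.e.\ the case where the ``factor'' is the vertex $i$, so the $X_b$ for distinct $b \in \partial i$ depend on disjoint sets of initial Gaussians), the variables $\{X_b\}_{b \in \partial i \setminus a}$ are mutually independent; moreover each is centered, since $\E[X_b] = \D_{i;b} f(0,\dots,0) = 0$ as $f$ has no linear terms (exactly as in the proof of Lemma~\ref{lemma:uy_is_zero}). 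Therefore a term $\E[X_{b_1}\cdots X_{b_k}]$ vanishes unless every value appearing among $b_1,\dots,b_k$ appears at least twice, which forces at most $\lfloor k/2 \rfloor$ distinct indices. The number of such $k$-tuples is $O_d(d^{\lfloor k/2\rfloor})$, and each term is bounded in absolute value by a product of factors of the form $\E[|X_b|^{m}]$ for various $m \le k$ (by generalized Hölder), which is bounded by a constant depending only on $k$ and $L$ via the inductive hypothesis on the moments of $z_{j\to b}^\ell$ together with boundedness of $\hat f$. Dividing by $(d-1)^{k/2}$, the whole expression is $O_d(1)$, giving $\E[|w_{i\to a}^{\ell+1}|^k] \le C_{k,L}$; the argument for $w_i^{\ell+1}$ (sum over all of $\partial i$, normalization $1/\sqrt d$) is identical.

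For $z_{i\to a}^{\ell+1} = z_i^0 + \sum_{s=1}^{\ell+1} A_{i\to a}^{s-1} u_{i\to a}^s$, I would observe that each $u_{i\to a}^s = w_{i\to a}^s - w_{i\to a}^{s-1}$ has bounded moments by the bound just established (and Minkowski's inequality), and each $A_{i\to a}^{s-1}$ is bounded by the constant $K$ from Figure~\ref{fig:algorithm}; so $z_{i\to a}^{\ell+1}$ is a sum of $O_L(1)$ terms each with bounded $k$-th moment, hence has bounded $k$-th moment by Minkowski again. This closes the induction; taking $C_{k,L}$ to be the maximum over the finitely many $\ell \le L$ and over the relevant quantity types finishes the proof. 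The main obstacle — and the only genuinely non-routine step — is the cancellation argument of the previous paragraph: one must be careful that the relevant independence is the independence of the $X_b$ across \emph{distinct factors} $b \in \partial i$ (which follows from local treelikeness and disjointness of the generating Gaussian sets), and that centering genuinely holds, which is exactly where the no-linear-terms hypothesis $\|f^{=1}\|^2 = 0$ is used; without these two facts the naive bound would only give $\E[|w|^k] = O_d(d^{k/2})$, which is useless.
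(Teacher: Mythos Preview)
Your proposal is correct and follows essentially the same approach as the paper: induct on $\ell$, expand $(\sum_b X_b)^k$ for $X_b = \D_{i;b} f(\vz_{\partial b\to b}^\ell)$, use independence across $b$ together with $\E[X_b]=0$ to kill all tuples with a singleton, count the surviving tuples, and then handle $z$ via boundedness of the nonlinearities. One small imprecision: for $\ell \ge 1$ the centering $\E[X_b]=0$ is not literally ``$\D_{i;b} f(0,\dots,0)=0$''; you first need $\E[z_{j\to b}^\ell]=0$ (which follows from Lemma~\ref{lemma:uy_is_zero}) and independence of the $z_{j\to b}^\ell$ across $j\in\partial b\setminus i$ to pull the expectation inside the multilinear $\D_{i;b} f$ --- the paper handles this via Corollary~\ref{cor:monomial_is_zero}.
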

\begin{proof}
It suffices to show that for every $\ell$, there is a constant $C_{k,L,\ell}$ which bounds the $k^{\text{th}}$ moment of $\{w_i^\ell, w_{i \to a}^\ell, z_i^\ell, z_{i \to a}^\ell \mid i \in \partial a, a \in E\}$. Then, we can take $C_{k,L} = \max_{0 \leq \ell \leq L}\{C_{k, L, \ell}\}$. Furthermore, we only need to study even $k$, since for odd $k$ we have $\E[|x|^k] \leq \sqrt{\E[x^{2k}]}$ by Cauchy-Schwarz.

We prove the statement by induction on $\ell$. The base case $\ell = 0$ follows immediately by definition, as $w^0_i = w^{0}_{i \to a} = 0$ and $z^0_i = z^0_{i \to a} \sim N(0, \delta)$. Now assume the statement holds up to some $\ell$. By definition, we have that
    \begin{align*}
    \E[(w_{i \to a}^{\ell + 1})^k] = \frac{1}{(d-1)^{k / 2}} \cdot \E\left[\left(\sum_{b \in \partial i \setminus a} \D_{i;b} f(\vz_{\partial b \to b}^\ell)\right)^k\right] =  \frac{1}{(d-1)^{k / 2}} \cdot \E\left[\sum_{b_1, \ldots, b_k \in \partial i \setminus a}\prod_{\iota = 1}^k \D_{i;b_\iota} f(\vz_{\partial b_\iota \to b_\iota}^\ell)\right].
    \end{align*}
    Note that $\D_{i;b} f(\vz_{\partial b \to b}^\ell)$ has mean zero by \Cref{cor:monomial_is_zero}, and is independent for different $b \in \partial i \setminus a$ by \Cref{remark:f_derivative_doesnt_use_i} and \Cref{lemma:SigmaAlgebraIndependence}.
    It follows that if any factor $b$ appears once in $\{b_1, \ldots, b_k\}$, then the expectation of the corresponding product is zero.
    By a counting argument and the induction hypothesis, the contribution from products with fewer than $\frac{k}{2}$ distinct factors is $O_d\left(\frac{1}{\sqrt{d}}\right)$, so it can be bounded for all $d$ by some $c_{\ell, L, k}$.
    This implies
    \begin{align*}
    \E[(w_{i \to a}^{\ell + 1})^k] =
        &\leq\,
        c_{\ell, L, k} +
        \frac{1}{(d-1)^{k / 2}} \cdot k! \cdot \E\left[\sum_{b_1, \ldots, b_{k/2} \in \partial i \setminus a}\prod_{\iota = 1}^{k/2} \left( \D_{i;b_\iota} f(\vz_{\partial b_\iota \to b_\iota}^\ell)\right)^2\right] \\
     &\leq\,
     c_{\ell, L, k} + k! \cdot \max_{b_1, \ldots, b_{k/2} \in \partial i \setminus a}\E\left[\prod_{\iota = 1}^{k/2} \left( \D_{i;b_\iota} f(\vz_{\partial b_\iota \to b_\iota}^\ell)\right)^2\right]\,.
    \end{align*}
    By independence, the expectation of the product splits into a finite product of terms  $\E\left[\left(\D_{i;b} f(\vz_{\partial b \to b}^\ell)\right)^m \right]$ for $m \le k$. We bound such terms. By H\"older's inequality,
    \begin{align*}
    \left|\D_{i;b} f(\vz_{\partial b \to b}^\ell)\right|^m
    = \left|\sum_{S \subseteq \partial b, i \in S} \hat{f}(S)\prod_{j \in S \backslash i}z^{\ell}_{j \to b}\right|^m
    \le
    \left|
    \sum_{S \subseteq \partial b, i \in S} |\hat{f}(S)|^{\frac{m}{m-1}} \right|^{m-1} \cdot
    \sum_{S \subseteq \partial b, i \in S} \left|\prod_{j \in S \backslash i}z^{\ell}_{j \to b}\right|^m\,.
    \end{align*}
    Since $\hat{f}(S) \le 1$ for all $S \subseteq \partial b$, the first term is some absolute constant $\widetilde{c}_m$. As a result,
    \begin{align*}
        \E\left[\left(\D_{i;b} f(\vz_{\partial b \to b}^\ell)\right)^m \right]
        \le \widetilde{c}_m  \cdot
        \sum_{S \subseteq \partial b, i \in S}  \E \left[\prod_{j \in S \backslash i} \left( z^{\ell}_{j \to b} \right)^m \right]
        = \widetilde{c}_m  \cdot
        \sum_{S \subseteq \partial b, i \in S} \prod_{j \in S \backslash i} \E \left[ \left( z^{\ell}_{j \to b} \right)^m \right]\,,
    \end{align*}
    where the equality is by \Cref{lemma:SigmaAlgebraIndependence} around factor $b$.
    Each $\E[(z_{j \to b}^\ell)^m]$ is at most $C_{m,L,\ell}$ by the inductive hypothesis. So $\E\left[\left( w_{i \to a}^{\ell + 1} \right)^k\right]$ is bounded by a function of $C_{m,L,\ell}$. Note $\E\left[\left( w_{i}^{\ell + 1} \right)^k\right]$ is bounded by essentially the same argument.

    Now we study $z^{\ell + 1}_{i \to a}$. Recall that we may suppose $k$ is even. By Jensen's inequality on $x \mapsto x^k$, we have
    \begin{align*}
    \E[(z_{i \to a}^{\ell + 1})^k] = \E[(z_{i \to a}^{\ell} + A_{i \to a}^{\ell} (w_{i \to a}^{\ell + 1} - w_{i \to a}^{\ell}))^k] \leq 2^{k-1}\E[(z_{i \to a}^{\ell})^k + (A_{i \to a}^{\ell} (w_{i \to a}^{\ell + 1} - w_{i \to a}^{\ell}))^k]\,.
    \end{align*}
    Recall that $A_{i \to a}^\ell$ is at most an absolute constant $K$. So by the inductive hypothesis, the right-hand side is bounded by a function of $\{C_{k,L,\ell}, \E[(w_{i \to a}^{\ell + 1})^k]\}$. Note $\E[(z_{i}^{\ell + 1})^k]$ is bounded by essentially the same argument.
\end{proof}

\begin{lemma}
\label{lemma:w_deviation_small}
    Choose $0 \le \ell \le L - 1$, variable $i$, and factor $a \in \partial i$. Then  $\E\left[\left|w^{\ell+1}_{i \to a} - w^{\ell+1}_i \right|^m\right] = O_d(\frac{1}{d^{m/2}})$ for every $m \in \mathbb{N}$.
\end{lemma}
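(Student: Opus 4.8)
The plan is to split the difference into an ``extra-term'' contribution coming from the factor $a$ and a ``renormalization'' contribution coming from the mismatch between $\tfrac{1}{\sqrt d}$ and $\tfrac{1}{\sqrt{d-1}}$, and then bound each in $L^m$. The key observation is that $w_i^{\ell+1}$ and $w_{i\to a}^{\ell+1}$ are built from \emph{the same} node-to-factor quantities: by \Cref{fig:algorithm}, $w_{i\to a}^{\ell+1} = \tfrac{1}{\sqrt{d-1}}\sum_{b\in\partial i\setminus a}\D_{i;b}f(\vz_{\partial b\to b}^\ell)$ and $w_i^{\ell+1} = \tfrac{1}{\sqrt d}\sum_{b\in\partial i}\D_{i;b}f(\vz_{\partial b\to b}^\ell)$, so we have the \emph{exact} identity (as random variables)
\[
w_i^{\ell+1} - w_{i\to a}^{\ell+1}
= \Big(\tfrac{\sqrt{d-1}}{\sqrt d} - 1\Big)\, w_{i\to a}^{\ell+1} \;+\; \tfrac{1}{\sqrt d}\,\D_{i;a}f(\vz_{\partial a\to a}^\ell).
\]

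First I would handle the second term. Writing $\D_{i;a}f(\vz_{\partial a\to a}^\ell)=\sum_{S\ni i}\hat f(S)\prod_{j\in S\setminus i}z^\ell_{j\to a}$, the Fourier coefficients are bounded by $1$ and each $z^\ell_{j\to a}$ has all moments bounded uniformly in $d$ by \Cref{lemma:finite_moments_are_bounded}; Hölder's inequality (exactly as in the proof of \Cref{lemma:finite_moments_are_bounded}) then gives $\E[|\D_{i;a}f(\vz_{\partial a\to a}^\ell)|^m]=O_d(1)$, so this term has $L^m$ norm $O_d(d^{-1/2})$. For the first term, the scalar prefactor satisfies $\tfrac{\sqrt{d-1}}{\sqrt d}-1 = \tfrac{\sqrt{d-1}-\sqrt d}{\sqrt d} = O_d(d^{-1})$, while $\E[|w_{i\to a}^{\ell+1}|^m]=O_d(1)$ by \Cref{lemma:finite_moments_are_bounded}, so this term has $L^m$ norm $O_d(d^{-1})$, which is already negligible compared to the target.

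Finally, Minkowski's inequality combines the two bounds: $\E[|w_i^{\ell+1}-w_{i\to a}^{\ell+1}|^m]^{1/m} = O_d(d^{-1/2}) + O_d(d^{-1}) = O_d(d^{-1/2})$, i.e. $\E[|w^{\ell+1}_{i\to a}-w^{\ell+1}_i|^m]=O_d(d^{-m/2})$. There is no genuine obstacle here; the only point requiring care is recognizing that the cancellation above is an exact identity between the random variables themselves (not merely in distribution), which is what lets us isolate the single term $\D_{i;a}f(\vz_{\partial a\to a}^\ell)$ and control it directly via the uniform moment bounds of \Cref{lemma:finite_moments_are_bounded}.
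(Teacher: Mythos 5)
Your proof is correct and uses exactly the paper's decomposition: isolating the single extra term $\frac{1}{\sqrt{d}}\D_{i;a}f(\vz_{\partial a\to a}^\ell)$ plus a renormalization term $(\sqrt{(d-1)/d}-1)\,w_{i\to a}^{\ell+1}$, then bounding each via \Cref{lemma:finite_moments_are_bounded}. The paper factors out $1/\sqrt{d}$ and absorbs everything into one "bounded-moments" expression rather than invoking Minkowski explicitly, but the content is the same.
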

\begin{proof}
    The difference simplifies as
\begin{align*}    w^{\ell+1}_i - w^{\ell+1}_{i \to a}
    &= \frac{1}{\sqrt{d}}\D_{i;a} f(\vz_{\partial a \to a}^\ell) + \left(\sqrt{\frac{d-1}{d}}-1\right)\cdot w^{\ell+1}_{i \to a}
    =
    \frac{1}{\sqrt{d}}
    \left(
    \D_{i;a} f(\vz_{\partial a \to a}^\ell) + \left(\sqrt{d-1} - \sqrt{d}\right)\cdot w^{\ell+1}_{i \to a}
    \right)\,.
    \end{align*}
    By \Cref{lemma:finite_moments_are_bounded}, the expression in the parentheses is a finite sum of terms with bounded moments, so it too has bounded moments. The asymptotic statement follows.
\end{proof}
\begin{corollary}
\label{lemma:u_deviation_small}
    Choose $0 \le \ell \le L - 1$, variable $i$, and factor $a \in \partial i$. Then  $\E\left[\left|u^{\ell+1}_{i \to a} - u^{\ell+1}_i\right|^m\right] = O_d(\frac{1}{d^{m/2}})$ for every $m \in \mathbb{N}$.
\end{corollary}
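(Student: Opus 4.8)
The plan is to reduce everything to \Cref{lemma:w_deviation_small} using the defining relations $u^{\ell+1}_{i \to a} = w^{\ell+1}_{i \to a} - w^{\ell}_{i \to a}$ and $u^{\ell+1}_{i} = w^{\ell+1}_{i} - w^{\ell}_{i}$. First I would write
\[
u^{\ell+1}_{i \to a} - u^{\ell+1}_{i} = \bigl( w^{\ell+1}_{i \to a} - w^{\ell+1}_{i} \bigr) - \bigl( w^{\ell}_{i \to a} - w^{\ell}_{i} \bigr),
\]
so that the quantity of interest is a difference of two node-versus-node-to-factor discrepancies, one at iteration $\ell+1$ and one at iteration $\ell$.

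Next I would invoke \Cref{lemma:w_deviation_small} twice. Applied with iteration index $\ell$ it gives $\E[|w^{\ell+1}_{i \to a} - w^{\ell+1}_{i}|^m] = O_d(d^{-m/2})$, and applied with iteration index $\ell-1$ (valid whenever $\ell \ge 1$) it gives $\E[|w^{\ell}_{i \to a} - w^{\ell}_{i}|^m] = O_d(d^{-m/2})$. The case $\ell = 0$ is even simpler: since $w^0_{i \to a} = w^0_i = 0$, we have $u^1_{i \to a} - u^1_i = w^1_{i \to a} - w^1_i$, and the claim is then exactly \Cref{lemma:w_deviation_small} at index $0$.

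Finally I would combine the two bounds with Minkowski's inequality:
\[
\bigl\| u^{\ell+1}_{i \to a} - u^{\ell+1}_{i} \bigr\|_{L^m} \le \bigl\| w^{\ell+1}_{i \to a} - w^{\ell+1}_{i} \bigr\|_{L^m} + \bigl\| w^{\ell}_{i \to a} - w^{\ell}_{i} \bigr\|_{L^m} = O_d\bigl( d^{-1/2} \bigr),
\]
and raising to the $m$-th power yields $\E[|u^{\ell+1}_{i \to a} - u^{\ell+1}_{i}|^m] = O_d(d^{-m/2})$, as desired. There is no substantive obstacle here: the entire content has already been pushed into \Cref{lemma:w_deviation_small}, whose proof in turn relies on the uniform moment bounds of \Cref{lemma:finite_moments_are_bounded}. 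The only mild points to be careful about are the re-indexing needed to control the iteration-$\ell$ term (rather than iteration $\ell+1$) and the degenerate base case $\ell = 0$, both of which are handled above.
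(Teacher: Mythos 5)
Your proof is correct and takes essentially the same approach as the paper: both expand $u^{\ell+1}_{i \to a} - u^{\ell+1}_i$ as $(w^{\ell+1}_{i \to a} - w^{\ell+1}_i) - (w^{\ell}_{i \to a} - w^{\ell}_i)$, invoke \Cref{lemma:w_deviation_small}, and handle $\ell = 0$ separately. Your use of Minkowski's inequality just makes explicit what the paper states tersely as ``a sum of two terms with bounded moments has bounded moments.''
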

\begin{proof}
    When $\ell = 0$, $u_{i \to a}^1 = w_{i \to a}^1$. Otherwise, the difference
    \begin{align*}
        \sqrt{d} \left( u^{\ell+1}_i - u^{\ell+1}_{i \to a} \right)  = \sqrt{d} (w^{\ell+1}_i - w^{\ell+1}_{i \to a} ) - \sqrt{d}(w^{\ell}_i - w^{\ell}_{i \to a} )\,,
    \end{align*}
    is a sum of two terms with bounded  moments, so it too has bounded moments. The asymptotic statement follows.
\end{proof}
\begin{lemma}
\label{lemma:z_deviation_small}
     Assume \Cref{hyp:A_deviation_small}. Fix $0 \le \ell \le L$, variable $i$, and factor $a \in \partial i$. Then $\E\left[\left|z^{\ell}_{i \to a} - z^{\ell}_i\right|^m\right] = O_d(\frac{1}{d^{m/2}})$ for every $m \in \mathbb{N}$.
\end{lemma}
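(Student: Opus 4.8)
The plan is to expand $z_i^\ell - z_{i \to a}^\ell$ using the update rule in Figure~\ref{fig:algorithm} and control each term in $L^m$. Since the algorithm initializes $z_i^0 = z_{i \to a}^0$, the case $\ell = 0$ is immediate, and for $\ell \ge 1$ the initial terms cancel, leaving
\begin{align*}
z_i^\ell - z_{i \to a}^\ell = \sum_{s=1}^\ell \left( A_i^{s-1} u_i^s - A_{i \to a}^{s-1} u_{i \to a}^s \right) = \sum_{s=1}^\ell \left( A_i^{s-1}\big(u_i^s - u_{i \to a}^s\big) + \big(A_i^{s-1} - A_{i \to a}^{s-1}\big)\, u_{i \to a}^s \right).
\end{align*}
By Minkowski's inequality, $\E[|z_i^\ell - z_{i \to a}^\ell|^m]^{1/m}$ is at most the sum over $s$ of the $L^m$ norms of the two summands.

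For the first summand, $A_i^{s-1}$ is bounded by the constant $K$ (the nonlinearities are uniformly bounded, as presumed in Figure~\ref{fig:algorithm}), so $\|A_i^{s-1}(u_i^s - u_{i \to a}^s)\|_m \le K\,\|u_i^s - u_{i \to a}^s\|_m = O_d(d^{-1/2})$ by Corollary~\ref{lemma:u_deviation_small}. For the second summand, Cauchy--Schwarz gives $\E[|(A_i^{s-1} - A_{i \to a}^{s-1})\,u_{i \to a}^s|^m] \le \E[|A_i^{s-1} - A_{i \to a}^{s-1}|^{2m}]^{1/2}\,\E[|u_{i \to a}^s|^{2m}]^{1/2}$; the first factor is $O_d(d^{-m})$ by \Cref{hyp:A_deviation_small} (the hypothesis invoked in the lemma statement), and the second factor is $O_d(1)$ since $u_{i \to a}^s = w_{i \to a}^s - w_{i \to a}^{s-1}$ has all moments bounded uniformly in $d$ by Lemma~\ref{lemma:finite_moments_are_bounded}. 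Hence the second summand has $L^m$ norm $O_d(d^{-1})$.

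Since $\ell \le L$ is a fixed constant, summing finitely many terms each of $L^m$ size $O_d(d^{-1/2})$ gives $\|z_i^\ell - z_{i \to a}^\ell\|_m = O_d(d^{-1/2})$, i.e.\ $\E[|z_i^\ell - z_{i \to a}^\ell|^m] = O_d(d^{-m/2})$, as claimed. There is no genuine obstacle here; the only point requiring mild care is to invoke Corollary~\ref{lemma:u_deviation_small}, \Cref{hyp:A_deviation_small}, and Lemma~\ref{lemma:finite_moments_are_bounded} at the exponent $2m$ (not just $m$) wherever Cauchy--Schwarz splits a product, which is harmless because all three results hold for every $m \in \mathbb{N}$. (One could equivalently phrase this as an induction on $\ell$, but the direct telescoping expansion above makes the dependence on earlier lemmas transparent.)
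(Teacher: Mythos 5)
Your proof is correct and uses essentially the same decomposition as the paper's: write each increment of the difference as $A_i^{s-1}(u_i^s - u_{i\to a}^s) + (A_i^{s-1}-A_{i\to a}^{s-1})u_{i\to a}^s$, bound the first term via boundedness of $A$ and \Cref{lemma:u_deviation_small}, and bound the second via Cauchy--Schwarz with \Cref{hyp:A_deviation_small} and \Cref{lemma:finite_moments_are_bounded}. The paper phrases this as an induction on $\ell$, tracking that $\sqrt d\,(z_{i\to a}^\ell-z_i^\ell)$ has moments bounded uniformly in $d$, rather than your direct telescoping sum, but that is cosmetic. One small numerical slip: \Cref{hyp:A_deviation_small} gives $\E[|A_i^{s-1}-A_{i\to a}^{s-1}|^{2m}]=O_d(d^{-m})$, so the factor $\E[|A_i^{s-1}-A_{i\to a}^{s-1}|^{2m}]^{1/2}$ is $O_d(d^{-m/2})$, not $O_d(d^{-m})$; correspondingly the second summand's $L^m$ norm is $O_d(d^{-1/2})$ rather than $O_d(d^{-1})$ --- which matches the first summand and is exactly what the conclusion $\E[|z_i^\ell-z_{i\to a}^\ell|^m]=O_d(d^{-m/2})$ requires, so nothing breaks.
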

\begin{proof}
    We prove this by induction on $\ell$. When $\ell = 0$, the quantity is zero. Now assume this is true at $\ell = p$. Then
    \begin{align*}
        \sqrt{d} (z^{p+1}_{i \to a} - z^{p+1}_i) = \sqrt{d} (z^{p}_{i \to a} - z^{p}_i)  +
        \sqrt{d} \left(
        A_{i \to a}^p u_{i \to a}^{p+1} -  A_{i}^p u_{i}^{p+1}
        \right)\,.
    \end{align*}
    The first term has bounded moments by the induction hypothesis. The second term can be written in two parts; i.e.
    \begin{align*}
    \sqrt{d} (A_{i \to a}^p u_{i \to a}^{p+1} -  A_{i}^p u_{i}^{p+1}) = A_{i \to a}^p \cdot \sqrt{d}  (u_{i \to a}^{p+1} - u_i^{p + 1}) +
    u_i^{p + 1} \cdot \sqrt{d} (A_{i \to a}^{p} - A_i^{p}) \,.
    \end{align*}
The first part has bounded moments by boundedness of $A_{i \to a}^p$ and \Cref{lemma:u_deviation_small}. The second part is a product of two terms with bounded moments (by \Cref{lemma:finite_moments_are_bounded} and \Cref{hyp:A_deviation_small}), and so it too has bounded moments. So the full expression has bounded moments, and the asymptotic statement follows.
\end{proof}
\begin{lemma}
\label{lemma:df_deviation_small}
   Assume \Cref{hyp:A_deviation_small}. Choose $0 \le \ell \le L$, variable $i$, and factor $a \in \partial i$.
   Then $\E\left[\left|\D_{i;a}f(\vz_{\partial a}^\ell) - \D_{i;a}f(\vz_{\partial a \to a}^\ell)\right|^m\right] = O_d(\frac{1}{d^{m/2}})$ for every $m \in \mathbb{N}$.
\end{lemma}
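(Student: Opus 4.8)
The plan is to exploit that $\D_{i;a}f$ is a \emph{fixed} multilinear polynomial whose arguments are only the coordinates of $\partial a$ other than $i$ (\Cref{remark:f_derivative_doesnt_use_i}), so the quantity in question is just that polynomial evaluated at the two nearby points $\vz_{\partial a}^\ell$ and $\vz_{\partial a \to a}^\ell$, which differ coordinatewise by small amounts controlled by \Cref{lemma:z_deviation_small}. Concretely, I would write the Fourier expansion
\begin{align*}
\D_{i;a}f(\vz_{\partial a}^\ell) - \D_{i;a}f(\vz_{\partial a \to a}^\ell)
= \sum_{S \subseteq \partial a,\; i \in S} \hat f(S) \left( \prod_{j \in S \setminus i} z_j^\ell - \prod_{j \in S \setminus i} z_{j \to a}^\ell \right),
\end{align*}
observe that the term $S = \{i\}$ contributes $0$ (empty products), and for each remaining $S$, ordering $S \setminus i = \{j_1,\dots,j_k\}$ with $k \ge 1$, telescope
\begin{align*}
\prod_{t=1}^k z_{j_t}^\ell - \prod_{t=1}^k z_{j_t \to a}^\ell
= \sum_{s=1}^k \Big( \prod_{t<s} z_{j_t \to a}^\ell \Big)\big( z_{j_s}^\ell - z_{j_s \to a}^\ell \big)\Big( \prod_{t>s} z_{j_t}^\ell \Big).
\end{align*}

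Next I would take $L^m$ norms. Since $\partial a$ has the constant size $r$, there are at most $2^r$ choices of $S$ and at most $r$ telescoping summands per $S$, and $|\hat f(S)| \le 1$ for all $S$ (as $f$ maps into $\{0,1\}$); so by the triangle inequality it suffices to bound the $m$-th moment of a single summand $\big( \prod_{t<s} z_{j_t \to a}^\ell \big)\big( z_{j_s}^\ell - z_{j_s \to a}^\ell \big)\big( \prod_{t>s} z_{j_t}^\ell \big)$. Applying the generalized H\"older inequality with $k \le r-1$ equal exponents, this is at most a product of $k$ moments, each of the form $\E[|z_j^\ell|^{mk}]^{1/k}$, $\E[|z_{j \to a}^\ell|^{mk}]^{1/k}$, or $\E[|z_{j_s}^\ell - z_{j_s \to a}^\ell|^{mk}]^{1/k}$. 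The first two kinds are $O_d(1)$ by \Cref{lemma:finite_moments_are_bounded}, while the crucial factor satisfies $\E[|z_{j_s}^\ell - z_{j_s \to a}^\ell|^{mk}] = O_d(d^{-mk/2}) = O_d(d^{-m/2})^k$ by \Cref{lemma:z_deviation_small} --- this is precisely where \Cref{hyp:A_deviation_small} enters. Combining, every summand is $O_d(d^{-m/2})$, and summing the constantly many summands gives the claim.

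There is no genuine obstacle here: the lemma is a routine ``a fixed polynomial is Lipschitz in its arguments, and the arguments are close in every $L^p$'' argument. The only points requiring a little care are (i) keeping track that the number of terms stays bounded because $r$ is a constant, so that $O_d$-bounds survive the summation; (ii) noting that the $S=\{i\}$ term vanishes, so every surviving term truly contains a difference factor; and (iii) recording that the bound inherits its dependence on \Cref{hyp:A_deviation_small} through \Cref{lemma:z_deviation_small}. The same telescoping-plus-H\"older template is exactly what is invoked ``by the same proof'' in \Cref{prop:trci_to_trcitoa}, with $\trunc$ inserted and using $|\trunc(x)-\trunc(y)| \le |x-y|$.
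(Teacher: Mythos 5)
Your proof is correct and follows essentially the same route as the paper: expand $\D_{i;a}f$ via its Fourier representation, telescope each difference of products so that exactly one factor is of the form $z_j^\ell - z_{j\to a}^\ell$, then control the $m$-th moment of each summand using \Cref{lemma:z_deviation_small} on the difference factor and \Cref{lemma:finite_moments_are_bounded} on the rest. The only cosmetic difference is that you invoke the generalized H\"older inequality explicitly to combine the per-factor moment bounds into a bound on the product, whereas the paper phrases it as ``each factor has bounded moments, so the product does'' (which implicitly relies on the same H\"older step); your version is a bit more careful on this point but the argument is the same.
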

\begin{proof}
    We can decompose the difference $\sqrt{d}\left(\D_{i;a}f(\vz_{\partial a}^\ell) -
   \D_{i;a}f(\vz_{\partial a \to a}^\ell)\right)$ into a finite sum of terms of the form
   \begin{align*}
       \sqrt{d}\left(\prod_{j \in S} z_{j \to a}^\ell - \prod_{j \in S} z_{j}^\ell\right)\,,
   \end{align*}
for nonempty $S \subseteq \partial a  \setminus i$. Choose an ordering of $S$, and let $S[\iota]$ be the $\iota$th element of $S$ for $\iota \in [|S|]$. Then
\begin{align*}
    \sqrt{d}\left(\prod_{j \in S} z_{j \to a}^\ell - \prod_{j \in S} z_{j}^\ell\right)
    =
    \sum_{\kappa=1}^{|S|}
        \left(
\prod_{\iota < \kappa} z_{S[\iota] \to a}^\ell
    \right)
    \cdot \sqrt{d}(z_{S[\kappa] \to a}^\ell - z_{S[\kappa]}^\ell) \cdot
    \left(
       \prod_{\iota > \kappa} z_{S[\iota]}^\ell
    \right)\,.
\end{align*}
By~\Cref{lemma:z_deviation_small}, $\sqrt{d}(z_{S[\iota] \to a}^\ell - z_{S[\iota]}^\ell)$ has bounded moments, and the other terms have bounded moments by \Cref{lemma:finite_moments_are_bounded}. So each term has bounded moments, and the statement follows.
\end{proof}

\subsection{Messages are asymptotically Gaussian}
Here, we show that the conditional distribution $u_{i \rightarrow a}^{\ell} \; \vert \; \cG_{i}^{\ell - 1}$, as a function of $d$,
has Wasserstein distance from a deterministic Gaussian distribution
converging in probability to zero.
This requires three high-level steps. 
First, in \Cref{sec:computing_the_moments}, we compute the second moment of $u_{i \rightarrow a}^{\ell} \; \vert \; \cG_{i}^{\ell - 1}$.
Then, in \Cref{subsub:tau_to_nu}, we compute the higher moments and show how the moments converge in probability to that of a Gaussian $U \sim \cN\big( 0, \nu_\ell\big)$.
Finally, in \Cref{sec:messages_converge_to_a_gaussian_in_wasserstein}, we use
standard tools from probability theory to finish the claim.
\subsubsection{Computing the moments}
\label{sec:computing_the_moments}

\begin{restatable}{definition}{messageConditionalVariance}
\label{def:message_variance}
Fix a predicate $f: \pmset^r \rightarrow \{ 0, 1 \}$. For every variable $i$ and factor $a \in \partial i$, and for every $1 \le \ell \le L$, define the \emph{variance parameter} $\tau_{i \to a}^{\ell} \ge 0$ via the recurrence relation
\begin{align*}
\left( \tau_{i \to a}^{1} \right)^2
&\defeq \frac{1}{d-1} \sum_{b \in \partial i  \setminus  a} \sum_{\substack{S \subseteq \partial b, i \in S}} \hat{f}(S)^2 \cdot \delta^{\lvert S \rvert - 1}\,, 
\\
\left( \tau_{i \to a}^{\ell + 1} \right)^2
&\defeq
\frac{1}{d-1} \sum_{b \in \partial i \setminus a}
\sum_{\substack{S_1, S_2 \subseteq \partial b \\ i \in S_1, S_2}} \hat{f}(S_1) \hat{f}(S_2) \cdot \phi_{i}^\ell(S_1, S_2)\,,
\end{align*}
where $\phi^\ell_{i}$ is
\begin{align*}
\phi^{\ell}_{i}(S_1, S_2) &\defeq
\sum_{\varnothing \subsetneq \widetilde{T} \subseteq (S_1 \setminus i) \cap (S_2 \setminus i)}
\left(
\prod_{\iota \in \{1,2\}}
\prod_{j \in (S_\iota  \setminus  i)  \setminus  \widetilde{T}} z_{j \to b}^{\ell - 1}
\right)
  \left(
    \prod_{j \in \widetilde{T}} \big( A_{j \to b}^{\ell - 1} \big)^2 \cdot \big( \tau_{j \to b}^{\ell} \big)^2
  \right)\,. 
\end{align*}
\end{restatable}

Notice that $\tau_{i \to a}^\ell$ is measurable with respect to $\mathcal{G}^{\ell - 1}_{i}$. We first show that $(\tau_{i \to a}^\ell)^2$ is the second moment of $u_{i \rightarrow a}^{\ell} \; \vert \; \cG_{i}^{\ell - 1}$, justifying \Cref{def:message_variance}.
\begin{lemma}
\label{lemma:message_second_moment}
Choose any variable $i$, factor $a \in \partial i$, and $1 \le \ell \le L$.
Then $\E \Big[ \big( u_{i \to a}^{\ell} \big)^2 \;\Big\vert\; \cG_{i}^{\ell - 1} \Big]
= \big( \tau_{i \to a}^{\ell} \big)^2$.
\end{lemma}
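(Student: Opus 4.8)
I would prove the identity by induction on $\ell$, recycling the Fourier expansions and conditional-expectation identities already set up in \Cref{sec:independence}. For the base case $\ell=1$, recall $u_{i\to a}^1=\frac{1}{\sqrt{d-1}}\sum_{b\in\partial i\setminus a}\D_{i;b}f(\vz_{\partial b\to b}^0)$, that the $z_{j\to b}^0$ for $j\in\partial b\setminus i$ are i.i.d.\ $\cN(0,\delta)$, and that $\mathcal{G}_i^0=\sigma(z_i^0)$ is independent of all of them (since $\D_{i;b}f$ does not use the $i$-coordinate by \Cref{remark:f_derivative_doesnt_use_i}). Expanding each $\D_{i;b}f(\vz_{\partial b\to b}^0)$ in the Fourier basis and squaring, the cross terms over distinct factors $b_1\neq b_2$ vanish because each factor is centered ($f$ has no linear terms), and within a single factor $b$ the Gaussian moment computation forces $S_1=S_2$ and contributes $\hat f(S)^2\,\delta^{|S|-1}$. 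Summing over $b\in\partial i\setminus a$ and dividing by $d-1$ reproduces $(\tau_{i\to a}^1)^2$ exactly as defined.

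For the inductive step, assume $\E[(u_{j\to b}^\ell)^2\mid\mathcal{G}_j^{\ell-1}]=(\tau_{j\to b}^\ell)^2$ for all $j,b$, and compute $\E[(u_{i\to a}^{\ell+1})^2\mid\mathcal{G}_i^\ell]$. Write $u_{i\to a}^{\ell+1}=\frac{1}{\sqrt{d-1}}\sum_{b\in\partial i\setminus a}D_b$ with $D_b\defeq \D_{i;b}f(\vz_{\partial b\to b}^\ell)-\D_{i;b}f(\vz_{\partial b\to b}^{\ell-1})$, and expand the square. By \Cref{cor:monomial_is_zero} each $D_b$ is conditionally centered, $\E[D_b\mid\mathcal{G}_i^\ell]=0$, and for $b_1\neq b_2$ the variables $D_{b_1},D_{b_2}$ depend on disjoint sets of free Gaussians once $\mathcal{G}_i^\ell$ is fixed (the tree-locality argument behind \Cref{lemma:SigmaAlgebraIndependence} and \Cref{lemma:messages_conditional_independence}); conditioning first on the larger $\sigma$-algebra generated by $\mathcal{G}_i^\ell$ together with the $b_2$-side Gaussians gives $\E[D_{b_1}D_{b_2}\mid\mathcal{G}_i^\ell]=\E[D_{b_1}\mid\mathcal{G}_i^\ell]\E[D_{b_2}\mid\mathcal{G}_i^\ell]=0$. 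So only the diagonal terms $b_1=b_2=b$ survive.

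For a diagonal term I would use the expansion from the proof of \Cref{lemma:uy_is_zero}, $D_b=\sum_{S\ni i}\hat f(S)\sum_{T\subsetneq S\setminus i}\big(\prod_{j\in T}z_{j\to b}^{\ell-1}\big)\big(\prod_{j\in(S\setminus i)\setminus T}A_{j\to b}^{\ell-1}u_{j\to b}^\ell\big)$, and square it. After conditioning on $\mathcal{G}_i^\ell$ the quantities $z_{j\to b}^{\ell-1}$ and $A_{j\to b}^{\ell-1}$ are $\mathcal{G}_i^\ell$-measurable (\Cref{lemma:SigmaAlgebraDecomposition} and \Cref{lemma:u_is_in_G}), while \Cref{lemma:messages_conditional_independence} factors the expectation of the resulting product of powers of $\{u_{j\to b}^\ell\}_{j\in\partial b\setminus i}$ across $j$. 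Any message appearing to an odd power contributes $\E[u_{j\to b}^\ell\mid\mathcal{G}_i^\ell]=0$ (by \Cref{lemma:uy_is_zero}, after reducing $\mathcal{G}_i^\ell$ to $\mathcal{G}_j^{\ell-1}$ via \Cref{lemma:SigmaAlgebraDecomposition}), so the only surviving terms are those with $(S_1\setminus i)\setminus T_1=(S_2\setminus i)\setminus T_2=:\widetilde T\neq\varnothing$; on each such term every surviving message contributes $\E[(u_{j\to b}^\ell)^2\mid\mathcal{G}_i^\ell]=(\tau_{j\to b}^\ell)^2$ by the inductive hypothesis — legitimate because the extra randomness in $\mathcal{G}_i^\ell$ not already in $\mathcal{G}_j^{\ell-1}$ is independent of $u_{j\to b}^\ell$, and $(\tau_{j\to b}^\ell)^2$ is measurable with respect to the common sub-$\sigma$-algebra $\mathcal{G}_{j\to b}^{\ell-1}$. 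Collecting terms makes $\E[D_b^2\mid\mathcal{G}_i^\ell]$ equal to $\sum_{S_1,S_2\ni i}\hat f(S_1)\hat f(S_2)\,\phi_i^\ell(S_1,S_2)$, and summing over $b\in\partial i\setminus a$ and dividing by $d-1$ yields $(\tau_{i\to a}^{\ell+1})^2$, closing the induction.

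The main obstacle is bookkeeping rather than conceptual: carefully matching the odd/even parity case analysis in the diagonal term (which products of messages survive the conditional expectation) to the combinatorial shape of $\phi_i^\ell$, and cleanly justifying $\E[(u_{j\to b}^\ell)^2\mid\mathcal{G}_i^\ell]=(\tau_{j\to b}^\ell)^2$ by combining the inductive hypothesis, \Cref{lemma:messages_conditional_independence}, and the observation that the relevant Gaussians seen by $u_{j\to b}^\ell$ are fixed identically by $\mathcal{G}_i^\ell$ and by $\mathcal{G}_j^{\ell-1}$.
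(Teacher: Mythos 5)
Your proposal is correct and follows essentially the same route as the paper's proof: induction on $\ell$, Fourier expansion, vanishing of cross terms between distinct factors, \Cref{lemma:messages_conditional_independence} to factor the diagonal term, \Cref{lemma:uy_is_zero} to kill odd-power messages, and the inductive hypothesis to replace even powers by $(\tau_{j\to b}^\ell)^2$. Your explicit remark that $\E[(u_{j\to b}^\ell)^2\mid\mathcal{G}_i^\ell]=(\tau_{j\to b}^\ell)^2$ needs justifying (because the conditioning $\sigma$-algebra differs from the one in the inductive hypothesis, but the two fix identically the Gaussians that $u_{j\to b}^\ell$ depends on) is a subtlety the paper passes over quickly, so flagging it is a sound instinct.
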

\begin{proof}
We proceed via induction.
In the base case $\ell = 1$, $u_{i \to a}^{1} = w_{i \to a}^1$ is independent of $\mathcal{G}_i^0 = \sigma(\{z_i^0\})$ by \Cref{remark:f_derivative_doesnt_use_i}:
\begin{align*}
    \E\left[   \left( u_{i \to a}^1 \right)^2 \mid \mathcal{G}_{i}^0        \right]
    =
     \E\left[  \left( w_{i \to a}^1 \right)^2  \right]
     =
     \frac{1}{d-1} \cdot \E\left[  \bigg(      \sum_{b \in \partial i  \setminus  a} \D_{i;b} f\left( \vz_{\partial b \to b}^{0} \right)                \bigg)^2 \right]\,.
\end{align*}
As in \Cref{lemma:uy_is_zero}, we use the Fourier decomposition of $f$. For any $b_1, b_2 \in \partial i  \setminus  a$,
\begin{align*}
    \D_{i;b_1} f\big( \vz_{\partial b_1 \to b_1}^{0} \big) \cdot \D_{i;b_2} f\big( \vz_{\partial b_2 \to b_2}^{0} \big)
&= \bigg(
  \sum_{S_1 \subseteq \partial b_1, i \in S_1} \hat{f}(S_1) \prod_{j \in S_1 \setminus i} z_{j \to b_1}^0
\bigg) \bigg(
  \sum_{S_2 \subseteq \partial b_2, i \in S_2} \hat{f}(S_2) \prod_{j \in S_2 \setminus i} z_{j \to b_2}^0
\bigg) \\
&= \sum_{\substack{S_1 \subseteq \partial b_1, S_2 \subseteq \partial b_2 \\ i \in S_1, S_2}} \hat{f}(S_1) \hat{f}(S_2)
\cdot \prod_{j \in (S_1 \cup S_2) \setminus i} \left(  z_{j \to b}^0 \right)^{h(j,S_1,S_2)}\,,
\end{align*}
where $h(j,S_1,S_2) = \mathbbm{1}_{[j \in S_1]} + \mathbbm{1}_{[j \in S_2]}$, and  $\mathbbm{1}_{[j \in S]} = 1$ if $j \in S$ and $0$ otherwise.

By definition (see \Cref{fig:algorithm}), $z_{j \to b}^0$ are independent for different $j$, so
\begin{align*}
    \E\left[   \left( u_{i \to a}^1 \right)^2 \mid \mathcal{G}_{i}^0        \right]
=
\frac{1}{d-1} \cdot \sum_{b_1, b_2 \in \partial i  \setminus  a}
\sum_{\substack{S_1 \subseteq \partial b_1, S_2 \subseteq \partial b_2 \\ i \in S_1, S_2}} \hat{f}(S_1) \hat{f}(S_2)
\cdot \prod_{j \in (S_1 \cup S_2) \setminus i} \E \left[  \left(  z_{j \to b}^0 \right)^{h(j,S_1,S_2)} \right]\,.
\end{align*}
Since $\E[z_{j \to b}^0] = 0$, the only terms that are nonzero correspond to $h(j,S_1, S_2) = 2$ for all $j \in (S_1 \cup S_2) \setminus  i$. This occurs only when $b_1 = b_2$ and $S_1 = S_2$, so
\begin{align*}
     \E\left[   \left( u_{i \to a}^1 \right)^2 \mid \mathcal{G}_{i}^0        \right]
&=
\frac{1}{d-1} \cdot
\sum_{b \in \partial i \setminus a}
\sum_{S \subseteq \partial b, i \in S} \hat{f}(S)^2 \cdot
\prod_{j \in S \setminus i} \E \left[  \left(  z_{j \to b}^0 \right)^{2} \right]
=
\frac{1}{d-1} \cdot
\sum_{b \in \partial i \setminus a}
\sum_{S \subseteq \partial b, i \in S} \hat{f}(S)^2 \cdot \delta^{|S|-1}\,.
\end{align*}

For the inductive case, suppose the statement holds up to some $\ell > 0$; we consider $u_{i \to a}^{\ell + 1} \mid \mathcal{G}_{i}^\ell$. Recall that
\begin{align*}
    u^{\ell + 1}_{i \to a}  = w_{i \to a}^{\ell + 1} - w_{i\to a}^\ell =
      \frac{1}{\sqrt{d-1}} \sum_{b \in \partial i \setminus a} \left(  \D_{i;b} f \big( \vz_{\partial b \rightarrow b}^{\ell} \big) - \D_{i;b}f \big( \vz_{\partial b \rightarrow b}^{\ell - 1} \big) \right) \,.
\end{align*}
The right-hand side can be written with the Fourier decomposition of $f$, as calculated in \Cref{lemma:uy_is_zero}:
\begin{align*}
   \D_{i;b} f \big( \vz_{\partial b \rightarrow b}^{\ell} \big) - \D_{i;b}f \big( \vz_{\partial b \rightarrow b}^{\ell - 1} \big) =  \sum_{S \subseteq \partial b, i \in S} \hat{f}(S)  \cdot
      \sum_{T \subsetneq S\backslash i}
      \Big(\prod_{j \in T} z^{\ell - 1}_{j \to b}\Big)
      \Big( \prod_{j \in (S\backslash i) \backslash T}  A^{\ell - 1}_{j \to b} \cdot u^{\ell}_{j \to b}\Big) \,.
\end{align*}
This gives an expression for $\left( u^{\ell + 1}_{i \to a} \right)^2$:
\begin{align*}
    \left( u^{\ell + 1}_{i \to a} \right)^2
    =
    \frac{1}{d - 1}
    \cdot
    \sum_{b_1, b_2 \in \partial i \setminus a}
\sum_{\substack{S_1 \subseteq \partial b_1, S_2 \subseteq \partial b_2 \\ i \in S_1, S_2}} \hat{f}(S_1) \hat{f}(S_2)
\cdot
\sum_{
\substack{T_1 \subsetneq S_1 \setminus i
\\ T_2 \subsetneq S_2 \setminus i}
}
\prod_{\iota \in \{1,2\}}
\left(
\prod_{j \in T_\iota} z_{j \to b_\iota}^{\ell - 1}
\right)
\left(
\prod_{j \in (S_\iota  \setminus  i)  \setminus  T_\iota} A_{j \to b_\iota}^{\ell - 1} u_{j \to b_\iota}^\ell
\right)\,.
\end{align*}
For every $b \in \partial i \setminus a$ and $j \in \partial b  \setminus  i$, we have $z_{j \to b}^{\ell - 1}, A_{j \to b}^{\ell - 1} \in \mathcal{G}_{j \to b}^{\ell - 1} \subseteq \mathcal{G}_{i}^{\ell}$ by \Cref{lemma:SigmaAlgebraDecomposition}. So the only parts of
$\left( u^{\ell + 1}_{i \to a} \right)^2$ that are \emph{not} measurable with respect to $\mathcal{G}_{i}^\ell$ are the messages $u_{j \to b_\iota}^\ell$. We take the expectation $\E\left[
  \left( u^{\ell + 1}_{i \to a} \right)^2
  \mid
  \mathcal{G}_{i}^\ell
  \right]$, and partition the terms in the sum into cases.
\begin{itemize}
    \item Each term with $b_1 \ne b_2$ contains the expression
    \begin{align*}
        \E\left[ \left( \D_{i;b_1} f \big( \vz_{\partial b_1 \rightarrow b_1}^{\ell} \big) - \D_{i;b_1}f \big( \vz_{\partial b_1 \rightarrow b_1}^{\ell - 1} \big)  \right)
        \cdot
\left( \D_{i;b_2} f \big( \vz_{\partial b_2 \rightarrow b_2}^{\ell} \big) - \D_{i;b_2}f \big( \vz_{\partial b_2 \rightarrow b_2}^{\ell - 1} \big)  \right)
\mid \mathcal{G}_{i}^{\ell}
    \right]\,.
    \end{align*}
The two differences are independent by \Cref{remark:f_derivative_doesnt_use_i} and \Cref{lemma:SigmaAlgebraIndependence}, and each has mean zero by \Cref{cor:monomial_is_zero}.
\item All other terms contain, for some $b \in \partial i  \setminus  a$ and $\{S_1, T_1, S_2, T_2\}$,  the expression
\begin{align*}
    \E\left[
    \prod_{\iota \in \{1,2\}}
\prod_{j \in (S_\iota  \setminus  i)  \setminus  T_\iota} u_{j \to b}^\ell
\mid
\mathcal{G}_i^\ell
    \right]\,.
\end{align*}
By \Cref{lemma:messages_conditional_independence}, the expression splits into $\prod_{j \in \partial b}
\E\left[
\left( u_{j \to b}^\ell \right)^{h(j, (S_1  \setminus  i)  \setminus  T_1, (S_2  \setminus  i)  \setminus  T_2)}
\mid
\mathcal{G}_i^\ell
    \right]$.
    \begin{itemize}
        \item By \Cref{lemma:uy_is_zero}, this expression is zero if $h(j, (S_1  \setminus  i)  \setminus  T_1, (S_2  \setminus  i)  \setminus  T_2) = 1$ for any $j \in \partial b$.
        \item  Otherwise, $(S_1  \setminus  i)  \setminus  T_1 = (S_2  \setminus  i)  \setminus  T_2$. The expression is $\prod_{j \in (S_1 \setminus i) \setminus T_1} \left( \tau_{j \to b}^{\ell} \right)^2$ by the inductive hypothesis.
    \end{itemize}
\end{itemize}
Thus, we may write the expectation $\E\left[
  \left( u^{\ell + 1}_{i \to a} \right)^2
  \mid
  \mathcal{G}_{i}^\ell
  \right]$ as
  \begin{align*}
    &\frac{1}{d-1}
    \cdot
    \sum_{b \in \partial i  \setminus  a }
\sum_{\substack{S_1, S_2 \subseteq \partial b \\ i \in S_1, S_2}} \hat{f}(S_1) \hat{f}(S_2)
\cdot
\sum_{
\substack{T_1 \subsetneq S_1 \setminus i
\\ T_2 \subsetneq S_2 \setminus i
\\ (S_1  \setminus i)  \setminus  T_1 = (S_2 \setminus i)  \setminus  T_2}
}
\left(
\prod_{\iota \in \{1,2\}}
\prod_{j \in T_\iota} z_{j \to b}^{\ell - 1}
\right)
\prod_{j \in (S_1  \setminus  i)  \setminus  T_1} \left(A_{j \to b}^{\ell - 1}\right)^2 \left(\tau_{j \to b}^\ell\right)^2
\\
=
  &\frac{1}{d-1}
    \cdot
    \sum_{b \in \partial i  \setminus  a}
\sum_{\substack{S_1, S_2 \subseteq \partial b \\ i \in S_1, S_2}} \hat{f}(S_1) \hat{f}(S_2)
\cdot
\sum_{\varnothing \subsetneq \widetilde{T} \subseteq (S_1 \setminus i) \cap (S_2  \setminus i)}
\left(
\prod_{\iota \in \{1,2\}}
\prod_{j \in (S_\iota  \setminus  i)  \setminus  \widetilde{T}} z_{j \to b}^{\ell - 1}
\right)
\prod_{j \in \widetilde{T}} \left(A_{j \to b}^{\ell - 1}\right)^2 \left(\tau_{j \to b}^\ell\right)^2
\,,
\end{align*}
where the equality follows by choosing $\widetilde{T} \defeq (S_1 \setminus i) \setminus T_1$.
\end{proof}
\subsubsection{Convergence in probability of moments}
\label{subsub:tau_to_nu}
We now show that $\left(\tau_{i \to a}^\ell \right)^2$ approximates $\nu_\ell$ (as in \Cref{defn:nu_defn}) in the large degree limit. This requires the observation that $\tau_{i \to a}^\ell$ is uniformly bounded over $d$:
\begin{proposition}
\label{cor:message_moment_bounded}
Fix any $L \ge 1$. Then for every $k\in \mathbb{N}$, there is a constant $C_{k,L}$, independent of $d$, such that
   for every factor $a$ and every quantity $x \in \{u_i^\ell, u_{i \to a}^\ell, \tau_{i \to a}^{\ell} \mid i \in \partial a, 1 \leq \ell \leq L\}$, we have $\E[|x|^k] \le C_{k,L}$.
\end{proposition}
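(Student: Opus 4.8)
The plan is to reduce everything to \Cref{lemma:finite_moments_are_bounded} and \Cref{lemma:message_second_moment}. First I would note that it suffices to bound even moments: for odd $k$, Cauchy--Schwarz gives $\E[|x|^k] \le \sqrt{\E[x^{2k}]}$, so any bound on the $(2k)$-th moment transfers. So fix an even $k$, and write $C_{k,L}$ for the constant we will produce at the end.

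For the messages $u_{i \to a}^{\ell}$ and $u_i^{\ell}$: when $\ell = 1$ we have $u_{i \to a}^{1} = w_{i \to a}^{1}$ and $u_i^{1} = w_i^{1}$, so the bound is immediate from \Cref{lemma:finite_moments_are_bounded}. For $2 \le \ell \le L$, write $u_{i \to a}^{\ell} = w_{i \to a}^{\ell} - w_{i \to a}^{\ell - 1}$; by Jensen's inequality applied to $t \mapsto t^k$ (equivalently, the binomial expansion), $\E[(u_{i \to a}^{\ell})^k] \le 2^{k-1}\big(\E[(w_{i \to a}^{\ell})^k] + \E[(w_{i \to a}^{\ell-1})^k]\big)$, and each term on the right is bounded uniformly over $d$ by \Cref{lemma:finite_moments_are_bounded}. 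The identical argument handles $u_i^{\ell}$. This produces a constant $C'_{k,L}$, independent of $d$, bounding $\E[|u_{i \to a}^{\ell}|^k]$ and $\E[|u_i^{\ell}|^k]$ for all $1 \le \ell \le L$.

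For the variance parameters $\tau_{i \to a}^{\ell}$: by \Cref{lemma:message_second_moment}, $(\tau_{i \to a}^{\ell})^2 = \E\big[(u_{i \to a}^{\ell})^2 \,\big|\, \mathcal{G}_i^{\ell-1}\big] \ge 0$, so $(\tau_{i \to a}^{\ell})^k = \big(\E[(u_{i \to a}^{\ell})^2 \mid \mathcal{G}_i^{\ell-1}]\big)^{k/2}$. Since $k/2 \ge 1$, the conditional Jensen inequality applied to the convex function $t \mapsto t^{k/2}$ on $[0,\infty)$ gives $\big(\E[(u_{i \to a}^{\ell})^2 \mid \mathcal{G}_i^{\ell-1}]\big)^{k/2} \le \E\big[|u_{i \to a}^{\ell}|^{k} \,\big|\, \mathcal{G}_i^{\ell-1}\big]$. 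Taking expectations and using the tower property, $\E[(\tau_{i \to a}^{\ell})^k] \le \E[|u_{i \to a}^{\ell}|^k] \le C'_{k,L}$. Setting $C_{k,L} \defeq C'_{k,L}$ (maximized over the finitely many relevant $\ell$) completes the proof.

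I do not expect a genuine obstacle here: the statement is essentially a corollary of \Cref{lemma:finite_moments_are_bounded}, whose inductive proof — exploiting the multilinear/Fourier structure of $f$, the vanishing of linear Fourier mass, and the independence supplied by \Cref{lemma:SigmaAlgebraIndependence} — is the real technical content. The only steps requiring a little care are the reduction to even moments and using conditional Jensen in the correct direction for the $\tau$ bound; both are standard.
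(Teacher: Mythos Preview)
Your proposal is correct and matches the paper's proof essentially line for line: both bound $u_{i\to a}^\ell$ and $u_i^\ell$ via $u^\ell = w^\ell - w^{\ell-1}$ together with \Cref{lemma:finite_moments_are_bounded}, and both handle $\tau_{i\to a}^\ell$ by invoking \Cref{lemma:message_second_moment} and applying Jensen to the conditional expectation before using the tower property. The only cosmetic difference is that you spell out the reduction to even moments and the inequality $(x+y)^k \le 2^{k-1}(x^k+y^k)$ explicitly, whereas the paper leaves these implicit.
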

\begin{proof}
Recall that $u_i^{\ell} = w_i^{\ell} - \mathbbm{1}_{[\ell \ge 1]} w_i^{\ell - 1}$. Since both terms' moments are bounded independent of $d$ by \Cref{lemma:finite_moments_are_bounded}, $u_i^{\ell}$'s moments are bounded independent of $d$. The same is true for the node-to-factor message $u_{i \to a}^{\ell}$.

By \Cref{lemma:message_second_moment}, $\left( \tau_{i  \to a}^\ell \right)^2$ is the second moment of $u_{i  \to a}^{\ell}$ conditioned on $\mathcal{G}_i^{\ell - 1}$, so for any integer $s \ge 1$, 
\begin{align*}
    \E\left[ \left( \tau_{i  \to a}^\ell \right)^{2s}
    \right]
    =
      \E\left[ \left( \E\left[ \left( u_{i  \to a}^\ell \right)^2 \mid \mathcal{G}_i^{\ell - 1}
    \right] \right)^{s}  \right]
    \le
         \E\left[  \E\left[  \left( u_{i  \to a}^\ell \right)^{2s} \mid \mathcal{G}_i^{\ell - 1} \right] \right]
    =
    \E\left[ \left( u_{i  \to a}^\ell \right)^{2s} \right]
    \,,
\end{align*}
where the inequality follows by Jensen's on $x \mapsto x^s$ for non-negative inputs. 
So all even moments, and thus all moments of $\tau_{i  \to a}^\ell$ are  bounded independent of $d$.
\end{proof}
\begin{lemma}
\label{lemma:mixture_variance_bound}
Assume \Cref{hyp:second_moment}.
    Then for all variables $i$, factors $a \in \partial i$,  and $1 \le \ell \le L$, we have $\E\left[ \left( \left( \tau_{i  \to a} ^\ell \right)^2 - \nu_\ell \right)^2 \right] = O_d\left(\frac{1}{d}\right)$, and $\left| \E[(z_{i \to a}^\ell)^2 ]- (\ell + 1)\delta \right| = O_d\left(\frac{1}{\sqrt{d}}\right)$, where the constant may depend on $\ell$.
\end{lemma}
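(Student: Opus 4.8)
The plan is to prove both estimates together by a (strong) induction on $\ell$, where at each level I first establish the variance estimate for $\tau$ and then the one for $z$ (the latter consumes the $\tau$-estimate at the same and all lower levels, which is why $\tau^{\ell+1}$ must come before $z^{\ell+1}$). The base case $\ell = 0$ is immediate: $z_{i\to a}^0\sim\cN(0,\delta)$ gives $\E[(z_{i\to a}^0)^2]=\delta$. For the $\ell=1$ variance of $\tau$, the defining formula for $(\tau_{i\to a}^1)^2$ is deterministic, and index-regularity---each variable lies at each of the $r$ coordinates of exactly $d/r$ clauses---together with the vanishing of linear Fourier weights turns $\tfrac1{d-1}\sum_{b\in\partial i\setminus a}\sum_{S\subseteq\partial b,\,i\in S}\hat f(S)^2\delta^{|S|-1}$ into $\tfrac1r\sum_T|T|\hat f(T)^2\delta^{|T|-1}+O_d(\tfrac1d)=\tfrac{\xi'(\delta)-\xi'(0)}{r}+O_d(\tfrac1d)=\nu_1+O_d(\tfrac1d)$, so $\E[((\tau_{i\to a}^1)^2-\nu_1)^2]=O_d(\tfrac1{d^2})$.

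For the $z$-estimate at a general level $\ell$, I expand $(z_{i\to a}^\ell)^2=\big(z_{i\to a}^0+\sum_{s=1}^\ell A_{i\to a}^{s-1}u_{i\to a}^s\big)^2$. Every cross term vanishes in expectation: in $\E[z_{i\to a}^0 A_{i\to a}^{s-1}u_{i\to a}^s]$ and in $\E[A_{i\to a}^{s-1}u_{i\to a}^s A_{i\to a}^{s'-1}u_{i\to a}^{s'}]$ with $s<s'$, the factor other than the last $u$ lies in $\mathcal{G}_i^{s-1}$ resp.\ $\mathcal{G}_i^{s'-1}$ by \Cref{lemma:u_is_in_G}, while $\E[u_{i\to a}^{s}\mid\mathcal{G}_i^{s-1}]=0$ by \Cref{lemma:uy_is_zero}. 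For the surviving diagonal terms, $A_{i\to a}^{s-1}\in\mathcal{G}_i^{s-1}$ and \Cref{lemma:message_second_moment} give $\E[(A_{i\to a}^{s-1})^2(u_{i\to a}^s)^2]=\E[(A_{i\to a}^{s-1})^2(\tau_{i\to a}^s)^2]$; writing $(\tau_{i\to a}^s)^2=\nu_s+e$ with $\E[e^2]=O_d(\tfrac1d)$ by the inductive hypothesis, and using $|A_{i\to a}^{s-1}|\le K$, Cauchy--Schwarz, and $\E[(A_{i\to a}^{s-1})^2]=\delta/\nu_s$ from \Cref{hyp:second_moment}, this equals $\delta+O_d(\tfrac1{\sqrt d})$. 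Summing, $\E[(z_{i\to a}^\ell)^2]=\delta+\ell\big(\delta+O_d(\tfrac1{\sqrt d})\big)=(\ell+1)\delta+O_d(\tfrac1{\sqrt d})$.

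For the $\tau$-estimate at level $\ell+1$ (with $\ell\ge1$), I write $(\tau_{i\to a}^{\ell+1})^2=\tfrac1{d-1}\sum_{b\in\partial i\setminus a}X_b$ with $X_b$ the summand of \Cref{def:message_variance}. Each $X_b$ is measurable with respect to the initial Gaussians in $\bigcup_{j\in\partial b\setminus i}\B_{j\to b}(\ell-1)$, and for distinct $b\in\partial i$ these index sets are disjoint by local treelikeness (an instance of \Cref{lemma:SigmaAlgebraIndependence}), so the $X_b$ are mutually independent; they have all moments bounded uniformly in $d$ by \Cref{lemma:finite_moments_are_bounded}, \Cref{cor:message_moment_bounded} and $|A|\le K$. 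Hence $\Var\big[(\tau_{i\to a}^{\ell+1})^2\big]=\tfrac1{(d-1)^2}\sum_b\Var[X_b]=O_d(\tfrac1d)$. For the mean: using independence of the per-vertex quantities over $j\in\partial b\setminus i$ (\Cref{lemma:SigmaAlgebraIndependence} around $b$) and $\E[z_{j\to b}^{\ell-1}]=0$, only the terms with $S_1=S_2=S$ survive the expectation---any leftover $z$-factor appearing to an odd power kills the term---and for those $\E[\phi_i^\ell(S,S)]=\sum_{\varnothing\subsetneq\widetilde T\subseteq S\setminus i}\prod_{j\in(S\setminus i)\setminus\widetilde T}\E[(z_{j\to b}^{\ell-1})^2]\prod_{j\in\widetilde T}\E[(A_{j\to b}^{\ell-1})^2(\tau_{j\to b}^\ell)^2]$. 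Plugging in $\E[(z_{j\to b}^{\ell-1})^2]=\ell\delta+O_d(\tfrac1{\sqrt d})$ (inductive hypothesis) and $\E[(A_{j\to b}^{\ell-1})^2(\tau_{j\to b}^\ell)^2]=\delta+O_d(\tfrac1{\sqrt d})$ (as in the $z$-step, using \Cref{hyp:second_moment} and the $\tau^\ell$ estimate), and noting that perturbing each of the at most $r-1$ bounded factors by $O_d(\tfrac1{\sqrt d})$ moves the product by $O_d(\tfrac1{\sqrt d})$, the binomial theorem collapses the $\widetilde T$-sum to $((\ell+1)\delta)^{|S|-1}-(\ell\delta)^{|S|-1}+O_d(\tfrac1{\sqrt d})$. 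Finally index-regularity converts $\tfrac1{d-1}\sum_{b\in\partial i\setminus a}\sum_{S\ni i}\hat f(S)^2\big(((\ell+1)\delta)^{|S|-1}-(\ell\delta)^{|S|-1}\big)$ into $\tfrac1r\big(\xi'((\ell+1)\delta)-\xi'(\ell\delta)\big)+O_d(\tfrac1{\sqrt d})=\nu_{\ell+1}+O_d(\tfrac1{\sqrt d})$; combined with the variance bound this gives $\E[((\tau_{i\to a}^{\ell+1})^2-\nu_{\ell+1})^2]=O_d(\tfrac1d)$, closing the induction.

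The main obstacle is the mean computation for $\tau^{\ell+1}$: recognizing that only the diagonal $S_1=S_2$ terms of the double Fourier sum survive the expectation (odd-power cancellation of the mean-zero $z_{j\to b}^{\ell-1}$), carrying out the factorization correctly via mutual independence of the $\mathcal{G}_{j\to b}^{\ell-1}$, spotting the binomial-theorem collapse to $((\ell+1)\delta)^{|S|-1}-(\ell\delta)^{|S|-1}$, and matching the resulting structure sum to $\tfrac1r(\xi'((\ell+1)\delta)-\xi'(\ell\delta))$ through index-regularity---all while keeping the interleaved $\tau$/$z$ induction indices consistent. The variance bound via independence and the propagation of $O_d(\tfrac1{\sqrt d})$ errors through bounded finite products are routine, as are the vanishing-cross-term arguments (which reuse \Cref{lemma:u_is_in_G} and \Cref{lemma:uy_is_zero} verbatim).
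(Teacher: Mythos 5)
Your proof is correct and reaches the same conclusion by the same underlying combinatorics, but it organizes the $\tau$-estimate differently from the paper. You decompose the mean-square error as $\E[((\tau_{i\to a}^{\ell+1})^2-\nu_{\ell+1})^2] = \Var\big((\tau_{i\to a}^{\ell+1})^2\big) + \big(\E[(\tau_{i\to a}^{\ell+1})^2]-\nu_{\ell+1}\big)^2$, handling the variance by mutual independence of the per-clause contributions $X_b$ across $b\in\partial i\setminus a$ (each with $d$-uniform bounded moments), and the bias by the mean computation (only $S_1=S_2$ survives, factor by independence over $j$, plug in the inductive $z$- and $\tau$-estimates, binomial collapse, index-regularity). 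The paper instead first splits the sum into diagonal ($S_1=S_2$) and off-diagonal parts $R_{i,b,\ell}+\widetilde R_{i,b,\ell}$, shows the off-diagonal piece is mean-zero and contributes $O_d(1/d)$ by independence, and then—rather than computing $\E[\phi_i^\ell(S,S)]$ directly—introduces a deterministic surrogate $\rho_i^\ell(S,S)$ in which $(\tau_{j\to b}^\ell)^2$ is replaced by $\nu_\ell$, bounds $\E[(\phi-\rho)^2]=O_d(1/d)$, and then expands the double sum over $(b_1,b_2)$ to bound the $\rho$-contribution. Both routes rely on the same three facts—independence across $b$, vanishing of off-diagonal $S_1\neq S_2$ terms, and the $\xi'$-via-index-regularity identity—but your bias-variance split avoids the intermediate $\phi\to\rho$ surrogate and the explicit $(b_1,b_2)$ cross-term accounting, at the price of having to assert $d$-uniform boundedness of $\Var(X_b)$ directly (which follows from the same moment lemmas the paper invokes). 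Your expansion of $(z_{i\to a}^\ell)^2$ as a telescoped sum of increments rather than the one-step recursion $z^{\ell+1}=z^\ell+A^\ell u^{\ell+1}$ is also a minor but clean reorganization, and your interleaved ordering $z^0,\tau^1,z^1,\tau^2,\dots$ correctly respects the dependencies between the two estimates.
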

\begin{proof}
Recall that there are no linear terms in the predicate $f$. By definition, for any factor $b$ and $s \in \mathbb{R}$,
\begin{align*}
    \xi'(s) = \sum_{j=2}^r \|f^{=j}\|^2 \cdot j \cdot s^{j-1}
    =
    \sum_{S \subseteq \partial b, |S| \ge 2} \hat{f}(S)^2 \cdot |S| \cdot s^{|S|-1}
    =
    \sum_{v \in \partial b} \sum_{S \subseteq \partial b, v \in S}\hat{f}(S)^2  \cdot  s^{|S|-1}\,.
\end{align*}
Since the instance is index-regular, we may rewrite the right-hand side for any variable $i$ as
\begin{align}
\label{eqn:xi_index_regular_expansion}
    \xi'(s) = \frac{r}{d} \cdot \sum_{b \in \partial i} \sum_{S \subseteq \partial b, i \in S} \hat{f}(S)^2 \cdot s^{|S|-1} \,.
\end{align}
We proceed by induction. For the base case $\ell = 1$, note that $\nu_1 = \frac{\xi'(\delta) - \xi'(0)}{r} = \frac{\xi'(\delta)}{r} = \left( \tau_{i \to a}^1 \right)^2 + O_d\left(\frac{1}{d} \right)$ by definition. Furthermore, $\E[(z_{i \to a}^1)^2 ]$ is bounded by
\begin{align*}
    \E[(z^{1}_{i \to a})^2] = \E\left[\left(z^{0}_{i \to a} + A^{0}_{i \to a}u^{1}_{i \to a}\right)^2\right] 
    &= \E\left[\left(z^{0}_{i \to a}\right)^2\right] + 2\E\left[\left(z^{0}_{i \to a}\right)\left(A^{0}_{i \to a}u^{1}_{i \to a}\right)\right] + \E\left[\left(A^{0}_{i \to a}u^{1}_{i \to a}\right)^2\right]
    \\
    &= \delta + 0 + \E\left[\left(A^{0}_{i \to a}\cdot \tau^{1}_{i \to a}\right)^2\right]\,,
    \end{align*}
    where the last equality follows by \Cref{lemma:uy_is_zero} and \Cref{lemma:message_second_moment}. The remaining expectation is $\delta + O_d\left(\frac{1}{d} \right)$ by $\left( \tau^1_{i \to a} \right)^2 = \nu_1 + O_d\left(\frac{1}{d} \right)$ and by \Cref{hyp:second_moment}, so $ \E[(z^{1}_{i \to a})^2]  = 2 \delta + O_d\left( \frac{1}{d} \right)$ as required.
    
For the inductive step, assume that the lemma holds up to $\ell$; we consider the statement at $\ell + 1$. We write $\left( \tau_i^{\ell + 1} \right)^2 = \frac{1}{d} \sum_{b \in \partial i  \setminus  a} (R_{i,b,\ell} + \widetilde{R}_{i,b,\ell})$, where 
\begin{align*}
    R_{i,b,\ell} &= \sum_{S \subseteq \partial b, i \in S} \hat{f}(S)^2 \cdot \phi_i^{\ell}(S,S)\,,
    \\
        \widetilde{R}_{i,b,\ell} &= \sum_{\substack{S_1, S_2 \subseteq \partial b, S_1 \ne S_2 \\ i \in S_1,S_2}} \hat{f}(S_1) \hat{f}(S_2) \cdot \phi_i^{\ell}(S_1,S_2)\,.
\end{align*}
Since $(x+y)^2 \le 2(x^2 + y^2)$, we have
\begin{align*}
    \E\left[ \left( \left( \tau_{i \to a}^\ell \right)^2 - \nu_{\ell+1} \right)^2 \right]
    \le
    2 \E\left[ \left(  \left( \frac{1}{d-1} \sum_{b \in \partial i  \setminus  a} R_{i,b,\ell} \right) - \nu_{\ell+1} \right)^2 \right]
    +
        2 \E\left[ \left( \frac{1}{d-1} \sum_{b \in \partial i  \setminus  a} \widetilde{R}_{i,b,\ell} \right)^2 \right]\,.
\end{align*}
We first show that the term involving $\widetilde{R}_{i,b,\ell}$ contributes at most $O_d(\frac{1}{d})$. Recall that $\E[z_j^{\ell}] = 0$ for all variables $j$ by \Cref{lemma:uy_is_zero}. So, by definition of $\phi_i^\ell(S_1, S_2)$ and \Cref{lemma:SigmaAlgebraIndependence} around factor $b$, $\E[\phi_i^{\ell}(S_1, S_2)] = 0$ when $S_1 \ne S_2$.
This implies $\E[\widetilde{R}_{i,b,\ell}] = 0$ by linearity. By \Cref{remark:f_derivative_doesnt_use_i} and \Cref{lemma:SigmaAlgebraIndependence}, $R_{i,b,\ell}$ are independent for different $b \in \partial i$, so 
\begin{align*}
    \E\left[\left(  \frac{1}{d-1} \sum_{b \in \partial i  \setminus  a}\widetilde{R}_{i,b,\ell} \right)^2 \right] = \frac{1}{(d-1)^2} \sum_{b \in \partial i  \setminus  a} \E \left[ \widetilde{R}_{i,b,\ell}^2 \right] = O_d\left(\frac{1}{d} \right)\,,
\end{align*}
where the last equality follows by \Cref{lemma:finite_moments_are_bounded}. 

Now we inspect $\nu_{\ell+1}$. By \Cref{eqn:xi_index_regular_expansion}, we may express it with any variable $i$ as
\begin{align*}
    \nu_{\ell + 1} = \frac{\xi'\left((\ell + 1) \delta \right) - \xi'(\ell \delta)}{r} = \frac{1}{d} \cdot \sum_{b \in \partial i} \sum_{S \subseteq \partial b, i \in S} \hat{f}(S)^2 \cdot \left(
    \left((\ell \delta + \delta \right)^{|S| - 1} - \left(\ell \delta \right)^{|S|-1} \right)\,.
\end{align*}
Similarly, we may simplify $\phi_i^\ell(S,S)$ as 
\begin{align*}
    \phi_i^{\ell}(S,S) 
    &= \sum_{\varnothing \subsetneq \widetilde{T} \subseteq S \setminus i}
    \left( \prod_{j \in (S \setminus i) \setminus \widetilde{T}} \left( z_{j \to b}^{\ell - 1} \right)^2 \right)
    \cdot 
        \left( \prod_{j \in \widetilde{T}} \left(A_{j \to b}^{\ell - 1} \cdot
        \tau_{j \to b}^{\ell - 1} \right)^2  \right) 
        \\
        &= \prod_{j \in S \setminus i}  \left( 
        \left( z_{j \to b}^{\ell - 1} \right)^2 + 
        \left(A_{j \to b}^{\ell - 1} \cdot
        \tau_{j \to b}^{\ell - 1} \right)^2  \right)
        -
    \prod_{j \in S \setminus i} \left( z_{j \to b}^{\ell - 1} \right)^2 \,.
\end{align*}
We first show that $ \phi_i^{\ell}(S,S)$ is close to $\rho_i^{\ell}(S,S) \defeq  \prod_{j \in S \setminus i} 
       \left( \left( z_{j \to b}^{\ell - 1} \right)^2 + 
        \left(A_{j \to b}^{\ell - 1} \right)^2 \cdot
        \nu_{\ell} \right)
        -
    \prod_{j \in S \setminus i} \left( z_{j \to b}^{\ell - 1} \right)^2$:
\begin{align*}
    \left( \phi_i^{\ell}(S,S)  - \rho_i^{\ell}(S,S)  \right)^2
    &=
\left( \prod_{j \in S \setminus i}  \left(
        \left( z_{j \to b}^{\ell - 1} \right)^2 + 
        \left(A_{j \to b}^{\ell - 1} \cdot
        \tau_{j \to b}^{\ell - 1} \right)^2  \right)
        -
\prod_{j \in S \setminus i} \left( 
        \left( z_{j \to b}^{\ell - 1} \right)^2 + 
        \left(A_{j \to b}^{\ell - 1} \right)^2 \cdot
        \nu_{\ell}   \right)
    \right)^2
        \\
        &= 
        \left(
        \sum_{\varnothing \subsetneq \widetilde{T} \subseteq S \setminus i}
 \left( \prod_{j \in (S \setminus i) \setminus \widetilde{T}} \left( z_{j \to b}^{\ell - 1} \right)^2 \right)
 \cdot 
 \left( \prod_{j \in \widetilde{T}} \left(A_{j \to b}^{\ell - 1}
 \right)^2
 \right)
 \cdot
 \left(
 \prod_{j \in \widetilde{T}}
 \left(
        \tau_{j \to b}^{\ell - 1} \right)^2 
        -
        \prod_{j \in \widetilde{T}} \nu_{\ell}
        \right)
    \right)^2
    \\
    &\le 
    2^r \cdot 
  \sum_{\varnothing \subsetneq \widetilde{T} \subseteq S \setminus i}
 \left( \prod_{j \in (S \setminus i) \setminus \widetilde{T}} \left( z_{j \to b}^{\ell - 1} \right)^4 \right)
 \cdot 
 \left( \prod_{j \in \widetilde{T}} \left(A_{j \to b}^{\ell - 1}
 \right)^4
 \right)
 \cdot
 \left(
 \prod_{j \in \widetilde{T}}
 \left(
        \tau_{j \to b}^{\ell - 1} \right)^2 
        -
        \prod_{j \in \widetilde{T}} \nu_{\ell}
        \right)^2\,,
\end{align*}
where the inequality follows by Cauchy-Schwarz. By \Cref{lemma:finite_moments_are_bounded} and boundedness of $A_{j \to b}^{\ell - 1}$, this expected value of the difference is controlled by the right-most term; i.e. 
\begin{align*}
    \E\left[ \left( \phi_i^{\ell}(S,S)  - \rho_i^{\ell}(S,S)  \right)^2 \right]
    =
    O_d\left(
      \sum_{\varnothing \subsetneq \widetilde{T} \subseteq S \setminus i}
\E\left[ 
\left(
 \prod_{j \in \widetilde{T}}
 \left(
        \tau_{j \to b}^{\ell - 1} \right)^2 
        -
        \prod_{j \in \widetilde{T}} \nu_{\ell}
        \right)^2
        \right]
    \right)\,.
\end{align*}
Each term in the sum above is $O_d(\frac{1}{d})$ by the inductive hypothesis and \Cref{cor:message_moment_bounded}, so the total value is $O_d(\frac{1}{d})$.

Now we compare $\rho_i^\ell(S,S)$ with $\left(
    \left((\ell \delta + \delta \right)^{|S| - 1} - \left(\ell \delta \right)^{|S|-1} \right)$. First, note that by \Cref{lemma:SigmaAlgebraIndependence} around factor $b$,
    \begin{align*}
        \E[ \rho_i^\ell(S,S) ] &=
        \prod_{j \in S \setminus i} \left( 
      \E\left[   \left( z_{j \to b}^{\ell - 1} \right)^2  \right] + 
       \E \left[ \left(A_{j \to b}^{\ell - 1} \right)^2 \cdot
        \nu_{\ell} \right] \right) 
        -
    \prod_{j \in S \setminus i}  \E\left[  \left( z_{j \to b}^{\ell - 1} \right)^2 \right]
    \\
    &=
   \prod_{j \in S \setminus i}  \left(
      \E\left[   \left( z_{j \to b}^{\ell - 1} \right)^2  \right] + 
       \delta \right) 
        -
    \prod_{j \in S \setminus i}  \E\left[  \left( z_{j \to b}^{\ell - 1} \right)^2 \right]\,;
    \end{align*}
     the last equality is by \Cref{hyp:second_moment}. By the inductive hypothesis, this value is $\left(
    \left((\ell \delta + \delta \right)^{|S| - 1} - \left(\ell \delta \right)^{|S|-1} \right) + O_d\left(\frac{1}{\sqrt{d}} \right)$.

We now put these facts together. First, since $(x+y)^2 \le 2 (x^2 + y^2)$,
\begin{align*}
    &\E\left[ \left(  \left( \frac{1}{d-1} \sum_{b \in \partial i  \setminus  a} R_{i,b,\ell} \right) - \nu_{\ell+1} \right)^2 \right]
    \\
    =\, &\E\left[ \left(  \frac{1}{d-1} \sum_{b \in \partial i \setminus a} \sum_{S \subseteq \partial a, i \in S} \hat{f}(S)^2 \cdot 
    \left( 
    \phi_i^{\ell}(S,S) - 
    \left((\ell \delta + \delta \right)^{|S| - 1}+ \left(\ell \delta \right)^{|S|-1}
    \right) \right)^2 \right]
    +
    O_d\left(\frac{1}{d} \right)
    \\
    =\, 
    &2\E\left[ \left(  \frac{1}{d-1} \sum_{b \in \partial i \setminus a} \sum_{S \subseteq \partial b, i \in S} \hat{f}(S)^2 \cdot 
    \left( 
    \rho_i^{\ell}(S,S) - 
    \left((\ell \delta + \delta \right)^{|S| - 1}+ \left(\ell \delta \right)^{|S|-1}
    \right) \right)^2 \right]
    +
    O_d \left(\frac{1}{d} \right)\,.
\end{align*}
We inspect the remaining term; there are $(d-1)^2$ choices of $b_1, b_2 \in \partial i  \setminus  a$. Recall that $\hat{f}(S)^2 \le 1$. The on-diagonal parts ($b_1 = b_2$) are at most $O_d(\frac{1}{d})$ by \Cref{lemma:finite_moments_are_bounded}. For each off-diagonal part, the expectation splits by \Cref{lemma:SigmaAlgebraIndependence} as
\begin{align*}
     O_d\left( 
     \sum_{
     \substack{ S_1 \subseteq \partial b_1, S_2 \subseteq \partial b_2
     \\ i \in S_1, S_2}
     } 
 \left( 
     \E\left[ \rho_i^{\ell}(S_1,S_1) \right] - 
    \left((\ell \delta + \delta \right)^{|S| - 1}+ \left(\ell \delta \right)^{|S|-1}
    \right)
    \left( 
 \E\left[ \rho_i^{\ell}(S_2,S_2)  \right] - 
    \left((\ell \delta + \delta \right)^{|S| - 1}+ \left(\ell \delta \right)^{|S|-1}
    \right)
     \right)\,,
\end{align*}
which is $O_d\left( \frac{1}{d} \right)$ for every choice of $S_1, S_2$ as shown earlier.
Furthermore,  $\E[(z_{i \to a}^{\ell + 1})^2 ]$ is bounded by
\begin{align*}
    \E[(z^{\ell + 1}_{i \to a})^2] = \E\left[\left(z^{\ell}_{i \to a} + A^{\ell}_{i \to a}u^{\ell + 1}_{i \to a}\right)^2\right] 
    &= \E\left[\left(z^{\ell}_{i \to a}\right)^2\right] + 2\E\left[\left(z^{\ell}_{i \to a}\right)\left(A^{\ell}_{i \to a}u^{\ell + 1}_{i \to a}\right)\right] + \E\left[\left(A^{\ell}_{i \to a}u^{\ell + 1}_{i \to a}\right)^2\right]
    \\
    &=  \ell \delta + O_d \left(\frac{1}{d} \right) +  0 + \E\left[\left(A^{\ell}_{i \to a}\cdot \tau^{\ell + 1}_{i \to a}\right)^2\right] \,,
    \end{align*}
    where the last equality follows the inductive hypothesis, \Cref{lemma:uy_is_zero}, and \Cref{lemma:message_second_moment}. 
    The remaining expectation is  $\delta + O_d\left(\frac{1}{d} \right)$, since we just proved $\left( \left( \tau^{\ell + 1}_{i \to a} \right)^2 - \nu_{\ell} \right)^2 = O_d\left(\frac{1}{d}\right)$, and by \Cref{hyp:second_moment}. So $ \E[(z^{1}_{i \to a})^2]  = (\ell + 1) \delta + O_d\left(\frac{1}{d} \right)$ as required.
\end{proof}

\begin{corollary}\label{cor:second_moment_nu}
Choose any variable $i$, factor $a \in \partial i$, and $1 \le \ell \le L$. We have
\[
\plim_{d \to \infty}\, \E \Big[ \big( u_{i \to a}^{\ell} \big)^2 \;\Big\vert\; \cG_{i}^{\ell - 1} \Big] = \nu_\ell\,.
\]
\end{corollary}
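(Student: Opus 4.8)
The plan is to chain together the two results that have just been established. By \Cref{lemma:message_second_moment}, the conditional second moment in question is exactly the variance parameter: $\E[(u_{i \to a}^\ell)^2 \mid \mathcal{G}_i^{\ell-1}] = (\tau_{i \to a}^\ell)^2$, which is $\mathcal{G}_i^{\ell-1}$-measurable. So the claim reduces to showing that $(\tau_{i \to a}^\ell)^2$ converges in probability to the deterministic constant $\nu_\ell$ as $d \to \infty$.

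Next I would invoke \Cref{lemma:mixture_variance_bound}, which states $\E\left[\left((\tau_{i \to a}^\ell)^2 - \nu_\ell\right)^2\right] = O_d\left(\frac{1}{d}\right)$. This says precisely that the random variable $(\tau_{i \to a}^\ell)^2 - \nu_\ell$ converges to $0$ in $L^2$ as $d \to \infty$. Since $L^2$ convergence implies convergence in probability (e.g.\ by Chebyshev's inequality: $\mathbb{P}(|(\tau_{i \to a}^\ell)^2 - \nu_\ell| > \epsilon) \le \epsilon^{-2} \E[((\tau_{i \to a}^\ell)^2 - \nu_\ell)^2] = O_d(1/d) \to 0$), we conclude $\plim_{d \to \infty} (\tau_{i \to a}^\ell)^2 = \nu_\ell$. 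Combining with the identity from the first paragraph gives $\plim_{d \to \infty} \E[(u_{i \to a}^\ell)^2 \mid \mathcal{G}_i^{\ell-1}] = \nu_\ell$, as desired.

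There is essentially no obstacle here: all the real work — identifying the conditional second moment with the recursively-defined $\tau$, and showing that $\tau$ concentrates around $\nu_\ell$ via an inductive argument using index-regularity, \Cref{hyp:second_moment}, and the boundedness of all algorithmic moments — has already been carried out in \Cref{lemma:message_second_moment} and \Cref{lemma:mixture_variance_bound}. The corollary is simply the packaging of those two facts into the statement about convergence in probability of the conditional variance, which is the form needed to feed into the state evolution argument in \Cref{sec:appendix_stateevo}.
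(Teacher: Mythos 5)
Your proof is correct and matches the paper's exactly: both chain \Cref{lemma:message_second_moment} (which identifies the conditional second moment with $(\tau_{i \to a}^\ell)^2$) with \Cref{lemma:mixture_variance_bound} (which gives $L^2$ convergence of $(\tau_{i \to a}^\ell)^2$ to $\nu_\ell$), and then use the standard fact that convergence in quadratic mean implies convergence in probability. Nothing to add.
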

\begin{proof}
    Follows from \Cref{lemma:message_second_moment,lemma:mixture_variance_bound}, and since convergence in quadratic mean implies convergence in probability.
\end{proof}

We now compute the higher-order moments of $u_{i \to a}^{\ell} \;\big\vert\; \cG_{i}^{\ell - 1}$; they approach that of a Gaussian as $d \to \infty$.
\begin{lemma}
\label{lemma:message_higher_moments}
Assume \Cref{hyp:second_moment}.
Choose any variable $i$, and $0 \le \ell \le L - 1$. Then for any $p \ge 1$,
\begin{align*}
&\plim_{d \to \infty} \,\E\left[\left( u_{i \to a}^{\ell+1} \right)^{2p} \mid \mathcal{G}_i^{\ell} \right]
= (2p-1)!! \cdot \left( \nu_{\ell+1} \right)^{p}\,, \\
&\plim_{d \to \infty} \,\E\left[\left( u_{i \to a}^{\ell+1} \right)^{2p+1} \mid \mathcal{G}_i^{\ell} \right] = 0\,.
\end{align*}
\end{lemma}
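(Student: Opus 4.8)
The plan is to mirror the proof of \Cref{lemma:message_second_moment}, but expand a general power rather than a square. Write $u_{i \to a}^{\ell+1} = \frac{1}{\sqrt{d-1}} \sum_{b \in \partial i \setminus a} X_b$, where $X_b \defeq \D_{i;b} f(\vz_{\partial b \to b}^\ell) - \mathbbm{1}_{[\ell \ge 1]}\D_{i;b} f(\vz_{\partial b \to b}^{\ell-1})$ (the second term is absent when $\ell = 0$, where it is instead killed by the absence of linear terms in $f$). Two structural facts underpin the argument, both inherited from \Cref{sec:independence}. First, $\E[X_b \mid \mathcal{G}_i^\ell] = 0$: for $\ell \ge 1$, \Cref{cor:monomial_is_zero} gives $\E[\D_{i;b} f(\vz_{\partial b \to b}^\ell)\mid\mathcal{G}_i^\ell] = \D_{i;b} f(\vz_{\partial b \to b}^{\ell-1})$, and the right-hand side lies in $\mathcal{G}_i^\ell$ by \Cref{lemma:SigmaAlgebraDecomposition}. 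Second, conditioning on $\mathcal{G}_i^\ell$ fixes only blocks of initial Gaussians that are disjoint across distinct $b \in \partial i \setminus a$ (the same disjointness used in the proof of \Cref{lemma:messages_conditional_independence}), so the variables $\{X_b\}_{b \in \partial i \setminus a}$ remain mutually independent after conditioning on $\mathcal{G}_i^\ell$; in particular $\E[\prod_{j} X_{b_j} \mid \mathcal{G}_i^\ell]$ factors over the distinct factors appearing among the $b_j$.

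Then, for a power $k$, expand $(u_{i \to a}^{\ell+1})^k = (d-1)^{-k/2} \sum_{b_1, \dots, b_k} \prod_{j=1}^k X_{b_j}$ and take $\E[\cdot \mid \mathcal{G}_i^\ell]$. Grouping the $k$ slots by which $b_j$ coincide, a term survives only if every distinct factor appears with multiplicity at least $2$ — otherwise a lone $\E[X_b \mid \mathcal{G}_i^\ell]=0$ kills it — so the number of distinct factors is at most $\lfloor k/2 \rfloor$, and there are $O_d(d^{\lfloor k/2\rfloor})$ surviving (pattern, factor-assignment) pairs. All conditional moments $\E[|X_b|^m\mid\mathcal{G}_i^\ell]$ are bounded in expectation uniformly in $d$ (by Jensen and \Cref{lemma:finite_moments_are_bounded}, exactly as in \Cref{cor:message_moment_bounded}), so each surviving term has $\E[|\cdot|]$ bounded uniformly. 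For $k = 2p+1$ odd there are only $O_d(d^p)$ such terms against normalization $d^{-(2p+1)/2}$, so the conditional moment has $L^1$ norm $O_d(d^{-1/2})$ and hence converges to $0$ in probability. For $k = 2p$ even, the only patterns reaching the critical count $p$ of distinct factors are perfect pairings (forcing every multiplicity to equal $2$), of which there are $(2p-1)!!$; each such pairing contributes $(d-1)^{-p}\sum_{b_1,\dots,b_p \text{ distinct}} \prod_{j}\E[X_{b_j}^2 \mid \mathcal{G}_i^\ell]$, which differs from $\big(\frac{1}{d-1}\sum_{b \in \partial i \setminus a}\E[X_b^2\mid\mathcal{G}_i^\ell]\big)^p$ — the ``distinctness correction'', a sum of $O_d(d^{p-1})$ products controlled in $L^1$ by H\"older — by an error of $O_d(1/d)$ in $L^1$; patterns with fewer than $p$ distinct factors are likewise $O_d(1/d)$ in $L^1$. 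Since $\frac{1}{d-1}\sum_{b}\E[X_b^2\mid\mathcal{G}_i^\ell] = (\tau_{i\to a}^{\ell+1})^2$ by \Cref{lemma:message_second_moment}, and $(\tau_{i\to a}^{\ell+1})^2 \to \nu_{\ell+1}$ in probability by \Cref{cor:second_moment_nu}, the continuous mapping theorem gives $\plim_{d\to\infty}\E[(u_{i\to a}^{\ell+1})^{2p}\mid\mathcal{G}_i^\ell] = (2p-1)!!\,\nu_{\ell+1}^p$.

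The main obstacle is the bookkeeping around conditional independence: as the footnote to \Cref{lemma:messages_conditional_independence} warns, unconditional independence of the $X_b$ does not yield conditional independence, so one must re-run the ``$X_b = g_b(Y_b)$ with the unfixed blocks $Y_b$ mutually disjoint'' argument in this setting. One must also keep in mind that $\E[X_b^{m}\mid\mathcal{G}_i^\ell]$ for odd $m \ge 3$ need not vanish (since $X_b \mid \mathcal{G}_i^\ell$ is not symmetric); this is harmless, because such terms are already lower order by the counting, but it means the vanishing of odd moments here is a combinatorial rather than a symmetry phenomenon. A secondary nuisance is that these are statements about \emph{random} conditional moments, so every ``negligible'' contribution must be bounded in $L^1$ (hence in probability) rather than pointwise, which is why the uniform conditional-moment bounds and H\"older's inequality for splitting the products $\prod_j \E[X_{b_j}^{m_j}\mid\mathcal{G}_i^\ell]$ get used repeatedly.
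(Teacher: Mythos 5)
Your proposal is correct and mirrors the paper's proof almost exactly: the same expansion of $(u_{i\to a}^{\ell+1})^k$ over tuples of factors, the same vanishing of terms with a lone factor via $\E[X_b\mid\mathcal{G}_i^\ell]=0$ and conditional independence across $b$, the same $O_d(d^{c-k/2})$ counting that isolates perfect pairings, and the same reduction to $(\tau_{i\to a}^{\ell+1})^2$ and then $\nu_{\ell+1}$ via \Cref{lemma:message_second_moment} and \Cref{cor:second_moment_nu}. Your write-up is if anything slightly more careful than the paper's on two points the paper leaves implicit — re-running the disjoint-blocks argument to justify conditional independence of $\{X_b\}$ given $\mathcal{G}_i^\ell$, and explicitly pricing in the ``distinctness correction'' as an $O_d(1/d)$ error in $L^1$ — but these are refinements of the same proof, not a different route.
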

\begin{proof}
By definition, we have that
\begin{align*}
      u^{\ell + 1}_{i \to a}  = w_{i \to a}^{\ell + 1} - \mathbbm{1}_{[\ell \ge 1]} w_{i \to a}^\ell =
      \frac{1}{\sqrt{d}} \sum_{b \in \partial i  \setminus  a} \left(  \D_{i;b} f \big( \vz_{\partial b \rightarrow b}^{\ell} \big) - \mathbbm{1}_{[\ell \ge 1]} \cdot  \D_{i;b}f \big( \vz_{\partial b \rightarrow b}^{\ell - 1} \big) \right) \,.
\end{align*}
So then for any $s \ge 2$, we may write the $s^{\text{th}}$ moment as
\begin{align*}
      \E\left[   \left( u_{i \to a}^{\ell + 1} \right)^s \mid \mathcal{G}_{i}^\ell        \right]
     &=
     \frac{1}{(d-1)^{s/2}}\cdot \E\left[  \bigg(      \sum_{b \in \partial i  \setminus  a} \left(
     \D_{i;b} f\left( \vz_{\partial b \to b}^{\ell} \right)
     -
      \mathbbm{1}_{[\ell \ge 1]} \cdot \D_{i;b} f\left( \vz_{\partial b \to b}^{\ell-1} \right)
     \right)
     \bigg)^s
     \mid \mathcal{G}_{i}^\ell       \right]
     \\
     &=
          \frac{1}{(d-1)^{s/2}}\cdot \sum_{b_1, \dots, b_s \in \partial i  \setminus  a}\E\left[       \prod_{\iota = 1}^s
           \left(
     \D_{i;b_\iota} f\left( \vz_{\partial b_\iota \to b_\iota}^{\ell} \right)
     -
      \mathbbm{1}_{[\ell \ge 1]} \cdot
      \D_{i;b_\iota} f\left( \vz_{\partial b_\iota \to b_\iota}^{\ell-1} \right)
     \right)
       \mid \mathcal{G}_{i}^\ell
          \right]\,.
\end{align*}
Note that each term has mean zero by \Cref{cor:monomial_is_zero}, and is independent for different $b \in \partial i  \setminus  a$ by \Cref{remark:f_derivative_doesnt_use_i} and \Cref{lemma:SigmaAlgebraIndependence}.
So, if any factor $b$ appears exactly \emph{once} in $\{b_1, \ldots, b_s\}$, then the expectation of the corresponding product is zero. Let $c$ be the number of distinct factors in $\{b_1, \ldots, b_s\}$; then surviving terms must satisfy $c \le \lfloor \frac{s}{2} \rfloor$.

How large are the terms with $c$ distinct factors? The distinct factors can be chosen from $\partial i  \setminus  a$ in $O_d(d^c)$ ways, and the degeneracy from permuting the set of factors is at most $s!$.
By \Cref{lemma:finite_moments_are_bounded}, each term in the sum is bounded, so the terms with $c$ distinct factors altogether contribute $\frac{1}{(d-1)^{s/2}} \cdot s! \cdot O_d(d^c) = O_d(d^{c-s/2})$ to the sum. Note that when $c < s/2$, this value is $O_d\left(\frac{1}{\sqrt{d}}\right)$, so the only contribution larger than this is when $c = \frac{s}{2}$.
\begin{itemize}
    \item Suppose $s$ is \emph{odd}. Then $c$ never equals $\frac{s}{2}$, since $c \le \lfloor \frac{s}{2} \rfloor = \frac{s-1}{2}$. So the $s^{\text{th}}$ moment has order $O_d\left(\frac{1}{\sqrt{d}}\right)$, and thereby $\plim_{d \to \infty} \,\E\left[\left( u_{i \to a}^{\ell+1} \right)^{s} \mid \mathcal{G}_i^{\ell} \right] = 0$.
 \item  Suppose $s$ is \emph{even}. When $c = \frac{s}{2}$, each distinct factor appears exactly twice in $\{b_1, \dots, b_s\}$. As a result, we may write the $s^{\text{th}}$ moment  $ \E\left[   \left( u_{i \to a}^{\ell + 1} \right)^s \mid \mathcal{G}_{i}^\ell        \right]$ as
 \begin{align*}
         &\frac{(s-1)!!}{(d-1)^{s/2}}\cdot
         \sum_{b_1, \dots, b_{s/2} \in \partial i  \setminus  a}       \prod_{\iota = 1}^{s/2}
         \E\left[
          \left(
     \D_{i;b_\iota} f\left( \vz_{\partial b_\iota \to b_\iota}^{\ell} \right)
     -
      \mathbbm{1}_{[\ell \ge 1]} \cdot
      \D_{i;b_\iota} f\left( \vz_{\partial b_\iota \to b_\iota}^{\ell-1} \right)
     \right)^2
       \mid \mathcal{G}_{i}^\ell
          \right] + O_d\left(\frac{1}{\sqrt{d}}\right)
          \\
=\,
          &\frac{(s-1)!!}{(d-1)^{s/2}}\cdot
    \left(
\sum_{b \in \partial i  \setminus  a}
\E \left[
    \left(
     \D_{i;b} f\left( \vz_{\partial b \to b}^{\ell} \right)
     -
      \mathbbm{1}_{[\ell \ge 1]} \cdot
      \D_{i;b} f\left( \vz_{\partial b \to b}^{\ell-1} \right)
     \right)^2
       \mid \mathcal{G}_{i}^\ell
\right]
    \right)^{s/2} + O_d\left(\frac{1}{\sqrt{d}}\right)
    \\
=\,
&\frac{(s-1)!!}{(d-1)^{s/2}}\cdot
    \left(
\E \left[
\sum_{b_1, b_2 \in \partial i  \setminus  a}
\prod_{\iota \in \{1,2\}}
    \left(
     \D_{i;b_\iota} f\left( \vz_{\partial b_\iota \to b_\iota}^{\ell} \right)
     -
      \mathbbm{1}_{[\ell \ge 1]} \cdot
      \D_{i;b_\iota} f\left( \vz_{\partial b_\iota \to b_\iota}^{\ell-1} \right)
     \right)
       \mid \mathcal{G}_{i}^\ell
\right]
    \right)^{s/2} + O_d\left(\frac{1}{\sqrt{d}}\right)
    \\
=\,
&(s-1)!! \cdot \left( \left(\tau_{i \to a}^{\ell + 1} \right)^2 \right)^{s/2} + O_d\left(\frac{1}{\sqrt{d}}\right)\,,
\end{align*}
where the second equality follows because the terms where $b_1 \ne b_2$ have zero mean (i.e. see proof of \Cref{lemma:message_second_moment}). 
By \Cref{lemma:mixture_variance_bound} (and \Cref{hyp:second_moment}), we have $\plim_{d \to \infty} \,\E\left[\left( u_{i \to a}^{\ell+1} \right)^{s} \mid \mathcal{G}_i^{\ell} \right]
= (s-1)!! \cdot \left( \nu_{\ell+1} \right)^{s/2}$.\qedhere
\end{itemize}
\end{proof}

\subsubsection{Wasserstein distance to Gaussian goes to zero}
\label{sec:messages_converge_to_a_gaussian_in_wasserstein}

We now show that the Wasserstein distance of $u^\ell_{i \to a} \; \big\vert \; \mathcal{G}_{i}^{\ell-1}$  and $\cN \big( 0, \nu_\ell\big)$ converges in probability to zero in the large degree limit. 
We describe the intuition here. Let $X_d \defeq u^\ell_{i \to a} \; \big\vert \; \mathcal{G}_{i}^{\ell-1}$. 
Because the moments converge in probability
to that of a Gaussian random variable with distribution $\cN\big(0, ( \tau^\ell_{i \to a} )^2 \big)$,
and the moments of a Gaussian determine its distribution (e.g., \cite[Example 30.1]{billingsley}), $X_d$ must converge in probability in distribution to some $X \sim \cN\big(0, ( \tau^\ell_{i \to a} )^2 \big)$.
Furthermore, for each even $p > 0$, because $X_d \to X$ and the $p^{\text{th}}$ moment converges in probability, the Wasserstein distance of $X_d$ and $X$ converges in probability to zero.

Let us formalize this reasoning. Denote $\cP(\R)$ by the set of all real probability measures. We recall the definition of a sequence of measures converging in Wasserstein distance:
\begin{definition}[Convergence in $p$-Wasserstein distance]
\label{def:wasserstein_convergence}
Given $p > 0$, let $\mu$ be a measure in $\cP(\R)$, and $(\mu_n)_{n \geq 1}$ a sequence of measures in $\cP(\R)$. We say that $(\mu_n)_{n \geq 1}$ converges to $\mu$ in $p$-Wasserstein distance if
\begin{align}
\label{eqn:wasserstein_defn}
\lim_{n \rightarrow \infty} \lVert \mu_n - \mu \rVert_{W_p}
= \lim_{n \rightarrow \infty} \, \Big( \inf_{\gamma \in \Gamma(\mu_n, \mu)} \, \E_{x, y \sim \gamma} \, \left[\lvert x - y \rvert^p \right] \Big)^{1/p}
= 0\,,
\end{align}
where $\Gamma(\mu_n, \mu)$ is the set of all real measures with marginals $\mu_n$ and $\mu$.
\end{definition}
If a distribution is uniquely determined by its moments, convergence of moments implies convergence in distribution:
\begin{lemma}[{e.g., \cite[Theorem 30.2]{billingsley}}]
\label{lemma:moments_to_distribution}
Let $\mu$ be a probability measure in $\cP(\R)$ uniquely determined by its moments, and let $(\mu_n)_{n \geq 1}$ be a sequence of probability measures in $\cP(\R)$. If
\begin{align*}
    \int_{\R} t^p \, d\mu_n(t)
\rightarrow \int_{\R} t^p \, d\mu(t) \, ,
\end{align*}
then $(\mu_n)_{n \ge 1}$  converges weakly to $\mu$.
\end{lemma}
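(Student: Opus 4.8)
The statement is the classical method of moments, and the plan is to combine a tightness argument with the uniqueness hypothesis via the standard subsequence principle for weak convergence. Here the hypothesis is understood to hold for every positive integer $p$. First I would establish that $(\mu_n)_{n \ge 1}$ is tight: taking $p = 2$ in the hypothesis, the second moments $\int_{\R} t^2 \, d\mu_n(t)$ converge to $\int_{\R} t^2 \, d\mu(t) < \infty$, hence are bounded by some constant $C$ uniformly in $n$, and Markov's inequality gives $\mu_n(\{\lvert t \rvert > M\}) \le C/M^2$ for all $n$ and $M > 0$. By Prokhorov's theorem, every subsequence of $(\mu_n)_{n \ge 1}$ then admits a further subsequence $(\mu_{n_k})_{k \ge 1}$ converging weakly to some probability measure $\nu \in \cP(\R)$.

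Next I would show that $\nu$ has exactly the same moments as $\mu$, so that the uniqueness hypothesis forces $\nu = \mu$. Fix $p \ge 1$. Since $\int_{\R} t^{2p} \, d\mu_{n_k}(t)$ converges to the $2p$-th moment of $\mu$, it is bounded uniformly in $k$; this uniform bound on the $2p$-th moments makes the functions $t \mapsto \lvert t \rvert^p$ uniformly integrable with respect to the measures $\mu_{n_k}$. Combining uniform integrability with the weak convergence $\mu_{n_k} \Rightarrow \nu$ and the continuity of $t \mapsto t^p$, the $p$-th moments pass to the limit: $\int_{\R} t^p \, d\mu_{n_k}(t) \to \int_{\R} t^p \, d\nu(t)$. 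On the other hand, the hypothesis gives $\int_{\R} t^p \, d\mu_{n_k}(t) \to \int_{\R} t^p \, d\mu(t)$. Hence $\int_{\R} t^p \, d\nu(t) = \int_{\R} t^p \, d\mu(t)$ for every $p$, and since $\mu$ is uniquely determined by its moments we conclude $\nu = \mu$.

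Finally, I would invoke the subsequence principle: we have shown that every subsequence of $(\mu_n)_{n \ge 1}$ contains a further subsequence converging weakly to the \emph{same} limit $\mu$, and this implies $\mu_n \Rightarrow \mu$ (if not, there would be a bounded continuous $g$ and a subsequence along which $\int g \, d\mu_n$ stays bounded away from $\int g \, d\mu$, contradicting the existence within that subsequence of a further piece converging weakly to $\mu$). The step requiring the most care is the passage of moments to the weak limit in the middle paragraph: weak convergence by itself does not preserve moments, and it is precisely the \emph{convergence} of the higher (even-order) moments in the hypothesis—not merely the convergence of the single moment of interest—that supplies the uniform integrability needed to control the tails and push $\int t^p \, d\mu_{n_k}$ through to $\int t^p \, d\nu$. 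Everything else is a routine application of tightness, Prokhorov's theorem, and the subsequence principle.
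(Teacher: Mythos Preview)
The paper does not actually prove this lemma; it is stated as a citation to Billingsley's textbook (Theorem 30.2) and used as a black box. Your proposal is a correct and essentially complete write-up of the standard textbook argument (tightness from bounded second moments, Prokhorov, uniform integrability from higher even moments to pass moments to the weak limit, identification via the uniqueness hypothesis, and the subsequence principle), so there is nothing to compare against and nothing to correct.
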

From here, convergence in $p$-Wasserstein distance follows by convergence in distribution and convergence in the $p^{\text{th}}$ absolute moment around $0$:
\begin{lemma}[{e.g., \cite[Definition 6.8(i) and Theorem 6.9]{villani2009optimal}}]
\label{lemma:distribution_to_wasserstein}
Fix $p > 0$. Let $\mu$ be a measure in $\cP(\R)$, and $(\mu_n)_{n \geq 1}$ be a sequence of measures in $\cP(\R)$ that converges weakly to $\mu$, and
\begin{equation*}
\int_{\R} \lvert t \rvert^p \, d\mu_n(t)
\rightarrow \int_{\R} \lvert t \rvert^p \, d\mu(t) \, .
\end{equation*}
Then $(\mu_n)_{n \geq 1}$ converges to $\mu$ in $p$-Wasserstein distance.
\end{lemma}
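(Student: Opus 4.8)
The plan is to realize the weak convergence $\mu_n \to \mu$ by an almost-surely convergent coupling, bound the $p$-Wasserstein distance by the $L^p$-cost of that coupling, and then upgrade almost-sure convergence to $L^p$ convergence using the moment hypothesis. Since $\R$ is a Polish space, Skorokhod's representation theorem produces random variables $X_n \sim \mu_n$ and $X \sim \mu$ on a common probability space with $X_n \to X$ almost surely. The joint law of $(X_n, X)$ is an element of $\Gamma(\mu_n, \mu)$, so directly from \Cref{eqn:wasserstein_defn},
\[
\lVert \mu_n - \mu \rVert_{W_p}^p \;\le\; \E\big[\, |X_n - X|^p \,\big],
\]
and it suffices to show the right-hand side tends to $0$.

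The key step is to show the family $\{|X_n - X|^p\}_{n \ge 1}$ is uniformly integrable. Using the elementary inequality $|X_n - X|^p \le C_p\,(|X_n|^p + |X|^p)$ for a constant $C_p$ depending only on $p$, and noting $|X|^p$ is a single integrable random variable, it is enough to show $\{|X_n|^p\}_n$ is uniformly integrable. This is exactly where the hypothesis $\int_{\R} |t|^p\,d\mu_n(t) \to \int_{\R} |t|^p\,d\mu(t) < \infty$ enters: combined with $|X_n|^p \to |X|^p$ almost surely, a Scheffé-type argument applies. Concretely, for each fixed $K > 0$ one has $\E[\,|X_n|^p \wedge K\,] \to \E[\,|X|^p \wedge K\,]$ by bounded convergence; subtracting this from $\E[|X_n|^p] \to \E[|X|^p]$ gives $\limsup_{n} \E\big[(|X_n|^p - K)_+\big] \to 0$ as $K \to \infty$, which is uniform integrability. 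Then $\{|X_n - X|^p\}_n$, being dominated by the uniformly integrable family $C_p(|X_n|^p + |X|^p)$, is itself uniformly integrable.

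To conclude, $|X_n - X|^p \to 0$ almost surely together with uniform integrability yields $\E[|X_n - X|^p] \to 0$ by Vitali's convergence theorem, hence $\lVert \mu_n - \mu \rVert_{W_p} \to 0$. An equivalent one-dimensional route avoids Skorokhod entirely: on $\R$ the optimal coupling is the quantile coupling, so $\lVert \mu_n - \mu \rVert_{W_p} = \lVert F_n^{-1} - F^{-1} \rVert_{L^p(0,1)}$ where $F_n^{-1}, F^{-1}$ are the quantile functions; weak convergence gives $F_n^{-1} \to F^{-1}$ Lebesgue-a.e., the moment hypothesis gives $\lVert F_n^{-1} \rVert_{L^p} \to \lVert F^{-1} \rVert_{L^p}$, and $L^p$-norm convergence plus a.e. convergence forces $L^p$ convergence. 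The main obstacle in either version is the same: extracting uniform integrability (equivalently, $L^p$-convergence of quantile functions) from convergence of the $p$-th absolute moments alone, with no uniform control on a higher moment — this is the one genuine use of the moment-convergence hypothesis, and everything else is soft.
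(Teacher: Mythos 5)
Your proof is correct. The paper gives no proof of this lemma: it is stated purely as a pointer to Villani's textbook (Theorem 6.9), which lists several equivalent characterizations of $W_p$-convergence. What you have written is a clean, self-contained proof of the implication the paper actually needs (weak convergence plus convergence of $p$-th absolute moments implies $W_p$-convergence), and your route --- Skorokhod representation to realize $\mu_n \to \mu$ as almost-sure convergence of random variables, then the Scheff\'e-type argument to extract uniform integrability of $\{|X_n|^p\}$ from the moment hypothesis, then Vitali to conclude $\E[|X_n - X|^p] \to 0 \ge \lVert \mu_n - \mu \rVert_{W_p}^p$ --- is a standard and correct way to do it. You also correctly identify that the moment-convergence hypothesis is where all the work happens; everything else is soft, exactly as you say.

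Two small remarks, neither of which undermines the argument. First, in the Scheff\'e step you claim $\limsup_n \E[(|X_n|^p - K)_+] \to 0$ as $K\to\infty$ and call this uniform integrability; strictly, one should promote the $\limsup$ to a $\sup$ by also choosing $K$ large enough for the finitely many initial terms (each of which has finite $p$-th moment), which is the standard triviality --- you might spell this out. Second, in the ``alternative route'' aside, the claim that the quantile coupling is the \emph{optimal} coupling for $W_p$ on $\R$ holds when $p \ge 1$ (the cost $|x-y|^p$ is convex) but is false in general for $0 < p < 1$, where the cost is concave and optimal plans need not be comonotone. The argument still goes through there because the quantile coupling is nonetheless \emph{a} coupling and hence gives an upper bound on $W_p$, but the stated optimality is imprecise. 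Since the paper only ever invokes this lemma with $p = 2$ (inside \Cref{lemma:pseudo_lipschitz}) this aside has no bearing on correctness in context.
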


\begin{lemma}[{e.g., \cite[Theorem 20.5(ii)]{billingsley}}]
\label{lemma:subsubsequence}
    Choose a sequence of random variables $(X_n)_{n \geq 1}$  and a random variable $X$. Then $\plim_{n \to \infty} X_n = X$ if and only if every subsequence $(n_i)_{i \in I}$ has a sub-subsequence $(n_j)_{j \in J}$ ($J \subseteq I$) such that $(X_{n_{j}})_{j \in J}$ converges almost surely to $X$.
\end{lemma}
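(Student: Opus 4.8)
This is a standard measure-theoretic characterization of convergence in probability, and the plan is simply to reconstruct the short proof of both implications, relying only on the elementary fact that almost sure convergence implies convergence in probability together with the Borel--Cantelli lemma.

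For the ``only if'' direction I would assume $\plim_{n\to\infty} X_n = X$ and fix an arbitrary subsequence $(n_i)_{i\in I}$. Since $\mathbb{P}(|X_{n_i}-X|>\epsilon)$ is a subsequence of a sequence tending to $0$ for each $\epsilon>0$, the subsequence $(X_{n_i})_{i\in I}$ again converges to $X$ in probability, so it suffices to extract an almost surely convergent sub-subsequence from any sequence converging in probability. For this I would pick indices $j_1<j_2<\cdots$ inside $I$ with $\mathbb{P}\big(|X_{n_{j_k}}-X|>2^{-k}\big)\le 2^{-k}$, which is possible because these probabilities tend to $0$; since $\sum_k 2^{-k}<\infty$, Borel--Cantelli gives $|X_{n_{j_k}}-X|\le 2^{-k}$ for all large $k$ almost surely, hence $X_{n_{j_k}}\to X$ almost surely along $J=\{j_k:k\ge 1\}\subseteq I$.

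For the ``if'' direction I would argue by contraposition. If $X_n$ does not converge to $X$ in probability, there exist $\epsilon,\delta>0$ and a subsequence $(n_i)_{i\in I}$ with $\mathbb{P}(|X_{n_i}-X|>\epsilon)\ge\delta$ for every $i$. For any sub-subsequence indexed by $J\subseteq I$, almost sure convergence $X_{n_j}\to X$ would imply $\mathbb{P}(|X_{n_j}-X|>\epsilon)\to 0$, contradicting the uniform lower bound $\delta$; hence the chosen subsequence admits no almost surely convergent sub-subsequence, and the right-hand condition of the equivalence fails.

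There is no genuine obstacle here: the only points requiring attention are the index bookkeeping in the forward direction --- one must choose the $j_k$ strictly increasing within the prescribed subsequence so that $(n_j)_{j\in J}$ is an honest sub-subsequence --- and making explicit, in both directions, the passage from almost sure convergence to convergence in probability.
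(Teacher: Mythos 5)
Your proof is correct, and it is the standard textbook argument for this equivalence (a forward direction via Borel--Cantelli extraction, and a reverse direction by contraposition using the fact that a.s.\ convergence implies convergence in probability). Note that the paper does not prove this lemma at all --- it simply cites it to Billingsley --- so there is no ``paper's proof'' to compare against; your reconstruction is a faithful and complete rendering of the cited result, with the index bookkeeping (choosing $j_1<j_2<\cdots$ inside $I$ so that $J\subseteq I$ is a genuine sub-subsequence) handled correctly.
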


Altogether, this implies the following:
\begin{theorem}
\label{thm:message_gaussian_convergence}
Assume \Cref{hyp:second_moment}.
Fix any variable $i \in V$, and integer $1 \le \ell \le L$. Then there exists a random variable $U \sim \cN\left(0, \nu_\ell  \right)$ independent from $\cG_{i}^{\ell - 1}$ such that for any $p > 0$,
\begin{align*}
  \plim_{d \to \infty}\E \left[ \big\lvert u_{i \to a}^\ell - U \big\rvert^{p} \mid \mathcal{G}_i^{\ell - 1} \right] = 0
\end{align*}
\end{theorem}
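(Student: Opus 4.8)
The plan is to package the moment computations of the previous subsections (\Cref{lemma:message_second_moment,cor:message_moment_bounded,lemma:message_higher_moments}, together with \Cref{cor:second_moment_nu} and \Cref{lemma:mixture_variance_bound}) into a statement about convergence in $p$-Wasserstein distance, using the subsequence characterization of convergence in probability (\Cref{lemma:subsubsequence}) to reduce everything to an almost-sure argument. First I would fix the variable $i$, the factor $a \in \partial i$, and $1 \le \ell \le L$. Write $X_d$ for the conditional law of $u_{i \to a}^\ell$ given $\mathcal{G}_i^{\ell-1}$, viewed as a random element of $\cP(\R)$ (it is random because $\mathcal{G}_i^{\ell-1}$ is). From \Cref{lemma:message_second_moment} its second moment is $(\tau_{i \to a}^\ell)^2$, from \Cref{lemma:message_higher_moments} its $2p$-th moment converges in probability to $(2p-1)!!\cdot\nu_\ell^p$ and its odd moments converge in probability to $0$, and from \Cref{cor:message_moment_bounded} all moments are bounded in $L^1$ uniformly in $d$ (hence the conditional moments are finite a.s.). Let $U \sim \cN(0,\nu_\ell)$ be a Gaussian independent of $\mathcal{G}_i^{\ell-1}$; note that since $\nu_\ell$ is a deterministic constant, the target law is non-random, and the moments of $U$ are exactly $(2p-1)!!\,\nu_\ell^p$ and $0$. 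Since the Gaussian is determined by its moments (e.g. \cite[Example 30.1]{billingsley}), the setup of \Cref{lemma:moments_to_distribution} and \Cref{lemma:distribution_to_wasserstein} applies verbatim once we pass to an almost-surely convergent subsequence.

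Concretely, the argument runs as follows. It suffices, by \Cref{lemma:subsubsequence} applied to the real-valued sequence $\E[|u_{i \to a}^\ell - U|^p \mid \mathcal{G}_i^{\ell-1}]$, to show that every subsequence of $d$ has a further sub-subsequence along which this quantity tends to $0$ almost surely. Given a subsequence, apply \Cref{lemma:subsubsequence} again, this time to the countable family of moment sequences $\{\E[(u_{i \to a}^\ell)^k \mid \mathcal{G}_i^{\ell-1}]\}_{k \ge 1}$ (each converging in probability to the corresponding Gaussian moment by \Cref{lemma:message_higher_moments} and \Cref{cor:second_moment_nu}): using a diagonal extraction over $k$, obtain a sub-subsequence $(d_j)$ along which $\E[(u_{i \to a}^\ell)^k \mid \mathcal{G}_i^{\ell-1}]$ converges a.s. to the $k$-th moment of $\cN(0,\nu_\ell)$, simultaneously for all $k$. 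On the almost-sure event where this holds, \Cref{lemma:moments_to_distribution} gives that $X_{d_j}$ converges weakly to $\cN(0,\nu_\ell)$, and then \Cref{lemma:distribution_to_wasserstein} (with the exponent $p$, using the a.s.\ convergence of the $p$-th absolute moment — for even integer $p$ this is a moment we already control, and for general $p$ it follows by interpolating between two even-integer moments using uniform integrability from the bounded $(p+1)$-st moment) gives $\|X_{d_j} - \cN(0,\nu_\ell)\|_{W_p} \to 0$. Choosing an optimal (or near-optimal) coupling of $u_{i\to a}^\ell$ with a copy of $U$ realizing this Wasserstein distance yields $\E[|u_{i\to a}^\ell - U|^p \mid \mathcal{G}_i^{\ell-1}] \to 0$ a.s. along $(d_j)$, which is what the double application of \Cref{lemma:subsubsequence} needs.

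The one genuine subtlety — and the step I expect to require the most care — is handling the $p$-th absolute moment convergence for \emph{non-integer, non-even} $p$, since \Cref{lemma:message_higher_moments} only directly controls integer moments. The fix is a standard uniform-integrability argument: pick an even integer $q > p$; the $q$-th conditional moments are bounded in probability (indeed in $L^1$, by \Cref{cor:message_moment_bounded}), which along the a.s.-convergent sub-subsequence gives a.s.\ uniform integrability of $\{|u_{i\to a}^\ell|^p\}$, and combined with weak convergence this upgrades to convergence of the $p$-th absolute moment. A second minor point to check is measurability of the Wasserstein-optimal coupling as a function of the $\mathcal{G}_i^{\ell-1}$-conditional law, so that $\E[|u_{i\to a}^\ell - U|^p \mid \mathcal{G}_i^{\ell-1}]$ is a well-defined random variable; this follows from measurable-selection applied to the (closed, convex) set of optimal transport plans, or can be sidestepped by working with $\varepsilon$-optimal couplings chosen measurably and letting $\varepsilon \to 0$. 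Everything else is bookkeeping: the Gaussian moments $(2p-1)!!\,\nu_\ell^p$ match \Cref{lemma:message_higher_moments} after substituting $s = 2p$, and the boundedness hypotheses of \Cref{lemma:distribution_to_wasserstein} are exactly \Cref{cor:message_moment_bounded}.
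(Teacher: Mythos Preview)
Your proposal is correct and follows essentially the same route as the paper: extract a sub-subsequence along which all conditional moments converge almost surely (the paper phrases this as ``repeated application'' of \Cref{lemma:subsubsequence}, which is exactly your diagonal extraction), invoke \Cref{lemma:moments_to_distribution} and \Cref{lemma:distribution_to_wasserstein}, and conclude via the subsequence criterion. Your discussion of the $p$-th absolute moment for non-even $p$ and of the measurable selection of the coupling are points the paper leaves implicit, so your write-up is if anything more careful than the original.
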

\begin{proof}
Consider the sequence of random variables $\left(u^\ell_{i \to a} \; \big\vert \; \mathcal{G}_{i}^{\ell-1}\right)_{d \geq 1}$.
By \Cref{lemma:uy_is_zero}, \Cref{cor:second_moment_nu}, and \Cref{lemma:message_higher_moments} (and \Cref{hyp:second_moment}), 
each finite moment of this sequence converges in probability as $d \to \infty$ to that of a Gaussian $U \sim \cN\left(0, \nu_\ell \right)$.

Now consider any subsequence $\left(u^\ell_{i \to a} \; \big\vert \; \mathcal{G}_{i}^{\ell-1}\right)_{d \in I}$.
By repeated application of \Cref{lemma:subsubsequence}, there exists a sub-subsequence $\left(u^\ell_{i \to a} \; \big\vert \; \mathcal{G}_{i}^{\ell-1}\right)_{d \in J}$ ($J \subseteq I$) such that \emph{for every} $k$, the $k^{\text{th}}$ moment converges \emph{almost surely} to the $k^{\text{th}}$ moment of $\cN\left(0, \nu_\ell \right)$.
Because a Gaussian distribution is uniquely determined by its moments (\cite[Example 30.1]{billingsley}), \Cref{lemma:moments_to_distribution,lemma:distribution_to_wasserstein} imply 
that the probability measures associated with $\left(u^\ell_{i \to a} \; \big\vert \; \mathcal{G}_{i}^{\ell-1}\right)_{d \in J}$ converge almost surely in $p$-Wasserstein distance to that of $\cN\left(0, \nu_\ell \right)$, for all $p > 0$. Thus, for any $I$, there exists a $J \subseteq I$ such that $\left( \E \left[ \big\lvert u_{i \to a}^\ell - U \big\rvert^{p} \mid \mathcal{G}_i^{\ell - 1} \right] \right)_{d \in J}$ converges almost surely to $0$.
The statement follows by \Cref{lemma:subsubsequence}.
\end{proof}
\clearpage
\newpage

\section{Proof of state evolution}
\label{sec:appendix_stateevo}
Here, we prove \Cref{prop:state_evolution_expectation}.
First, we show that the average of a pseudo-Lipschitz function is not affected by replacing an input message $u_{i \to a}^\ell$ with a Gaussian $U \sim \mathcal{N}\left( 0, \nu_\ell \right)$.
We use this inductively to complete the proof.
\begin{lemma}
\label{lemma:pseudo_lipschitz}
Assume \Cref{hyp:second_moment}. 
Let $1 \le \ell \le L$, and $\psi: \R^{\ell} \rightarrow \R$ be a pseudo-Lipschitz function.
For any factor $a$ and variable $i \in \partial a$, we have
\begin{align*}
    \E \left[\left|\psi \left(u_{i \to a}^1, \dots, u_{i \to a}^{\ell - 1}, u_{i \to a}^\ell \right) - \psi \left(u_{i \to a}^1, \dots, u_{i \to a}^{\ell - 1}, U\right)\right|\right] = o_d(1),
\end{align*}
where $U \sim \cN\left( 0, \nu_\ell \right)$ is a Gaussian random variable independent of $\cG_{i}^{\ell - 1}$.
\end{lemma}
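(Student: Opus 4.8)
The plan is to reduce the statement to the conditional Gaussian approximation of \Cref{thm:message_gaussian_convergence} and then pass from convergence in probability of conditional moments to convergence of the (unconditional) expectation via a uniform integrability argument. Write $\mathbf v \defeq (u_{i\to a}^1,\dots,u_{i\to a}^{\ell-1})$. By \Cref{lemma:u_is_in_G} each $u_{i\to a}^s$ with $s\le\ell-1$ lies in $\mathcal G_{i\to a}^s\subseteq\mathcal G_i^{\ell-1}$, so $\mathbf v$ is $\mathcal G_i^{\ell-1}$-measurable, and by \Cref{cor:message_moment_bounded} every moment of $\mathbf v$ is bounded uniformly in $d$. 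Let $U\sim\mathcal N(0,\nu_\ell)$ be the coupling furnished by \Cref{thm:message_gaussian_convergence}: it is independent of $\mathcal G_i^{\ell-1}$ and satisfies $\plim_{d\to\infty}\E[\,|u_{i\to a}^\ell-U|^{2}\mid\mathcal G_i^{\ell-1}\,]=0$. (Since $U$ is independent of $\mathcal G_i^{\ell-1}\ni\mathbf v$, the value of $\E[\psi(\mathbf v,U)]$ does not depend on which such coupling we use, so it suffices to prove the bound for this particular $U$.)

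From the pseudo-Lipschitz inequality (\Cref{defn:pseudolipschitz}) applied to $\mathbf x=(\mathbf v,u_{i\to a}^\ell)$ and $\mathbf y=(\mathbf v,U)$, together with $\|(\mathbf v,t)\|\le\|\mathbf v\|+|t|$, one gets the pointwise bound $|\psi(\mathbf v,u_{i\to a}^\ell)-\psi(\mathbf v,U)|\le C(1+2\|\mathbf v\|+|u_{i\to a}^\ell|+|U|)\,|u_{i\to a}^\ell-U|$. Next I would take the conditional expectation given $\mathcal G_i^{\ell-1}$ and apply the conditional Cauchy--Schwarz inequality to peel off $|u_{i\to a}^\ell-U|$, obtaining $\E[\,|\psi(\mathbf v,u_{i\to a}^\ell)-\psi(\mathbf v,U)|\mid\mathcal G_i^{\ell-1}\,]\le C\,W_d\,E_d$, where $E_d\defeq\E[\,|u_{i\to a}^\ell-U|^2\mid\mathcal G_i^{\ell-1}\,]^{1/2}$ and $W_d\defeq\E[\,(1+2\|\mathbf v\|+|u_{i\to a}^\ell|+|U|)^2\mid\mathcal G_i^{\ell-1}\,]^{1/2}$. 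By the triangle inequality in conditional $L^2$, the $\mathcal G_i^{\ell-1}$-measurability of $\mathbf v$, \Cref{lemma:message_second_moment} (which gives $\E[(u_{i\to a}^\ell)^2\mid\mathcal G_i^{\ell-1}]=(\tau_{i\to a}^\ell)^2$), and independence of $U$, we have $W_d\le 1+2\|\mathbf v\|+\tau_{i\to a}^\ell+\sqrt{\nu_\ell}$, whose moments are bounded uniformly in $d$ by \Cref{cor:message_moment_bounded}. Taking the outer expectation, it remains to show $\E[W_d E_d]\to 0$.

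This last step is the one that needs care, since $u_{i\to a}^\ell$ is itself neither Gaussian nor independent of $\mathcal G_i^{\ell-1}$, so nothing can be evaluated directly; one must instead transfer the conditional, in-probability statement of \Cref{thm:message_gaussian_convergence} to an unconditional $L^1$ statement. To do so, note $E_d\to 0$ in probability by \Cref{thm:message_gaussian_convergence} and continuity of $\sqrt{\cdot}$, while $W_d$ is tight (bounded moments), so $W_d E_d\to 0$ in probability; moreover $\{W_dE_d\}_d$ is uniformly integrable because it is bounded in $L^{1+\eta}$ for some $\eta>0$ --- indeed, by H\"older and Jensen, $\E[(W_dE_d)^{1+\eta}]\le\E[W_d^{2+2\eta}]^{1/2}\,\E[\,|u_{i\to a}^\ell-U|^{2+2\eta}\,]^{1/2}$, and both factors are bounded uniformly in $d$ by \Cref{cor:message_moment_bounded}. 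Convergence in probability together with uniform integrability then yields $\E[W_dE_d]=o_d(1)$ (Vitali), which gives the claimed bound.
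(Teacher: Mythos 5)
Your proof is correct and follows essentially the same route as the paper's: pseudo-Lipschitz bound, conditional Cauchy--Schwarz to isolate $\E[(u_{i\to a}^\ell - U)^2\mid\mathcal G_i^{\ell-1}]$, bounded-moment control of the prefactor via \Cref{lemma:message_second_moment} and \Cref{cor:message_moment_bounded}, and then \Cref{thm:message_gaussian_convergence} to conclude. The only (minor) divergence is at the very end: the paper applies one more Cauchy--Schwarz to reduce to $\sqrt{\E[(u_{i\to a}^\ell-U)^2]}$ and then invokes \Cref{thm:message_gaussian_convergence} directly, leaving the passage from conditional convergence in probability to convergence of the unconditional second moment implicit, whereas you make this step explicit via tightness of $W_d$ plus a Vitali (uniform-integrability) argument on $W_d E_d$; both hinge on the same moment bounds and are equivalent in substance.
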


\begin{proof}

By \Cref{defn:pseudolipschitz}, there exists an absolute constant $C \ge 0$ such that for any $\vx, \vy \in \R^{\ell}$, 
\begin{align*}
\E \big[ \lvert \psi(\vx) - \psi(\vy) \rvert \big]
&\leq \E \big[ C \cdot \big( 1 + \lVert \vx \rVert + \lVert \vy \rVert \big) \cdot \lVert \vx - \vy \rVert \big]\,.
\end{align*}
Applying Cauchy-Schwarz to each term, we see
\begin{align}
    \E \big[ \lvert \psi(\vx) - \psi(\vy) \rvert \big]
    \le C
    \cdot \left(
  1 + \sqrt{\E \big[ \lVert \vx \rVert^2 \big]} + \sqrt{\E \big[ \lVert \vy \rVert^2 \big]}
\right)
    \cdot \sqrt{\E \big[ \lVert \vx - \vy \rVert^2 \big]} \label{eqn:pseudo-lipschitz.identity-1}\,.
\end{align}
We fix any $1 \le \ell \le L$. Given $\mathcal{G}^{\ell - 1}_i$, let $U \sim \mathcal{N}(0, \nu_\ell)$ be the Gaussian random variable associated with $u_{i \to a}^\ell$ in \Cref{thm:message_gaussian_convergence}. We define $\vx, \vy \in \R^{\ell}$ as
\begin{equation*}
\vx = \begin{pmatrix}
  u_{i \rightarrow a}^{1} \\
  \vdots \\
  u_{i \rightarrow a}^{\ell - 1} \\
  u_{i \rightarrow a}^{\ell}
\end{pmatrix}\,,
\qquad\qquad
\vy = \begin{pmatrix}
  u_{i \rightarrow a}^{1} \\
  \vdots \\
  u_{i \rightarrow a}^{\ell - 1} \\
  U
\end{pmatrix} \, .
\end{equation*}
Note that by \Cref{lemma:message_second_moment}, $\Delta_x^2 \defeq \E\left[ \|\vx\|^2  \mid \mathcal{G}_i^{\ell - 1} \right] =  \left( \tau_{i \to a}^\ell \right)^2  + \sum_{s = 1}^{\ell-1} \left( u_{i \to a}^s \right)^2$, and $\Delta_y^2 \defeq \E\left[ \|\vy\|^2  \mid \mathcal{G}_i^{\ell - 1} \right] =
\nu_\ell  + \sum_{s = 1}^{\ell-1} \left( u_{i \to a}^s \right)^2$.
Applying \Cref{eqn:pseudo-lipschitz.identity-1}, we have
\begin{align*}
    \E \big[ \lvert \psi(\vx) - \psi(\vy) \rvert  \mid \mathcal{G}_i^{\ell - 1} \big]
    &\le C
    \cdot \left(
  1 + \sqrt{\E \big[ \lVert \vx \rVert^2  \mid \mathcal{G}_i^{\ell - 1} \big]} + \sqrt{\E \big[ \lVert \vy \rVert^2  \mid \mathcal{G}_i^{\ell - 1} \big]}
\right)
    \cdot \sqrt{\E \big[ \lVert \vx - \vy \rVert^2  \mid \mathcal{G}_i^{\ell - 1} \big]}
    \\
    &= C \cdot \left( 1 + \Delta_x + \Delta_y \right) \cdot 
    \sqrt{\E \left[ \left( u_{i \to a}^\ell - U \right)^2  \mid \mathcal{G}_i^{\ell - 1} \right]}\,.
\end{align*}
Note that by \Cref{thm:message_gaussian_convergence}, the right-most expectation converges in probability to zero. Taking the total expectation,
\begin{align*}
    \E\left[
    \left|
\psi(\vx) - \psi(\vy) \right|
\right]
&=
\E\left[
\E\left[
\left| \psi(\vx) - \psi(\vy)  \right|
\mid 
\mathcal{G}_i^{\ell - 1}
\right]
    \right]
    \\
&\le
    \E\left[ 
    C \left( 1 + \Delta_x  + \Delta_y \right) \cdot 
    \sqrt{\E \big[ \left( u_{i \to a}^\ell - U \right)^2  \mid \mathcal{G}_i^{\ell - 1} \big]}
    \right]
\\
&\le
\sqrt{\E\left[ 
    C^2 \left( 1 + \Delta_x  + \Delta_y \right)^2 \right]}
    \cdot
    \sqrt{\E \big[ \left( u_{i \to a}^\ell - U \right)^2  \big]}
\,.
\end{align*}
By definition and \Cref{cor:message_moment_bounded}, $\E[\Delta_x^2]$ and $\E[\Delta_y^2]$ are bounded independent of $d$, so the first expectation is bounded independent of $d$. The second expectation is $o_d(1)$ by \Cref{thm:message_gaussian_convergence}. The statement follows.
\end{proof}

We are now ready to prove the state evolution statement.
\stateEvolutionSimple*
\begin{proof}
Note that \Cref{eqn:state_evolution_simple.node} follows from \Cref{eqn:state_evolution_simple.factor} by \Cref{lemma:u_deviation_small} and pseudo-Lipschitzness of $\psi$.
So we focus on \Cref{eqn:state_evolution_simple.factor}, proceeding via induction.
The base case $\ell = 1$ follows directly from \Cref{lemma:pseudo_lipschitz} (using our assumption of \Cref{hyp:second_moment}). 

For the inductive step, assume that the statement holds up to $\ell$.
Define the function $\tilde{\psi}: \R^{\ell} \rightarrow \R$ given by
\begin{equation*}
\tilde{\psi} (x_1, \ldots, x_{\ell})
\defeq \E_{U \sim \mathcal{N}(0, \nu_{\ell + 1})} \big[ \psi(x_1, \ldots, x_{\ell}, U) \big] \, .
\end{equation*}
We may rewrite $\E \big[ \psi \big( U_1, \ldots, U_{\ell+1} \big) \big]$ as 
\begin{align*}
\E \big[ \psi \big( U_1, \ldots, U_{\ell+1} \big) \big]
&= \E_{U_1, \ldots, U_{\ell}} \Big[
  \E_{U_{\ell+1}} \big[ \psi \big( U_1, \ldots, U_{\ell+1} \big) \big]
\Big] 
&= \E_{U_1, \ldots, U_{\ell}} \big[ \tilde{\psi} \big( U_1, \ldots, U_{\ell} \big) \big] 
&= \E \big[ \tilde{\psi} \big( U_1, \ldots, U_{\ell} \big) \big] \, .
\end{align*}
Note that $\tilde{\psi}$ is pseudo-Lipschitz because $\psi$ is pseudo-Lipschitz.
As a result, using the induction hypothesis, 
\begin{align*}
    \E \left[ 
    \tilde{\psi} \big( u_{i \to a}^1, \ldots, u_{i \to a}^\ell \big) 
    \right]
    =
    \E \big[ \tilde{\psi} \big( U_1, \ldots, U_{\ell} \big) \big] + o_d(1)
    =
    \E \big[ \psi \big( U_1, \ldots, U_{\ell+1} \big) \big] + o_d(1)\,.
\end{align*}
It remains to show that the following term is $o_d(1)$:
\begin{align*}
    \Delta_{\ell+1} \defeq \E \left[ 
    \tilde{\psi} \big( u_{i \to a}^1, \ldots, u_{i \to a}^\ell \big) 
    \right]
    -
        \E \left[ 
    \psi \big( u_{i \to a}^1, \ldots, u_{i \to a}^\ell, u_{i \to a}^{\ell + 1} \big) 
    \right]\,.
\end{align*}
We rewrite this difference as
\begin{align*}
     \Delta_{\ell+1} = \E \left[ 
\E_{U \sim \mathcal{N}(0, \nu_{\ell + 1})}
\left[
    \psi \big( u_{i \to a}^1, \ldots, u_{i \to a}^\ell , U \big) 
    \mid \mathcal{G}_i^{\ell}
    \right]
    -
        \E \left[ 
    \psi \big( u_{i \to a}^1, \ldots, u_{i \to a}^\ell, u_{i \to a}^{\ell + 1} \big) 
    \mid \mathcal{G}_i^{\ell}
    \right]
        \right]
\,.
\end{align*}
We choose $U$ as in \Cref{lemma:pseudo_lipschitz} (using our assumption of \Cref{hyp:second_moment}) to achieve $\Delta_{\ell + 1} = o_d(1)$, as desired.
\end{proof}

\clearpage

\end{document}